\newcommand\DoToC{%
  \startcontents
  \printcontents{}{1}{\vskip 1.5em\hrule\vskip .75em}
  \vskip .75em\hrule\vskip 2em
}
\titlespacing*{\section} {0pt}{3.5ex plus 1ex minus .2ex}{*1} 
\definecolor{Bleu}{RGB}{0,0,204}
\definecolor{DZO}{rgb}{1,.5,0} 
\definecolor{DZG}{rgb}{0.151,0.620,0.151} 
\definecolor{ZB}{rgb}{.255,.42,.882} 
\definecolor{myDarkGrey}{rgb}{.1,.1,.1}
\definecolor{myLessDarkGrey}{rgb}{.125,.125,.125}
\definecolor{slateGrey}{HTML}{708090}
\definecolor{CBmagenta}{HTML}{EE3377}
\definecolor{CBred}{HTML}{CC3311}
\definecolor{CBorange}{HTML}{EE7733}
\definecolor{CBteal}{HTML}{009988}
\definecolor{CBcyan}{HTML}{33BBEE}
\definecolor{CBblue}{HTML}{0077BB}
\definecolor{CBgrey}{HTML}{BBBBBB}
\definecolor{CB2blue}{HTML}{6699CC}
\definecolor{CB2yellow}{HTML}{EECC66}
\definecolor{CB2red}{HTML}{EE99AA}
\definecolor{CB3yellow}{HTML}{CCBB44}
\definecolor{CB3green}{HTML}{228833}
\definecolor{CB3purple}{HTML}{EE6677}
\definecolor{codegreen}{rgb}{0,0.6,0}
\definecolor{codegray}{rgb}{0.5,0.5,0.5}
\definecolor{codepurple}{rgb}{0.58,0,0.82}
\definecolor{backcolour}{rgb}{0.97,0.97,0.95}
\lstdefinestyle{mystyle}{
    language=Python,
    escapeinside={(*}{*)},
    escapebegin=\color{codegreen},    backgroundcolor=\color{backcolour},   
    commentstyle=\color{codegreen},
    keywordstyle=\color{magenta},
    numberstyle=\tiny\color{codegray},
    stringstyle=\color{codepurple},
    basicstyle=\ttfamily\footnotesize,
    breakatwhitespace=false,         
    breaklines=true,                 
    captionpos=b,                    
    keepspaces=true,                 
    numbers=left,                    
    showspaces=false,                
    showstringspaces=false,
    showtabs=false,                  
    tabsize=2,
    moredelim=**[is][\color{slateGrey}]{@}{@},
    morestring=[b]' 
}
\DeclareFontFamily{U}{mathb}{}
\DeclareFontShape{U}{mathb}{m}{n}{
  <-5.5> mathb5
  <5.5-6.5> mathb6
  <6.5-7.5> mathb7
  <7.5-8.5> mathb8
  <8.5-9.5> mathb9
  <9.5-11.5> mathb10
  <11.5-> mathb12
}{}
\DeclareSymbolFont{mathb}{U}{mathb}{m}{n}
\DeclareMathSymbol{\olddrsh}{3}{mathb}{"EB} 
\newcommand{\drsh}{\raisebox{0.3ex}{$\olddrsh$}} 
\toks@\expandafter{\@endtheorem\@endpetrue}
\edef\@endtheorem{\the\toks@}
\newcommand{\setalglineno}[1]{%
  \setcounter{ALG@line}{\numexpr#1-1}}
\newcommand{\algrule}[1][.2pt]{\par\vskip.5\baselineskip\hrule height #1\par\vskip.5\baselineskip} 
\newenvironment{algorithmic_special}{%
\begin{algorithmic}[1]%
\def\ALG@step{%
   \addtocounter{ALG@line}{1}%
   \ifthenelse{\equal{\arabic{ALG@line}}{3}}%
      {\alglinenumber{k}}%
      { \ifthenelse{\equal{\arabic{ALG@line}}{4}}%
           {}%
           {\alglinenumber{\arabic{ALG@line}}}%
      }%
   }%
}{%
\end{algorithmic}%
}%
\renewcommand\thmcontinues[1]{continued}
\newtheorem{theorem}{Theorem}
\newtheorem{lemma}{Lemma}
\theoremstyle{definition}
\newtheorem{conditionC}{Condition}
\newcounter{parentnumber}
\crefname{example}{Example}{Examples} 
\DeclareMathOperator*{\argmin}{argmin}
\DeclareMathOperator*{\esssup}{ess\,sup}
\DeclareMathOperator*{\essinf}{ess\,inf}
\newcommand{\mymid}{\,|\,}
\title{\large\bfseries Simplifying debiased inference\\
via automatic differentiation and probabilistic programming\vspace{.5em}}
\author{\normalsize Alex Luedtke\vspace{-.75em}}
\affil{\normalsize Department of Statistics, University of Washington}
\date{}
\providecommand{\keywords}[1]
{
  \small	
  \textit{Keywords:} #1
}
\begin{document}
\allowdisplaybreaks
{\onehalfspacing
\maketitle
}

\begin{abstract}
\noindent 
We introduce an algorithm that simplifies the construction of efficient estimators, making them accessible to a broader audience. `Dimple' takes as input computer code representing a parameter of interest and outputs an efficient estimator. Unlike standard approaches, it does not require users to derive a functional derivative known as the efficient influence function. Dimple avoids this task by applying automatic differentiation to the statistical functional of interest. Doing so requires expressing this functional as a composition of primitives satisfying a novel differentiability condition. Dimple also uses this composition to determine the nuisances it must estimate. In software, primitives can be implemented independently of one another and reused across different estimation problems. We provide a proof-of-concept Python implementation and showcase through examples how it allows users to go from parameter specification to efficient estimation with just a few lines of code.
\end{abstract}\vspace{.25em}
\keywords{efficient influence function, asymptotic efficiency, pathwise differentiability, Hadamard differentiability, automatic differentiation, probabilistic programming, differentiable programming}
\vspace{.75em}

\section{Introduction}

\subsection{Motivation}

There is a long history of constructing estimators that are asymptotically efficient (`efficient' hereafter) in infinite-dimensional models \citep{levit1975efficiency,hasminskii1979nonparametric,pfanzagl1990estimation,bickel1993efficient}. Several approaches for doing this permit the use of arbitrary methods for estimating needed nuisance functions. These include one-step estimation \citep{pfanzagl1982contributions}, estimating equations \citep{van2003unified}, targeted learning \citep{van2006targeted}, and double machine learning  \citep{chernozhukov2017double}. These frameworks construct efficient estimators using an object known as the efficient influence function (EIF). Though EIFs have been derived in many examples \citep{newey1990semiparametric,bickel1993efficient,van2003unified}, computing them in new problems can be challenging, requiring tools from functional analysis and differential geometry. These calculations can represent a bottleneck in scientific discovery, as some novel analyses cannot be performed until an expert derives the EIF.

Investigators have sought to use numerical approaches to overcome this bottleneck. \cite{frangakis2015deductive} built on the concept of an influence curve from robust statistics \citep{hampel1968contributions} to propose a method using finite differences to approximate nonparametric EIFs. This method applies whenever the data are discrete or the parameter is a smooth functional of the distribution function. \cite{luedtke2015discussion} provided a modified finite difference approach that correctly approximates the EIF more generally, even when the data are continuous. \Citet{van2015computerizing} and \cite{carone2018toward} improved upon this proposal by providing numerical strategies for approximating EIFs in the harder case where the model is semiparametric, rather than nonparametric. 
In a concurrent line of work, \cite{ichimura2022influence} developed a numerical approach to evaluate the influence function of a given estimator, which coincides with the EIF when the estimator is efficient. Though one pursues numerical approximation of the influence function of an estimator and the other pursues numerical approximation of the EIF of a statistical functional, the proposals in \cite{luedtke2015discussion} and \cite{ichimura2022influence} address challenges related to the possible non-discreteness of the data in similar ways.

The aforementioned numerical approaches rely on two hyperparameters: a step size $\epsilon$ for a finite difference approximation and a bandwidth $\lambda$ for a kernel approximation of a delta function. While both must be small, $\epsilon$ should be an order of magnitude smaller than $\lambda$. However, setting $\epsilon$ too small can lead to numerical instability. \cite{carone2018toward} provide guidance on the choice of hyperparameters through a diagnostic they call $\epsilon\,$--$\,\lambda$ plots. \cite{jordan2022empirical} studied the conditions that a numerical approximation must meet to guarantee that the resulting one-step estimator possesses desirable statistical properties. The complexity of these requirements underscores the need for careful hyperparameter selection in existing numerical approaches, which could limit their adoption. Researchers in other fields have encountered similar challenges when using numerical differentiation \citep{griewank2008evaluating}.

Automatic differentiation avoids these challenges \citep{wengert1964simple,speelpenning1980compiling}. In this approach, a function $f$ is written as a composition of differentiable functions with known gradients, which we refer to as primitives. Then, an algorithm can iteratively apply the chain rule to derive the gradient of $f$ at a specified point. Unlike numerical differentiation, automatic differentiation is hyperparameter-free and is always accurate up to working precision. Differentiable programming extends automatic differentiation by identifying a compositional structure for $f$ directly via computer code that can be used to evaluate it, simplifying implementation \citep{baydin2018automatic,blondel2024elements}. Automatic differentiation and differentiable programming have been used with success in applications \citep{bischof1995automatic,homescu2011adjoints}, and have played a crucial role in the rapid progress made in deep learning \citep{goodfellow2016deep}. However, they have yet to be used to compute a semi- or nonparametric EIF. This may be due to the fact that EIFs correspond to a gradient of a functional defined on an infinite-dimensional statistical manifold, and so it has been unclear what primitives can be chained together to compute them algorithmically.

\subsection{Contributions}\label{sec:contributions}

Our first two contributions are as follows:
\begin{enumerate}
    \item In Section~\ref{sec:algAndGuarantees}, we provide an automatic differentiation algorithm for computing the EIF of any parameter that can be expressed as a composition of differentiable Hilbert-valued primitives.
    \item In Section~\ref{sec:primitives}, we establish the differentiability of a variety of primitives. These primitives can be composed to express many parameters of interest.
\end{enumerate}
We then turn to estimation, building on developments in probabilistic programming \citep{van2018introduction}. Probabilistic programming and finite-dimensional automatic differentiation have been employed for Bayesian inference \citep{ge2018t,tran2018simple,stan2023}, parametric maximum likelihood estimation \citep{fournier2012ad,kristensen2015tmb}, M-estimation \citep{zivich2022,kosmidis2024empirical}, and efficient estimation in high-dimensional parametric models \citep{agrawal2024automated}. However, to date there is not a probabilistic programming framework for efficiently estimating a generic smooth parameter that may rely on nonparametric nuisances like density and regression functions. Our remaining contributions address this gap.
\begin{enumerate}[resume*]
    \item In Section~\ref{sec:probProg}, we provide an algorithm to estimate any parameter that is expressed as a composition of known primitives and conditions under which this estimator will be efficient.
    \item In Section~\ref{sec:illustrationPackage}, we introduce a proof-of-concept Python package implementing our approach, available at \url{http://github.com/alexluedtke12/pydimple}.
\end{enumerate}
We refer to our framework for constructing efficient estimators using automatic differentiation and probabilistic programming as `dimple', short for `debiased inference made simple'.

The goal of dimple is similar in spirit to that of automatic debiased machine learning methods: to avoid analytical derivations when constructing efficient estimators \citep{chernozhukov2022riesznet,chernozhukov2022nested,chernozhukov2022automatic}. Those existing methods apply to regression functionals, whose EIFs depend on the functional's form through nuisance functions known as Riesz representers. Automatic debiased machine learning provides loss functions that can be used to estimate these nuisances without explicitly deriving their form. Compared to these methods, dimple innovates in several directions. It allows the functional to depend on the distribution through non-regression summaries, uses automatic differentiation to compute the EIF, and is integrated into a user-friendly probabilistic programming framework. It also applies to general Hilbert-valued parameters \citep{luedtke2023one}.

Beyond simplifying estimation, our automatic differentiation approach and the chain rule that underlies it may be of independent interest for their use in deriving semiparametric efficiency bounds, which are given by the variance of the EIF when it exists \citep{van1989prohorov,van1991differentiable}. Our theoretical results show this EIF exists for any parameter expressible as a composition of differentiable primitives, and applying automatic differentiation by hand gives its analytic form. An example of this is given in Section~\ref{sec:illustrationR2}.

\subsection{Illustrations of proposed approach}\label{sec:illustrationPackage}

Our proof-of-concept Python package, \texttt{pydimple}, uses dimple to construct nonparametric efficient estimators. We illustrate it with three examples. The following Python code imports the classes and functions from the package needed to run them:\vspace{.25em}
\begin{lstlisting}[numbers=none]
 from pydimple import Distribution, E, Density, Var, RV, estimate
\end{lstlisting}
In our first example, the objective is to estimate the expected density $\psi(P):=E_P[p(Z)]$, with $p$ the probability density function of $Z\sim P$.\vspace{.25em}
\begin{lstlisting}[caption=Expected density \citep{bickel1988estimating}.,label=ex:expectedDensity]
 P = Distribution(data=dat) 
 dens = Density(P, 'Z') (* \hfill \# $p$\label{ln:expectedDensEst1} *)
 expected_density = E(P,dens) (* \hfill \# $\psi(P)$\label{ln:expectedDensEst2} *)
 estimate(expected_density) 
 # Output:  {'est': 1.699, 'se': 0.035, 'ci': [1.631, 1.768]}
\end{lstlisting}
The first line of code defines a distribution $P$ and identifies a dataset \texttt{dat} containing iid draws from it to be used for estimation. The next two lines serve as a blueprint of the statistical estimation problem, formulating it using primitives from \texttt{pydimple}. The fourth line computes a 5-fold cross-fitted estimator, Wald-type standard error, and 95\% confidence interval of $\psi(P)$. Internally, \texttt{pydimple} estimates density functions using kernel density estimation and all other nuisances using lightgbm \citep{ke2017lightgbm}, with hyperparameters tuned via Optuna \citep{akiba2019optuna}.

In our next example, the objective is to estimate the nonparametric $R^2$ criterion, defined as $\psi(P):=1-\int \{y-E_P(Y\mymid X=x)\}^2\,dP(x,y)/\mathrm{Var}_P(Y)$, where $P$ is a distribution of $(X,Y)$.\vspace{.25em}
\begin{lstlisting}[caption=Nonparametric $R^2$ criterion \citep{williamson2021nonparametric}.,label=ex:r2]
 P = Distribution(data=dat) 
 v = Var(P,'Y') (* \hfill \# $\mathrm{Var}_P(Y)$\label{ln:vdef} *)
 mu = E(P,'Y',indep_vars=['X1','X2']) (* \hfill \# $E_P[Y\mymid X=\cdot\,]$\label{ln:mdef} *)
 R2 = 1-E(P,(RV('Y')-mu)**2)/v (* \hfill \# $\psi(P)$\label{ln:r2def} *)
 estimate(R2) (*\label{ln:r2est} *)
 # Output:  {'est': 0.4334, 'se': 0.0206, 'ci': [0.3930, 0.4738]}
\end{lstlisting}

The final example concerns the estimation of the longitudinal G-formula \citep{robins1986new,bang2005doubly,luedtke2017sequential,rotnitzky2017multiply,chernozhukov2022nested}. This parameter is defined in longitudinal settings where covariate-treatment pairs $(X_t,A_t)$ are collected at times $t=0,1,\ldots,T-1$, and then an outcome $Y$ is collected at the end of the study. For $t\in\{0,1,\ldots,T-1\}$, let $H_t$ denote the history $(X_0,A_0,X_1,A_1,\ldots,X_{t-1},A_{t-1},X_t)$ available just before treatment $A_t$ is administered. Also let $H_T=(H_{T-1},A_T,Y)$ denote all of the covariate, treatment, and outcome data. The parameter $\psi(P)$ is defined recursively by letting $\mu_{P,T}(h_T)=y$, $\mu_{P,t}(h_t)=E_P[\mu_{P,t+1}(H_{t+1})\mymid A_t=1,H_t=h_t]$ for $t=T-1,T-2,\ldots,0$, and $\psi(P)=E_P[\mu_{P,0}(H_0)]$. Under causal conditions, $\psi(P)$ gives the counterfactual mean outcome when everyone receives treatment $1$ at all time points. The recursive definition of this parameter can be easily expressed in \texttt{pydimple} via a for loop, as illustrated below.\vspace{.25em}
\begin{lstlisting}[caption={Longitudinal G-Formula \citep{robins1986new}.},label=ex:longitudinalG]
 P = Distribution(data=dat)
 T = 3   (* \hfill \# $T$ *)
 mu = 'Y'  (* \hfill \# $\mu_{P,T}$ *)
 for t in reversed(range(T)): (* \hfill \# $t=T-1,T-2,\ldots,0$ *)
     H = [f'X{j}' for j in range(t+1)]+[f'A{j}' for j in range(t)]  (* \hfill \# $H_t$ *)
     mu = E(P,dep=mu,indep_vars=H, fixed_vars={f'A{t}==1'}) (* \hfill \# $\mu_{P,t}$ *)
 mu = E(P,dep=mu) (* \hfill \# $\psi(P)$ *)
 estimate(mu)
 # Output:  {'est': 0.5313, 'se': 0.0304, 'ci': [0.4717, 0.5908]}
\end{lstlisting}

In Section~\ref{sec:simulation}, we evaluate the performance of \texttt{pydimple} in a Monte Carlo study. The results support that dimple can produce asymptotically efficient estimators and valid confidence intervals. At sample sizes of $n\ge 1000$, the confidence intervals have approximately nominal coverage and nearly optimal widths, scaling with the standard deviation of the EIF. These results are not surprising given that dimple uses a variant of one-step estimation, which has been shown to perform well in a variety of settings \citep{pfanzagl1982contributions,bickel1993efficient}. Details on this estimation framework and its theoretical properties are given in Section~\ref{sec:estimator}.

\section{Automatic differentiation of statistical functionals}

\subsection{Algorithm and theoretical guarantee}\label{sec:algAndGuarantees}

We present an algorithm for automatically differentiating a parameter $\psi$ that maps from a set of distributions $\mathcal{M}$ into a Hilbert space $\mathcal{W}_{\psi}$. This and all Hilbert spaces in this work are real. The set $\mathcal{M}$, called the model, may be nonparametric or semiparametric.

Using our approach requires that $\psi$ can be expressed as in Algorithm~\ref{alg:parameter}, meaning that, for any $P\in\mathcal{M}$, calling that algorithm with input $P$ returns $\psi(P)$. The $j$-th line of that algorithm computes the evaluation $h_j$ of a primitive $\theta_j : \mathcal{M}\times\mathcal{U}_j\rightarrow \mathcal{V}_j$ applied to $P$ and $h_{\mathrm{pa}(j)}:=(h_i : i\in\mathrm{pa}(j))$. Here, $\mathcal{U}_j\supseteq \prod_{i\in\mathrm{pa}(j)} \mathcal{V}_i$ and $\mathrm{pa}(j)\subseteq [j-1]:=\{1,2,\ldots,j-1\}$, where $[0]:=\emptyset$. When $\mathrm{pa}(j)=\emptyset$, we let $\mathcal{U}_j:=\{0\}$ and $h_{\mathrm{pa}(j)}:=0$. This adheres to the following general convention: for any tuple $(a_i : i\in [j])$ and set $\mathcal{S}\subseteq [j]$, $a_\mathcal{S}:=(a_i : i\in\mathcal{S})$ when $\mathcal{S}\not=\emptyset$ and $a_\mathcal{S}:=0$ otherwise.

Though Algorithm~\ref{alg:parameter} does not rely on $\mathcal{V}_j$ having an inner product structure, our automatic differentiation scheme does. Hence, we suppose we can specify an ambient Hilbert space $\mathcal{W}_j\supseteq \mathcal{V}_j$ for each $j\in [k-1]$. We also let $\mathcal{W}_k:=\mathcal{W}_\psi$ and view $\mathcal{U}_j$ as a subset of the direct sum Hilbert space $\bigoplus_{i\in\mathrm{pa}(j)}\mathcal{W}_i$, which we take to be a trivial Hilbert space $\{0\}$ when $\mathrm{pa}(j)=0$. 

\begin{algorithm}[tb]
   \caption{To use automatic differentiation, there must exist an algorithm of the following form that takes as input a generic $P\in\mathcal{M}$ and returns $\psi(P)$}
   \label{alg:parameter}
   \linespread{1.05}\selectfont
\begin{algorithmic_special}
    \Statex $\triangleright$ \textit{\footnotesize {\color{CBblue}Color} is used to facilitate identification of similar or identical objects across algorithm environments} \vspace{.25em}
    \State ${\color{CBblue}h_1}=\theta_1(P,{\color{CBteal}h_{\mathrm{pa}(1)}})$
    \State ${\color{CBblue}h_2}=\theta_2(P,{\color{CBteal}h_{\mathrm{pa}(2)}})$\hfill $\triangleright$ ${\color{CBteal}h_{\mathrm{pa}(j)}}:=({\color{CBblue}h_i} : i\in\mathrm{pa}(j))\;\forall j$\vspace{-.25em}
    \Statex $\vdots$\vspace{-.2em}
    \State ${\color{CBblue}h_k}=\theta_k(P,{\color{CBteal}h_{\mathrm{pa}(k)}})$
    \State \Return $\psi(P)={\color{CBblue}h_k}$
\end{algorithmic_special}
\end{algorithm}

To ensure that a chain rule can be applied to differentiate $\psi$, each $\theta_j$ must be differentiable in a sense we now introduce. 
We do this for a generic primitive $\theta : \mathcal{M}\times\mathcal{U}\rightarrow\mathcal{V}$ with $\mathcal{U}\subseteq \mathcal{T}$ and $\mathcal{V}\subseteq\mathcal{W}$ for ambient Hilbert spaces $\mathcal{T}$ and $\mathcal{W}$. We denote the norms on these spaces by $\|\cdot\|_{\mathcal{T}}$ and $\|\cdot\|_{\mathcal{W}}$. Given $u\in\mathcal{U}$ and $t\in\mathcal{T}$, we let $\mathscr{P}(u,\mathcal{U},t)$ be the set of paths $\{u_\epsilon : \epsilon\in [0,1]\}\subseteq \mathcal{U}$ that satisfy $\|u_\epsilon-u-\epsilon t\|_{\mathcal{T}}=o(\epsilon)$.  For $u\in\mathcal{U}$, we call $\check{\mathcal{U}}_u:=\{t\in\mathcal{T} : \mathscr{P}(u,\mathcal{U},t)\not=\emptyset\}$ the tangent set of $\mathcal{U}$ at $u$ and the $\mathcal{T}$-closure of its linear span, $\dot{\mathcal{U}}_u$, the tangent space of $\mathcal{U}$ at $u$. 
The tangent space $\dot{\mathcal{M}}_P\subseteq L^2(P)$ of $\mathcal{M}$ at $P$ is similarly defined as the $L^2(P)$-closure of the linear span of the tangent set $\check{\mathcal{M}}_P:=\{s\in L^2(P) : \mathscr{P}(P,\mathcal{M},s)\not=\emptyset\}$, where $\mathscr{P}(P,\mathcal{M},s)$ is the collection of quadratic mean differentiable univariate submodels $\{P_\epsilon : \epsilon\in[0,1]\}\subseteq\mathcal{M}$ with score $s$ at $\epsilon=0$ and $P_{\epsilon=0}=P$. We call $\theta$ totally pathwise differentiable at $(P,u)$ if there exists a continuous linear operator $\dot{\theta}_{P,u} : \dot{\mathcal{M}}_P\oplus\dot{\mathcal{U}}_u\rightarrow\mathcal{W}$ such that, for all $s\in\check{\mathcal{M}}_P$, $t\in\check{\mathcal{U}}_u$, $\{P_\epsilon : \epsilon\}\in \mathscr{P}(P,\mathcal{M},s)$, and $\{u_\epsilon : \epsilon\}\in \mathscr{P}(u,\mathcal{U},t)$,
\begin{align}
    \left\|\theta(P_\epsilon,u_\epsilon) - \theta(P,u) - \epsilon\,  \dot{\theta}_{P,u}(s,t)\right\|_{\mathcal{W}}&= o(\epsilon). \label{eq:totalpd}
\end{align}
The differential operator $\dot{\theta}_{P,u}$ is unique. We denote its Hermitian adjoint by $\dot{\theta}_{P,u}^* : \mathcal{W}\rightarrow \dot{\mathcal{M}}_P\oplus\dot{\mathcal{U}}_u$.

Though we have not seen this precise notion of differentiability introduced previously, total pathwise differentiability is closely related to classical notions that apply to functions of a single argument, $u\in\mathcal{U}$ or $P\in\mathcal{M}$. In particular, the total pathwise differentiability of $\theta$ at $(P,u)$ implies $\theta(P,\,\cdot\,)$ is Hadamard differentiable \citep{averbukh1967theory} and $\theta(\,\cdot\,,u)$ is pathwise differentiable \citep{van1991differentiable}, and the converse implication holds under additional regularity conditions --- see Lemmas~\ref{lem:totalImpliesPartial} and \ref{lem:diffTheorem} in the appendix. In the special case where $\mathcal{U}$ is the trivial vector space $\{0\}$, $\theta$ is totally pathwise differentiable if and only if $\nu(\cdot):=\theta(\,\cdot\,,0)$ is pathwise differentiable.
In these cases, $\dot{\theta}_P^*(w)=(\dot{\nu}_P^*(w),0)$. The map $\dot{\nu}_P^*  : \mathcal{W}\rightarrow \dot{\mathcal{M}}_P$ is called the efficient influence operator of $\nu$ at $P$; it can be used to draw inferences about $\nu(P)$ \citep{luedtke2023one}. When $\mathcal{W}=\mathbb{R}$ and $w=1$, $\dot{\nu}_P^*(w)$ is known as the EIF of $\nu$ at $P$ \citep[][Definition 5.2.3]{bickel1993efficient}.

Like Hadamard and pathwise differentiability \citep{averbukh1967theory,van1991efficiency,kennedy2022semiparametric}, total pathwise differentiability satisfies a chain rule --- see Lemma~\ref{lem:chainRule} in the appendix. The availability of a chain rule suggests a form of automatic differentiation \citep{linnainmaa1970representation}, which we present in Algorithm~\ref{alg:backprop}. Initially, this algorithm requires the evaluation of the composition that defines $\psi$. Subsequently, information backpropagates through the indices $j$ of the composition to evaluate the efficient influence operator. This information is derived from the adjoint $\dot{\theta}_{j,P,h_{\mathrm{pa}(j)}}^*$ of the differential operator of $\theta_j$ at $(P,h_{\mathrm{pa}(j)})$, where $h_{\mathrm{pa}(j)}$ is as defined as in Algorithm~\ref{alg:parameter} when that algorithm is called with input $P$.
\begin{theorem}[Automatic differentiation works]\label{thm:backpropWorks}
Suppose $\psi$ can be expressed as in Algorithm~\ref{alg:parameter} and $\theta_j$ is totally pathwise differentiable at $(P,h_{\mathrm{pa}(j)})$ for all $j\in [k]$. Then, $\psi$ is pathwise differentiable at $P$ and Algorithm~\ref{alg:backprop} returns the efficient influence operator evaluation $\dot{\psi}_P^*(f_k)$.
\end{theorem}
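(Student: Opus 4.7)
The plan is to prove both claims together by induction along the computation graph of Algorithm~\ref{alg:parameter}, with the chain rule for totally pathwise differentiable maps (Lemma~\ref{lem:chainRule}) as the main engine. I would fix a score $s\in\check{\mathcal{M}}_P$ and a quadratic-mean-differentiable submodel $\{P_\epsilon\}\in\mathscr{P}(P,\mathcal{M},s)$, and then track the induced paths $h_{j,\epsilon}:=h_j(P_\epsilon)$ that Algorithm~\ref{alg:parameter} produces when it is run on $P_\epsilon$.

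For the pathwise differentiability claim, I would show by induction on $j\in[k]$ that $P\mapsto h_j(P)$ is pathwise differentiable at $P$ with differential
\begin{align*}
\dot h_{j,P}(s)\;=\;\dot\theta_{j,P,h_{\mathrm{pa}(j)}}\bigl(s,\,(\dot h_{i,P}(s))_{i\in\mathrm{pa}(j)}\bigr).
\end{align*}
The base case $\mathrm{pa}(j)=\emptyset$ reduces to the special case noted in the excerpt, where total pathwise differentiability coincides with pathwise differentiability of $\theta_j(\,\cdot\,,0)$. For the inductive step, I would apply Lemma~\ref{lem:chainRule} to the partial composition $P\mapsto\theta_j(P,h_{\mathrm{pa}(j)}(P))$, after verifying that each forward differential $\dot h_{i,P}(s)$ lies in the tangent space $\dot{\mathcal{U}}_{j,h_{\mathrm{pa}(j)}}$ required by $\dot\theta_{j,P,h_{\mathrm{pa}(j)}}$. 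Taking $j=k$ then gives pathwise differentiability of $\psi$ with $\dot\psi_P=\dot h_{k,P}$.

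For the adjoint claim, I would start from the defining identity $\langle\dot\psi_P^*(f_k),s\rangle_{L^2(P)}=\langle f_k,\dot\psi_P(s)\rangle_{\mathcal{W}_\psi}$ and unfold its right-hand side using the forward recursion above. Applying the adjoints $\dot\theta_{j,P,h_{\mathrm{pa}(j)}}^*$ one at a time converts the inner product into a sum of contributions propagated backward along the DAG. A reverse induction on $j=k,k-1,\ldots,1$ would then match this unfolding line-by-line to Algorithm~\ref{alg:backprop}: when node $j$ is processed, its accumulated cotangent $f_j\in\mathcal{W}_j$ has already received contributions from every child, and applying $\dot\theta_{j,P,h_{\mathrm{pa}(j)}}^*$ distributes a score-part contribution to the running total and parent-part contributions to each $f_i$ with $i\in\mathrm{pa}(j)$. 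After the loop terminates, the accumulated score-part equals $\dot\psi_P^*(f_k)$.

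The principal obstacle is the bookkeeping for shared nodes: when $h_i$ feeds into several subsequent primitives, the adjoint contributions from every such child must be summed into a single cotangent $f_i$ before $\dot\theta_{i,P,h_{\mathrm{pa}(i)}}^*$ is applied. Checking that Algorithm~\ref{alg:backprop}'s traversal visits $i$ only after all its descendants is a combinatorial argument based on a reverse topological ordering of the DAG induced by $\mathrm{pa}(\cdot)$. A related technical point is verifying at each inductive step that the forward tangent vectors $\dot h_{i,P}(s)$ genuinely live in the tangent sets that Lemma~\ref{lem:chainRule} expects; this should follow from the construction of the paths $\{h_{i,\epsilon}\}$ together with the total pathwise differentiability of $\theta_i$, but deserves explicit attention so that the chain rule can legitimately be invoked at every step.
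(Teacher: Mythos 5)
Your proposal is correct in spirit but follows a genuinely different route from the paper, and the backward half of it is still a sketch rather than a proof. You split the argument into (i) a forward induction establishing pathwise differentiability of each intermediate $h_j$ via the recursion $\dot h_{j,P}(s)=\dot\theta_{j,P,h_{\mathrm{pa}(j)}}(s,(\dot h_{i,P}(s))_{i\in\mathrm{pa}(j)})$, and (ii) a backward unfold of $\langle f_k,\dot\psi_P(s)\rangle$ via adjoints. The paper instead runs a single reverse induction on the suffix maps $\eta_j(P',v_{[j]})$, defined by replacing the first $j$ assignments in Algorithm~\ref{alg:parameter} with arbitrary values $v_{[j]}$ and leaving the rest unchanged; the inductive hypothesis is that, just before step $j$ of the for loop in Algorithm~\ref{alg:backprop}, the pair $(f_0,\mathrm{proj}\{f_{[j]}\mymid\dot{\mathcal{V}}_{[j],h_{[j]}}\})$ equals $\dot\eta_{j,P,h_{[j]}}^*(f_k)$. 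That invariant simultaneously records the differentiability of the tail of the computation and the state of the backward pass, so pathwise differentiability of $\psi=\eta_0(\,\cdot\,,0)$ and correctness of the returned $f_0$ both fall out at $j=0$. The recursive identity $\eta_j(P',v_{[j]})=\eta_{j+1}(P',(v_{[j]},\theta_{j+1}(P',v_{\mathrm{pa}(j+1)})))$ matches the algorithm's $j\to j+1$ step literally, which is what makes the bookkeeping clean.

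The thing you flag as ``the principal obstacle'' — accumulating contributions at shared nodes before applying their adjoints — is the entire crux, and ``a combinatorial argument based on a reverse topological ordering'' undersells what is needed. You would have to state an explicit invariant of the form: after processing nodes $k,\ldots,j+1$, the running quantity $\langle f_0,s\rangle_{L^2(P)}+\sum_{i\le j}\langle f_i,\dot h_{i,P}(s)\rangle_{\mathcal{W}_i}$ equals $\langle f_k,\dot\psi_P(s)\rangle_{\mathcal{W}_\psi}$ for every score $s$, and then verify that the augmentation step of Algorithm~\ref{alg:backprop} preserves it. That is provable, and it is morally equivalent to the paper's invariant, but you have not written it down. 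You would also need to be careful about the ambient-versus-tangent distinction: the parent contribution from $\dot\theta_{j,P,h_{\mathrm{pa}(j)}}^*(f_j)$ lands in the $\mathcal{W}_i$-component of the tangent space $\dot{\mathcal{U}}_{j,h_{\mathrm{pa}(j)}}$, while the algorithm stores $f_i$ as a generic element of $\mathcal{W}_i$; the pairing with $\dot h_{i,P}(s)\in\dot{\mathcal{V}}_{i,h_i}$ is unaffected precisely because the forward differentials live in the tangent spaces, but this is the content of the product-tangent-space lemma (Lemma~\ref{lem:tanSetOfProduct}) and the projection that appears in Lemma~\ref{lem:coordProj}, which both approaches must invoke. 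Finally, your appeal to Lemma~\ref{lem:chainRule} for the forward step implicitly requires packaging the tuple $(h_i)_{i\in\mathrm{pa}(j)}$ as a single totally-pathwise-differentiable map into the direct sum; that is also done via Lemma~\ref{lem:tanSetOfProduct} (and, in the paper, the more convenient Lemma~\ref{lem:chainRuleConvenient}), and is worth making explicit rather than assuming.
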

To obtain the EIF of a real-valued parameter $\psi$, Algorithm~\ref{alg:backprop} can be run with $f_k=1$. 
When proving Theorem~\ref{thm:backpropWorks}, we define maps $\eta_j : \mathcal{M}\times\prod_{\ell=1}^j \mathcal{V}_\ell\rightarrow\mathcal{W}_\psi$ such that $\eta_j(P,v_{[j]})$ is the output of a modification of Algorithm~\ref{alg:parameter} that replaces each line $i\le j$ by the assignment $h_i=v_i$ and leaves each line $i>j$ unchanged, so that $h_i=\theta_i(P,h_{\mathrm{pa}(i)})$. The crux of our proof involves showing that, just before step $j$ of the loop in Algorithm~\ref{alg:backprop}, $f_0=\dot{\eta}_{j,P,h_{[j]}}^*(f_k)$.

\begin{algorithm}[tb]
   \caption{Automatic differentiation to evaluate the efficient influence operator at $P$}
   \label{alg:backprop}
   \linespread{1.05}\selectfont
\begin{algorithmic}[1]
    \Require user-specified ${\color{CBorange}f_k}\in\mathcal{W}_\psi$ and ${\color{CBblue}h_1},{\color{CBblue}h_2},\ldots,{\color{CBblue}h_k}$ as defined in Algorithm~\ref*{alg:parameter} with input $P$
    \State \textbf{initialize} ${\color{CBmagenta}f_0},{\color{CBorange}f_1},{\color{CBorange}f_2},\ldots,{\color{CBorange}f_{k-1}}$ as the 0 elements of $L^2(P),\mathcal{W}_1,\mathcal{W}_2,\ldots,\mathcal{W}_{k-1}$, respectively
    \For {$j=k,k-1,\ldots,1$} \label{ln:for}
        \State \textbf{augment} $({\color{CBmagenta}f_0},{\color{CBorange}f_{\mathrm{pa}(j)}}) \mathrel{+{=}} \dot{\theta}_{j,P,{\color{CBteal}h_{\mathrm{pa}(j)}}}^*({\color{CBorange}f_j})$ \label{ln:augment}
    \EndFor
    \State \Return ${\color{CBmagenta}f_0}$\hfill $\triangleright$ Theorem~\ref*{thm:backpropWorks} shows ${\color{CBmagenta}f_0}=\dot{\psi}_P^*({\color{CBorange}f_k})$
\end{algorithmic}
\end{algorithm}

We now discuss two strategies to increase the applicability of Algorithm~\ref{alg:backprop}. The first uses that the choice of ambient Hilbert spaces $\mathcal{W}_1,\mathcal{W}_2,\ldots,\mathcal{W}_{k-1}$ can impact whether the primitives used to express $\psi$ are totally pathwise differentiable. Hence, a good strategy is to choose these Hilbert spaces in a way that makes the primitives differentiable under mild conditions --- see the next subsection for examples. 
The second uses that differentiability is a local property, and so Theorem~\ref{thm:backpropWorks} remains true if the representation of $\psi$ in Algorithm~\ref{alg:parameter} is only valid in some Hellinger neighborhood $\mathcal{N}\subset\mathcal{M}$ of $P$. Concretely, it suffices that calling Algorithm~\ref{alg:parameter} with input $P'$ returns $\psi(P')$ for all $P'\in\mathcal{N}$; no such requirement is needed for $P'\not\in\mathcal{N}$. This relaxation may help identify compositions of primitives to automatically differentiate.

There are many forms of automatic differentiation. They differ in how derivative information is propagated. The two extremes are reverse mode \citep{linnainmaa1970representation} and forward mode \citep{wengert1964simple}. As exhibited in Algorithm~\ref{alg:backprop}, reverse mode backpropagates information through adjoints. Forward mode takes the opposite approach, propagating it forward through differential operators as the function is evaluated. There are also intermediate approaches that propagate information both forward and backward, each of which may have a different time complexity. Because identifying the best such traversal is an NP-complete problem \citep{naumann2008optimal}, many implementations of automatic differentiation employ either a simple forward or reverse mode. A general rule of thumb is that reverse mode will be faster than forward mode when a function's input dimension exceeds its output dimension \citep[Chapter 3 of][]{griewank2008evaluating}. In our context, the input dimension is infinite unless $\mathcal{M}$ is parametric. Consequently, we pursue reverse mode automatic differentiation in this work. We leave consideration of alternative traversal strategies to future study.

\subsection{Useful primitives}\label{sec:primitives}

Table~\ref{tab:primitives} provides a selection of useful primitives. In Appendix~\ref{app:primitives}, we establish that, under conditions, each is totally pathwise differentiable when $\mathcal{M}$ is a locally nonparametric model of distributions on $\mathcal{Z}$. Some of these primitives involve a coarsened random variable $X=\mathscr{C}(Z)$, where $\mathscr{C} : \mathcal{Z}\rightarrow\mathcal{X}$ is a many-to-one map; when $Z\sim Q\in\mathcal{M}$, we let $Q_X$ denote the marginal distribution of $X$. The primitives fall into three distinct categories, which we elaborate on below.

{
\renewcommand{\arraystretch}{1.225} 
\begin{table}[htb]\small
\centering
\caption{Primitives $\theta$ with domain $\mathcal{M}\times\mathcal{U}$. Given the specified ambient Hilbert spaces $\mathcal{T}\supseteq \mathcal{U}$ and $\mathcal{W}\supseteq \mathrm{Image}(\theta)$, each is totally pathwise differentiable under conditions given in Appendix~\ref{app:primitives}. \\[.5em] 
{\footnotesize\textit{Notation:} 
$Q\in\mathcal{M}$; $X\in\mathcal{X}$ and $A\in\{0,1\}$ are coarsenings or subvectors of $Z$;  
$Q_X$, $Q_{A,X}$ = marginal distribution of $X$, $(A,X)$ when $Z\sim Q$;
$L^\star(\rho)$ = a set of $\rho$-a.s. equivalence classes, each satisfying a moment condition, where $\rho$ and $Q$ are equivalent measures; $L^\star(\rho_X)$ is defined similarly but with $Q$ replaced by $Q_X$; 
$\Pi_P(u)(z)=u(z)-E_P[u(Z)\mymid X=x]$; 
$q_X,p_X$ ($q_{Z|X},p_{Z|X}$) = marginal (conditional) densities under sampling from $Q,P$;
$\lambda,\lambda_1,\lambda_2$ = $\sigma$-finite measures.
}
}
\resizebox{\textwidth}{!}{
 \begin{tabular}{l | c | c | l | l | l  } 
 \hline
  & $\mathcal{U}$ & $\mathcal{T}$ & \multicolumn{1}{|c|}{$\mathcal{W}$} & \multicolumn{1}{|c|}{$\theta(P,u)$} & \multicolumn{1}{|c}{$\dot{\theta}_{P,u}^*(w)$} \\ [0.5ex] 
 \hline\hline
 \hyperref[app:condExp]{Conditional mean} & $L^\star(\rho)$ & $L^2(Q)$ & $L^2(Q_X)$ & $E_P[u(Z)\mymid X=\cdot\;]$ & $\left(\frac{q_X}{p_X}[u-\theta(P,u)]w,\frac{p_{Z|X}}{q_{Z|X}}w\right)$ \\\hline 
 \hyperref[app:multilinearForm]{$r$-fold conditional mean} & $L^\star(\rho^r)$ & $L^2(Q^r)$ & $L^2(Q_X^r)$ & $E_{P^r}[u(Z_{[r]})\mymid X_{[r]}=\cdot\;]$ & see appendix \\\hline
\hyperref[app:condVar]{Conditional variance} & $L^\star(\rho)$ & $L^2(Q)$ & $L^2(Q_X)$ & $\mathrm{Var}_P[u(Z)\mymid X=\cdot\;]$ & $\left(\frac{q_X}{p_X}[\Pi_P(u)^2-\theta(P,u)]w,2\frac{p_{Z|X}}{q_{Z|X}}\Pi_P(u) w\right)$ \\\hline 
 \hyperref[app:condCovar]{Conditional covariance} & $L^\star(\rho)^2$ & $L^2(Q)^{\oplus 2}$ & $L^2(Q_X)$ & $\mathrm{cov}_P[u_1(Z),u_2(Z)\mymid X=\cdot\;]$ & see appendix \\\hline 
  \hyperref[app:kernelEmbed]{Kernel embedding} & $L^\star(\rho_X)$ & $L^2(Q_X)$ & RKHS & $\int K(x,\,\cdot\,) u(x)\, P_X(dx)$ & $\left(wu-\int wu \,dP_X,\frac{p_X}{q_X}w\right)$ \\\hline
 \hyperref[app:optimalValue]{Optimal value} & $\mathcal{U}$ & $\mathcal{T}$ & $\mathbb{R}$ & $\min_{y\in\mathcal{Y}} F_y(P,u)$ & $\dot{F}_{\argmin_yF_y(P,u),P,u}^*$ \\\hline
  \hyperref[app:optimalSolution]{Optimal solution} & $\mathcal{U}$ & $\mathcal{T}$ & $\mathbb{R}^d$ & $\argmin_{w\in \mathcal{W}} F_w(P,u)$ & see appendix \\\hline  \hline
 \hyperref[app:pathwiseDiff]{Pathwise diff. parameter} & \multicolumn{2}{|c|}{$\{0\}$} & $\mathcal{W}$ & $\nu(P)$ & $\left(\dot{\nu}_P^*(w),0\right)$ \\\hline
 $\drsh$  \hyperref[app:rootDensity]{Root-density} & \multicolumn{2}{|c|}{$\{0\}$} & $L^2(\lambda)$ & $(dP/d\lambda)^{1/2}$ &  $(\frac{w}{2\nu(P)}-\int \frac{w}{2\nu(P)}\,dP,0)$ \\\hline 
 $\drsh$  \hyperref[app:condDensity]{Conditional density} & \multicolumn{2}{|c|}{$\{0\}$} & $L^2(Q)$ & $p_{Y|X}$, with $Z=(X,Y)$ &  $\big(\Pi_P(\frac{q}{p}wp_{Y|X}),0\big)$ \\\hline 
 $\drsh$  \hyperref[app:doseResponse]{Dose-response function} & \multicolumn{2}{|c|}{$\{0\}$} & $L^2(\lambda)$ & $\int E_P[Y|A=\cdot\,,X=x] P_X(dx)$ & see appendix \\\hline
  $\drsh$ \hyperref[app:countDens]{Counterfactual density} & \multicolumn{2}{|c|}{$\{0\}$} & $L^2(\lambda)$ & $y\mapsto E_{P}[p(y| A=1,X)]$ & see appendix \\\hline
 \hline
 \hyperref[app:hadDiff]{Hadamard diff. map} & $\mathcal{U}$ & $\mathcal{T}$ & $\mathcal{W}$ & $\zeta(u)$ & $\big(0,\dot{\zeta}_u^*(w)\big)$ \\\hline
 $\drsh$ \hyperref[app:innerProd]{Inner product} & \multicolumn{2}{|c|}{$\mathcal{R}\oplus\mathcal{R}$} & $\mathbb{R}$ & $\langle u_1,u_2\rangle_{\mathcal{R}}$, with $u=(u_1,u_2)$ & $\left(0,(wu_2,wu_1)\right)$ \\\hline
 $\drsh$ \hyperref[app:squaredNorm]{Squared norm} & \multicolumn{2}{|c|}{$\mathcal{T}$} & $\mathbb{R}$ & $\| u\|_{\mathcal{T}}^2$ & $\left(0,2wu\right)$ \\\hline
 $\drsh$ \hyperref[app:diffFun]{Differentiable function} & \multicolumn{2}{|c|}{$\mathbb{R}^d$} & $\mathbb{R}$ & $\zeta(u)$ & $(0,w\, \nabla \zeta(u))$ \\\hline
 $\drsh$ \hyperref[app:pointwise]{Pointwise operations} & \multicolumn{2}{|c|}{$L^2(\lambda)^{\oplus d}$} & $L^2(\lambda)$ & $x\mapsto f_x\circ \bar{u}(x)$, with $\bar{u}(x)=(u_j(x))_{j=1}^d$ & $\left(0,(x\mapsto w(x)\nabla_j f_x\circ \bar{u}(x))_{j=1}^d\right)$ \\\hline
 $\drsh$ \hyperref[app:affine]{Bounded affine map} & \multicolumn{2}{|c|}{$\mathcal{T}$} & $\mathcal{W}$ & $\kappa(u)+c$, with $c\in\mathcal{W}$, bounded linear $\kappa$ & $\big(0,\kappa^*(w)\big)$ \\\hline
 \phantom{$\drsh$ }{\color{myDarkGrey}$\drsh$} \hyperref[app:constant]{Constant map} & \multicolumn{2}{|c|}{$\{0\}$} & $\mathcal{W}$ & $c$, with $c\in\mathcal{W}$ & $\big(0,0\big)$ \\\hline
 \phantom{$\drsh$ }{\color{myDarkGrey}$\drsh$} \hyperref[app:coorProj]{Coordinate projection} & \multicolumn{2}{|c|}{$\mathcal{R}_1\oplus\mathcal{R}_2$} & $\mathcal{R}_1$ & $u_1$, with $u=(u_1,u_2)$ & $\left(0,(w,0)\right)$ \\\hline
 \phantom{$\drsh$ }{\color{myDarkGrey}$\drsh$} \hyperref[app:changeMeasure]{Change of measure} & \multicolumn{2}{|c|}{$L^2(\lambda_1)$} & $L^2(\lambda_2)$ & $u$ & $(0,\frac{d\lambda_2}{d\lambda_1} w)$ \\\hline
 \phantom{$\drsh$ }{\color{myDarkGrey}$\drsh$} \hyperref[app:lifting]{Lift  to new domain} & \multicolumn{2}{|c|}{$L^2(Q_X)$} & $L^2(Q)$ & $z\mapsto u(x)$ & $\big(0,E_Q[w(Z)|X=\cdot\;]\big)$ \\\hline
 \phantom{$\drsh$ }{\color{myDarkGrey}$\drsh$} \hyperref[app:fix]{Fix binary argument} & \multicolumn{2}{|c|}{$L^2(Q_{A,X})$} & $L^2(Q_X)$ & $x\mapsto u(1,x)$ & $\big(0,(a,x)\mapsto \frac{a}{Q(A=1|x)}w(x)\big)$ \\\hline
 \hline
 \end{tabular}
 }
 \label{tab:primitives}
\end{table}
}

The first consists of maps $\theta$ that depend on both their distributional input $P$ and Hilbert-valued input $u$. A simple example is a conditional mean operator $\theta(P,u)(\cdot)=E_P[u(Z)\mid X=\,\cdot\,]$, with $u$ a real-valued function. An $r$-fold conditional mean is a slightly more complex case, corresponding to a conditional variant of the estimand pursued by U-statistics \citep{hoeffding1948class}. Here, $u$ is a function of $r\ge 2$ independent observations and $\theta(P,u)(\cdot)=E_{P^r}[u(Z_1,Z_2,\ldots,Z_r)\mid (X_1,X_2,\ldots,X_r)=\cdot\,]$. Conditional variances and covariances are other examples, as are the optimal value and solution of optimization problems whose objective functions depend on the data-generating distribution $P$ and a Hilbert-valued input $u$. Kernel embedding operators $\theta(P,u)=\int K(x,\,\cdot\,) u(x)\, P_X(dx)$ represent another example, where $u$ is a bounded function and $K$ is a bounded kernel of a reproducing kernel Hilbert space (RKHS) \citep[Chapter 4.9.1.1 of][]{berlinet2011reproducing}. This primitive has been used to construct tests of equality of distributions \citep{gretton2012kernel,muandet2021counterfactual} and independence \citep{gretton2007kernel}.

The second category consists of maps $\theta$ that solely depend on their distribution-valued input, so that there is some pathwise differentiable parameter $\nu : \mathcal{M}\rightarrow\mathcal{W}$ such that $\theta(P,u)=\nu(P)$ for all $(P,u)\in\mathcal{M}\times\mathcal{U}$. Since $u$ plays no role in determining the value of $\theta(P,u)$, it suffices to assume that $\mathcal{U}$ is a trivial vector space $\{0\}$. Examples of $\nu$ include root density functions, conditional density functions, dose-response functions \citep{diaz2013targeted}, and counterfactual density functions \citep{kennedy2021semiparametric}, as detailed in \cite{luedtke2023one}. That work also shows that conditional average treatment effect functions \citep{hill2011bayesian} are pathwise differentiable.

The third category consists of maps $\theta$ that solely depend on their Hilbert-valued input, so that there is some Hadamard differentiable map $\zeta : \mathcal{U}\rightarrow\mathcal{W}$ such that $\theta(P,u)=\zeta(u)$ for all $(P,u)$. Simple examples are inner products, squared norms, and finite-dimensional differentiable functions. Pointwise operations provide a richer example --- for instance, $\zeta$ may take as input functions $u_1$ and $u_2$ and return $x\mapsto f(u_1(x),u_2(x))$ with $f(a,b):=[a+b]\mathrm{cos}(a)$. The pointwise operation may also depend on $x$, as would be the case if $\zeta(u_1,u_2)(x)=f_x(u_1(x),u_2(x))$ with $f_x(a,b)=(a+x)\mathrm{cos}(b)$. 

Bounded affine maps provide further examples of primitives from the third category. Simple special cases are given by constant maps and coordinate projections. Changes of measure are another special case, so that for measures $\lambda_2\ll \lambda_1$ with $d\lambda_2/d\lambda_1$ essentially bounded, $\zeta(u)$ is the isometric embedding of $u\in L^2(\lambda_1)$ into $L^2(\lambda_2)$; this map facilitates the chaining of primitives $\zeta_1 : \mathcal{U}_0\rightarrow L^2(\lambda_1)$ and $\zeta_2 : L^2(\lambda_2)\rightarrow \mathcal{W}_0$ via $\zeta_2\circ \zeta\circ \zeta_1$. Primitives $\zeta_1 : \mathcal{U}_0\rightarrow L^2(Q_X)$ and $\zeta_2 : L^2(Q)\rightarrow \mathcal{W}_0$ can similarly be chained together using a lifting map $\zeta : L^2(Q_X)\rightarrow L^2(Q)$. Further examples of affine maps are those that fix a binary argument to unity, so that $\zeta(u)(x)=u(1,x)$. When paired with a conditional mean operator $(P,u)\mapsto E_P[u(Z)\mymid (A,X)=\cdot\,]$, this makes it possible to express $(P,u)\mapsto E_P[u(Z)\mymid A=1,X=\cdot\,]$ as a composition of totally pathwise differentiable primitives; such conditional expectations appear frequently in parameters arising in causal inference \citep{robins1986new}. Bounded affine maps $\zeta$ can also be composed with conditional mean primitives to express affine functionals of regressions commonly studied in the literature \citep{chernozhukov2018learning,chernozhukov2022automatic}, namely $\psi(P)=E_P[\zeta(E_P[Y\mymid X=\cdot\,])(Z)]$ with $\zeta : L^2(Q_X)\rightarrow L^2(Q)$.

The choice of ambient Hilbert spaces $\mathcal{T}$ and $\mathcal{W}$ can impact a primitive's total pathwise differentiability. For the first five primitives in Table~\ref{tab:primitives} and the conditional density primitive, this choice involves selecting a $Q\in\mathcal{M}$ indexing one or more $L^2$ spaces. As detailed in Appendix~\ref{app:primitives}, sufficient conditions for each primitive's total pathwise differentiability at a given $(P,u)$ require one or more of the following to be bounded: $dP/dQ$, $dQ/dP$, $dP_X/dQ_X$, or $dQ_X/dP_X$. If $Q=P$ then these functions are all equal to unity, and so these boundedness conditions certainly hold. Given this, when composing these primitives to differentiate $\psi$ at $P$ we generally recommend choosing $Q=P$ to index the ambient Hilbert spaces. 

In Appendix~\ref{app:primitiveRemarks}, we describe how to transform primitives for conditional quantities to those for marginal quantities. This provides a way to derive a marginal mean primitive, for example. We also propose a way to adapt the primitives from nonparametric to semiparametric models, though we caution that this proposal requires projecting onto the tangent space of the model and so implementing it can require specialized expertise.

\subsection{Example: nonparametric $R^2$}\label{sec:illustrationR2}

We apply Algorithm~\ref{alg:backprop} to compute the EIF of the nonparametric $R^2$ parameter at a distribution $P$ in a nonparametric model $\mathcal{M}$. To avoid notational overload on $P$, we denote a generic element of $\mathcal{M}$ by $P'$ when presenting this example.

\begin{figure}[b]
    \centering
   \includegraphics[width=\textwidth]{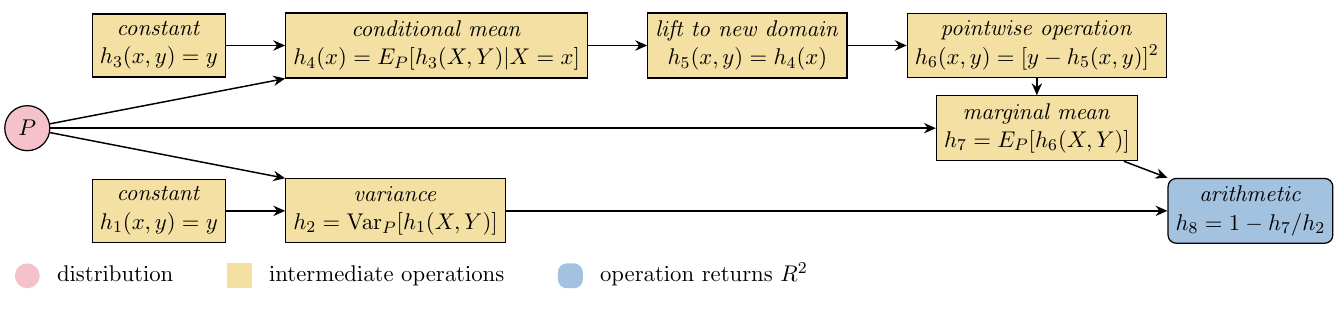}

    \caption{Computation graph for the nonparametric $R^2$ parameter. 
    }\label{fig:compGraph}
\end{figure}

We express $\psi(P'):=1-E_{P'}[\{Y-\mu_{P'}(X)\}^2]/\mathrm{Var}_{P'}(Y)$ as a composition of eight primitives: a constant $\theta_1(P',0)=(z\mapsto y)\in L^2(P)$, variance $\theta_2(P',h_1)=\mathrm{Var}_{P'}[h_1(Z)]\in \mathbb{R}$, constant $\theta_3(P',0)=(z\mapsto y)\in L^2(P)$, conditional mean $\theta_4(P',h_3)(\cdot)=E_{P'}[h_3(Z)\mymid X=\cdot\,]\in L^2(P_X)$, lifting $\theta_5(P',h_4)=(z\mapsto h_4(x))\in L^2(P)$, pointwise operation $\theta_6(P',h_5)=(z\mapsto [y-h_5(z)]^2)\in L^2(P)$, mean $\theta_7(P',h_6)=E_{P'}[h_6(Z)]\in\mathbb{R}$, and differentiable function $\theta_8(P',(h_2,h_7))=1-h_7/h_2\in \mathbb{R}$. Here, $\mathrm{pa}(1)=\mathrm{pa}(3)=\emptyset$, $\mathrm{pa}(2)=\{1\}$, $\mathrm{pa}(4)=\{3\}$, $\mathrm{pa}(5)=\{4\}$, $\mathrm{pa}(6)=\{5\}$, $\mathrm{pa}(7)=\{6\}$, and $\mathrm{pa}(8)=\{2,7\}$. The redundancy of the constant maps $\theta_1$ and $\theta_3$ has no impact on the applicability of Algorithm~\ref{alg:backprop}. For the composition of primitives to express $\psi$, it must return $\psi(P')$ for every $P'\in\mathcal{M}$. This will hold if each claimed element of $L^2(P)$ or $L^2(P_X)$ is a $P$-a.s. equivalence class of functions that satisfies an appropriate second-moment condition. To this end, we suppose all pairs of distributions in $\mathcal{M}$ are mutually absolutely continuous, which makes it so, for each $P'\in\mathcal{M}$, $P'$-a.s. and $P$-a.s. equivalence classes of functions are one and the same. For the moment condition, we suppose there exists $c<\infty$ such that $P'(|Y|\le c)=1$ for all $P'\in\mathcal{M}$. To avoid dividing by zero, we also suppose $\mathrm{Var}_{P'}(Y)>0$ for all $P'$.

Figure~\ref{fig:compGraph} illustrates the computation graph for our representation of $\psi$ \citep{linnainmaa1970representation,bauer1974computational}. In it, each primitive is a node. A directed arrow points from node $i$ to node $j$ if $i\in\mathrm{pa}(j)$, and from $P$ to node $j$ if $\theta_j$ depends nontrivially on its distribution-valued argument. Table~\ref{tab:backpropIllustrationR2} provides a step-by-step illustration of how Algorithm~\ref{alg:backprop} computes the EIF at $P$.  As anticipated in Section~\ref{sec:contributions}, this manual application of automatic differentiation has yielded an analytical expression for the EIF.

\begin{table}[tb]
\centering
\caption{Step-by-step evaluation of Algorithm~\ref{alg:backprop} for the nonparametric $R^2$ parameter when it is expressed as in Figure~\ref{fig:compGraph}. Each row $j$ lists the values $f_0,f_1,\ldots,f_j$ take just before executing line~\ref{ln:augment} of the for loop in Algorithm~\ref{alg:backprop}. The value of $f_0$ in the bottom row is the {\color{CBmagenta}efficient influence function}.\\[.5em] 
{\footnotesize\textit{Notation:}  
$\mu_P(x)=E_P[Y\mymid X=x]$, $\sigma_P^2=\mathrm{Var}_P(Y)$, and $\gamma_P=E_P[\mathrm{Var}_P(Y\mymid X)]$.}}
 \resizebox{\textwidth}{!}{
\begin{tabular}{c || c | c | c | c | c | c | c | c | c} 
 \multirow{2}{*}{\shortstack{Step\\($j$)}} & \multicolumn{7}{|c}{Variables in Algorithm~\ref{alg:backprop}}\\[.2em]
 & \hspace{.25em}$f_8$\hspace{.25em} & $f_7$ & $f_6$ & $f_5$ & $f_4$& \hspace{-.25em}\phantom{$^\S$\hspace{.25em}}$f_3$\hspace{.25em}$^\S$\hspace{-.25em} & \hspace{.5em}$f_2$\hspace{.5em} & \hspace{-.25em}\phantom{$^\S$\hspace{.25em}}$f_1$\hspace{.25em}$^\S$\hspace{-.25em} & $f_0$ \\\hline\hline
8 & $1$\vphantom{$\frac{1}{\sigma_P^2}$} & $0$ & $0$ & $0$ & $0$ & $-$ & $0$ & $-$ & $0$ \\
7 & & $-\frac{1}{\sigma_P^2}$ & $0$ & $0$ & $0$ & $-$ & $\frac{\gamma_P}{\sigma_P^{4}}$ & $-$ & $0$ \\
6 & & & $-\frac{1}{\sigma_P^2}$ & $0$ & $0$ & $-$ & $\frac{\gamma_P}{\sigma_P^{4}}$ & $-$ & $z\mapsto -\frac{[y-\mu_P(x)]^2-\gamma_P}{\sigma_P^{2}}$ \\
5 & & & & $z\mapsto \frac{2[y-\mu_P(x)]}{\sigma_P^2}$ & $0$ & $-$ & $\frac{\gamma_P}{\sigma_P^{4}}$ & $-$ & $z\mapsto -\frac{[y-\mu_P(x)]^2-\gamma_P}{\sigma_P^{2}}$ \\
4 & & & & & \phantom{ $^\dagger$}$0$ $^\dagger$ & $-$ & $\frac{\gamma_P}{\sigma_P^{4}}$ & $-$ & $z\mapsto -\frac{[y-\mu_P(x)]^2-\gamma_P}{\sigma_P^{2}}$ \\
3 & & & & & & $-$ &  $\frac{\gamma_P}{\sigma_P^{4}}$ & $-$ & $z\mapsto -\frac{[y-\mu_P(x)]^2-\gamma_P}{\sigma_P^{2}}$ \\
2 & & & & & & & $\frac{\gamma_P}{\sigma_P^{4}}$ & $-$ & $z\mapsto -\frac{[y-\mu_P(x)]^2-\gamma_P}{\sigma_P^{2}}$ \\
1 & & & & & & & & $-$ & $z\mapsto \frac{[\{y-E_P[Y]\}^2-\sigma_P^2]\gamma_P}{\sigma_P^{4}}-\frac{[y-\mu_P(x)]^2-\gamma_P}{\sigma_P^2}$ \\
0 & & & & & & & & & {\color{CBmagenta}$\bm{z\mapsto \frac{[\{y-E_P[Y]\}^2-\sigma_P^2]\gamma_P}{\sigma_P^{4}}-\frac{[y-\mu_P(x)]^2-\gamma_P}{\sigma_P^2}}$} \\
\end{tabular}
}\vspace{1em}
\begin{minipage}{\linewidth}\footnotesize
\begin{itemize}[leftmargin=*]
    \item[$^\S$] $\theta_3$ and $\theta_1$ are constant maps and so have derivative zero. Hence, backpropagating $f_3$ and $f_1$ on steps $j=3$ and $j=1$ has no impact on earlier nodes ($f_i : i<j$), and so these values need not be stored.
    \item[$^\dagger$] $\theta_5 : L^2(P_X)\rightarrow L^2(P)$ is a lifting, so $f_4$ updates as $f_4\mathrel{+{=}}E_P[f_5(Z)\mymid X=\cdot\,]$. This update equals zero.
\end{itemize}
\end{minipage}
 \label{tab:backpropIllustrationR2}
\end{table}

In Appendix~\ref{app:illustratingBackprop}, we apply Algorithm~\ref{alg:backprop} to two other parameters: the expected density $\psi(P)=E_P[p(Z)]$ and expected conditional covariance $\psi(P)=E_P[\mathrm{cov}_P(A\,,Y\mymid X)]$.

\section{Probabilistic programming of efficient estimators}\label{sec:probProg}

\subsection{Estimator construction}\label{sec:estimator}

Suppose now that we observe independent draws $Z_1,Z_2,\ldots,Z_n$ from an unknown $P\in\mathcal{M}$. Our goal is to infer $\psi(P)$. We do this using a variant of a one-step estimator \citep{pfanzagl1982contributions}, which debiases an initial estimator using the efficient influence operator.

\begin{algorithm}[tb]
   \caption{Obtain initial estimates of $\psi(P)$ and $\dot{\psi}_P^*(f_k)$ using data drawn from $P$}
   \label{alg:estimatedBackprop}
   \linespread{1.05}\selectfont
\begin{algorithmic}[1]
    \Require $f_k\in\mathcal{W}_\psi$ and data $Z_{\mathscr{I}}:=\{Z_i : i\in\mathscr{I}\subseteq [n]\}$ for nuisance estimation on lines~\ref*{ln:estForward} and \ref*{ln:estBackward}
    \algrule
    \Statex $\triangleright$ {\color{CBblue}\textbf{forward pass}} to obtain an initial estimate
    \For {$j=1,2,\ldots,k$}
        \State \textbf{estimate} $\theta_j(P,{\color{CBteal}\widehat{h}_{\mathrm{pa}(j)}})$ with {\color{CBblue}$\widehat{h}_j$} \hfill $\triangleright$ ${\color{CBteal}\widehat{h}_{\mathrm{pa}(j)}} :=({\color{CBblue}\widehat{h}_i} : i \in\mathrm{pa}(j))$ \label{ln:estForward}
    \EndFor
    \algrule
    \Statex $\triangleright$ {\color{CBmagenta}\textbf{backward pass}} to estimate efficient influence operator
    \State \textbf{initialize} ${\color{CBmagenta}\widehat{f}_0}:z\mapsto 0$; ${\color{CBorange}\widehat{f}_1},{\color{CBorange}\widehat{f}_2},\ldots,{\color{CBorange}\widehat{f}_{k-1}}$ as the 0 elements of $\mathcal{W}_1,\mathcal{W}_2,\ldots,\mathcal{W}_{k-1}$; and ${\color{CBorange}\widehat{f}_k}=f_k$
    \For {$j=k,k-1,\ldots,1$} 
        \State \textbf{estimate} $\dot{\theta}_{j,P,{\color{CBteal}\widehat{h}_{\mathrm{pa}(j)}}}^*({\color{CBorange}\widehat{f}_j})$ with a {\color{CBred}$\widehat{\vartheta}_j$} whose first entry, {\color{CBred}$\widehat{\vartheta}_{j,0}$}, is compatible with {\color{CBblue}$\widehat{h}_j$}, {\color{CBteal}$\widehat{h}_{\mathrm{pa}(j)}$}\label{ln:estBackward} 
        \Statex \hspace{\algorithmicindent} $\triangleright$ definition of compatibility: $\exists\, {\color{slateGrey}\widehat{P}_j}\in\mathcal{M}$ s.t. ${\color{CBblue}\widehat{h}_j}=\theta_j({\color{slateGrey}\widehat{P}_j},{\color{CBteal}\widehat{h}_{\mathrm{pa}(j)}})$ and $\int {\color{CBred}\widehat{\vartheta}_{j,0}}(z)\,{\color{slateGrey}\widehat{P}_j}(dz)=0$
        \State \textbf{augment} $({\color{CBmagenta}\widehat{f}_0},{\color{CBorange}\widehat{f}_{\mathrm{pa}(j)}}) \mathrel{+{=}} {\color{CBred}\widehat{\vartheta}_j}$ \label{ln:estAugment}
    \EndFor
    \algrule
    \State \Return initial estimate {\color{CBblue}$\widehat{h}_k$} and estimated efficient influence operator ${\color{CBmagenta}\widehat{f}_0}$
\end{algorithmic}
\end{algorithm}

Algorithm~\ref{alg:estimatedBackprop} presents a method to obtain this initial estimate and the evaluation of the efficient influence operator at a user-specified $f_k$. The algorithm employs a forward and backward pass. In the forward pass, it recursively estimates the $P$-dependent evaluations of the operators $\theta_j$ that appear in Algorithm~\ref{alg:parameter}, ultimately yielding the initial estimate $\widehat{h}_k$ of $\psi(P)$. In the backward pass, it recursively estimates $P$-dependent adjoint evaluations and backpropagates them as in Algorithm~\ref{alg:backprop}, yielding an estimate $\widehat{f}_0$ of the efficient influence operator evaluation $\dot{\psi}_P^*(f_k)$. Black-box machine learning tools may be used to estimate all of the needed nuisances. When $\psi$ is real-valued and $f_k=1$, $\widehat{f}_0$ is an estimate of the EIF. To fix ideas, we focus on this real-valued setting in the remainder of the main text and refer the reader to Appendix~\ref{app:HilbertValParam} for a discussion of more general settings where $\psi$ takes values in a generic separable Hilbert space, such as an $L^2$ space---these more general settings cover, for example, the case where $\psi$ returns a square integrable dose-response function..

Inspired by one-step estimation, the outputs of Algorithm~\ref{alg:estimatedBackprop} can be used to construct 
\begin{align*}
    \widehat{\psi}:=\widehat{h}_k + \frac{1}{n}\sum_{i=1}^n \widehat{f}_0(Z_i)
\end{align*}
and 95\% confidence interval $\widehat{\psi}\pm 1.96 \,\widehat{\sigma}/n^{1/2}$, where $\widehat{\sigma}^2$ is the empirical variance of $\widehat{f}_0$. In fact, $\widehat{\psi}$ is exactly the traditional one-step estimator $\psi(\widehat{P})+\frac{1}{n}\sum_{i=1}^n \dot{\psi}_{\widehat{P}}^*(1)(Z_i)$ under the strong compatibility condition that there exists a $\widehat{P}\in\mathcal{M}$ such that $\widehat{h}_j=\theta_j(\widehat{P},\widehat{h}_{\mathrm{pa}(j)})$ and $\widehat{\vartheta}_j=\dot{\theta}_{j,\widehat{P},\widehat{h}_{\mathrm{pa}(j)}}^*(\widehat{f}_j)$ for all $j\in [k]$. To ease implementation, Algorithm~\ref{alg:estimatedBackprop} makes a weaker compatibility requirement that allows each $\widehat{\vartheta}_{j,0}$, $\widehat{h}_j$, and $\widehat{h}_{\mathrm{pa}(j)}$ to be compatible with a $\widehat{P}_j$ that may depend on $j$. This dependence makes it so $\widehat{\psi}$ may not be a traditional one-step estimator; the benefit of permitting this is that it allows each primitive to be paired with a nuisance estimation routine that can be implemented once and reused across different parameters $\psi$---see the next subsection for details.

Because $\widehat{\psi}$ is not generally a traditional one-step estimator, we state sufficient conditions for its efficiency here. These conditions concern the same objects as for a traditional one-step estimator \citep{bickel1993efficient}, namely a remainder  $\mathcal{R}_n:=\widehat{h}_k + \int \widehat{f}_0(z)\, P(dz)  - \psi(P)$ and drift term $\mathcal{D}_n:=\int [\widehat{f}_0(z)-f_0(z)] (P_n-P)(dz)$, with $P_n$ the empirical distribution and $f_0:=\dot{\psi}_P^*(1)$ the EIF. 
\begin{theorem}[Efficiency of $\widehat{\psi}$]\label{thm:efficient}
    Suppose $\psi$ is real-valued, $f_k=1$, and the conditions of Theorem~\ref{thm:backpropWorks} hold. If $\mathcal{D}_n$ and $\mathcal{R}_n$ are $o_p(n^{-1/2})$, then $\widehat{\psi}$ is an efficient estimator of $\psi(P)$, in that
    \begin{align*}
        \widehat{\psi} - \psi(P)=\frac{1}{n}\sum_{i=1}^n f_0(Z_i) + o_p(n^{-1/2}).
    \end{align*}
\end{theorem}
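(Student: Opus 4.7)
The plan is to follow the standard one-step decomposition, reducing $\widehat{\psi}-\psi(P)$ to three easily controlled pieces. Starting from the definition
\[
\widehat{\psi} - \psi(P) = \widehat{h}_k + \tfrac{1}{n}\sum_{i=1}^n \widehat{f}_0(Z_i) - \psi(P),
\]
I would add and subtract $\int \widehat{f}_0\,dP$, collecting the non-empirical terms as $\mathcal{R}_n$. I would then add and subtract $\int f_0\,d(P_n-P)$ to split the remaining empirical process over $\widehat{f}_0$ into $\mathcal{D}_n$ plus $\int f_0\,d(P_n-P)$. This gives the identity
\[
\widehat{\psi} - \psi(P) = \mathcal{R}_n + \mathcal{D}_n + \int f_0\,d(P_n-P).
\]

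The crucial remaining ingredient is $\int f_0\,dP = 0$. I would establish this via Theorem~\ref{thm:backpropWorks}, which guarantees that $f_0 = \dot{\psi}_P^*(1)$ lies in the tangent space $\dot{\mathcal{M}}_P$. By definition, $\dot{\mathcal{M}}_P$ is the $L^2(P)$-closure of linear combinations of scores of quadratic mean differentiable submodels through $P$. Each such score integrates to zero against $P$, and mean-zero functions form a closed subspace of $L^2(P)$, so $\int f_0\,dP = 0$. Consequently $\int f_0\,d(P_n-P) = \tfrac{1}{n}\sum_{i=1}^n f_0(Z_i)$, and combining this with the hypothesized $o_p(n^{-1/2})$ bounds on $\mathcal{R}_n$ and $\mathcal{D}_n$ yields the asymptotic linearization claimed in the theorem; efficiency is then immediate because $f_0$ is the EIF.

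The proof is essentially a one-line algebraic identity, so there is no serious analytical obstacle. The one step worth spelling out explicitly is the mean-zero property of $f_0$, which rests on the structural definition of $\dot{\mathcal{M}}_P$ rather than on any specific property of the estimators produced by Algorithm~\ref{alg:estimatedBackprop}. In particular, the weaker compatibility requirement imposed on the $\widehat{\vartheta}_{j,0}$ plays no role in this argument; its purpose is to facilitate control of $\mathcal{R}_n$, which is assumed to vanish sufficiently fast here.
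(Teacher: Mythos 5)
Your decomposition is exactly the one the paper uses — the paper simply says "adding and subtracting terms" yields $\widehat{\psi}-\psi(P)=\frac{1}{n}\sum_{i=1}^n f_0(Z_i)+\mathcal{D}_n+\mathcal{R}_n$, and your spelled-out algebra, together with the observation that $f_0\in\dot{\mathcal{M}}_P$ implies $\int f_0\,dP=0$, is precisely what makes that identity hold. Correct, and essentially the same approach as the paper's proof.
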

Our Python package implements a cross-fitted counterpart of $\widehat{\psi}$, which we present and study in Appendix~\ref{app:estReal}. Like $\widehat{\psi}$, this estimator is efficient when a remainder and drift term are negligible (see Theorem~\ref{thm:CFefficient}). The cross-fitted estimator may be preferred over $\widehat{\psi}$ because its drift term is $o_p(n^{-1/2})$ even without an empirical process condition \citep{schick1986asymptotically}. This is in contrast to $\widehat{\psi}$, for which such a condition is often used to ensure negligibility of that term  \citep[Lemma 19.24 of][]{van2000asymptotic}.

The compatibility condition in Algorithm~\ref{alg:estimatedBackprop} helps make $\mathcal{R}_n=o_p(n^{-1/2})$ plausible. To see this, we add and subtract terms and use that $\int \widehat{\vartheta}_{j,0}\, d\widehat{P}_j=0$ and $\widehat{f}_0=\sum_{j=1}^k \widehat{\vartheta}_{j,0}$ to reexpress $\mathcal{R}_n$ as
\begin{align*}
    \mathcal{R}_n&= \left[\widehat{h}_k - \psi(P) - \sum_{j=1}^k \int \vartheta_{j,0}(z) (\widehat{P}_j-P)(dz)\right] + \left[\sum_{j=1}^k \int [\vartheta_{j,0}(z)-\widehat{\vartheta}_{j,0}(z)] (\widehat{P}_j-P)(dz)\right],
\end{align*}
where $\vartheta_{j,0}$ is the first entry of the augmentation term $\dot{\theta}_{j,P,h_{\mathrm{pa}(j)}}^*(f_j)$ from Algorithm~\ref{alg:backprop}. If both terms on the right are $o_p(n^{-1/2})$, then $\mathcal{R}_n$ is as well. Moreover, since the above holds for any $\widehat{P}_j$ compatible with the nuisances $\widehat{h}_j$, $\widehat{h}_{\mathrm{pa}(j)}$, and $\widehat{\vartheta}_{j,0}$, it suffices that there exist such $\widehat{P}_j$, $j\in [k]$, that make each of the two terms on the right $o_p(n^{-1/2})$. By Cauchy-Schwarz, the magnitude of the latter term upper bounds by $\sum_{j=1}^k \|d\widehat{P}_j/dP-1\|_{L^2(P)}\|\widehat{\vartheta}_{j,0}-\vartheta_{j,0}\|_{L^2(P)}$, which is $o_p(n^{-1/2})$ under an $n^{-1/4}$-rate condition akin to those arising in the study of traditional one-step estimators. In Appendix~\ref{app:vonMises}, we show that the first term on the right is the remainder in a von Mises expansion, which makes it so it too will be $o_p(n^{-1/2})$ under appropriate conditions. Similar arguments have been made previously to motivate one-step estimators when $\psi$ is a generic pathwise differentiable parameter \citep{robins2009quadratic,luedtke2023one}.

For a particular parameter $\psi$, the remainder $\mathcal{R}_n$ can also be analyzed explicitly. In Appendix~\ref{app:vonMisesR2}, we provide this analysis for the nonparametric $R^2$ parameter, and we also provide an analogue of Table~\ref{tab:backpropIllustrationR2} that helps to visualize Algorithm~\ref{alg:estimatedBackprop}.

\subsection{Software implementation}\label{sec:software}

To implement Algorithm~\ref{alg:estimatedBackprop} in software, it is necessary to convert a computer code representation of the parameter $\psi$ --- such as the one given on lines \ref{ln:vdef}-\ref{ln:r2def} of Example~\ref{ex:r2} --- into the compositional form in Algorithm~\ref{alg:parameter}. Any computer program, including those with function calls and control flow structures like loops and conditionals, can be represented via its computation graph \citep{linnainmaa1970representation,bauer1974computational}. When the operations in this graph are totally pathwise differentiable primitives, as they are in Figure~\ref{fig:compGraph}, this graph yields an expression for $\psi$ as in Algorithm~\ref{alg:parameter}. Representing programs via their computation graphs enables user-friendly automatic differentiation, as seen in software like TensorFlow, PyTorch, and JAX \citep{tensorflow2015-whitepaper,jax2018github,NEURIPS2019_9015}.

Our proof-of-concept package also adopts computation graphs, using a static graph structure \citep{baydin2018automatic}. When defining the parameter, each primitive call takes dimple objects as inputs and returns another dimple object. At this stage, a dimple object comprises a primitive and a set of pointers to its inputs, serving the roles of $\theta_j$ and $\mathrm{pa}(j)$ in Algorithm~\ref{alg:parameter}. The dimple objects can be used to express $\psi$ as in Algorithm~\ref{alg:parameter}, and also provide the form of the computation graph. 
Estimation --- as executed, for instance, via `\verb|estimate(R2)|' in Example~\ref{ex:r2} --- calls Algorithm~\ref{alg:estimatedBackprop} with this computation graph to populate each dimple object with initial estimates $\widehat{h}_j$ in the forward pass, with adjoint estimates $\widehat{f}_j$ added in the backward pass.

Primitives can be implemented independently of one another and reused across different parameters $\psi$. In this modular approach, each primitive $\theta_j$ in Algorithm~\ref{alg:estimatedBackprop} requires two associated nuisance estimation routines: one for the forward pass and another for the backward pass. Algorithm~\ref{alg:nuisanceRoutines} details the inputs and outputs of these routines for a generic primitive $\theta : \mathcal{M}\times\mathcal{U}\rightarrow\mathcal{W}$. Like Algorithm~\ref{alg:estimatedBackprop}, the backward routine makes a compatibility requirement: the estimates of $\theta(P,u)$ and the first entry of $\dot{\theta}_{P,u}^*(w)$ must be compatible with a common $\widehat{P}\in\mathcal{M}$. There is no need to explicitly exhibit $\widehat{P}$, which in many cases will make it possible to only estimate certain summaries of $P$, such as regression functions or conditional densities. The forward and backward routines may also take hyperparameters as inputs, such as specifications of statistical learning algorithms to use to fit nuisances or folds to be used during cross-fitting.

\begin{algorithm}[tb]
   \caption{Requirements of nuisance estimation routines for a generic $\theta : \mathcal{M}\times\mathcal{U}\rightarrow\mathcal{W}$}
   \label{alg:nuisanceRoutines}
   \linespread{1.05}\selectfont
\begin{algorithmic}[1]
    \Statex $\triangleright$ {\color{CBblue}\textbf{forward routine}}
    \Statex \textbf{input} data $Z_{\mathscr{I}}:=\{Z_i : i\in\mathscr{I}\subseteq [n]\}$ and ${\color{CBteal}u}\in\mathcal{U}$
    \Statex \textbf{output} an estimate ${\color{CBblue}\widehat{h}}$ of $\theta(P,{\color{CBteal}u})$
    \algrule
    \Statex $\triangleright$ {\color{CBred}\textbf{backward routine}}
    \Statex \textbf{input} same $Z_{\mathscr{I}}$ and {\color{CBteal}$u$} as forward routine, output {\color{CBblue}$\widehat{h}$} from forward routine, and ${\color{CBorange}w}\in\mathcal{W}$
    \Statex \textbf{output} an estimate {\color{CBred}$\widehat{\vartheta}$} of $\dot{\theta}_{P,{\color{CBteal}u}}^*({\color{CBorange}w})$ whose first entry, {\color{CBred}$\widehat{\vartheta}_0$}, is compatible with {\color{CBblue}$\widehat{h}$}, {\color{CBteal}$u$}
    \Statex \hspace{\algorithmicindent} $\triangleright$ definition of compatibility: $\exists\, {\color{slateGrey}\widehat{P}}\in\mathcal{M}$ s.t. ${\color{CBblue}\widehat{h}}=\theta({\color{slateGrey}\widehat{P}},{\color{CBteal}u})$ and 
    $\int {\color{CBred}\widehat{\vartheta}_0}(z)\,{\color{slateGrey}\widehat{P}}(dz)=0$
\end{algorithmic}
\end{algorithm}

Algorithm~\ref{alg:estimatedBackprop} can be executed using these nuisance estimation routines. At step $j$ of the forward pass, the forward routine for $\theta_j$ is called with inputs $Z_{\mathscr{I}}$ and $\widehat{h}_{\mathrm{pa}(j)}$, yielding $\widehat{h}_j$. At step $j$ of the backward pass, the backward routine is called with inputs $Z_{\mathscr{I}}$, $\widehat{h}_{\mathrm{pa}(j)}$, $\widehat{h}_j$, and $\widehat{f}_j$, yielding $\widehat{\vartheta}_j$.

Forward and backward routines for the primitives in Table~\ref{tab:primitives} are given in Appendix~\ref{app:nuisanceEstimation}. Those routines allow needed regression, density functions, and density ratios to be estimated using flexible statistical learning tools, such as gradient-boosted trees \citep{ke2017lightgbm}. Riesz losses \citep{chernozhukov2022automatic} are used when estimating the adjoint of bounded affine maps whose ambient Hilbert spaces depend on $P$.

\section{Simulation study}\label{sec:simulation}

A simulation study was conducted to verify whether a handful of lines of \texttt{pydimple} code truly suffice to construct performant one-step estimators for the examples in Section~\ref{sec:illustrationPackage}. Such a finding would be important given that existing implementations of one-step estimators in these problems can consist of hundreds of lines of code \citep[e.g.,][]{luedtke2017sequential,williamson2021nonparametric}. For a given Monte Carlo replicate of our study, a dataset \texttt{dat} was generated as an iid sample of $n\in \{250,1000,4000,16000\}$ draws from a distribution $P$, and then the \texttt{pydimple} code from Section~\ref{sec:illustrationPackage} was executed to perform estimation. A total of 1000 Monte Carlo replicates were performed for each scenario. Code for this study is available at \url{http://github.com/alexluedtke12/pydimple}.

For the expected density parameter, we followed \cite{carone2018toward} in letting $P=\mathrm{Beta}(3,5)$. For the nonparametric $R^2$ parameter, we followed \cite{williamson2021nonparametric} in letting $P$ be the distribution of $(X,Y)$ with $X=(X_1,X_2)$ a pair of independent $\mathrm{Unif}[-1,1]$ random variables and $Y\mymid X\sim N(25X_1^2/9,1)$. For the longitudinal G-formula example, we used the data-generating process from Simulation 1 of \cite{luedtke2017sequential}.

{
\renewcommand{\arraystretch}{1.225} 
\begin{table}[tb]
\centering
\caption{Performance of \texttt{pydimple} package in the examples from Section~\ref{sec:illustrationPackage}. Assessed via coverage of 95\% confidence intervals, relative width = $n^{1/2}\times(\textnormal{mean width of CI})/(2\times 1.96\times \textnormal{standard deviation of EIF})$, relative variance = $n\times(\textnormal{variance of estimator})/(\textnormal{variance of EIF})$, and squared bias divided by the mean squared error. Asymptotically, an efficient estimator would have these four quantities equal to the values in the \textit{Target Value} row.}
\resizebox{\textwidth}{!}{%
     \begin{tabular}{llcccc}
         & $n$ & Coverage & Rel. Width & Rel. Variance & Bias$^2$/MSE \\[-.25em]
        \textit{Target Value} & & $\ge\,$\textit{95\%}\hphantom{$\,\ge$} & $\le\,$\textit{1.00}\hphantom{$\,\le$} & $\le\,$\textit{1.00}\hphantom{$\,\le$} & $=\,$\textit{0.00}\hphantom{$\,=$}  \\\hline\hline
        Expected Density & 250 & 91\% & 0.94 & 1.13 & 0.04 \\
        & 1000 & 92\% & 0.96 & 0.95  & 0.07 \\
         & 4000 & 93\% & 0.97 & 1.06  & 0.02 \\
         & 16000 & 95\% & 0.98 & 0.98  & 0.03 \\\hline
         Nonparametric $R^2$ & 250 & 87\% & 0.89 & 1.12 &  0.10 \\
          & 1000 & 92\% & 0.94 & 1.05 & 0.05 \\
         & 4000 & 94\% & 0.97 & 1.03  & 0.00 \\
         & 16000 & 93\% & 0.98 & 1.10  & 0.01 \\\hline
         Longitudinal G-Formula & 250 & 79\% & 1.09 & 4.44 & 0.12 \\
         & 1000 & 93\% & 1.28 & 2.01 & 0.02 \\
         & 4000 & 94\% & 1.25 & 2.10  & 0.01 \\
         & 16000\hspace{.25em} & 94\% & 1.06 & 1.26  & 0.00 
    \end{tabular}
}
    \label{tab:simResults}
\end{table}
}

Table~\ref{tab:simResults} reports the results. The coverage of 95\% confidence intervals returned by \texttt{pydimple} approximately attains the nominal value for all scenarios with $n\ge 1000$. Coverage tends to be below nominal when $n=250$, taking a minimal value of 79\%. The widths of the confidence intervals are close to the asymptotically optimal width of $2\times 1.96\times \sigma/ n^{1/2}$, with $\sigma$ the standard deviation of the EIF. For each expected density and nonparametric $R^2$ scenario, the variance of the estimators always approximately equals the efficient variance, $\sigma^2/n$. In the longitudinal G-formula scenario, this variance is much larger than the efficient variance at smaller sample sizes but the relative difference diminishes as $n$ grows. The inflated variance appears to be partially driven by outliers arising from inverse propensity weighting in the EIF. Bias is minimal in all scenarios. Together these results support our theoretical findings that dimple can produce asymptotically efficient estimators under reasonable conditions (Theorem~\ref{thm:efficient}).

On the author's 2021 MacBook Pro, the average time required to evaluate a single Monte Carlo replicate for Examples~\ref{ex:expectedDensity}, \ref{ex:r2}, and \ref{ex:longitudinalG} was about 3, 20, and 110 seconds respectively when $n=1000$.

\section{Discussion}

Our results suggest researchers should refocus their efforts when constructing an efficient estimator of a new parameter. Traditionally, this process involves labor-intensive calculations to derive the EIF. Our approach avoids this task for parameters expressible as compositions of known primitives. Yet, we recognize that this method will not apply to all new parameters, particularly those that cannot be represented in this form. In such cases, we propose a shift in focus: rather than immediately studying the differentiability of the parameter itself, researchers should aim to express it as a composition involving both known and novel primitives. By establishing the differentiability of each new primitive, they can deduce the differentiability of the parameter itself. This modular approach can both simplify the immediate analysis and the study of other parameters in future research. An example of a primitive that would be useful for ongoing research---but is missing from Table~\ref{tab:primitives}---is one that returns a solution to an infinite-dimensional inverse problem \citep{bennett2023source,cui2024semiparametric,smucler2025asymptotic}.

As noted in \cite{carone2018toward}, any algorithmic approach to statistical inference inevitably has potential for misuse. Per Theorems~\ref{thm:backpropWorks} and \ref{thm:efficient}, the validity of our method depends on differentiability and nuisance estimation conditions. While an expert can interpret and verify these conditions, non-specialists may employ a software implementation without doing so. This may lead to an overconfidence in the inferential conclusions reached, especially in high-dimensional problems or other settings where these conditions are unlikely to be met. To mitigate this risk, future software could provide a list of conditions sufficient for ensuring the differentiability of each primitive and the rates of convergence required on nuisance estimators for efficiency. Though Appendix~\ref{app:primitives} already gives sufficient conditions for the differentiability of the primitives in Table~\ref{tab:primitives}, work is still needed to figure out how to have software return this information in an accessible manner.

The composition of primitives used to express the parameter $\psi$ will generally be nonunique. The same holds in traditional automatic differentiation settings, where, for example, the function $f(x)=2x^2+1$ can be expressed as $g\circ h$ with $g(x)=2x+1$ and $h(x)=x^2$, $g(x)=\cosh(x)$ and $h(x)=2\sinh^{-1}(x)$, or $g(x)=2x^{1/2}+1$ and $h(x)=x^4$. In traditional settings, choosing different compositions can impact the conditions under which the chain rule can be applied---for example, $g(x)=2x^{1/2}+1$ is not differentiable at $0$, even though $f(x)=2x^2+1$ is. As indicated by the differentiability conditions in Theorem~\ref{thm:backpropWorks}, the applicability of the chain rule also depends on the choice of representation for $\psi$ in our statistical setting. Beyond that, this choice can also impact the nuisances that need to be estimated. For example, suppose $\psi(P)=E_P[Z]^2$ is expressed as $\langle z\mapsto z, dP/d\lambda\rangle_{L^2(\lambda)}^2$ using inner product and density primitives. Algorithm~\ref{alg:estimatedBackprop} would estimate the nuisance $dP/d\lambda$ for this representation and the efficiency of the one-step estimator would rely on an $n^{-1/4}$ rate condition. Estimation of this nuisance, and the associated rate condition, would be avoided if $\psi$ were instead expressed using a marginal mean primitive and square function. Algorithmically finding alternative representations of a parameter that weaken nuisance estimation requirements is an important area for future work.

There are also other interesting avenues for future work. First, though our algorithms apply to arbitrary models, the particular nuisance estimation routines given in Appendix~\ref{app:nuisanceEstimation} are designed for nonparametric settings. Future work should seek to identify semiparametric model restrictions that allow for modular nuisance estimation. Second, while we focused on algorithmically constructing one-step estimators, it would be worth considering whether this can be done for other estimation strategies, such as targeted minimum loss-based estimation \citep{van2006targeted,van2016one}. Third, it would be interesting to investigate whether automatic differentiation can be used to obtain higher-order influence functions \citep{robins2008higher}.

Dimple provides a means to rapidly translate new developments in debiased estimation from theory to practice. While our proof-of-concept software uses a cross-fitted one-step estimator, future implementations should incorporate recent proposals like iteratively debiasing nuisance function estimators for added robustness \citep{rotnitzky2017multiply,luedtke2017sequential}, employing separate data subsamples to estimate different nuisance functions \citep{newey2018cross}, and enforcing known shape constraints on the parameter of interest \citep{westling2020unified,ham2024doubly}. This mirrors the approach taken by other differentiable and probabilistic programming frameworks, where new algorithms, architectures, and regularization techniques are routinely integrated into existing software \citep{jax2018github,NEURIPS2019_9015,stan2023}. This practice ensures that state-of-the-art methods are widely accessible, a model we strive to emulate.

\section*{Acknowledgements}
This work was supported by the National Science Foundation and National Institutes of Health under award numbers DMS-2210216 and DP2-LM013340. The author is grateful to Jonathan Kernes for kindly letting him use code from his \href{https://towardsdatascience.com/build-your-own-automatic-differentiation-program-6ecd585eec2a}{Medium tutorial} in \texttt{pydimple} \citep{kernes2021}. He is also grateful to Saksham Jain for taking the time to test the package on a sunny Seattle afternoon.

{\singlespacing\small
\bibliography{References}
}

\appendix

\setcounter{equation}{0}
\renewcommand{\theequation}{S\arabic{equation}}
\setcounter{theorem}{0}
\setcounter{figure}{0}
\setcounter{table}{0}
\setcounter{lemma}{0}
\setcounter{corollary}{0}
\renewcommand{\thetheorem}{S\arabic{theorem}}
\renewcommand{\thecorollary}{S\arabic{corollary}}
\renewcommand{\thelemma}{S\arabic{lemma}}
\renewcommand{\thefigure}{S\arabic{figure}}
\renewcommand{\thetable}{S\arabic{table}}
\renewcommand{\thealgorithm}{S\arabic{algorithm}}

\section*{\LARGE Appendices}

\onehalfspacing

\titleformat{\section}{\normalfont\Large\bfseries}{\thesection}{1em}{}  
\titleformat{\subsection}{\normalfont\large\bfseries}{\thesubsection}{1em}{} 
\titleformat{\subsubsection}[runin]
  {\normalfont\normalsize\bfseries}{\thesubsubsection}{1em}{}[.]


\DoToC

\section{Total pathwise differentiability: supplemental theoretical results and proofs}
\subsection{Preliminary lemmas}
We begin by defining some notation and conventions used throughout the appendix. If $\mathcal{A}_1,\mathcal{A}_2,\ldots,\mathcal{A}_j$ are subsets of Hilbert spaces $\mathcal{B}_1,\mathcal{B}_2,\ldots,\mathcal{B}_j$, then $\prod_{i=1}^j \mathcal{A}_i$ is treated as a subset of the direct sum of those Hilbert spaces, $\bigoplus_{i=1}^j\mathcal{B}_i$. For an element $(b_1,b_2,\ldots,b_j)\in \bigoplus_{i=1}^j\mathcal{B}_i$ and nonempty $\mathcal{S}\subseteq [j]$, we let $b_{\mathcal{S}}:=(b_i : i\in\mathcal{S})$. We similarly let $\mathcal{B}_{\mathcal{S}}$ be the direct sum Hilbert space $\bigoplus_{i\in\mathcal{S}} \mathcal{B}_i$, and for $\prod_{i=1}^j \mathcal{A}_i\subseteq\bigoplus_{i=1}^j\mathcal{B}_i$, we define $\mathcal{A}_{\mathcal{S}}:=\prod_{i\in\mathcal{S}}\mathcal{A}_i$. If $\mathcal{S}=\emptyset$, then $\mathcal{A}_{\mathcal{S}}=\mathcal{B}_{\mathcal{S}}=\{0\}$ and $b_{\mathcal{S}}=0$. We denote the inner product and norm of a generic Hilbert space $\mathcal{B}$ by $\langle\cdot,\cdot\rangle_{\mathcal{B}}$ and $\|\cdot\|_\mathcal{B}$, respectively.

The following lemma shows that there are two equivalent ways of characterizing smooth paths $\{v_{[j],\epsilon} : \epsilon\}$ through some $v_{[j]}\in\mathcal{V}_{[j]}$. The first directly defines these smooth paths as elements of $\mathscr{P}(v_{[j]},\mathcal{V}_{[j]},t_{[j]})$ for some $t_{[j]}$. The second defines smooth paths $\{v_{i,\epsilon} : \epsilon\}\in\mathscr{P}(v_i,\mathcal{V}_i,t_i)$ through the coordinate projections $v_i\in\mathcal{V}_i$, $i\in [j]$, and then concatenates them so that $v_{[j],\epsilon}:=(v_{1,\epsilon},v_{2,\epsilon},\ldots,v_{j,\epsilon})$. 
\begin{lemma}[Defining smooth paths through $v_{[j]}$ via smooth paths through its coordinate projections]\label{lem:tanSetOfProduct}
    Fix $j\in [k]$ and $v_{[j]}\in\mathcal{V}_{[j]}$. Let $\check{\mathcal{V}}_{[j],v_{[j]}}$ denote the tangent set of $\mathcal{V}_{[j]}$ at $v_{[j]}$ and $\check{\mathcal{V}}_{i,v_i}$ the tangent set of $\mathcal{V}_i$ at $v_i$, $i\in [j]$; let $\dot{\mathcal{V}}_{[j],v_{[j]}}$ and $\dot{\mathcal{V}}_{i,v_i}$ denote the corresponding tangent spaces. Both of the following hold:
    \begin{enumerate}[label=(\roman*),ref=(\roman*)]
        \item\label{it:prodImpliesProj} for any $t_{[j]}$ in $\check{\mathcal{V}}_{[j],v_{[j]}}$ and $\{v_{[j],\epsilon}=(v_{i,\epsilon})_{i=1}^j : \epsilon\in[0,1]\}\in\mathscr{P}(v_{[j]},\mathcal{V}_{[j]},t_{[j]})$, it holds that $\{v_{i,\epsilon} : \epsilon\in[0,1]\}\in \mathscr{P}(v_i,\mathcal{V}_i,t_i)$ for all $i\in [j]$;
        \item\label{it:projImpliesProd} for any $t_{[j]}\in \prod_{i=1}^j \check{\mathcal{V}}_{i,v_i}$ and $\{v_{i,\epsilon} : \epsilon\in[0,1]\}\in \mathscr{P}(v_i,\mathcal{V}_i,t_i)$, $i\in [j]$, it holds that $\{v_{[j],\epsilon}=(v_{i,\epsilon})_{i=1}^j : \epsilon\in[0,1]\}\in\mathscr{P}(v_{[j]},\mathcal{V}_{[j]},t_{[j]})$.
    \end{enumerate}
    Consequently, $\check{\mathcal{V}}_{[j],v_{[j]}}=\prod_{i=1}^j \check{\mathcal{V}}_{i,v_i}$ and $\dot{\mathcal{V}}_{[j],v_{[j]}}=\bigoplus_{i=1}^j \dot{\mathcal{V}}_{i,v_i}$.
\end{lemma}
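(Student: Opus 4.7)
The plan is to exploit the fact that the norm on a direct sum Hilbert space decomposes coordinatewise: $\|(b_i)_{i=1}^j\|_{\mathcal{B}_{[j]}}^2 = \sum_{i=1}^j \|b_i\|_{\mathcal{B}_i}^2$. This single observation will essentially drive both directions of the statement.

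For part \ref{it:prodImpliesProj}, I would take a smooth path $\{v_{[j],\epsilon}\}\in\mathscr{P}(v_{[j]},\mathcal{V}_{[j]},t_{[j]})$ and simply note that, for each $i\in[j]$,
\begin{align*}
  \|v_{i,\epsilon}-v_i-\epsilon t_i\|_{\mathcal{B}_i}
  \le \|v_{[j],\epsilon}-v_{[j]}-\epsilon t_{[j]}\|_{\mathcal{B}_{[j]}}
  = o(\epsilon),
\end{align*}
which is exactly the condition for $\{v_{i,\epsilon}\}\in\mathscr{P}(v_i,\mathcal{V}_i,t_i)$. For part \ref{it:projImpliesProd}, I would go the other direction: given $\{v_{i,\epsilon}\}\in\mathscr{P}(v_i,\mathcal{V}_i,t_i)$ for each $i$, concatenate them coordinatewise and use the same decomposition to write $\|v_{[j],\epsilon}-v_{[j]}-\epsilon t_{[j]}\|_{\mathcal{B}_{[j]}}^2 = \sum_{i=1}^j \|v_{i,\epsilon}-v_i-\epsilon t_i\|_{\mathcal{B}_i}^2 = \sum_{i=1}^j o(\epsilon^2) = o(\epsilon^2)$, and take square roots. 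The concatenated path lies in $\prod_i \mathcal{V}_i = \mathcal{V}_{[j]}$ by construction.

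The tangent-set identity $\check{\mathcal{V}}_{[j],v_{[j]}}=\prod_{i=1}^j\check{\mathcal{V}}_{i,v_i}$ then follows immediately: one containment from \ref{it:prodImpliesProj} (if a product path witnesses $t_{[j]}$, its coordinate projections witness each $t_i$), and the other from \ref{it:projImpliesProd} (given witnesses $\{v_{i,\epsilon}\}$ of each $t_i$, their concatenation witnesses $t_{[j]}$).

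For the tangent-space identity $\dot{\mathcal{V}}_{[j],v_{[j]}}=\bigoplus_{i=1}^j\dot{\mathcal{V}}_{i,v_i}$, I would use that the constant path gives $0\in\check{\mathcal{V}}_{i,v_i}$ for every $i$, so that the product $\prod_i\check{\mathcal{V}}_{i,v_i}$ contains every element of the form $(0,\dots,0,t_i,0,\dots,0)$ with $t_i\in\check{\mathcal{V}}_{i,v_i}$. Hence the linear span of $\prod_i\check{\mathcal{V}}_{i,v_i}$ equals $\bigoplus_i\mathrm{span}(\check{\mathcal{V}}_{i,v_i})$, and closing in the direct-sum norm gives $\bigoplus_i\dot{\mathcal{V}}_{i,v_i}$ (closure commutes with finite direct sums of Hilbert spaces because coordinate projections are continuous and the direct-sum norm controls each coordinate norm). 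I do not anticipate a serious obstacle here; the only mildly delicate point is verifying that the closure of a span in $\bigoplus_i\mathcal{B}_i$ of a coordinate-separated set really is the direct sum of the coordinate closures, which is handled by the coordinatewise control of the direct-sum norm noted at the outset.
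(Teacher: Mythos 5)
Your proof is correct and takes essentially the same route as the paper: the coordinatewise decomposition of the direct-sum norm, a square/square-root manipulation for parts (i) and (ii), and then the tangent-set identity as an immediate consequence. The only difference is that you spell out the closure-commutes-with-finite-direct-sums step (via $0\in\check{\mathcal{V}}_{i,v_i}$ from the constant path and continuity of coordinate projections), which the paper dismisses as "straightforward to verify."
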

\begin{proof}[Proof of Lemma \ref{lem:tanSetOfProduct}]
    Observe that \ref{it:prodImpliesProj} implies that $\check{\mathcal{V}}_{[j],v_{[j]}}\subseteq \prod_{i=1}^j \check{\mathcal{V}}_{i,v_i}$ and \ref{it:projImpliesProd} implies that $\check{\mathcal{V}}_{[j],v_{[j]}}\supseteq \prod_{i=1}^j \check{\mathcal{V}}_{i,v_i}$, and so together \ref{it:prodImpliesProj} and \ref{it:projImpliesProd} imply that $\check{\mathcal{V}}_{[j],v_{[j]}}=\prod_{i=1}^j \check{\mathcal{V}}_{i,v_i}$. It is straightforward to verify that the $\mathcal{W}_{[j]}$-closure of the linear span of $\prod_{i=1}^j \check{\mathcal{V}}_{i,v_i}$ is equal to $\bigoplus_{i=1}^j \dot{\mathcal{V}}_{i,v_i}$, and so the tangent set relation $\check{\mathcal{V}}_{[j],v_{[j]}}=\prod_{i=1}^j \check{\mathcal{V}}_{i,v_i}$ implies the tangent space relation $\dot{\mathcal{V}}_{[j],v_{[j]}}=\bigoplus_{i=1}^j \dot{\mathcal{V}}_{i,v_i}$. Hence, the proof will be complete if we can establish \ref{it:prodImpliesProj} and \ref{it:projImpliesProd}.

    Before proceeding, it will be useful to note that, for any $t_{[j]}\in\mathcal{W}_{[j]}$ and $(v_{i,\epsilon})_{i=1}^j\in\mathcal{V}_{[j]}$, $\epsilon\in [0,1]$, the definition of the norm on $\mathcal{W}_{[j]}:=\bigoplus_{i=1}^j \mathcal{W}_i$ makes it so that
    \begin{align}
    \|(v_{i,\epsilon})_{i=1}^j - (v_i)_{i=1}^j -\epsilon (t_i)_{i=1}^j\|_{\mathcal{W}_{[j]}}^2=\sum_{i=1}^j \|v_{i,\epsilon} - v_i -\epsilon t_i\|_{\mathcal{W}_i}^2. \label{eq:directSumNorm}
    \end{align}
    To establish \ref{it:prodImpliesProj}, we note that, for any $t_{[j]}\in \check{\mathcal{V}}_{[j],v_{[j]}}\subseteq\mathcal{W}_{[j]}$ and $\{(v_{i,\epsilon})_{i=1}^j : \epsilon\in [0,1]\}\in \mathscr{P}(v_{[j]},\mathcal{V}_{[j]},t_{[j]})$, the left-hand side above is $o(\epsilon^2)$, and so each squared norm on the right-hand side must be $o(\epsilon^2)$ as well. Consequently, $\{v_{i,\epsilon} : \epsilon\in [0,1]\}\in \mathscr{P}(v_i,\mathcal{V}_i,t_i)$ for each $i\in [j]$, and so \ref{it:prodImpliesProj} holds. To establish \ref{it:projImpliesProd}, we note that, for any $t_i\in\check{\mathcal{V}}_{i,v_i}$ and $\{v_{i,\epsilon} : \epsilon\in [0,1]\}\in \mathscr{P}(v_i,\mathcal{V}_i,t_i)$, $i\in [j]$, each squared norm on the right-hand side above is $o(\epsilon^2)$, and so the squared norm on the left-hand side must be $o(\epsilon^2)$ as well. Consequently, $\{(v_{i,\epsilon})_{i=1}^j : \epsilon\in[0,1]\}\in\mathscr{P}(v_{[j]},\mathcal{V}_{[j]},t_{[j]})$, and so \ref{it:projImpliesProd} holds. 
\end{proof}

In the following lemma and throughout we write $\mathrm{proj}_{\mathcal{A}}\{\,\cdot\,\mymid \mathcal{B}\}$ to denote the orthogonal projection in a Hilbert space $\mathcal{A}$ onto a closed subspace $\mathcal{B}$.

\begin{lemma}[Total pathwise differentiability of coordinate projections]\label{lem:coordProj}
	Fix $j\in [k]$ and let $\mathcal{J}\subseteq [j]$. The coordinate projection map $\Lambda_{\mathcal{J}} : \mathcal{M}\times\mathcal{V}_{[j]}\rightarrow \mathcal{V}_{\mathcal{J}}$, defined so that $\Lambda_{\mathcal{J}}(P,v_{[j]})=v_{\mathcal{J}}$, is totally pathwise differentiable and, for all $w_{\mathcal{J}}:=(w_i : i\in\mathcal{J})\in\mathcal{W}_{\mathcal{J}}$ and $(P,v_{[j]})\in\mathcal{M}\times\mathcal{V}_{[j]}$,
	\begin{align*}
		\dot{\Lambda}_{\mathcal{J},P,v_{[j]}}^*(w_\mathcal{J})_i&= \left(0,\left(\mathbbm{1}_{\mathcal{J}}(i)\cdot \mathrm{proj}_{\mathcal{W}_i}\{w_i\mymid \dot{\mathcal{V}}_{i,v_i}\}\right)_{i=1}^j\right),
	\end{align*}
	where $\mathbbm{1}_{\mathcal{J}}(i)\cdot \mathrm{proj}_{\mathcal{W}_i}\{w_i\mymid \dot{\mathcal{V}}_{i,v_i}\}$ is equal to the zero element of $\mathcal{W}_i$ if $i\not\in\mathcal{J}$ and is otherwise equal to $\mathrm{proj}_{\mathcal{W}_i}\{w_i\mymid \dot{\mathcal{V}}_{i,v_i}\}$.
\end{lemma}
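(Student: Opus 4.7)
My plan is to verify the two defining properties of total pathwise differentiability directly, with Lemma~\ref{lem:tanSetOfProduct} carrying the bulk of the structural work. Since $\Lambda_{\mathcal{J}}$ ignores its distributional argument and simply selects the coordinates indexed by $\mathcal{J}$, for any path $\{(P_\epsilon, v_{[j],\epsilon}) : \epsilon\}$ through $(P, v_{[j]})$ with scores $(s, t_{[j]}) \in \dot{\mathcal{M}}_P \oplus \dot{\mathcal{V}}_{[j],v_{[j]}}$, the increment $\Lambda_{\mathcal{J}}(P_\epsilon, v_{[j],\epsilon}) - \Lambda_{\mathcal{J}}(P, v_{[j]})$ reduces to $v_{\mathcal{J},\epsilon} - v_{\mathcal{J}}$. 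I would first apply Lemma~\ref{lem:tanSetOfProduct}\ref{it:prodImpliesProj} to extract coordinate paths $\{v_{i,\epsilon} : \epsilon\} \in \mathscr{P}(v_i, \mathcal{V}_i, t_i)$ and then invoke part \ref{it:projImpliesProd} on the subcollection indexed by $\mathcal{J}$ to conclude that $\|v_{\mathcal{J},\epsilon} - v_{\mathcal{J}} - \epsilon t_{\mathcal{J}}\|_{\mathcal{W}_{\mathcal{J}}} = o(\epsilon)$. This identifies the differential as $(s, t_{[j]}) \mapsto t_{\mathcal{J}}$, and continuity of this map is immediate since coordinate projection between direct sums of Hilbert spaces has operator norm bounded by $1$.

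For the adjoint, I would proceed by the defining inner product identity. Using Lemma~\ref{lem:tanSetOfProduct} to identify $\dot{\mathcal{V}}_{[j],v_{[j]}} = \bigoplus_{i=1}^j \dot{\mathcal{V}}_{i,v_i}$, the left-hand inner product $\langle t_{\mathcal{J}}, w_{\mathcal{J}}\rangle_{\mathcal{W}_{\mathcal{J}}}$ simplifies to $\sum_{i \in \mathcal{J}} \langle t_i, w_i\rangle_{\mathcal{W}_i}$. Writing the candidate adjoint evaluation as $(\tilde{s}, (\tilde{w}_i)_{i=1}^j)$ with $\tilde{s} \in \dot{\mathcal{M}}_P$ and each $\tilde{w}_i \in \dot{\mathcal{V}}_{i,v_i}$, matching this to $\langle s, \tilde{s}\rangle_{L^2(P)} + \sum_{i=1}^j \langle t_i, \tilde{w}_i\rangle_{\mathcal{W}_i}$ for arbitrary admissible $(s, t_{[j]})$ forces $\tilde{s} = 0$, forces $\tilde{w}_i = 0$ for $i \notin \mathcal{J}$, and for $i \in \mathcal{J}$ requires $\langle t_i, w_i - \tilde{w}_i\rangle_{\mathcal{W}_i} = 0$ for all $t_i \in \dot{\mathcal{V}}_{i,v_i}$. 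Combined with the constraint $\tilde{w}_i \in \dot{\mathcal{V}}_{i,v_i}$, this characterization uniquely identifies $\tilde{w}_i$ as $\mathrm{proj}_{\mathcal{W}_i}\{w_i \mymid \dot{\mathcal{V}}_{i,v_i}\}$, matching the stated formula.

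I do not anticipate any serious obstacle, since the argument is largely bookkeeping once Lemma~\ref{lem:tanSetOfProduct} provides the product-space identification of tangent sets and spaces. The one conceptual subtlety worth flagging is that the adjoint's codomain constraint requires its image to lie in $\dot{\mathcal{M}}_P \oplus \dot{\mathcal{V}}_{[j],v_{[j]}}$ rather than in the larger ambient space $L^2(P) \oplus \mathcal{W}_{[j]}$; this is precisely what forces the orthogonal projection onto the closed subspace $\dot{\mathcal{V}}_{i,v_i}$ to appear, rather than $w_i$ itself.
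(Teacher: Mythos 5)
Your proposal is correct and follows essentially the same path as the paper's proof: identify the differential as $(s,t_{[j]})\mapsto t_{\mathcal{J}}$ using Lemma~\ref{lem:tanSetOfProduct}, then obtain the adjoint by matching inner products and observing that the codomain constraint forces the orthogonal projection onto $\dot{\mathcal{V}}_{i,v_i}$ to appear. The only cosmetic differences are that the paper verifies the $o(\epsilon)$ rate by directly summing coordinate-wise squared norms after invoking part~\ref{it:prodImpliesProj} alone (rather than chaining \ref{it:prodImpliesProj} then \ref{it:projImpliesProd}), and derives the adjoint by a forward algebraic rearrangement of $\langle t_{\mathcal{J}},w_{\mathcal{J}}\rangle$ rather than by a uniqueness/characterization argument; both are equivalent, and your closing remark about why the projection (rather than $w_i$ itself) must appear is precisely the point the paper relies on.
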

\begin{proof}[Proof of Lemma~\ref{lem:coordProj}]
	We first derive the differential operator $\dot{\Lambda}_{\mathcal{J},P,v_{[j]}}$, and then we derive its adjoint. Fix $s\in \check{\mathcal{M}}_P$ and $t_{[j]}$ in the tangent set $\check{\mathcal{V}}_{[j],v_{[j]}}$ of $\mathcal{V}_{[j]}$ at $v_{[j]}$. Let $\{P_\epsilon : \epsilon\}\in \mathscr{P}(P,\mathcal{M},s)$ and $\{v_{[j],\epsilon} : \epsilon\}\in\mathscr{P}(v_{[j]},\mathcal{V}_{[j]},t_{[j]})$.  Observe that
	\begin{align*}
		\left\|\Lambda_{\mathcal{J}}(P_\epsilon,v_{[j],\epsilon}) - \Lambda_{\mathcal{J}}(P,v_{[j]}) - \epsilon t_{\mathcal{J}}\right\|_{\mathcal{V}_{\mathcal{J}}}^2&= \left\|v_{\mathcal{J},\epsilon}- v_{\mathcal{J}} - \epsilon t_{\mathcal{J}}\right\|_{\mathcal{V}_{\mathcal{J}}}^2 = \sum_{i\in\mathcal{J}} \left\|v_{i,\epsilon} - v_i - \epsilon t_{i}\right\|_{\mathcal{V}_{\mathcal{J}}}^2.
	\end{align*}
	By part~\ref{it:prodImpliesProj} of Lemma~\ref{lem:tanSetOfProduct}, $\{v_{i,\epsilon} : \epsilon\}\in\mathscr{P}(v_{i},\mathcal{V}_{i},t_{i})$ for each $i\in [j]$, and so each term in the sum on the right-hand side is $o(\epsilon^2)$. Recalling the left-hand side above, this shows that $\Lambda_{\mathcal{J}}$ is totally pathwise differentiable at $(P,v_{[j]})$ with (bounded and linear) differential operator $\dot{\Lambda}_{\mathcal{J},P,v_{[j]}} : \dot{\mathcal{M}}_P\times \dot{\mathcal{V}}_{[j],v_{[j]}}\rightarrow \mathcal{W}_{\mathcal{J}}$ defined so that $\dot{\Lambda}_{\mathcal{J},P,v_{[j]}}(s,t_{[j]})=t_{\mathcal{J}}$.
	
	We now derive the form of $\dot{\Lambda}_{\mathcal{J},P,v_{[j]}}^*$. To this end, fix $w_{\mathcal{J}}=(w_i : i\in\mathcal{J})\in \mathcal{W}_{\mathcal{J}}$. For any $(s,t_{[j]})\in \dot{\mathcal{M}}_P\times \dot{\mathcal{V}}_{[j],v_{[j]}}$,
	\begin{align*}
		\left\langle \dot{\Lambda}_{\mathcal{J},P,v_{[j]}}(s,t_{[j]}), w_{\mathcal{J}}\right\rangle_{\mathcal{W}_{\mathcal{J}}}&= \left\langle t_{\mathcal{J}}, w_{\mathcal{J}}\right\rangle_{\mathcal{W}_{\mathcal{J}}} = \sum_{i\in\mathcal{J}}\left\langle t_i, w_i\right\rangle_{\mathcal{W}_i} = \sum_{i\in\mathcal{J}}\left\langle t_i, \mathrm{proj}_{\mathcal{W}_i}\{w_i\mymid \dot{\mathcal{V}}_{i,v_i}\}\right\rangle_{\mathcal{W}_i} \\
		&= \sum_{i\in\mathcal{J}}\left\langle t_i, \mathrm{proj}_{\mathcal{W}_i}\{w_i\mymid \dot{\mathcal{V}}_{i,v_i}\}\right\rangle_{\dot{\mathcal{V}}_{i,v_i}} \\
		&= \left\langle (s,t_{[j]}),\left(0,\left(\mathbbm{1}_{\mathcal{J}}(i)\cdot \mathrm{proj}_{\mathcal{W}_i}\{w_i\mymid \dot{\mathcal{V}}_{i,v_i}\}\right)_{i=1}^j\right)\right\rangle_{\dot{\mathcal{M}}_P\oplus \left(\oplus_{i=1}^j \dot{\mathcal{V}}_{i,v_i}\right)},
	\end{align*}
	where the first equality uses the already-derived form of the differential operator, the second and fifth use the definition of inner products on direct sums of Hilbert spaces, and the third and fourth use properties of projections and the fact that $t_i\in \dot{\mathcal{V}}_{i,v_i}$. The proof concludes by noting that $\bigoplus_{i=1}^j \dot{\mathcal{V}}_{i,v_i}=\dot{\mathcal{V}}_{[j],v_{[j]}}$, by Lemma~\ref{lem:tanSetOfProduct}, and so the Hermitian adjoint $\dot{\Lambda}_{\mathcal{J},P,v_{[j]}}^*$ of $\dot{\Lambda}_{\mathcal{J},P,v_{[j]}} : \dot{\mathcal{M}}_P\times \dot{\mathcal{V}}_{[j],v_{[j]}}\rightarrow \mathcal{W}_{\mathcal{J}}$ takes the form claimed in the lemma statement.
\end{proof}

\subsection{Chain rule}

In this appendix, we let $\mathcal{Q}$, $\mathcal{T}$, and $\mathcal{W}$ be Hilbert spaces. We also let $\mathcal{R}\subseteq \mathcal{Q}$ and $\mathcal{U}\subseteq\mathcal{T}$. 
\begin{lemma}[Chain rule for total pathwise differentiability]\label{lem:chainRule}
    Let $\gamma : \mathcal{M}\times\mathcal{U}\rightarrow\mathcal{R}$ and $\eta : \mathcal{M}\times  \mathcal{R}\rightarrow\mathcal{W}$ be totally pathwise differentiable at $(P,u)$ and $(P,\gamma(P,u))$, respectively. Then, $\beta : \mathcal{M}\times\mathcal{U}\rightarrow\mathcal{W}$, defined so that $\beta(P',u')= \eta(P',\gamma(P',u'))$, is totally pathwise differentiable at $(P,u)$ with $\dot{\beta}_{P,u}(s,t)=\dot{\eta}_{P,\gamma(P,u)}\left(s,\dot{\gamma}_{P,u}(s,t)\right)$ and
    \begin{align*}
        \dot{\beta}_{P,u}^*(w)&= \left(\dot{\eta}_{P,\gamma(P,u)}^*(w)_0,0\right) + \dot{\gamma}_{P,u}^*\left(\dot{\eta}_{P,\gamma(P,u)}^*(w)_1\right),
    \end{align*}
    where $\dot{\eta}_{P,\gamma(P,u)}^*(w)_0\in\dot{\mathcal{M}}_P$ and $\dot{\eta}_{P,\gamma(P,u)}^*(w)_1\in \dot{\mathcal{U}}_u$ are defined so that
    $$\dot{\eta}_{P,\gamma(P,u)}^*(w)=\left(\dot{\eta}_{P,\gamma(P,u)}^*(w)_0,\dot{\eta}_{P,\gamma(P,u)}^*(w)_1\right).$$
\end{lemma}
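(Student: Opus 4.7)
The plan is to apply the total pathwise differentiability hypotheses of $\gamma$ and $\eta$ in succession, using that a totally pathwise differentiable map carries a smooth path in its domain to a smooth path in its codomain. First, I would fix arbitrary $s\in\check{\mathcal{M}}_P$, $t\in\check{\mathcal{U}}_u$, $\{P_\epsilon:\epsilon\}\in\mathscr{P}(P,\mathcal{M},s)$, and $\{u_\epsilon:\epsilon\}\in\mathscr{P}(u,\mathcal{U},t)$. Set $r:=\gamma(P,u)$ and $r_\epsilon:=\gamma(P_\epsilon,u_\epsilon)\in\mathcal{R}$. Total pathwise differentiability of $\gamma$ at $(P,u)$ immediately yields $\|r_\epsilon-r-\epsilon\,\dot{\gamma}_{P,u}(s,t)\|_{\mathcal{Q}}=o(\epsilon)$, which is exactly the statement that $\{r_\epsilon:\epsilon\}\in\mathscr{P}(r,\mathcal{R},\dot{\gamma}_{P,u}(s,t))$; in particular $\dot{\gamma}_{P,u}(s,t)\in\check{\mathcal{R}}_r$.

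Next, I would feed these into total pathwise differentiability of $\eta$ at $(P,r)$ along the directions $s\in\check{\mathcal{M}}_P$ and $\dot{\gamma}_{P,u}(s,t)\in\check{\mathcal{R}}_r$, using the smooth paths $\{P_\epsilon\}$ and $\{r_\epsilon\}$. This gives
\begin{align*}
\bigl\|\eta(P_\epsilon,r_\epsilon)-\eta(P,r)-\epsilon\,\dot{\eta}_{P,r}\bigl(s,\dot{\gamma}_{P,u}(s,t)\bigr)\bigr\|_{\mathcal{W}}=o(\epsilon),
\end{align*}
which after noting $\eta(P_\epsilon,r_\epsilon)=\beta(P_\epsilon,u_\epsilon)$ and $\eta(P,r)=\beta(P,u)$ establishes the required total pathwise differentiability of $\beta$ with candidate differential $\dot{\beta}_{P,u}(s,t):=\dot{\eta}_{P,r}(s,\dot{\gamma}_{P,u}(s,t))$. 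Continuity and linearity of $\dot{\beta}_{P,u}$ on $\dot{\mathcal{M}}_P\oplus\dot{\mathcal{U}}_u$ follow from the corresponding properties of $\dot{\eta}_{P,r}$ and $\dot{\gamma}_{P,u}$, together with the observation that continuous linearity of $\dot{\gamma}_{P,u}$ forces it to map $\dot{\mathcal{M}}_P\oplus\dot{\mathcal{U}}_u$ into $\dot{\mathcal{R}}_r$, since it already sends the tangent set $\check{\mathcal{M}}_P\times\check{\mathcal{U}}_u$ into $\check{\mathcal{R}}_r$.

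For the adjoint formula, I would use Hilbert-space adjoint calculus. For any $(s,t)\in\dot{\mathcal{M}}_P\oplus\dot{\mathcal{U}}_u$ and $w\in\mathcal{W}$, writing $\dot{\eta}_{P,r}^*(w)=(\dot{\eta}_{P,r}^*(w)_0,\dot{\eta}_{P,r}^*(w)_1)$,
\begin{align*}
\langle\dot{\beta}_{P,u}(s,t),w\rangle_{\mathcal{W}}
&=\bigl\langle\bigl(s,\dot{\gamma}_{P,u}(s,t)\bigr),\dot{\eta}_{P,r}^*(w)\bigr\rangle_{\dot{\mathcal{M}}_P\oplus\dot{\mathcal{R}}_r}\\
&=\langle s,\dot{\eta}_{P,r}^*(w)_0\rangle_{\dot{\mathcal{M}}_P}+\bigl\langle\dot{\gamma}_{P,u}(s,t),\dot{\eta}_{P,r}^*(w)_1\bigr\rangle_{\mathcal{Q}}\\
&=\bigl\langle(s,t),(\dot{\eta}_{P,r}^*(w)_0,0)\bigr\rangle_{\dot{\mathcal{M}}_P\oplus\dot{\mathcal{U}}_u}+\bigl\langle(s,t),\dot{\gamma}_{P,u}^*\bigl(\dot{\eta}_{P,r}^*(w)_1\bigr)\bigr\rangle_{\dot{\mathcal{M}}_P\oplus\dot{\mathcal{U}}_u},
\end{align*}
where the second equality uses that the inner product on $\dot{\mathcal{R}}_r$ is inherited from $\mathcal{Q}$, and the third moves $\dot{\eta}_{P,r}^*(w)_1$ through the adjoint of $\dot{\gamma}_{P,u}$. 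Reading off the total contribution yields the claimed decomposition of $\dot{\beta}_{P,u}^*(w)$.

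I anticipate the only subtle point is keeping track of which ambient Hilbert space hosts each tangent vector so that the smoothness bound produced by $\gamma$ in $\mathcal{Q}$ is in the right norm to invoke the differentiability hypothesis for $\eta$ at $(P,r)$; once this is set up correctly, the rest is a composition of definitions and the standard adjoint identity.
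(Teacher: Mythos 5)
Your proof is correct and mirrors the paper's argument: composing smooth paths through $\gamma$ to obtain a smooth path in $\mathcal{R}$, invoking total pathwise differentiability of $\eta$ along that path, and deriving the adjoint via direct-sum inner-product calculus. The one small place you go beyond what the paper writes explicitly is your remark that $\dot{\gamma}_{P,u}$ must carry $\dot{\mathcal{M}}_P\oplus\dot{\mathcal{U}}_u$ into $\dot{\mathcal{R}}_{r}$ (by continuity and linearity, since it sends tangent directions to tangent directions); the paper tacitly relies on this when writing $\dot{\eta}_{P,\gamma(P,u)}(s,\dot{\gamma}_{P,u}(s,t))$ and when pairing against $\dot{\eta}_{P,\gamma(P,u)}^*(w)$, so making it explicit is a harmless and in fact clarifying addition rather than a departure.
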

\begin{proof}[Proof of Lemma~\ref{lem:chainRule}]
    We first show that $\beta$ is totally pathwise differentiable at $(P,u)$ with the specified differential operator. Fix $\{P_\epsilon : \epsilon\in [0,1]\}\in\mathscr{P}(P,\mathcal{M},s)$ and $\{u_\epsilon : \epsilon\in [0,1]\}\in \mathscr{P}(u,\mathcal{U},t)$. Let $r_\epsilon:= \gamma(P_\epsilon,u_\epsilon)$ and $r:=\gamma(P,u)$. As $\gamma$ is totally pathwise differentiable at $(P,u)$, $\{r_\epsilon : \epsilon\in [0,1]\}$ belongs to $\mathscr{P}(r,\mathcal{R},\dot{\gamma}_{P,u}(s,t))$. As $\eta$ is totally pathwise differentiable, this implies that, as $\epsilon\rightarrow 0$,
    \begin{align*}
        \epsilon^{-1}\left[\eta(P_\epsilon,r_\epsilon) - \eta(P,r)\right]\rightarrow \dot{\eta}_{P,r}\left(s,\dot{\gamma}_{P,u}(s,t)\right)=: \dot{\beta}_{P,u}(s,t) .
    \end{align*}
    The boundedness and linearity of $\dot{\eta}_{P,r}$ and $\dot{\gamma}_{P,u}$ imply that $\dot{\beta}_{P,u}$ is bounded and linear, and so $\beta$ is totally pathwise differentiable at $(P,u)$ with differential operator $\dot{\beta}_{P,u}$.
    
    As for the adjoint $\dot{\beta}_{P,u}^*$, observe that, for any $w\in\mathcal{W}$ and $(s,t)\in\dot{\mathcal{M}}_P\oplus \dot{\mathcal{U}}_u$,
    \begin{align*}
    \left\langle \dot{\beta}_{P,u}(s,t),w\right\rangle_{\mathcal{W}} &= \left\langle \dot{\eta}_{P,\gamma(P,u)}\left(s,\dot{\gamma}_{P,u}(s,t)\right),w\right\rangle_{\mathcal{W}} \\
    &= \left\langle \left(s,\dot{\gamma}_{P,u}(s,t)\right),\dot{\eta}_{P,\gamma(P,u)}^*(w)\right\rangle_{\dot{\mathcal{M}}_P\oplus \mathcal{Q}} \\
    &= \left\langle s,\dot{\eta}_{P,\gamma(P,u)}^*(w)_0\right\rangle_{\dot{\mathcal{M}}_P} + \left\langle \dot{\gamma}_{P,u}(s,t),\dot{\eta}_{P,\gamma(P,u)}^*(w)_1\right\rangle_{\mathcal{Q}} \\
    &= \left\langle s,\dot{\eta}_{P,\gamma(P,u)}^*(w)_0\right\rangle_{\dot{\mathcal{M}}_P} + \left\langle (s,t),\dot{\gamma}_{P,u}^*\left(\dot{\eta}_{P,\gamma(P,u)}^*(w)_1\right)\right\rangle_{\dot{\mathcal{M}}_P\oplus\dot{\mathcal{U}}_u} \\
    &= \left\langle (s,t),\left(\dot{\eta}_{P,\gamma(P,u)}^*(w)_0,0\right)\right\rangle_{\dot{\mathcal{M}}_P\oplus \dot{\mathcal{U}}_u} + \left\langle (s,t),\dot{\gamma}_{P,u}^*\left(\dot{\eta}_{P,\gamma(P,u)}^*(w)_1\right)\right\rangle_{\dot{\mathcal{M}}_P\oplus\dot{\mathcal{U}}_u} \\
    &= \left\langle \left(s,t\right),\left(\dot{\eta}_{P,\gamma(P,u)}^*(w)_0,0\right) + \dot{\gamma}_{P,u}^*\left(\dot{\eta}_{P,\gamma(P,u)}^*(w)_1\right)\right\rangle_{\dot{\mathcal{M}}_P\oplus\dot{\mathcal{U}}_u},
    \end{align*}
    where: the first equality used the already-derived form of $\dot{\beta}_{P,u}$; the second and fourth used the definition of an adjoint; the third and fifth used the definition of inner products in direct sums of Hilbert spaces; and the final equality used the bilinearity of inner products. The right-hand side writes as $\left\langle (s,t),\beta_{P,u}^*(w)\right\rangle_{\dot{\mathcal{M}}_P\oplus \dot{\mathcal{U}}_u}$, with $\beta_{P,u}^*$ as defined in the lemma statement. Hence, $\beta_{P,u}^*$ is the Hermitian adjoint of $\dot{\beta}_{P,u} : \dot{\mathcal{M}}_P\oplus \dot{\mathcal{U}}_u\rightarrow \mathcal{W}$.
\end{proof}

The following consequence of the chain rule will prove useful when studying the backpropagation algorithm. In this lemma, we let $\Omega : \dot{\mathcal{M}}_P\times (\mathcal{T}\times\mathcal{Q})\rightarrow \dot{\mathcal{M}}_P\times \mathcal{U}$ and $\Lambda : \dot{\mathcal{M}}_P\times (\mathcal{T}\times\mathcal{Q})\rightarrow \mathcal{R}$ be the coordinate projections defined so that $\Omega(s,(u,r))=(s,u)$ and $\Lambda(s,(u,r)) = r$.
\begin{lemma}[Consequence of the chain rule]\label{lem:chainRuleConvenient}
    Let $\gamma : \mathcal{M}\times\mathcal{U}\rightarrow\mathcal{R}$ and $\eta : \mathcal{M}\times (\mathcal{U}\times \mathcal{R})\rightarrow\mathcal{W}$ be totally pathwise differentiable at $(P,u)$ and $(P,(u,\gamma(P,u)))$, respectively. Then, $\beta : \mathcal{M}\times\mathcal{U}\rightarrow\mathcal{W}$, defined so that $\beta(P',u')= \eta(P',(u',\gamma(P',u')))$, is totally pathwise differentiable at $(P,u)$ with $\dot{\beta}_{P,u}(s,t)=\dot{\eta}_{P,(u,\gamma(P,u))}\left(s,(t,\dot{\gamma}_{P,u}(s,t))\right)$ and
    \begin{align*}
        \dot{\beta}_{P,u}^*(w)&= \Omega\circ \dot{\eta}_{P,(u,\gamma(P,u))}^*(w) + \dot{\gamma}_{P,u}^*\circ \Lambda\circ \dot{\eta}_{P,(u,\gamma(P,u))}^*(w).
    \end{align*}
\end{lemma}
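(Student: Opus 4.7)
The plan is to reduce this to Lemma~\ref{lem:chainRule} by bundling the identity-in-$u$ and the map $\gamma$ into a single intermediate primitive, and then reading off the adjoint. Concretely, define $\tilde{\gamma} : \mathcal{M}\times\mathcal{U}\rightarrow \mathcal{U}\times \mathcal{R}$ by $\tilde{\gamma}(P',u') := (u',\gamma(P',u'))$. Then $\beta(P',u') = \eta(P',\tilde{\gamma}(P',u'))$, which is exactly the form to which Lemma~\ref{lem:chainRule} applies, with $\mathcal{R}$ there replaced by $\mathcal{U}\times\mathcal{R}$ here (so the ``$\mathcal{Q}$'' of that lemma becomes $\mathcal{T}\oplus\mathcal{Q}$).

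First I would verify total pathwise differentiability of $\tilde{\gamma}$ at $(P,u)$ with differential operator $\dot{\tilde{\gamma}}_{P,u}(s,t) = (t,\dot{\gamma}_{P,u}(s,t))$. For any $s\in\check{\mathcal{M}}_P$, $t\in\check{\mathcal{U}}_u$, and paths $\{P_\epsilon\}\in\mathscr{P}(P,\mathcal{M},s)$, $\{u_\epsilon\}\in\mathscr{P}(u,\mathcal{U},t)$, the total pathwise differentiability of $\gamma$ gives $\{\gamma(P_\epsilon,u_\epsilon)\}\in\mathscr{P}(\gamma(P,u),\mathcal{R},\dot{\gamma}_{P,u}(s,t))$, and the trivial path $\{u_\epsilon\}$ satisfies $\|u_\epsilon-u-\epsilon t\|_{\mathcal{T}}=o(\epsilon)$ by assumption. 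Combining coordinates via the direct-sum norm identity (equation~\eqref{eq:directSumNorm} used in the proof of Lemma~\ref{lem:tanSetOfProduct}\ref{it:projImpliesProd}) gives the required $o(\epsilon)$ approximation, and boundedness/linearity of $\dot{\tilde{\gamma}}_{P,u}$ is immediate from those of $\dot{\gamma}_{P,u}$.

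Next I would compute $\dot{\tilde{\gamma}}_{P,u}^*$. For any $(a,b)\in \dot{\mathcal{U}}_u\oplus \dot{\mathcal{R}}_{\gamma(P,u)}$ and $(s,t)\in\dot{\mathcal{M}}_P\oplus\dot{\mathcal{U}}_u$, the direct-sum inner product gives
\begin{align*}
    \langle \dot{\tilde{\gamma}}_{P,u}(s,t),(a,b)\rangle &= \langle t,a\rangle_{\mathcal{T}} + \langle \dot{\gamma}_{P,u}(s,t),b\rangle_{\mathcal{Q}}= \langle (s,t),(0,a)\rangle + \langle (s,t),\dot{\gamma}_{P,u}^*(b)\rangle,
\end{align*}
so $\dot{\tilde{\gamma}}_{P,u}^*(a,b) = (0,a) + \dot{\gamma}_{P,u}^*(b)$.

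Finally I would plug into Lemma~\ref{lem:chainRule} applied to the composition $\beta = \eta\circ(\mathrm{id}_{\mathcal{M}},\tilde{\gamma})$. Writing $\dot{\eta}_{P,(u,\gamma(P,u))}^*(w) = (s_0,(t_0,r_0))$ with $s_0\in\dot{\mathcal{M}}_P$, $t_0\in\dot{\mathcal{U}}_u$, $r_0\in\dot{\mathcal{R}}_{\gamma(P,u)}$, the chain rule gives
\begin{align*}
    \dot{\beta}_{P,u}^*(w) &= (s_0,0) + \dot{\tilde{\gamma}}_{P,u}^*(t_0,r_0) = (s_0,0) + (0,t_0) + \dot{\gamma}_{P,u}^*(r_0) = (s_0,t_0) + \dot{\gamma}_{P,u}^*(r_0).
\end{align*}
Since $\Omega(s_0,(t_0,r_0)) = (s_0,t_0)$ and $\Lambda(s_0,(t_0,r_0)) = r_0$ by the definitions preceding the lemma, this is exactly the claimed identity. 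The differential-operator formula $\dot{\beta}_{P,u}(s,t) = \dot{\eta}_{P,(u,\gamma(P,u))}(s,(t,\dot{\gamma}_{P,u}(s,t)))$ is read off directly from Lemma~\ref{lem:chainRule} with the derived form of $\dot{\tilde{\gamma}}_{P,u}$. The only subtlety, more bookkeeping than obstacle, is keeping straight the two different uses of the direct-sum decomposition when identifying the first/second components of $\dot{\eta}^*$ with $s_0$ versus $(t_0,r_0)$; everything else is a routine consequence of the chain rule and direct-sum adjoint algebra.
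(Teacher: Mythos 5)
Your proof is correct and follows essentially the same route as the paper: define $\tilde\gamma(P',u')=(u',\gamma(P',u'))$, establish its total pathwise differentiability and adjoint, and apply Lemma~\ref{lem:chainRule}. The only cosmetic difference is that the paper first computes $\dot{\tilde\gamma}_{P,u}^*$ on all of $\mathcal{T}\oplus\mathcal{Q}$, producing a term $\mathrm{proj}_{\mathcal{T}}\{\,\cdot\mymid\dot{\mathcal{U}}_u\}$, and then notes the projection acts as the identity when applied to $\dot\eta_{P,(u,\gamma(P,u))}^*(w)_1$, whereas you restrict from the outset to $\dot{\mathcal{U}}_u\oplus\dot{\mathcal{R}}_{\gamma(P,u)}$ (which is exactly where $\dot\eta^*(w)_1$ lives) so the projection never appears; both are valid.
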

\begin{proof}[Proof of Lemma~\ref{lem:chainRuleConvenient}]
    Let $\underline{\mathcal{R}}:=\mathcal{U}\times\mathcal{R}$ be a subset of $\underline{\mathcal{Q}}:= \mathcal{T}\oplus\mathcal{Q}$. 
    Define $\underline{\gamma} : \mathcal{M}\times\mathcal{U}\rightarrow \underline{\mathcal{R}}$ so that $\underline{\gamma}(P',u')=(u',\gamma(P',u'))$, and note that $\beta(P',u')= (P',\underline{\gamma}(P',u'))$. We will use Lemma~\ref{lem:chainRule} to study $\beta$. Before doing this, we must establish that $\underline{\gamma}$ is totally pathwise differentiable at $(P,u)$.
    
    For any $s\in\check{\mathcal{M}}_P$, $\{P_\epsilon : \epsilon\in [0,1]\}\in\mathscr{P}(P,\mathcal{M},s)$, $t\in\check{\mathcal{U}}_u$, and $\{u_\epsilon : \epsilon\in [0,1]\}\in \mathscr{P}(u,\mathcal{U},t)$,
    \begin{align*}
        \|\underline{\gamma}(P_\epsilon,u_\epsilon)-\underline{\gamma}(P,u) - \epsilon (t,\dot{\gamma}_{P,u}(s,t))\|_{\mathcal{T}\oplus\mathcal{Q}}^2&= \|u_\epsilon-u-\epsilon t\|_{\mathcal{T}}^2 + \|\gamma(P_\epsilon,u_\epsilon)-\gamma(P,u)-\epsilon \dot{\gamma}_{P,u}(s,t)\|_{\mathcal{Q}}^2.
    \end{align*}
    The first term on the right is $o(\epsilon^2)$ since $\{u_\epsilon : \epsilon\in [0,1]\}\in \mathscr{P}(u,\mathcal{U},t)$, and the second is $o(\epsilon^2)$ by the total pathwise differentiability of $\gamma$. The boundedness and linearity of $\dot{\gamma}_{P,u}$ implies that $(s,t)\mapsto (t,\dot{\gamma}_{P,u}(s,t))$ is bounded and linear as well. Hence, $\underline{\gamma}$ is totally pathwise differentiable at $(P,u)$ with $\dot{\underline{\gamma}}_{P,u} : (s,t)\mapsto (t,\dot{\gamma}_{P,u}(s,t))$.  By the definition of inner products in direct sum Hilbert spaces, it also readily follows that the Hermitian adjoint of $\dot{\underline{\gamma}}_{P,u}$ is the map $\dot{\underline{\gamma}}_{P,u}^* : \mathcal{T}\oplus\mathcal{Q}\rightarrow \dot{\mathcal{M}}_P\oplus\dot{\mathcal{U}}_u$ defined so that $\dot{\underline{\gamma}}_{P,u}^*(t,q)= (0,\mathrm{proj}_{\mathcal{T}}\{t\mymid \dot{\mathcal{U}}_u\}) + \dot{\gamma}_{P,u}^*(q)$, where $0$ denotes the zero element of $\dot{\mathcal{M}}_P$.

    We are now in a position to apply Lemma~\ref{lem:chainRule}. That lemma shows that $\beta$ is totally pathwise differentiable at $(P,u)$ with differential operator $\dot{\beta}_{P,u}(s,t)=\dot{\eta}_{P,\gamma(P,u)}(s,\dot{\underline{\gamma}}_{P,u}(s,t))$ and adjoint
    \begin{align*}
        \dot{\beta}_{P,u}^*(w)&= \left(\dot{\eta}_{P,\gamma(P,u)}^*(w)_0,0\right) + \dot{\underline{\gamma}}_{P,u}^*\left(\dot{\eta}_{P,\gamma(P,u)}^*(w)_1\right).
    \end{align*}
    Since $\dot{\eta}_{P,(u,\gamma(P,u))}^*(w)\in \dot{\mathcal{M}}_P\oplus (\dot{\mathcal{U}}_u\oplus \dot{\mathcal{R}}_{\gamma(P,u)})$ for any $w\in\mathcal{W}$, the projection onto $\dot{\mathcal{U}}_u$ that appears in the adjoint $\dot{\underline{\gamma}}_{P,u}^*$ evaluates as the identity operator when $\dot{\underline{\gamma}}_{P,u}^*$ is applied to $\dot{\eta}_{P,\gamma(P,u)}^*(w)_1$, and so
    \begin{align*}
        \dot{\beta}_{P,u}^*(w)&= \Omega\circ \dot{\eta}_{P,(u,\gamma(P,u))}^*(w) + \dot{\gamma}_{P,u}^*\circ \Lambda\circ \dot{\eta}_{P,(u,\gamma(P,u))}^*(w).
    \end{align*}
\end{proof}

\subsection{Proof of Theorem~\ref{thm:backpropWorks}}\label{app:backpropWorks}
\begin{proof}[Proof of Theorem~\ref{thm:backpropWorks}]
    This proof concerns maps $\eta_j : \mathcal{M}\times\mathcal{V}_{[j]}\rightarrow \mathcal{W}_\psi$, $j\in \{0\}\cup [k]$. These maps are defined so that, for all $(P',v_{[j]})\in\mathcal{M}\times\mathcal{V}_{[j]}$, $\eta_j(P',v_{[j]})$ is the output of a modification of Algorithm~\ref{alg:parameter} that replaces each line $i\in [j]$ by the assignment $h_i=v_i$ and leaves each line $i\in \{j+1,j+2,\ldots,k\}$ unchanged, so that $h_i=\theta_i(P',h_{\mathrm{pa}(i)})$. Concretely, $\eta_k(P',v_{[k]})=v_k$ and, for $j=k-1,k-2,\ldots,0$,
    \begin{align}
       \eta_j(P',v_{[j]}):=\eta_{j+1}\left(P',(v_{[j]},\theta_{j+1}(P',v_{\mathrm{pa}(j+1)}))\right).\label{eq:etaRecursion}
    \end{align}
    In the remainder of this proof, we leverage the chain rule and the recursive relationship between $\eta_j$ and $\eta_{j+1}$ to show $\eta_0$ is totally pathwise differentiable at $(P,0)$ with the first entry of $\dot{\eta}_{0,P,0}^*(f_k)$ equal to $f_0$. Because $\psi(\cdot)=\eta_0(\,\cdot\,,0)$, Lemma~\ref{lem:totalImpliesPartial} will then yield that $\dot{\psi}_P^*(f_k)=f_0$.
    
    To proceed, it will be helpful to have notation to denote the values $f_0,f_1,\ldots,f_j$ take before line~\ref{ln:augment} of Algorithm~\ref{alg:backprop} is evaluated for a given value of $j$ in the for loop. To this end, we add the following variable definition between lines~\ref{ln:for} and \ref{ln:augment} of the algorithm:
    \begin{center}
    \begin{algorithmic}[1]
      \setalglineno{2}
    \For {$j=k,k-1,\ldots,1$}
        \Statex \hspace{1.15em} $(f_0^{(j)},f_{[j]}^{(j)})=(f_0,f_{[j]})$\hfill $\triangleright$ value $(f_0,f_{[j]})$ takes before executing the line below
        \State $(f_0,f_{\mathrm{pa}(j)}) \mathrel{+{=}} \dot{\theta}_{j,P,h_{\mathrm{pa}(j)}}^*(f_j)$
    \EndFor
    \end{algorithmic}
    \end{center}
    Since, $(f_0^{(j)},f_{[j]}^{(j)})$, $j\in [k]$, are not used by the algorithm once defined, adding this line of code will not change the algorithm's output. 
    We also let $f_0^{(0)}$ denote the value that $f_0$ takes after executing the entirety of the for loop on lines~\ref{ln:for} and \ref{ln:augment} of Algorithm~\ref{alg:backprop}.

    We use induction to establish the following hypothesis holds for all $j\in \{k,k-1,\ldots,0\}$. Throughout our induction argument, we let $h_{[j]}$ be the quantity defined in Algorithm~\ref{alg:parameter} when that algorithm is called with input $P$.\\[.5em] 
    \textbf{Inductive hypothesis $j$, IH($j$):} $\eta_j$ is totally pathwise differentiable at $(P,h_{[j]})$ with
    \begin{align}
        \left(f_0^{(j)},\mathrm{proj}_{\mathcal{W}_{[j]}}\left\{f_{[j]}^{(j)}\,\middle|\, \dot{\mathcal{V}}_{[j],h_{[j]}}\right\}\right)=\dot{\eta}_{j,P,h_{[j]}}^*(f_k). \label{eq:IH}
    \end{align}
    In the case where $j=0$, we let $f_{[j]}^{(j)}$ denote the zero element in a trivial vector space $\{0\}$. Noting that $f_0^{(0)}$ is the value of $f_0$ returned by Algorithm~\ref{alg:backprop}, IH($0$) will imply that Algorithm~\ref{alg:backprop} returns the first entry of $\dot{\eta}_{j,P,h_{[j]}}^*(f_k)$, which, as noted earlier, is equal to $\dot{\psi}_P^*(f_k)$. Hence, if we can establish IH($0$), as we will now do by induction, then we will have completed our proof.\\[.5em]
    \textbf{Base case:} $j=k$. Since $\eta_k$ is the coordinate projection $(P',v_{[k]})\mapsto v_k$, Lemma~\ref{lem:coordProj} shows $\eta_k$ is totally pathwise differentiable at $(P,h_{[k]})$ with $\dot{\eta}_{k,P,h_{[k]}}^* : w_k \mapsto (0,(0,0,\ldots,0,\mathrm{proj}_{\mathcal{W}_k}\{w_k\mymid \dot{\mathcal{V}}_{k,h_k}\}))$. Since $f_0$ and $f_{[k-1]}$ are initialized to the be the zero elements in $L^2(P)$ and $\mathcal{W}_{[k-1]}$, respectively, $f_0^{(k)}$ and $f_{[k-1]}^{(k)}$ are also equal to these zero elements. Also, $f_{k}^{(k)}=f_k$. Hence, Lemma~\ref{lem:tanSetOfProduct} yields that $\mathrm{proj}_{\mathcal{W}_{[k]}}\{f_{[k]}^{(k)}\mymid \dot{\mathcal{V}}_{[k],h_{[k]}}\}=(0,0,\ldots,0,\mathrm{proj}_{\mathcal{W}_k}\{f_k^{(k)}\mymid \dot{\mathcal{V}}_{k,h_k}\})$. Putting all these facts together shows that \eqref{eq:IH} holds when $j=k$, that is, IH($k$) holds.\\[.5em]
    \textbf{Inductive step:} Fix $j<k$ and suppose IH($j+1$) holds.     Define the maps $\Omega_j : \dot{\mathcal{M}}_P\oplus \mathcal{V}_{[j+1]}\rightarrow \dot{\mathcal{M}}_P\oplus \mathcal{V}_{[j]}$ and $\Lambda_{j+1} : \dot{\mathcal{M}}_P\oplus \mathcal{V}_{[j+1]}\rightarrow \mathcal{V}_{j+1}$ as the coordinate projections $\Omega_j : (s,t_{[j+1]})\mapsto (s,t_{[j]})$ and $\Lambda_{j+1} : (s,t_{[j+1]})\mapsto t_{j+1}$. Defining $\gamma_{j+1} : \mathcal{M}\times\mathcal{V}_{[j]}\rightarrow\mathcal{V}_{j+1}$ so that $\gamma_{j+1}(P',v_{[j]})=\theta_{j+1}(P',v_{\mathrm{pa}(j+1)})$, \eqref{eq:etaRecursion} rewrites as
    \begin{align}
       \eta_j(P',v_{[j]}):=\eta_{j+1}\left(P',(v_{[j]},\gamma_{j+1}(P',v_{[j]}))\right). \label{eq:etajIdent}
    \end{align}
    We will apply Lemma~\ref{lem:chainRuleConvenient} to the above to show that $\eta_j$ is totally pathwise differentiable at $(P,h_{[j]})$, but to do so we must first show that $\gamma_{j+1}$ is totally pathwise differentiable at $(P,h_{[j]})$. To see that this is true, note that $\gamma_{j+1}(P',v_{[j]})=\theta_{j+1}(P',\Lambda_{\mathrm{pa}(j+1)}(P',v_{[j]}))$ for all $(P',v_{[j]})$, where $\Lambda_{\mathrm{pa}(j+1)}(P',v_{[j]}):=v_{\mathrm{pa}(j+1)}$, and so Lemmas~\ref{lem:coordProj} and \ref{lem:chainRule} together yield that $\gamma_{j+1}$ is totally pathwise differentiable at $(P,h_{[j]})$ and
    \begin{align}
    \dot{\gamma}_{j+1,P,h_{[j]}}^*(f_{j+1}^{(j+1)})&= \left(\dot{\theta}_{j+1,P,h_{\mathrm{pa}(j+1)}}^*(f_{j+1}^{(j+1)})_0,0\right) + \dot{\Lambda}_{\mathrm{pa}(j+1),P,h_{[j]}}^*\left(\dot{\theta}_{j+1,P,h_{\mathrm{pa}(j+1)}}^*(f_{j+1}^{(j+1)})_1\right) \nonumber \\
    &= \left(\dot{\theta}_{j+1,P,h_{\mathrm{pa}(j+1)}}^*(f_{j+1}^{(j+1)})_0,\mathrm{proj}_{\mathcal{W}_{[j]}}\{a\mymid  \dot{\mathcal{V}}_{[j],h_{[j]}}\}\right), \label{eq:gamAdj}
    \end{align}
    where $a\in \mathcal{W}_{[j]}$ satisfies $a_{\mathrm{pa}(j+1)}=\dot{\theta}_{j+1,P,h_{\mathrm{pa}(j+1)}}^*(f_{j+1}^{(j+1)})_1$ and $a_i=0$ for all $i\not\in\mathrm{pa}(j+1)$. 

    We are now in a position to apply Lemma~\ref{lem:chainRuleConvenient} to \eqref{eq:etajIdent}. Combining that lemma with the fact that $h_{[j+1]}=(h_{[j]},\gamma_{j+1}(P,h_{[j]}))$ shows $\eta_j$ is totally pathwise differentiable at $(P,h_{[j]})$ with
    \begin{align*}
        \dot{\eta}_{j,P,h_{[j]}}^*(f_k)&= \Omega_j\circ \dot{\eta}_{j+1,P,(h_{[j]},\gamma_{j+1}(P,h_{[j]}))}^*(f_k) + \dot{\gamma}_{j+1,P,h_{[j]}}^*\circ \Lambda_{j+1}\circ \dot{\eta}_{j+1,P,(h_{[j]},\gamma_{j+1}(P,h_{[j]}))}^*(f_k) \\
        &= \Omega_j\circ \dot{\eta}_{j+1,P,h_{[j+1]}}^*(f_k) + \dot{\gamma}_{j+1,P,h_{[j]}}^*\circ \Lambda_{j+1}\circ \dot{\eta}_{j+1,P,h_{[j+1]}}^*(f_k).
    \end{align*}
    Since IH($j+1$) holds, $\dot{\eta}_{j+1,P,h_{[j]}}^*(f_k)=(f_0^{(j+1)},\mathrm{proj}_{\mathcal{W}_{[j+1]}}\{f_{[j+1]}^{(j+1)}\mymid  \dot{\mathcal{V}}_{[j+1],h_{[j+1]}}\})$. Combining this with the above and the definitions of $\Omega_j$ and $\Lambda_{j+1}$ yields
    \begin{align}
        \dot{\eta}_{j,P,h_{[j]}}^*(f_k)&= (f_0^{(j+1)},\mathrm{proj}_{\mathcal{W}_{[j]}}\{f_{[j]}^{(j+1)}\mymid  \dot{\mathcal{V}}_{[j],h_{[j]}}\}) + \dot{\gamma}_{j+1,P,h_{[j]}}^*(\mathrm{proj}_{\mathcal{W}_{j+1}}\{f_{j+1}^{(j+1)}\mymid  \dot{\mathcal{V}}_{j+1,h_{j+1}}\}). \label{eq:etaIntermediate}
    \end{align}
    The latter term on the right-hand side equals $\dot{\gamma}_{j+1,P,h_{[j]}}^*(f_{j+1}^{(j+1)})$. To see this, note that the range of $\dot{\gamma}_{j+1,P,h_{[j]}}$ is necessarily a subset of $\dot{\mathcal{V}}_{j+1,h_{j+1}}$, and so, for any $(s,t)\in\dot{\mathcal{M}}_P\oplus \dot{\mathcal{V}}_{[j],h_{[j]}}$,
    \begin{align*}
        &\left\langle (s,t), \dot{\gamma}_{j+1,P,h_{[j]}}^*(\mathrm{proj}_{\mathcal{W}_{j+1}}\{f_{j+1}^{(j+1)}\mymid  \dot{\mathcal{V}}_{j+1,h_{j+1}}\})\right\rangle_{\dot{\mathcal{M}}_P\oplus \dot{\mathcal{V}}_{[j],h_{[j]}}} \\
        &= \left\langle \dot{\gamma}_{j+1,P,h_{[j]}}(s,t), \mathrm{proj}_{\mathcal{W}_{j+1}}\{f_{j+1}^{(j+1)}\mymid  \dot{\mathcal{V}}_{j+1,h_{j+1}}\}\right\rangle_{\mathcal{W}_{j+1}} \\
        &= \left\langle \dot{\gamma}_{j+1,P,h_{[j]}}(s,t), f_{j+1}^{(j+1)}\right\rangle_{\mathcal{W}_{j+1}} = \left\langle (s,t), \dot{\gamma}_{j+1,P,h_{[j]}}^*(f_{j+1}^{(j+1)})\right\rangle_{\dot{\mathcal{M}}_P\oplus \dot{\mathcal{V}}_{[j],h_{[j]}}}.
    \end{align*}
    Replacing the latter term on the right-hand side of \eqref{eq:etaIntermediate} with $\dot{\gamma}_{j+1,P,h_{[j]}}^*(f_{j+1}^{(j+1)})$, combining that display with \eqref{eq:gamAdj}, and then using the linearity of the projection operator yields
    \begin{align*}
        \dot{\eta}_{j,P,h_{[j]}}^*(f_k)&= \left(f_0^{(j+1)}+\dot{\theta}_{j+1,P,h_{\mathrm{pa}(j+1)}}^*(f_{j+1}^{(j+1)})_0,\mathrm{proj}_{\mathcal{W}_{[j]}}\{f_{[j]}^{(j+1)}\mymid  \dot{\mathcal{V}}_{[j],h_{[j]}}\} + \mathrm{proj}_{\mathcal{W}_{[j]}}\{a\mymid  \dot{\mathcal{V}}_{[j],h_{[j]}}\}\right) \\
        &= \left(f_0^{(j+1)}+\dot{\theta}_{j+1,P,h_{\mathrm{pa}(j+1)}}^*(f_{j+1}^{(j+1)})_0,\mathrm{proj}_{\mathcal{W}_{[j]}}\{f_{[j]}^{(j+1)} + a\mymid  \dot{\mathcal{V}}_{[j],h_{[j]}}\}\right).
    \end{align*}
    Recalling the definition of $a$ and the form of the update that occurs on step $j+1$ of the for loop in Algorithm~\ref{alg:backprop} reveals that
    \begin{align*}
(f_0^{(j)},f_{[j]}^{(j)})=\left(f_0^{(j+1)}+\dot{\theta}_{j+1,P,h_{\mathrm{pa}(j+1)}}^*(f_{j+1}^{(j+1)})_0,f_{[j]}^{(j+1)} + a\right).
    \end{align*}
    Combining the preceding two displays shows that IH($j$) holds.
\end{proof}

\subsection{Relationship between total pathwise differentiability and other notions of differentiability}

We now establish results relating total pathwise differentiability to Hadamard and pathwise differentiability. When doing so, we leverage the notation introduced in the paragraph surrounding Eq.~\ref{eq:totalpd} in the main text.

We begin by reviewing the definitions of Hadamard and pathwise differentiability. A map $\zeta : \mathcal{U}\rightarrow \mathcal{W}$ is called Hadamard differentiable at $u\in\mathcal{U}$ if there exists a continuous linear operator $\dot{\zeta}_{u} : \dot{\mathcal{U}}_u\rightarrow\mathcal{W}$ such that, for all $t\in\check{\mathcal{U}}_u$ and $\{u_\epsilon : \epsilon\}\in \mathscr{P}(u,\mathcal{U},t)$,
\begin{align}
    \left\|\zeta(u_\epsilon) - \zeta(u) - \epsilon\,  \dot{\zeta}_{u}(t)\right\|_{\mathcal{W}}&= o(\epsilon). \label{eq:Hadamard}
\end{align}
We call $\dot{\zeta}_u$ the differential operator of $\zeta$ at $u$. 
As noted in Section~20.7 of \cite{van2000asymptotic}, Hadamard differentiability is equivalent to Gateaux differentiability uniformly over compacts. Moving now to pathwise differentiability \citep{van1991differentiable,luedtke2023one}, $\nu : \mathcal{M}\rightarrow\mathcal{W}$ is called pathwise differentiable at $P\in\mathcal{M}$ if there exists a continuous linear operator $\dot{\nu}_{P} : \dot{\mathcal{M}}_P\rightarrow\mathcal{W}$ such that, for all $s\in\check{\mathcal{M}}_P$ and $\{P_\epsilon : \epsilon\}\in \mathscr{P}(P,\mathcal{M},s)$,
\begin{align}
    \left\|\nu(P_\epsilon) - \nu(P) - \epsilon\,  \dot{\nu}_P(s)\right\|_{\mathcal{W}}&= o(\epsilon). \label{eq:pd}
\end{align}
The map $\dot{\nu}_P$ is called the local parameter of $\nu$ at $P$.

As we now show, total pathwise differentiability satisfies an analogue of the fact that the total differentiability of a function $f : \mathbb{R}^2\rightarrow\mathbb{R}$  implies its partial differentiability. In our context, we call $\theta : \mathcal{M}\times\mathcal{U}\rightarrow\mathcal{W}$ partially  differentiable in its first argument at $(P,u)$ if $\theta(\,\cdot\,,u)$ is pathwise differentiable at $P$; we denote the local parameter of $\theta(\,\cdot\,,u)$ at $P$ by $\dot{\nu}_{P,u} : \dot{\mathcal{M}}_P\rightarrow\mathcal{W}$, where the indexing by $u$ emphasizes that $\theta(\,\cdot\,,u)$ depends on $u$. We similarly  call $\theta$ partially differentiable in its second argument at $(P,u)$ if $\theta(P,\,\cdot\,)$ is Hadamard differentiable at $u$, and we denote the differential operator by $\dot{\zeta}_{P,u} : \dot{\mathcal{U}}_u\rightarrow\mathcal{W}$.

\begin{lemma}[Total differentiability implies partial differentiability]\label{lem:totalImpliesPartial}
    If $\theta : \mathcal{M}\times\mathcal{U}\rightarrow\mathcal{W}$ is totally pathwise differentiable at $(P,u)$, then $\theta$ is partially differentiable in its first and second arguments at $(P,u)$. Moreover, $\dot{\theta}_{P,u}(s,t)=\dot{\nu}_{P,u}(s)+\dot{\zeta}_{P,u}(t)$ and $\dot{\theta}_{P,u}^*(w)=(\dot{\nu}_{P,u}^*(w),\dot{\zeta}_{P,u}^*(w))$.
\end{lemma}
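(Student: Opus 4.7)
The strategy is to extract each partial differential from the total one by plugging in constant paths, then read off the formulas for the differential operator and its adjoint by linearity and the direct-sum inner product.

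First I would verify that constant paths are admissible. For $s=0$, the constant path $P_\epsilon\equiv P$ lies in $\mathscr{P}(P,\mathcal{M},0)$ since it trivially satisfies the quadratic-mean-differentiability condition with score $0$; likewise the constant path $u_\epsilon\equiv u$ lies in $\mathscr{P}(u,\mathcal{U},0)$. In particular $0\in\check{\mathcal{M}}_P\cap\check{\mathcal{U}}_u$.

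Next, to obtain partial differentiability in the first argument, fix $s\in\check{\mathcal{M}}_P$ and $\{P_\epsilon\}\in\mathscr{P}(P,\mathcal{M},s)$, and pair this with $u_\epsilon\equiv u$. Applying \eqref{eq:totalpd} with $(s,t)=(s,0)$ gives
\begin{align*}
\|\theta(P_\epsilon,u)-\theta(P,u)-\epsilon\,\dot{\theta}_{P,u}(s,0)\|_{\mathcal{W}}=o(\epsilon),
\end{align*}
so $\theta(\,\cdot\,,u)$ is pathwise differentiable at $P$ with local parameter $\dot{\nu}_{P,u}(s):=\dot{\theta}_{P,u}(s,0)$. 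Because $s\mapsto(s,0)$ is a continuous linear isometric embedding of $\dot{\mathcal{M}}_P$ into $\dot{\mathcal{M}}_P\oplus\dot{\mathcal{U}}_u$ and $\dot{\theta}_{P,u}$ is continuous and linear, $\dot{\nu}_{P,u}$ inherits these properties. The argument for the second argument is symmetric, yielding Hadamard differentiability of $\theta(P,\,\cdot\,)$ at $u$ with differential $\dot{\zeta}_{P,u}(t):=\dot{\theta}_{P,u}(0,t)$.

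The identity $\dot{\theta}_{P,u}(s,t)=\dot{\nu}_{P,u}(s)+\dot{\zeta}_{P,u}(t)$ then follows from linearity of $\dot{\theta}_{P,u}$, since $(s,t)=(s,0)+(0,t)$. For the adjoint, fix $w\in\mathcal{W}$ and $(s,t)\in\dot{\mathcal{M}}_P\oplus\dot{\mathcal{U}}_u$; using the definitions of $\dot{\nu}_{P,u}^*$ and $\dot{\zeta}_{P,u}^*$ and the direct-sum inner product,
\begin{align*}
\langle\dot{\theta}_{P,u}(s,t),w\rangle_{\mathcal{W}}
&=\langle\dot{\nu}_{P,u}(s),w\rangle_{\mathcal{W}}+\langle\dot{\zeta}_{P,u}(t),w\rangle_{\mathcal{W}}\\
&=\langle s,\dot{\nu}_{P,u}^*(w)\rangle_{\dot{\mathcal{M}}_P}+\langle t,\dot{\zeta}_{P,u}^*(w)\rangle_{\dot{\mathcal{U}}_u}\\
&=\langle(s,t),(\dot{\nu}_{P,u}^*(w),\dot{\zeta}_{P,u}^*(w))\rangle_{\dot{\mathcal{M}}_P\oplus\dot{\mathcal{U}}_u},
\end{align*}
so by uniqueness of the Hermitian adjoint, $\dot{\theta}_{P,u}^*(w)=(\dot{\nu}_{P,u}^*(w),\dot{\zeta}_{P,u}^*(w))$.

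There is no real obstacle here; the only thing to be careful about is justifying that constant paths are valid members of $\mathscr{P}(P,\mathcal{M},0)$ and $\mathscr{P}(u,\mathcal{U},0)$, which is immediate from the definitions. Everything else reduces to restricting the continuous linear operator $\dot{\theta}_{P,u}$ to the two natural closed subspaces $\dot{\mathcal{M}}_P\times\{0\}$ and $\{0\}\times\dot{\mathcal{U}}_u$ of the direct sum.
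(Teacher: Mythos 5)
Your proof is correct and follows essentially the same route as the paper: fix constant paths in one argument, apply the defining inequality \eqref{eq:totalpd} to extract each partial differential as $\dot{\theta}_{P,u}(\cdot,0)$ and $\dot{\theta}_{P,u}(0,\cdot)$, and then read off the sum decomposition from linearity and the adjoint decomposition from the direct-sum inner product. The only difference is that you make explicit the (immediate) admissibility of constant paths and spell out the adjoint calculation, both of which the paper treats as evident.
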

\begin{proof}[Proof of Lemma~\ref{lem:totalImpliesPartial}]
    The Hadamard differentiability of $\theta(P,\,\cdot\,) : \mathcal{U}\rightarrow\mathcal{W}$ follows by applying \eqref{eq:totalpd} with the constant path $\{P : \epsilon\in [0,1]\}\in \mathscr{P}(P,\mathcal{M},0)$. Indeed, this shows \eqref{eq:Hadamard} holds with $\zeta(\cdot)$ equal to $\theta(P,\,\cdot\,)$ and $\dot{\zeta}_{P,u}(\cdot)$ equal to the continuous linear operator $\dot{\theta}_{P,u}(0,\,\cdot\,) : \dot{\mathcal{U}}_u\rightarrow\mathcal{W}$.
    
    The pathwise differentiability of  $\theta(\,\cdot\,,u) : \mathcal{M}\rightarrow\mathcal{W}$ similarly follows by applying \eqref{eq:totalpd} with the constant path $\{u : \epsilon\in [0,1]\}\in \mathscr{P}(u,\mathcal{U},0)$. Indeed, this shows \eqref{eq:pd} holds with $\nu(\cdot)$ equal to $\theta(\,\cdot\,,u)$ and $\dot{\nu}_{P,u}$ equal to the continuous linear operator $\dot{\theta}_{P,u}(\,\cdot\,,0) : \dot{\mathcal{M}}_P\rightarrow\mathcal{W}$.

    Because $\dot{\theta}_{P,u}$ is a linear operator, $\dot{\theta}_{P,u}(s,t)=\dot{\theta}_{P,u}(s,0)+\dot{\theta}_{P,u}(0,t)=\dot{\nu}_{P,u}(s)+\dot{\zeta}_{P,u}(t)$ for all $s,t$. Also, by the definition of inner products in direct sum spaces, $\dot{\theta}_{P,u}^*(w)=(\dot{\nu}_{P,u}^*(w),\dot{\zeta}_{P,u}^*(w))$ for all $w$.
\end{proof}

A partial converse of Lemma~\ref{lem:totalImpliesPartial} holds when $\mathcal{U}$ is a linear space. This partial converse is a natural analogue of the fact that a function $f : \mathbb{R}^2\rightarrow\mathbb{R}$ is totally differentiable at $(x,y)$ if, at this point, it is partially differentiable in its first argument and continuously partially differentiable in its second \citep[Theorem~12.11 of][]{apostol1974mathematical}. To define a notion of continuous partial differentiability of $\theta$ in its second argument, we define a notion of continuity for the map $\xi : (P',u')\mapsto \dot{\zeta}_{P',u'}$. When doing so, we require that $\theta$ is partially differentiable in its second argument at all $(P',u')$ in a neighborhood $\mathcal{N}\subseteq \mathcal{M}\times\mathcal{U}$ of $(P,u)$, where here and throughout $\mathcal{M}\times\mathcal{U}$ is equipped with the product topology derived from the Hellinger metric on $\mathcal{M}$ and the $\|\cdot\|_{\mathcal{T}}$-induced metric on $\mathcal{U}$. We take the domain of $\xi$ to be equal to $\mathcal{N}$, where $\mathcal{N}\subseteq \mathcal{M}\times\mathcal{U}$ is equipped with the subspace topology. The image of $\xi$ is contained in the space $\mathcal{B}(\overline{\mathcal{U}},\mathcal{W})$ of bounded linear operators from the $\mathcal{T}$-closure $\overline{\mathcal{U}}$ of $\mathcal{U}$ to $\mathcal{W}$; this is true because the linearity of $\mathcal{U}$ implies that $\dot{\mathcal{U}}_{u'}=\overline{\mathcal{U}}$ for all $u'\in\mathcal{U}$, and a differential operator is, by definition, bounded and linear. We equip $\mathcal{B}(\overline{\mathcal{U}},\mathcal{W})$ with the operator norm $\|f\|_{\mathrm{op}}:=\sup_{t\in\overline{\mathcal{U}} : \|t\|_{\mathcal{T}}\le 1} \|f(t)\|_{\mathcal{W}}$. This results in the following definition: $\theta$ is called continuously partially differentiable in its second argument at $(P,u)$ if $\theta$ is partially differentiable in its second argument at all $(P',u')$ in a neighborhood $\mathcal{N}$ of $(P,u)$ and, moreover, $\xi : \mathcal{N}\rightarrow\mathcal{B}(\overline{\mathcal{U}},\mathcal{W})$ is continuous at $(P,u)$.

\begin{lemma}[Partial converse of Lemma~\ref{lem:totalImpliesPartial}]\label{lem:diffTheorem}
    Suppose $\mathcal{U}$ is a linear space. If $\theta : \mathcal{M}\times\mathcal{U}\rightarrow\mathcal{W}$ is partially differentiable in its first argument at $(P,u)$ and continuously partially differentiable in its second argument at $(P,u)$, then $\theta$ is totally pathwise differentiable at $(P,u)$ with differential operator $\dot{\theta}_{P,u}(s,t)=\dot{\nu}_{P,u}(s) + \dot{\zeta}_{P,u}(t)$.
\end{lemma}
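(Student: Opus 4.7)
My plan is to mirror the classical Banach-space theorem that partial differentiability in one direction, combined with continuous partial differentiability in the other, implies total differentiability. The decomposition I would use is
\[
\theta(P_\epsilon,u_\epsilon)-\theta(P,u)=[\theta(P_\epsilon,u)-\theta(P,u)]+[\theta(P,u_\epsilon)-\theta(P,u)]+R_\epsilon,
\]
where $R_\epsilon:=[\theta(P_\epsilon,u_\epsilon)-\theta(P_\epsilon,u)]-[\theta(P,u_\epsilon)-\theta(P,u)]$. The first bracket equals $\epsilon\,\dot{\nu}_{P,u}(s)+o(\epsilon)$ by the partial pathwise differentiability of $\theta(\,\cdot\,,u)$ at $P$ applied to the smooth path $\{P_\epsilon\}$, and the second equals $\epsilon\,\dot{\zeta}_{P,u}(t)+o(\epsilon)$ by the partial Hadamard differentiability of $\theta(P,\,\cdot\,)$ at $u$ applied to the path $\{u_\epsilon\}$. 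The core task is then to show $R_\epsilon=o(\epsilon)$, and this is where continuity of $\xi$ enters.

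To estimate $R_\epsilon$, I would set $v_\epsilon:=u_\epsilon-u$ and introduce the auxiliary curve $\Phi_\epsilon:[0,1]\to\mathcal{W}$ defined by $\Phi_\epsilon(r):=\theta(P_\epsilon,u+rv_\epsilon)-\theta(P,u+rv_\epsilon)$, so that $R_\epsilon=\Phi_\epsilon(1)-\Phi_\epsilon(0)$. Because $\|v_\epsilon\|_{\mathcal{T}}=O(\epsilon)$ and $P_\epsilon\to P$ in Hellinger distance, the points $(P_\epsilon,u+rv_\epsilon)$ and $(P,u+rv_\epsilon)$ lie in the neighborhood $\mathcal{N}$ uniformly in $r\in[0,1]$ for all sufficiently small $\epsilon$; the linearity of $\mathcal{U}$ is used here to ensure $u+rv_\epsilon\in\mathcal{U}$ for all $r$. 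Partial Hadamard differentiability at each such base point then makes $\Phi_\epsilon$ continuous on $[0,1]$ (continuity of each summand along the line segment follows by applying the Hadamard expansion at that base point with tangent $\pm v_\epsilon$) and right-differentiable on $[0,1)$ with
\[
\Phi_\epsilon'(r)=\bigl[\dot{\zeta}_{P_\epsilon,u+rv_\epsilon}-\dot{\zeta}_{P,u+rv_\epsilon}\bigr](v_\epsilon).
\]
Adding and subtracting $\dot{\zeta}_{P,u}$ inside the bracket and invoking continuity of $\xi$ at $(P,u)$ yields $\sup_{r\in[0,1)}\|\Phi_\epsilon'(r)\|_{\mathcal{W}}\le o(1)\cdot\|v_\epsilon\|_{\mathcal{T}}=o(\epsilon)$. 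The Banach-valued mean value inequality, provable via Hahn-Banach, then delivers $\|R_\epsilon\|_{\mathcal{W}}\le\sup_{r\in(0,1)}\|\Phi_\epsilon'(r)\|_{\mathcal{W}}=o(\epsilon)$.

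Combining the three pieces gives $\theta(P_\epsilon,u_\epsilon)-\theta(P,u)=\epsilon[\dot{\nu}_{P,u}(s)+\dot{\zeta}_{P,u}(t)]+o(\epsilon)$. The candidate differential operator $(s,t)\mapsto\dot{\nu}_{P,u}(s)+\dot{\zeta}_{P,u}(t)$ is linear and bounded on $\dot{\mathcal{M}}_P\oplus\dot{\mathcal{U}}_u$ since each summand is bounded and linear on its respective factor, confirming that it is indeed the total differential operator $\dot{\theta}_{P,u}$. The step I expect to be most delicate is the mean value argument for $R_\epsilon$: one must verify that Hadamard differentiability at each interior point of the segment $\{u+rv_\epsilon:r\in[0,1]\}$ genuinely provides a right derivative of $\Phi_\epsilon$ at $r$ rather than merely a directional derivative in some weaker sense, and that the pointwise-at-$(P,u)$ continuity hypothesis on $\xi$ does suffice to make the operator-norm bound $o(1)$ uniformly in $r$, which it does by virtue of the uniform-in-$r$ convergence $(P_\epsilon,u+rv_\epsilon)\to(P,u)$ into any prescribed neighborhood.
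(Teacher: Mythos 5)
Your proof is correct, and the broad strategy matches the paper's: decompose the increment, dispose of the two marginal pieces via partial differentiability, and control a cross term with a mean value argument that uses the continuity of $\xi:(P',u')\mapsto\dot{\zeta}_{P',u'}$. The decompositions differ in a minor but interesting way, though. The paper's cross term is the Taylor remainder $\theta(P_\epsilon,u_\epsilon)-\theta(P_\epsilon,u)-\epsilon\,\dot{\zeta}_{P_\epsilon,u}(t)$, to which it applies the scalar mean value theorem along $a\mapsto\langle\theta(P_\epsilon,u+a(u_\epsilon-u)),w\rangle$ for each unit $w\in\mathcal{W}$ and then takes a supremum; because the increment direction is $(u_\epsilon-u)/\epsilon=:t_\epsilon$ rather than $t$, an additional correction $\|t_\epsilon-t\|\,\|\dot{\zeta}_{P_\epsilon,u}\|_{\mathrm{op}}$ must then be controlled. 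Your cross term is instead the symmetric commutator $R_\epsilon=[\theta(P_\epsilon,u_\epsilon)-\theta(P_\epsilon,u)]-[\theta(P,u_\epsilon)-\theta(P,u)]$, and you apply the Banach-valued mean value inequality to $\Phi_\epsilon(r)=\theta(P_\epsilon,u+rv_\epsilon)-\theta(P,u+rv_\epsilon)$. That inequality is itself proved by exactly the paper's Hahn--Banach-plus-scalar-MVT device, so the two arguments ultimately rest on the same tool; but by handling $\theta(P,u_\epsilon)-\theta(P,u)$ directly via Hadamard differentiability at $(P,u)$, you absorb the $t_\epsilon\to t$ correction into the $o(\epsilon)$ and slightly streamline the estimate, at the cost of needing Hadamard differentiability at both $P_\epsilon$ and $P$ to differentiate $\Phi_\epsilon$. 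Both versions use the convexity of the neighborhood $\mathcal{N}_u$ (hence the linearity of $\mathcal{U}$) to keep the line segment inside the domain of differentiability, and both reduce to operator-norm quantities that vanish uniformly by continuity of $\xi$ at $(P,u)$; you flag these points correctly.
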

\begin{proof}[Proof of Lemma~\ref{lem:diffTheorem}]
    We suppose that the neighborhood $\mathcal{N}$ on which $\theta$ is partially differentiable in its second argument takes the form $\mathcal{N}_P\times\mathcal{N}_u\subset \mathcal{M}\times\mathcal{U}$, where $\mathcal{N}_u=\{u'\in\mathcal{U} : \|u'-u\|_{\mathcal{T}}< r\}$ for some $r>0$. No generality is lost by doing this since Cartesian products of open sets form a base for the product topology and open balls form a base for the topology on the normed space $(\mathcal{U},\|\cdot\|_{\mathcal{T}})$.

    Fix $s\in\check{\mathcal{M}}_P$, $t\in\check{\mathcal{U}}_u$, $\{P_\epsilon : \epsilon\in [0,1]\}\in \mathscr{P}(P,\mathcal{M},s)$, and $\{u_\epsilon : \epsilon\in [0,1]\}\in \mathscr{P}(u,\mathcal{U},t)$. We will show that \eqref{eq:totalpd} holds with $\dot{\theta}_{P,u}(s,t)=\dot{\nu}_{P,u}(s) + \dot{\zeta}_{P,u}(t)$. We leverage the decomposition
    \begin{align}
        \theta(P_\epsilon,u_\epsilon) - \theta(P,u)&= \left[\theta(P_\epsilon,u) - \theta(P,u)\right] + \epsilon\, \dot{\zeta}_{P_\epsilon,u}(t) + \left[\theta(P_\epsilon,u_\epsilon) -\theta(P_\epsilon,u) - \epsilon\, \dot{\zeta}_{P_\epsilon,u}(t)\right]. \label{eq:partialImpliesTotalDecomp}
    \end{align}
    By the partial differentiability of $\theta$ in its first argument, the first term is equal to $\epsilon\, \dot{\nu}_{P,u}(s) + o(\epsilon)$. By the Hellinger-continuity of  $P'\mapsto \dot{\zeta}_{P',u}$ and the fact that the quadratic mean differentiability of $\{P_\epsilon : \epsilon\}$ implies that $P_\epsilon\rightarrow P$ in a Hellinger sense, the second term is equal to $\epsilon\, \dot{\zeta}_{P,u}(t) + o(\epsilon)$. In the remainder of this proof, we show that the third term is $o(\epsilon)$. Since the boundedness and linearity of $\dot{\nu}_{P,u}$ and $\dot{\zeta}_{P,u}$ imply that $(s,t)\mapsto \dot{\nu}_{P,u}(s) + \dot{\zeta}_{P,u}(t)$ is bounded and linear as well, this will establish the result.

    Our study of the third term in \eqref{eq:partialImpliesTotalDecomp} relies on the fact that, for any $\epsilon>0$,
    \begin{align}
        \|\theta(P_\epsilon,u_\epsilon)-\theta(P_\epsilon,u) - \epsilon\, \dot{\zeta}_{P_\epsilon,u}(t)\|_{\mathcal{W}}&= \sup_{w\in\mathcal{W} : \|w\|_{\mathcal{W}}=1}\left\langle\theta(P_\epsilon,u_\epsilon)-\theta(P_\epsilon,u) - \epsilon\, \dot{\zeta}_{P_\epsilon,u}(t),w\right\rangle_{\mathcal{W}}. \label{eq:normInnerProd}
    \end{align}
    In what follows we shall obtain a bound on the quantity in the supremum that holds uniformly over all $w\in\mathcal{W}$ with $\|w\|_{\mathcal{W}}=1$. For now, fix such a $w$. Our first goal will be to show that, for all $\epsilon$ less than some $\epsilon_0>0$ to be defined momentarily, there is an intermediate element $u^\dagger$ between $u_\epsilon$ and $u$ such that
    \begin{align}
        \langle\theta(P_\epsilon,u_\epsilon)-\theta(P_\epsilon,u),w\rangle_{\mathcal{W}}=\langle\dot{\zeta}_{P_\epsilon,u^\dagger}(u_\epsilon-u),w\rangle_{\mathcal{W}}, \label{eq:mvt}
    \end{align}
    This result will follow from the mean value theorem, once we show that theorem is applicable. We take $\epsilon_0$ to be any positive quantity such that $u_\epsilon\in\mathcal{N}_u$ and $P_\epsilon\in\mathcal{N}_P$ for all $\epsilon<\epsilon_0$ --- such an $\epsilon_0$ necessarily exists since, as $\epsilon\rightarrow 0$, $u_\epsilon\rightarrow u$ and $P_\epsilon\rightarrow P$. We suppose $\epsilon<\epsilon_0$ hereafter. Since $u_\epsilon$ and $u$ both belong to $\mathcal{N}_u$ and the linearity of $\mathcal{U}$ implies that $\mathcal{N}_u$ is convex, $u_\epsilon^{(a)}:=au_\epsilon + (1-a)u\in\mathcal{N}_u$ for all $a\in[0,1]$. Hence, $\theta(P_\epsilon,\,\cdot\,)$ is Hadamard differentiable at $u_\epsilon^{(a)}$ for each $a\in [0,1]$. Using that $u_\epsilon^{(a+\delta)}=u_\epsilon^{(a)} + \delta(u_\epsilon-u)$ for all $a\in [0,1)$ and $\delta>0$, this implies that, for all $a\in [0,1)$,
    \begin{align*}
        \lim_{\delta\downarrow 0} \frac{\theta(P_\epsilon,u_\epsilon^{(a+\delta)}) - \theta(P_\epsilon,u_\epsilon^{(a)})}{\delta}= \dot{\zeta}_{P_\epsilon,u_\epsilon^{(a)}}(u_\epsilon-u).
    \end{align*}
    Similarly, for all $a\in (0,1]$, the same display holds but with $\lim_{\delta\downarrow 0}$ replaced by $\lim_{\delta\uparrow 0}$. Since $\langle\,\cdot\, , w\rangle_{\mathcal{W}} : \mathcal{W}\rightarrow\mathbb{R}$ is continuous, our expressions for the limits as $\delta\downarrow 0$ and $\delta\uparrow 0$ show that $a\mapsto \langle \theta(P_\epsilon,u_\epsilon^{(a)}), w\rangle_{\mathcal{W}}$ is (i) continuous on $[0,1]$ and (ii) differentiable on $(0,1)$ with derivative $a\mapsto \langle \dot{\zeta}_{P_\epsilon,u_\epsilon^{(a)}}(u_\epsilon-u), w\rangle_{\mathcal{W}}$. Consequently, by the mean value theorem, there exists an $a^\star\in (0,1)$ such that \eqref{eq:mvt} holds with $u^\dagger=u_\epsilon^{(a^\star)}$. Taking a supremum in \eqref{eq:mvt} over all possible values $a^\star$ could take yields
    \begin{align*}
        \langle\theta(P_\epsilon,u_\epsilon)-\theta(P_\epsilon,u),w\rangle_{\mathcal{W}}\le \sup_{a\in (0,1)} \left\langle\dot{\zeta}_{P_\epsilon,u_\epsilon^{(a)}}(u_\epsilon-u),w\right\rangle_{\mathcal{W}}.
    \end{align*}
    We let $t_\epsilon:=(u_\epsilon-u)/\epsilon$, and note that the linearity of $\dot{\zeta}_{P_\epsilon,u_\epsilon^{(a)}}$ implies that $\dot{\zeta}_{P_\epsilon,u_\epsilon^{(a)}}(u_\epsilon-u)=\epsilon \,\dot{\zeta}_{P_\epsilon,u_\epsilon^{(a)}}(t_\epsilon)$. Plugging this into the above, dividing both sides by $\epsilon$, and then subtracting $\langle \dot{\zeta}_{P_\epsilon,u}(t),w\rangle_{\mathcal{W}}$ yields that
    \begin{align*}
        \epsilon^{-1}\langle\theta(P_\epsilon,u_\epsilon)-\theta(P_\epsilon,u) - \epsilon\, \dot{\zeta}_{P_\epsilon,u}(t),w\rangle_{\mathcal{W}}\le \sup_{a\in (0,1)} \left\langle\dot{\zeta}_{P_\epsilon,u_\epsilon^{(a)}}(t_\epsilon) - \dot{\zeta}_{P_\epsilon,u}(t),w\right\rangle_{\mathcal{W}}.
    \end{align*}
    Applying Cauchy-Schwarz and using that $\|w\|_{\mathcal{W}}=1$ shows the right-hand side above can be upper bounded by $\sup_{a\in (0,1)} \|\dot{\zeta}_{P_\epsilon,u_\epsilon^{(a)}}(t_\epsilon) - \dot{\zeta}_{P_\epsilon,u}(t)\|_{\mathcal{W}}$. This bound does not depend on $w$, and so \eqref{eq:normInnerProd} implies
    \begin{align*}
         \epsilon^{-1}\|\theta(P_\epsilon,u_\epsilon)-\theta(P_\epsilon,u) - \epsilon\, \dot{\zeta}_{P_\epsilon,u}(t)\|_{\mathcal{W}}\le \sup_{a\in (0,1)} \left\|\dot{\zeta}_{P_\epsilon,u_\epsilon^{(a)}}(t_\epsilon) - \dot{\zeta}_{P_\epsilon,u}(t)\right\|_{\mathcal{W}}.
    \end{align*}
    Applying the triangle inequality and leveraging the definition of the operator norm, we see that
    \begin{align*}
         \epsilon^{-1}\|\theta(P_\epsilon,u_\epsilon)-\theta(P_\epsilon,u) - \epsilon\, \dot{\zeta}_{P_\epsilon,u}(t)\|_{\mathcal{W}}&\le \sup_{a\in (0,1)} \left\|\dot{\zeta}_{P_\epsilon,u_\epsilon^{(a)}}(t_\epsilon) - \dot{\zeta}_{P_\epsilon,u}(t)\right\|_{\mathcal{W}} \\
         &\le \sup_{a\in (0,1)} \left\|\dot{\zeta}_{P_\epsilon,u_\epsilon^{(a)}}(t_\epsilon) - \dot{\zeta}_{P_\epsilon,u}(t_\epsilon)\right\|_{\mathcal{W}} + \left\|\dot{\zeta}_{P_\epsilon,u}(t_\epsilon-t)\right\|_{\mathcal{W}} \\
         &\le \|t_\epsilon\|_{\mathcal{T}} \sup_{a\in (0,1)} \left\|\dot{\zeta}_{P_\epsilon,u_\epsilon^{(a)}} - \dot{\zeta}_{P_\epsilon,u}\right\|_{\mathrm{op}}+ \left\|t_\epsilon-t\right\|_{\mathcal{T}}\|\dot{\zeta}_{P_\epsilon,u}\|_{\mathrm{op}}.
    \end{align*}
    We now explain why the right-hand side goes to zero when $\epsilon\rightarrow 0$, which will show that the third term in \eqref{eq:partialImpliesTotalDecomp} is $o(\epsilon)$. 
    Since $u_\epsilon\rightarrow u$, $\sup_{a\in (0,1)}\|u_\epsilon^{(a)}-u\|_{\mathcal{T}}\le \|u_\epsilon-u\|_{\mathcal{T}}=o(1)$. Combining this with the fact that $P_\epsilon\rightarrow P$ and the continuity of $(P',u')\mapsto \dot{\zeta}_{P',u'}$ at $(P,u)$ shows that $\sup_{a\in (0,1)} \left\|\dot{\zeta}_{P_\epsilon,u_\epsilon^{(a)}} - \dot{\zeta}_{P_\epsilon,u}\right\|_{\mathrm{op}}=o(1)$; this continuity also shows that $\|\dot{\zeta}_{P_\epsilon,u}\|_{\mathrm{op}}=O(1)$. Finally, since $t_\epsilon\rightarrow t$, $\|t_\epsilon\|_{\mathcal{T}}=O(1)$ and $\|t_\epsilon-t\|_{\mathcal{T}}=o(1)$. Combining all of these observations shows that the right-hand side above is $o(1)$.
\end{proof}

\section{Estimation: supplemental theoretical results and proofs}\label{app:est}

\subsection{Theoretical guarantees}\label{app:estReal}

\begin{proof}[Proof of Theorem~\ref{thm:efficient}]
    By Theorem~\ref{thm:backpropWorks}, $\psi$ is pathwise differentiable at $P$ with EIF $f_0$. Adding and subtracting terms shows that $\widehat{\psi}-\psi(P)= \frac{1}{n}\sum_{i=1}^n f_0(Z_i) + \mathcal{D}_n + \mathcal{R}_n$. The result follows since $\mathcal{D}_n$ and $\mathcal{R}_n$ are $o_p(n^{-1/2})$ by assumption.
\end{proof}

For $\psi$ real-valued, define the $L$-fold cross-fitted estimator
\begin{align*}
    \widehat{\psi}_{\mathrm{cf}}:=\frac{1}{L}\sum_{\ell=1}^L \left[\widehat{h}_{k}^{(\ell)} + \frac{1}{|\mathcal{I}^{(\ell)}|}\sum_{i\in \mathcal{I}^{(\ell)}} \widehat{f}_{0}^{(\ell)}(Z_i)\right],
\end{align*}
where $\mathcal{I}^{(1)},\mathcal{I}^{(2)},\ldots,\mathcal{I}^{(L)}$ is a partition of $[n]$ into $L$ subsets of approximately equal size --- i.e., with cardinality in $n/L\pm 1$ --- and, for each $\ell\in [L]$, $(\widehat{h}_{k}^{(\ell)},\widehat{f}_{0}^{(\ell)})$ are the outputs of Algorithm~\ref{alg:estimatedBackprop} when the observations with indices in $[n]\backslash\mathcal{I}^{(\ell)}$ are used for nuisance estimation. Letting $P_{n}^{(\ell)}$ denote the empirical distribution of the observations with indices in $\mathcal{I}^{(\ell)}$, we define the following drift and remainder term for each $\ell\in [L]$:
\begin{align*}
    \mathcal{D}_{n}^{(\ell)}&:= \int [\widehat{f}_{0}^{(\ell)}(z)-\dot{\psi}_P^*(1)(z)] (P_{n}^{(\ell)}-P)(dz),\hspace{3.5em} 
    \mathcal{R}_{n}^{(\ell)}:= \widehat{h}_{k}^{(\ell)} + \int \widehat{f}_{0}^{(\ell)}(z)\, P(dz)  - \psi(P).
\end{align*}
The following result shows that $\widehat{\psi}_{\mathrm{cf}}$ is efficient under appropriate conditions.
\begin{theorem}[Efficiency of $\widehat{\psi}_{\mathrm{cf}}$]\label{thm:CFefficient}
    Suppose $\psi$ is real-valued, $f_k=1$, the conditions of Theorem~\ref{thm:backpropWorks} hold, and $L$ is fixed as $n\rightarrow\infty$. If $\mathcal{D}_{n}^{(\ell)}$ and $\mathcal{R}_{n}^{(\ell)}$ are $o_p(n^{-1/2})$ for all $\ell\in [L]$, then $\widehat{\psi}_{\mathrm{cf}}$ is efficient, in that
    \begin{align*}
        \widehat{\psi}_{\mathrm{cf}} - \psi(P)=\frac{1}{n}\sum_{i=1}^n \dot{\psi}_P^*(1)(Z_i) + o_p(n^{-1/2}).
    \end{align*}
\end{theorem}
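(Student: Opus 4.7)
The plan is to mimic the proof of Theorem~\ref{thm:efficient} fold-by-fold and then average, with an extra bookkeeping step to handle the fact that the fold sizes $n_\ell := |\mathcal{I}^{(\ell)}|$ are not exactly $n/L$. Writing $f_0 := \dot{\psi}_P^*(1)$ for the EIF (which exists by Theorem~\ref{thm:backpropWorks}), I would first add and subtract $\int \widehat{f}_0^{(\ell)}\, dP$ and then $\int \dot{\psi}_P^*(1)\, d(P_n^{(\ell)} - P)$ in each per-fold estimate, exactly as in the proof of Theorem~\ref{thm:efficient}, to obtain
\begin{align*}
    \widehat{h}_k^{(\ell)} + \frac{1}{n_\ell}\sum_{i\in \mathcal{I}^{(\ell)}} \widehat{f}_0^{(\ell)}(Z_i) - \psi(P) = \frac{1}{n_\ell}\sum_{i\in \mathcal{I}^{(\ell)}} f_0(Z_i) + \mathcal{D}_n^{(\ell)} + \mathcal{R}_n^{(\ell)},
\end{align*}
using that $\int f_0\, dP = 0$.

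Averaging over $\ell$ and applying the hypothesis that each $\mathcal{D}_n^{(\ell)}$ and $\mathcal{R}_n^{(\ell)}$ is $o_p(n^{-1/2})$, together with the fact that $L$ is fixed, yields
\begin{align*}
    \widehat{\psi}_{\mathrm{cf}} - \psi(P) = \frac{1}{L}\sum_{\ell=1}^L \frac{1}{n_\ell}\sum_{i\in \mathcal{I}^{(\ell)}} f_0(Z_i) + o_p(n^{-1/2}).
\end{align*}

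The remaining step is to replace the weighted sum on the right-hand side by the overall empirical mean $\frac{1}{n}\sum_{i=1}^n f_0(Z_i)$. The difference equals $\sum_{\ell=1}^L \bigl(\tfrac{1}{L n_\ell} - \tfrac{1}{n}\bigr) \sum_{i\in \mathcal{I}^{(\ell)}} f_0(Z_i)$. Since the partition is balanced up to one observation, $|n - L n_\ell| \le L$, so $|\tfrac{1}{L n_\ell} - \tfrac{1}{n}| = O(n^{-2})$. Because $f_0\in L^2(P)$ has mean zero, $\sum_{i\in \mathcal{I}^{(\ell)}} f_0(Z_i) = O_p(n^{1/2})$ by the central limit theorem, so each summand is $O_p(n^{-3/2})$. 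Summing $L$ such terms gives $o_p(n^{-1/2})$, which completes the proof.

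There is no real obstacle here beyond careful bookkeeping; the substantive work (pathwise differentiability, the drift/remainder structure) is imported directly from Theorems~\ref{thm:backpropWorks} and~\ref{thm:efficient}. The only subtlety worth flagging is the unequal-fold-size correction, which is why I would make explicit the $|n - L n_\ell| \le L$ bound rather than treating the fold sizes as exactly $n/L$.
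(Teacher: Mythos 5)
Your proof is correct and follows essentially the same route as the paper: decompose fold-by-fold into the empirical mean of the EIF plus $\mathcal{D}_n^{(\ell)}+\mathcal{R}_n^{(\ell)}$, average, and then absorb the unequal-fold-size discrepancy into the $o_p(n^{-1/2})$ term. The only difference is that you spell out the $|n-Ln_\ell|\le L$ bookkeeping that the paper merely asserts ("the sizes...differ by at most 1, [so] the sum on the right is equal to $\frac{1}{n}\sum_{i=1}^n \dot{\psi}_P^*(1)(Z_i) + o_p(n^{-1/2})$").
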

Before proceeding with the proof, we note that the conditions of Theorem~\ref{thm:CFefficient} are typically weaker than those of Theorem~\ref{thm:efficient}, in that there is no need to require an empirical process condition to ensure negligibility of the drift term --- see \cite{schick1986asymptotically} or the proof of Lemma~4 in \cite{luedtke2023one} for details.
\begin{proof}[Proof of Theorem~\ref{thm:CFefficient}]
    By Theorem~\ref{thm:backpropWorks}, $\psi$ is pathwise differentiable at $P$ with EIF $f_0$. Adding and subtracting terms shows that
    \begin{align*}
        \widehat{\psi}_{\mathrm{cf}}-\psi(P)= \frac{1}{L}\sum_{\ell=1}^L \left[\frac{1}{|\mathcal{I}^{(\ell)}|}\sum_{i\in \mathcal{I}^{(\ell)}} \dot{\psi}_P^*(1)(Z_i) + \mathcal{D}_{n}^{(\ell)} + \mathcal{R}_{n}^{(\ell)}\right].
    \end{align*}
    Combining this with the fact that $\mathcal{R}_{n}^{(\ell)}$ and $\mathcal{D}_{n}^{(\ell)}$ are $o_p(n^{-1/2})$ by assumption gives that
    \begin{align*}
         \widehat{\psi}_{\mathrm{cf}}-\psi(P)= \frac{1}{L}\sum_{\ell=1}^L \frac{1}{|\mathcal{I}^{(\ell)}|}\sum_{i\in \mathcal{I}^{(\ell)}} \dot{\psi}_P^*(1)(Z_i).
    \end{align*}
    Since the sizes of the sets $\mathcal{I}^{(1)},\mathcal{I}^{(2)},\ldots,\mathcal{I}^{(L)}$ in the partition of $[n]$ differ by at most 1, the sum on the right is equal to $\frac{1}{n}\sum_{i=1}^n \dot{\psi}_P^*(1)(Z_i) + o_p(n^{-1/2})$, which gives the result.
\end{proof}

\subsection{Motivating the plausibility of $\mathcal{R}_n=o_p(n^{-1/2})$ using von Mises calculus}\label{app:vonMises}

We now study the plausibility of the condition that the remainder term in Theorem~\ref{thm:efficient} is $o_p(n^{-1/2})$. We do this by studying the plausibility of
\begin{align}
    \mathcal{R}_n:=\widehat{h}_k + \int \widehat{f}_0(z) \,P(dz)-\psi(P)=o_p(n^{-1/2}) \label{eq:vonMises}
\end{align}
when the nuisance estimators $\widehat{P}_1,\widehat{P}_2,\ldots,\widehat{P}_k$ in Algorithm~\ref{alg:estimatedBackprop} are replaced by perturbations $P_{1,\epsilon},P_{2,\epsilon},\ldots,P_{k,\epsilon}$ of $P$ belonging to smooth submodels $\mathscr{P}(P,\mathcal{M},s_1),\mathscr{P}(P,\mathcal{M},s_2),\ldots,\mathscr{P}(P,\mathcal{M},s_k)$, the estimators of the adjoint evaluations $\dot{\theta}_{j,P,\widehat{h}_{\mathrm{pa}(j)}}^*(\widehat{f}_j)$ are replaced by $\delta_{j,\epsilon}$-perturbations of $\dot{\theta}_{j,P_{j,\epsilon},h_{\mathrm{pa}(j),\epsilon}}^*(f_{j,\epsilon})$, and the $o_p(n^{-1/2})$ term is replaced by $o(\epsilon)$. Concretely, we investigate whether, as $\epsilon\rightarrow 0$, the outputs $h_{k,\epsilon}$ and $f_{0,\epsilon}$ from Algorithm~\ref{alg:submodelBackprop} yield a von Mises expansion of the form
\begin{align}
h_{k,\epsilon} + \int f_{0,\epsilon}(z) \,P(dz) - \psi(P)&= o(\epsilon). \label{eq:vonMisesDeterministic}
\end{align}
For any $\epsilon\in [0,1]$, we use the shorthand notation $\vartheta_{j,\epsilon,0}$ to denote the first entry of the quantity $\vartheta_{j,\epsilon}$ defined on line~\ref{ln:approxBackward} of the algorithm. The function $\vartheta_{j,\epsilon,0}$ is $P_{j,\epsilon}$-mean zero. This follows since it is defined as the sum of the first entries of $\dot{\theta}_{j,P_{j,\epsilon},h_{\mathrm{pa}(j),\epsilon}}^*(f_{j,\epsilon})$ and $\delta_{j,\epsilon}$, both of which are $P_{j,\epsilon}$-mean zero.

\begin{algorithm}[tb]
   \caption{Modification of Algorithm~\ref{alg:estimatedBackprop} with nuisance estimates replaced by deterministic perturbations}
   \label{alg:submodelBackprop}
   \linespread{1.05}\selectfont
\begin{algorithmic}[1]
    \Require \begin{itemize}[leftmargin=1.5em]
        \item  real-valued $\psi$ expressed as in Algorithm~\ref{alg:parameter} and $f_k=1$
        \item[]\hspace{\dimexpr\labelwidth+\labelsep}$\bullet$ smooth submodels $\{P_{j,\epsilon} : \epsilon\in [0,1]\}=\mathscr{P}(P,\mathcal{M},s_j)$, $j\in [k]$
        \item[]\hspace{\dimexpr\labelwidth+\labelsep}$\bullet$ errors $\delta_{j,\epsilon}\in L_0^2(P_{j,\epsilon})\oplus\left(\oplus_{i\in\mathrm{pa}(j)}\mathcal{W}_i\right)$ satisfying $\lim_{\epsilon\rightarrow 0}\delta_{j,\epsilon}= 0$, $j\in [k]$,
        \item[] \hspace{\dimexpr\labelwidth+\labelsep}\hphantom{$\bullet$} where $L_0^2(P_{j,\epsilon}):=\{s\in L^2(P_{j,\epsilon}) : \int s\,dP_{j,\epsilon}=0\}$ is a Hilbert subspace of $L^2(P_{j,\epsilon})$
        \item[]\hspace{\dimexpr\labelwidth+\labelsep}$\bullet$ magnitude of perturbation $\epsilon\in [0,1]$
    \end{itemize}
    \algrule
    \Statex $\triangleright$ {\color{CBblue}\textbf{forward pass}} to obtain an initial approximation
    \For {$j=1,2,\ldots,k$}
        \State \textbf{approximate} $\theta_j(P,{\color{CBteal}h_{\mathrm{pa}(j),\epsilon}})$ with ${\color{CBblue}h_{j,\epsilon}}=\theta(P_{j,\epsilon},{\color{CBteal}h_{\mathrm{pa}(j),\epsilon}})$, where ${\color{CBteal}h_{\mathrm{pa}(j),\epsilon}}:=({\color{CBblue}h_{i,\epsilon}})_{i\in\mathrm{pa}(j)}$
    \EndFor
    \algrule
    \Statex $\triangleright$ {\color{CBmagenta}\textbf{backward pass}} to enable first-order improvement of initial approximation
    \State \textbf{initialize} ${\color{CBmagenta}f_{0,\epsilon}},{\color{CBorange}f_{1,\epsilon}},\ldots,{\color{CBorange}f_{k-1,\epsilon}}$ to be the 0 elements of $\mathbb{R},\mathcal{W}_1,\ldots,\mathcal{W}_{k-1}$, and ${\color{CBorange}f_{k,\epsilon}}=f_k$
    \For {$j=k,k-1,\ldots,1$}
        \State \textbf{approximate} $\dot{\theta}_{j,P,{\color{CBteal}h_{\mathrm{pa}(j),\epsilon}}}^*({\color{CBorange}f_{j,\epsilon}})$ with ${\color{CBred}\vartheta_{j,\epsilon}}:=\dot{\theta}_{j,P_{j,\epsilon},{\color{CBteal}h_{\mathrm{pa}(j),\epsilon}}}^*({\color{CBorange}f_{j,\epsilon}}) + \delta_{j,\epsilon}$ \label{ln:approxBackward}\vspace{.2em}
        \Statex \hspace{\algorithmicindent} $\triangleright$ ${\color{CBred}\vartheta_{j,\epsilon,0}}$ compatible with ${\color{CBblue}h_{j,\epsilon}}, {\color{CBteal}h_{\mathrm{pa}(j),\epsilon}}$: $\exists {\color{slateGrey}\widetilde{P}}\in\mathcal{M}$ s.t. ${\color{CBblue}h_{j,\epsilon}}=\theta_j({\color{slateGrey}\widetilde{P}},{\color{CBteal}h_{\mathrm{pa}(j),\epsilon}})$, $\int {\color{CBred}\vartheta_{j,\epsilon,0}}\,d{\color{slateGrey}\widetilde{P}}=0$
        \Statex \hphantom{\algorithmicindent $\triangleright$ $\vartheta_{j,\epsilon}$ is compatible with ${\color{CBblue}h_{j,\epsilon}}, {\color{CBteal}h_{\mathrm{pa}(j),\epsilon}}$:}\hspace{-.5em}  \raisebox{1.5pt}[0pt][0pt]{\rotatebox[origin=m]{180}{\large$\Lsh$}} in particular, ${\color{slateGrey}\widetilde{P}}=P_{j,\epsilon}$\vspace{.2em}
        \State \textbf{augment} $({\color{CBmagenta}f_{0,\epsilon}},{\color{CBorange}f_{\mathrm{pa}(j),\epsilon}}) \mathrel{+{=}} {\color{CBred}\vartheta_{j,\epsilon}}$ 
    \EndFor
    \algrule
    \State \Return initial approximation {\color{CBblue}$h_{k,\epsilon}$} and approximate efficient influence operator ${\color{CBmagenta}f_{0,\epsilon}}$
\end{algorithmic}
\end{algorithm}

\begin{lemma}[Plausibility of condition on $\mathcal{R}_n$]\label{lem:vonMises}
Suppose all distributions in $\mathcal{M}$ are equivalent, in that, for all $P_1,P_2\in\mathcal{M}$, $P_1\ll P_2$ and $P_2\ll P_1$. Fix $P\in\mathcal{M}$ and, for $j\in [k]$, $s_j\in\check{\mathcal{M}}_P$ and $\{P_{j,\epsilon} : \epsilon\in [0,1]\}\in \mathscr{P}(P,\mathcal{M},s_j)$. Suppose each $\theta_j$ is totally pathwise differentiable at $(P_{j,\epsilon},h_{\mathrm{pa}(j),\epsilon})$ for all $\epsilon\in [0,1]$. Further suppose, for each $j\in [k]$, (i) there exists a constant $c_j<\infty$ such that $\vartheta_{j,\epsilon,0}\le c_j$ $P$-a.s. for all sufficiently small $\epsilon$ and (ii) $\vartheta_{j,\epsilon,0}\rightarrow \vartheta_{j,0,0}$ in $L^2(P)$ as $\epsilon\rightarrow 0$. Then, \eqref{eq:vonMisesDeterministic} holds.
\end{lemma}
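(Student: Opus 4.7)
The plan is a classical von Mises decomposition: expand both $h_{k,\epsilon} - \psi(P)$ and $\int f_{0,\epsilon}\,dP$ to first order in $\epsilon$ and show the leading-order terms cancel exactly.

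For the first expansion, I would package the forward pass as a single parameter on an extended model. Set $\tilde{\mathcal{M}} := \{\bigotimes_{j=1}^k Q_j : Q_j \in \mathcal{M}\}$, a model of product distributions on $\mathcal{Z}^k$, and define the extended primitive $\tilde\theta_j\big((Q_1,\ldots,Q_k), u\big) := \theta_j(Q_j, u)$. The total pathwise differentiability of $\theta_j$ at $(P, h_{\mathrm{pa}(j)})$ lifts to total pathwise differentiability of $\tilde\theta_j$ at $(P^{\otimes k}, h_{\mathrm{pa}(j)})$: the tangent space of $\tilde{\mathcal{M}}$ at $P^{\otimes k}$ is the additive subspace $\{(z_1,\ldots,z_k)\mapsto \sum_i s_i(z_i) : s_i\in\dot{\mathcal{M}}_P\}$ of $L^2(P^{\otimes k})$, the differential acts only on the $s_j$ component, and the adjoint maps $w\in\mathcal{W}_j$ to $\big((z_1,\ldots,z_k)\mapsto \dot\theta_{j,P,h_{\mathrm{pa}(j)}}^*(w)_0(z_j),\,\dot\theta_{j,P,h_{\mathrm{pa}(j)}}^*(w)_1\big)$. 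Letting $\tilde\psi$ denote the composition of the $\tilde\theta_j$, Theorem~\ref{thm:backpropWorks} then shows $\tilde\psi$ is pathwise differentiable at $P^{\otimes k}$ with EIF $(z_1,\ldots,z_k)\mapsto \sum_j \vartheta_{j,0,0}(z_j)$. Since the product path $\{\bigotimes_j P_{j,\epsilon}\}$ is quadratic-mean differentiable with score $\sum_j s_j(z_j)$ and $h_{k,\epsilon}=\tilde\psi(\bigotimes_j P_{j,\epsilon})$, pathwise differentiability yields
\begin{align*}
h_{k,\epsilon}-\psi(P) \;=\; \epsilon\sum_{j=1}^k \langle \vartheta_{j,0,0},\,s_j\rangle_{L^2(P)} + o(\epsilon).
\end{align*}

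For the second expansion, I would exploit the compatibility condition $\int \vartheta_{j,\epsilon,0}\,dP_{j,\epsilon}=0$ to write $\int \vartheta_{j,\epsilon,0}\,dP = -\int \vartheta_{j,\epsilon,0}\,d(P_{j,\epsilon}-P)$. Factoring $dP_{j,\epsilon}/d\mu-dP/d\mu$ as the product $\big((dP_{j,\epsilon}/d\mu)^{1/2}-(dP/d\mu)^{1/2}\big)\big((dP_{j,\epsilon}/d\mu)^{1/2}+(dP/d\mu)^{1/2}\big)$ and applying Cauchy--Schwarz together with the uniform $L^\infty$ bound from assumption (i) gives the standard estimate $\int \vartheta_{j,\epsilon,0}\,d(P_{j,\epsilon}-P) = \epsilon \int \vartheta_{j,\epsilon,0}\,s_j\,dP + o(\epsilon)$; the $L^2(P)$ convergence in assumption (ii) then allows me to replace $\vartheta_{j,\epsilon,0}$ by $\vartheta_{j,0,0}$ in the inner product at additional cost $o(\epsilon)$. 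Summing over $j$ yields $\int f_{0,\epsilon}\,dP = -\epsilon\sum_j \langle \vartheta_{j,0,0},s_j\rangle_{L^2(P)} + o(\epsilon)$, which precisely cancels the leading-order term from the first expansion and leaves the desired $o(\epsilon)$ remainder.

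The main obstacle is verifying the hypotheses of Theorem~\ref{thm:backpropWorks} on the restricted tangent structure of $\tilde{\mathcal{M}}$: one must confirm that each $\tilde\theta_j$ is totally pathwise differentiable against additive scores (rather than against all of $L^2(P^{\otimes k})$), that the backward-pass augmentations genuinely stay inside the additive subspace so that Algorithm~\ref{alg:backprop} returns the EIF rather than a projection onto a smaller space, and that the score of the product path $\{\bigotimes_j P_{j,\epsilon}\}$ is $\sum_j s_j(z_j)$. A more elementary alternative avoids $\tilde{\mathcal{M}}$ altogether by telescoping $h_{k,\epsilon}-\psi(P)$ over $j$---switching one $P$ to $P_{j,\epsilon}$ at a time---and invoking Lemma~\ref{lem:chainRule} at each switch; this trades the product-model bookkeeping for controlling how differentials of later primitives behave at the $\epsilon$-perturbed Hilbert inputs, which is ultimately where the total (as opposed to merely partial) pathwise differentiability of the primitives does genuine work in either approach.
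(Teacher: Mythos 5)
Your proposal takes essentially the same route as the paper's proof: both build the product model $\mathcal{M}^{\otimes k}$, lift the primitives to it, invoke Theorem~\ref{thm:backpropWorks} to obtain the EIF $z_{[k]}\mapsto\sum_j\vartheta_{j,0,0}(z_j)$, use the compatibility identity $\int\vartheta_{j,\epsilon,0}\,dP_{j,\epsilon}=0$ to set up the cancellation, and control the residual via the root-density factoring of $dP_{j,\epsilon}-dP$ together with assumptions (i) and (ii). The only cosmetic difference is order of operations: the paper first passes from the local-parameter evaluation to $\sum_j\int\vartheta_{j,0,0}\,d(P_{j,\epsilon}-P)$ via Lemma~S5 of \cite{luedtke2023one} and then bounds $\sum_j\int[\vartheta_{j,\epsilon,0}-\vartheta_{j,0,0}]\,d(P_{j,\epsilon}-P)$, whereas you stay with the inner-product form $\epsilon\sum_j\langle\vartheta_{j,0,0},s_j\rangle_{L^2(P)}$, directly expand $\int\vartheta_{j,\epsilon,0}\,d(P_{j,\epsilon}-P)$, and then swap $\vartheta_{j,\epsilon,0}$ for $\vartheta_{j,0,0}$ using (ii) --- a reordering of the same computations rather than a different argument.
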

The $L^2(P)$-consistency condition (ii) is a stability requirement that ensures the influence operator output $f_{0,\epsilon}$ of Algorithm~\ref{alg:submodelBackprop} converges to its true value $\dot{\psi}_P^*(1)$ as the perturbations vanish. Before providing a detailed proof, we give a sketch that indicates where (ii) and the compatibility condition are used. Our proof begins by noting that the output $h_{k,\epsilon}$ of the forward pass of Algorithm~\ref{alg:submodelBackprop} can be expressed as the evaluation $\phi(\prod_{j=1}^k P_{j,\epsilon})$ of a pathwise differentiable parameter $\phi$ defined on the $k$-fold tensor product of the model $\mathcal{M}$, where $\prod_{j=1}^k P_{j,\epsilon}$ denotes a product measure. A von Mises expansion of $\phi$ then yields
\begin{align*}
    h_{k,\epsilon}-\psi(P)&= \sum_{j=1}^k \int \vartheta_{j,0,0}(z) (P_{j,\epsilon}-P)(dz) + o(\epsilon).
\end{align*}
Compatibility ensures that each $\vartheta_{j,\epsilon,0}$ is $P_{j,\epsilon}$-mean zero, which allows us to add and subtract terms on the right-hand side above to find that
\begin{align*}
            h_{k,\epsilon}-\psi(P)&=  - \sum_{j=1}^k \int \vartheta_{j,\epsilon,0}(z) P(dz) - \sum_{j=1}^k \int [\vartheta_{j,\epsilon,0}(z)-\vartheta_{j,0,0}(z)] (P_{j,\epsilon}-P)(dz) + o(\epsilon).
\end{align*}
The condition that $\vartheta_{j,\epsilon,0}\rightarrow \vartheta_{j,0,0}$ in $L^2(P)$ as $\epsilon\rightarrow 0$ can then be used to ensure that the second term is $o(\epsilon)$, which will give the result.

\begin{proof}[Proof of Lemma~\ref{lem:vonMises}]
    Let $\mathcal{M}^{\otimes k}:=\{\prod_{j=1}^k P_j : P_1,P_2,\ldots,P_k\in\mathcal{M}\}$ and $P^k:=\prod_{j=1}^k P$. Define $\phi : \mathcal{M}^{\otimes k}\rightarrow\mathbb{R}$ so that $\phi(\prod_{j=1}^k P_j)$ is the output $\underline{h}_k$ of the modification of Algorithm~\ref{alg:parameter} that replaces each line $j\in [k]$ by $\underline{h}_j=\underline{\theta}_j(P_j,\underline{h}_{\mathrm{pa}(j)})$, where $\underline{\theta}_j(\prod_{j=1}^k P_j,\underline{h}_{\mathrm{pa}(j)})=\theta_j(P_j,\underline{h}_{\mathrm{pa}(j)})$. The total pathwise differentiability of $\theta_j$ at $(P,h_{\mathrm{pa}(j)})$ implies the total pathwise differentiability of $\underline{\theta}_j$ at $(P^k,h_{\mathrm{pa}(j)})$ with adjoint $\underline{\theta}_{j,\prod_{i=1}^k P_i,h_{\mathrm{pa}(j)}}^*(f_j) : z_{[k]}\mapsto \theta_{j,P_j,h_{\mathrm{pa}(j)}}^*(f_j)(z_j)$. Consequently, Theorem~\ref{thm:backpropWorks} shows that $\phi$ is pathwise differentiable at $P^k$ with EIF $z_{[k]}\mapsto \sum_{j=1}^k \vartheta_{j,0,0}(z_j)$. By the definition of pathwise differentiability and the fact that $\{\prod_{j=1}^k P_{j,\epsilon} : \epsilon\in [0,1]\}\in \mathscr{P}\big(P^k,\mathcal{M}^{\otimes k},\underline{s}\big)$ with $\underline{s} : z_{[k]}\mapsto \sum_{j=1}^k s_j(z_j)$, we see that $\phi\big(\prod_{j=1}^k P_{j,\epsilon}\big) - \phi\left(P^k\right) = \epsilon \dot{\phi}_{P^k}(\underline{s}) +  o(\epsilon)$, where $\dot{\phi}_{P^k}$ denotes the local parameter of $\phi$ at $P^k$. Using that the EIF $z_{[k]}\mapsto \sum_{j=1}^k \vartheta_{j,0,0}(z_j)$ at $P^k$ is $P^k$-a.s. bounded and that all distributions in $\mathcal{M}^{\otimes k}$ are equivalent, we can apply Lemma~S5 in \cite{luedtke2023one} to show that $\epsilon \dot{\phi}_{P^k}(\underline{s})=\int \big[\sum_{j=1}^k \vartheta_{j,0,0}(z_j)\big] (\prod_{j=1}^k P_{j,\epsilon}-P^k)(dz_{[k]})+ o(\epsilon)$. Combining the results from the preceding two sentences and then rearranging terms,
    \begin{align*}
        \phi\left({\textstyle\prod_{j=1}^k }P_{j,\epsilon}\right) - \phi\left(P^k\right)&= \int \left[\sum_{j=1}^k \vartheta_{j,0,0}(z_j)\right] \left(\prod_{j=1}^k P_{j,\epsilon}-P^k\right)(dz_{[k]}) + o(\epsilon) \\
        &= \sum_{j=1}^k \int \vartheta_{j,0,0}(z) (P_{j,\epsilon}-P)(dz) + o(\epsilon) \\
        &= - \sum_{j=1}^k \int \vartheta_{j,\epsilon,0}(z) P(dz) + \sum_{j=1}^k \int \vartheta_{j,\epsilon,0}(z) P_{j,\epsilon}(dz) \\
        &\quad- \sum_{j=1}^k \int [\vartheta_{j,\epsilon,0}(z)-\vartheta_{j,0,0}(z)] (P_{j,\epsilon}-P)(dz) + o(\epsilon).
    \end{align*}
    The second term on the right is $0$ since $\int \vartheta_{j,\epsilon,0}\,dP_{j,\epsilon}=0$ for all $j\in [k]$. Moreover, noting that $\phi(\prod_{j=1}^k P_{j,\epsilon})=h_{k,\epsilon}$, $\phi(P^k)=\psi(P)$, and $\sum_{j=1}^k \vartheta_{j,\epsilon,0}=f_{0,\epsilon}$, we can add the first term on the right to both sides to see that
    \begin{align}
        h_{k,\epsilon} + \int f_{0,\epsilon}(z) \,P(dz) - \psi(P)&= - \sum_{j=1}^k \int [\vartheta_{j,\epsilon,0}(z)-\vartheta_{j,0,0}(z)] (P_{j,\epsilon}-P)(dz) + o(\epsilon). \label{eq:eq:vonMisesDeterministicAlmost}
    \end{align}
    Eq.~\ref{eq:vonMisesDeterministic} will follow if we can show that the first term on the right is $o(\epsilon)$. To this end, fix $j\in [k]$. Let $\lambda$ be some measure that dominates all measures in $\mathcal{M}$, $p_{j,\epsilon}:=\frac{dP_{j,\epsilon}}{d\lambda}$, and $p:=\frac{dP}{d\lambda}$. Writing $\int f dQ$ as shorthand for $\int f(z)Q(dz)$, adding and subtracting terms shows that
    \begin{align*}
        \int [\vartheta_{j,\epsilon,0}(z)-\vartheta_{j,0,0}(z)] (P_{j,\epsilon}-P)(dz)
        &= \epsilon \int (\vartheta_{j,\epsilon,0}-\vartheta_{j,0,0}) s_jdP + \int (\vartheta_{j,\epsilon,0}-\vartheta_{j,0,0})(p_{j,\epsilon}^{1/2}-p^{1/2})^2 d\lambda \\
        &\quad+ 2\int (\vartheta_{j,\epsilon,0}-\vartheta_{j,0,0})p^{1/2} \left(p_{j,\epsilon}^{1/2}-p^{1/2}-\epsilon  s_jp^{1/2}/2\right) d\lambda.
    \end{align*}
    The first term is $o(\epsilon)$ by Cauchy-Schwarz and the fact that $\vartheta_{j,\epsilon,0}\rightarrow \vartheta_{j,0,0}$ in $L^2(P)$, the second is $O(\epsilon^2)$ since it is upper bounded by $2c_j$ times the squared Hellinger distance between $P_{j,\epsilon}\in \mathscr{P}(P,\mathcal{M},s_j)$ and $P$, and the third is $o(\epsilon)$ by Cauchy-Schwarz and the fact that $P_{j,\epsilon}\in \mathscr{P}(P,\mathcal{M},s_j)$. Hence, the first term on the right of \eqref{eq:eq:vonMisesDeterministicAlmost} is $o(\epsilon)$, yielding \eqref{eq:vonMisesDeterministic}.
\end{proof}

\subsection{Hilbert-valued parameters}\label{app:HilbertValParam}
Suppose $\psi(P)$ takes values in a separable, possibly infinite-dimensional Hilbert space $\mathcal{W}_\psi$. For example, $\mathcal{W}_\psi$ may be an $L^2$ space. An estimator can be constructed by calling Algorithm~\ref{alg:estimatedBackprop} multiple times, each time with the input $f_k$ corresponding to a different element of a truncation $(b_m)_{m\in [M]}$ of an orthonormal basis of $\mathcal{W}_\psi$. Denoting the outputs of these $M$ calls by $(\widehat{h}_{m,k},\widehat{f}_{m,0})$, $m\in [M]$, $\psi(P)$ can be estimated with
\begin{align*}
\widehat{\psi}_{M}:= \sum_{m=1}^M \left(\frac{1}{M}\widehat{h}_{m,k} + \left[\frac{1}{n}\sum_{i=1}^n \widehat{f}_{m,0}(Z_i)\right] b_m\right)
\end{align*}
or a cross-fitted variant of this estimator. In practice, $M$ can be selected via cross-validation \citep{luedtke2023one}. If the nuisances are all compatible and cross-fitting is used, then this estimator is exactly a regularized one-step estimator as presented in \cite{luedtke2023one}, and so the Hilbert-norm convergence guarantees from that work can be immediately applied here. More generally, only minor adaptations of the arguments from that work are required to provide guarantees for this estimator. Those guarantees will typically yield a rate of convergence that is slower than $n^{-1/2}$, which occurs because of a bias-variance tradeoff that arises when choosing $M$---see \cite{luedtke2023one} for details. 
The outputs of Algorithm~\ref{alg:estimatedBackprop} can also be used to construct confidence sets for a Hilbert-valued $\psi(P)$ --- see Section~5.2 of \cite{luedtke2023one}.

While the above describes a full algorithm for computing $\widehat{\psi}_{M}$, we emphasize that this algorithm is not currently implemented in the proof-of-concept dimple package. We leave its implementation to future work.

\section{Total pathwise differentiability of primitives from Table~\ref{tab:primitives}}\label{app:primitives}

\subsection{Overview}

Appendix~\ref{app:primitiveTPD} presents primitives that depend nontrivially on both of their arguments. Appendices~\ref{app:primitiveOnlyDist} and \ref{app:onlyHilbert} present primitives that depend only on their distribution-valued or Hilbert-valued argument, respectively. For each primitive, we present the form of the primitive and conditions under which it is totally pathwise differentiable. When not specified, $\mathcal{U}$ is taken to be a generic subset of a Hilbert space $\mathcal{T}$ and $\mathcal{W}$ is taken to be a Hilbert space that is a superset of the image of the map $\theta$ defined on $\mathcal{M}\times\mathcal{U}$.

For all primitives presented in this appendix, we suppose the model $\mathcal{M}$ is locally nonparametric at the distribution $P$ at which we wish to establish total pathwise differentiability. This means that the tangent space $\dot{\mathcal{M}}_P$ is equal to the collection of all $s\in L^2(P)$ satisfying $\int s\, dP=0$.

When establishing the differentiability of certain primitives, we suppose that all distributions in $\mathcal{M}$ are equivalent, which means that all pairs of distributions in $\mathcal{M}$ are mutually absolutely continuous. In these cases, we let $\rho$ denote an arbitrarily selected distribution in $\mathcal{M}$. When $X=\mathscr{C}(Z)$ for a coarsening $\mathscr{C} : \mathcal{Z}\rightarrow\mathcal{X}$, we let $\rho_X$ be the pushforward measure $\rho\circ\mathscr{C}^{-1}$. We define $L^0(\rho)$ (resp., $L^\infty(\rho)$) as the set of all $\rho$-a.s. equivalence classes of measurable (resp., bounded) $\mathcal{Z}\rightarrow\mathbb{R}$ functions; $L^0(\rho_X)$ and $L^\infty(\rho_X)$ are defined analogously. The measure $\rho$ is used to emphasize the ambient nature of the Hilbert spaces $\mathcal{T}$ and $\mathcal{W}$. For example, if $\mathcal{U}\subseteq L^\infty(\rho)$, $L^\infty(\rho)$ is a codomain for $\theta$, and  $\mathcal{T}=\mathcal{W}=L^2(Q)$ for $Q\in\mathcal{M}$, then different choices of $Q$ can be considered for the ambient Hilbert spaces $\mathcal{T}$ and $\mathcal{W}$, without impacting the form of the map $\theta$.

\subsection{Maps depend nontrivially on both of their arguments}\label{app:primitiveTPD}

\subsubsection{Conditional mean}\label{app:condExp}
This example is a special case of the $r$-fold conditional mean studied in Appendix~\ref{app:multilinearForm} with $r=1$. We refer readers there for details.

\subsubsection{Multifold conditional mean}\label{app:multilinearForm}

Fix $r\in\mathbb{N}$ and $Q\in\mathcal{M}$. Suppose all distributions in $\mathcal{M}$ are equivalent. To denote a generic tuple $(a_1,a_2,\ldots,a_r)$ or random variable $(A_1,A_2,\ldots,A_r)$, we write $a_{[r]}$ or $A_{[r]}$, respectively. For any probability measure $\lambda$ on $\mathcal{Z}$ (e.g., $\lambda=Q,P$), let $\lambda^r$ and $\lambda_X^r$ denote the $r$-fold product measures representing the distributions of $r$ independent draws from $\lambda$ and $\lambda_X$, respectively. We write $\lambda^r(\,\cdot\mymid x_{[r]})$ for the conditional distribution of $Z_{[r]}\sim \lambda^r$ given that $X_{[r]}=x_{[r]}$, where, for $j\in [r]$, $X_j:=\mathscr{C}(Z_j)$ for a coarsening $\mathscr{C} : \mathcal{Z}\rightarrow\mathcal{X}$. We study the map $\theta : \mathcal{M}\times\mathcal{U}\rightarrow L^2(Q_X^r)$ defined so that
\begin{align}
    \theta(P,u) : x_{[r]}\mapsto \int u(z_{[r]})\, P^r(dz_{[r]}\mymid x_{[r]}), \label{eq:multilinearForm}
\end{align}
where $x_{[r]}:=(\mathscr{C}(z_j))_{j=1}^r$ and, for some function $u^\star : \mathcal{Z}^r\rightarrow(0,\infty)$ satisfying
\begin{align}
    \sup_{P\in\mathcal{M}}\|E_{P^r}[u^\star(Z_{[r]})^2\mymid X_{[r]}=\cdot\,]\|_{L^\infty(P_X^r)}<\infty, \label{eq:ustarBd}
\end{align}
$\mathcal{U}$ is the set of (equivalence classes of) functions dominated by $u^\star$, namely $\{u\in L^0(\rho^r) : |u(z_{[r]})|\le u^\star(z_{[r]})\;\,Q^r\textnormal{-a.s.}\}\subset L^2(Q^r)$. 
The conditional second-moment condition in \eqref{eq:ustarBd} is satisfied if $u^\star$ is bounded, but it can also hold for unbounded $u^\star$.
When $\mathscr{C}$ is the zero map, so $X_j$ is a.s. constant, $\theta(P,u)$ is simply the estimand pursued by a U- or V-statistic with kernel $u$ \citep{hoeffding1948class}; more generally, $\theta(P,u)$ is a conditional variant of this quantity.

\begin{lemma}[Total pathwise differentiability of multifold conditional mean]\label{lem:UstatDiff}
    Fix $(P,u)\in\mathcal{M}\times\mathcal{U}$ and suppose $\|dQ_X/dP_X\|_{L^\infty(P_X)}<\infty$ and $\|dP/dQ\|_{L^\infty(Q)}<\infty$. The parameter $\theta$ defined in \eqref{eq:multilinearForm} is totally pathwise differentiable at $(P,u)$ with $\dot{\theta}_{P,u}(s,t) = \dot{\nu}_{P,u}(s) + \dot{\zeta}_{P,u}(t)$ and $\dot{\theta}_{P,u}^*(w)=(\dot{\nu}_{P,u}^*(w),\dot{\zeta}_{P,u}^*(w))$, where
    \begin{align}
        \dot{\nu}_{P,u}(s)(x_{[r]})&= \sum_{i=1}^r E_{P^r}\left[\{u(Z_{[r]})-\theta(P,u)(x_{[r]})\}\,s(Z_i)\middle| X_{[r]}=x_{[r]}\right], \label{eq:multifoldLocalP} \\
        \dot{\nu}_{P,u}^*(w)(z)&= \sum_{i=1}^r E_{P^r}\left[\frac{dQ_X^r}{dP_X^r}(X_{[r]})\left[u(Z_{[r]}) - \theta(P,u)(X_{[r]})\right]w(X_{[r]})\,\middle|\,Z_i=z\right], \label{eq:multifoldEIO} \\
        \dot{\zeta}_{P,u}(t)(x_{[r]})&= E_{P^r}[t(Z_{[r]})\mid X_{[r]}=x_{[r]}], \\
       \dot{\zeta}_{P,u}^*(w)(z_{[r]})&= \frac{dP^r}{dQ^r}(z_{[r]})\frac{dQ_X^r}{dP_X^r}(x_{[r]})w(x_{[r]}). \label{eq:multifoldZetaAdj}
    \end{align}
\end{lemma}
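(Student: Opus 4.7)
I would verify total pathwise differentiability of $\theta$ at $(P,u)$ by first identifying the differential operator together with its adjoint, and then directly checking \eqref{eq:totalpd}. Set $\dot{\theta}_{P,u}(s,t) := \dot{\nu}_{P,u}(s) + \dot{\zeta}_{P,u}(t)$ with $\dot{\nu}_{P,u}$ and $\dot{\zeta}_{P,u}$ as in \eqref{eq:multifoldLocalP} and the lemma statement. Linearity is immediate. Boundedness of $\dot{\nu}_{P,u} : \dot{\mathcal{M}}_P \to L^2(Q_X^r)$ and of $\dot{\zeta}_{P,u} : L^2(Q^r) \to L^2(Q_X^r)$ follows from conditional Cauchy--Schwarz together with the product bounds $\|dQ_X^r/dP_X^r\|_\infty \le \|dQ_X/dP_X\|_{L^\infty(P_X)}^r$ and $\|dP^r/dQ^r\|_\infty \le \|dP/dQ\|_{L^\infty(Q)}^r$, combined with the conditional second-moment control \eqref{eq:ustarBd} applied through $|u| \le u^\star$. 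The adjoint formulas \eqref{eq:multifoldEIO}--\eqref{eq:multifoldZetaAdj} then follow by rewriting inner products in $L^2(Q_X^r)$ as inner products in $L^2(P_X^r)$ via $dQ_X^r/dP_X^r$, applying the tower property of conditional expectations, and matching integrands; the same boundedness hypotheses ensure the adjoints lie in $L^2(P)$ and $L^2(Q^r)$.

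The verification of \eqref{eq:totalpd} proceeds via the split
\[
 \theta(P_\epsilon,u_\epsilon) - \theta(P,u) - \epsilon\,\dot{\theta}_{P,u}(s,t) = E^{(u)}_\epsilon + E^{(P)}_\epsilon,
\]
where $E^{(u)}_\epsilon := \theta(P_\epsilon,u_\epsilon) - \theta(P_\epsilon,u) - \epsilon\,\dot{\zeta}_{P,u}(t)$ isolates the Hadamard contribution in $u$ and $E^{(P)}_\epsilon := \theta(P_\epsilon,u) - \theta(P,u) - \epsilon\,\dot{\nu}_{P,u}(s)$ the pathwise contribution in $P$, and it suffices to show that each is $o(\epsilon)$ in $L^2(Q_X^r)$. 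For $E^{(u)}_\epsilon$, I would further split as
\[
 E^{(u)}_\epsilon = E_{P_\epsilon^r}[u_\epsilon - u - \epsilon t \mid X_{[r]}] + \epsilon\bigl(E_{P_\epsilon^r}[t \mid X_{[r]}] - E_{P^r}[t \mid X_{[r]}]\bigr).
\]
The first summand is $o(\epsilon)$ in $L^2(Q_X^r)$ by conditional Jensen and a change of measure from $P_{\epsilon,X}^r$ to $Q_X^r$, whose Radon--Nikodym derivative remains uniformly bounded for small $\epsilon$ by Hellinger convergence of $P_\epsilon \to P$. The second summand is $o(\epsilon)$ by continuity of conditional expectation along a QMD path, using that $t \in L^2(Q^r) \subseteq L^2(P^r)$ (as $dP^r/dQ^r$ is bounded) and an envelope argument to transfer the convergence to $L^2(Q_X^r)$.

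The main obstacle is $E^{(P)}_\epsilon$. Here the plan is to exploit that QMD of $P_\epsilon$ at $P$ with score $s$ implies QMD of the conditional $P_{\epsilon,Z\mid X=x}$ with score $s_{Z\mid X}(z \mid x) := s(z) - E_P[s(Z) \mid X = x]$ for $P_X$-a.e.\ $x$. Tensorising, $P_\epsilon^r(\cdot \mid x_{[r]})$ is QMD with score $\sum_{i=1}^r s_{Z\mid X}(z_i \mid x_i)$, and the QMD expansion of the conditional density gives, pointwise in $x_{[r]}$,
\[
 \theta(P_\epsilon,u)(x_{[r]}) - \theta(P,u)(x_{[r]}) = \epsilon\sum_{i=1}^r E_{P^r}\bigl[u(Z_{[r]})\, s_{Z\mid X}(Z_i \mid x_i) \,\big|\, X_{[r]} = x_{[r]}\bigr] + \rho_\epsilon(x_{[r]}),
\]
whose leading term simplifies to $\epsilon\,\dot{\nu}_{P,u}(s)(x_{[r]})$ via $E_{P^r}[s_{Z \mid X}(Z_i \mid x_i) \mid X_{[r]} = x_{[r]}] = 0$. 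The delicate step is bounding $\|\rho_\epsilon\|_{L^2(Q_X^r)} = o(\epsilon)$: the QMD remainder is natively controlled in $L^2$ of a dominating product measure on $\mathcal{Z}^r$, and must be translated to a bound in $L^2(Q_X^r)$ after conditioning on $X_{[r]}$. This translation is accomplished by applying conditional Cauchy--Schwarz against the envelope $|u| \le u^\star$, absorbing the conditional second-moment bound \eqref{eq:ustarBd}, and then converting from $P_X^r$ to $Q_X^r$ at the cost of the uniformly bounded factor $(dQ_X/dP_X)^r$. Combining the two pieces gives \eqref{eq:totalpd} and hence the lemma.
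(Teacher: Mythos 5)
Your decomposition is different from the paper's, and the difference is consequential. The paper splits via the triangle inequality into three pieces, the key one being $\theta(P,u_\epsilon)-\theta(P,u)-\epsilon\dot{\zeta}_{P,u}(t)$ — the Hadamard increment evaluated at the \emph{fixed} distribution $P$ — plus $\theta(P_\epsilon,u_\epsilon)-\theta(P,u_\epsilon)-\epsilon\dot{\nu}_{P,u_\epsilon}(s)$ and a small correction $\epsilon[\dot{\nu}_{P,u_\epsilon}(s)-\dot{\nu}_{P,u}(s)]$. You instead put the Hadamard increment at the \emph{perturbed} distribution: $E^{(u)}_\epsilon = \theta(P_\epsilon,u_\epsilon)-\theta(P_\epsilon,u)-\epsilon\dot{\zeta}_{P,u}(t)$, and then further split so its leading piece is $E_{P_\epsilon^r}[u_\epsilon-u-\epsilon t \mid X_{[r]}]$, which you propose to bound in $L^2(Q_X^r)$ via Jensen and a change of measure from $P_{\epsilon,X}^r$ to $Q_X^r$.

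That is where the argument breaks. You assert that $dQ_X/dP_{\epsilon,X}$ "remains uniformly bounded for small $\epsilon$ by Hellinger convergence of $P_\epsilon\to P$," but Hellinger convergence does not control density ratios: a QMD submodel can have $p_{\epsilon,X}$ vanish on sets of small but positive $P_X$-measure, so $dQ_X/dP_{\epsilon,X}$ can be unbounded for every $\epsilon>0$. The lemma's hypotheses bound $dQ_X/dP_X$ and $dP/dQ$, not $dQ_X/dP_{\epsilon,X}$. Without such a bound, after Jensen you face
\[
\int E_{P_\epsilon^r}\!\left[(u_\epsilon-u-\epsilon t)^2 \mid X_{[r]}=x_{[r]}\right] Q_X^r(dx_{[r]}),
\]
which cannot be collapsed by the tower property under $P_\epsilon^r$ against $Q_X^r$ without introducing the uncontrolled ratio. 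The paper's decomposition is designed precisely to avoid this: by taking the $u$-increment at the fixed $P$, the analogous quantity integrates against $P_X^r$ first, and then the assumed bounds on $dP/dQ$ and $dQ_X/dP_X$ suffice. A second, smaller issue: you claim QMD of $P_\epsilon$ implies pointwise-in-$x_{[r]}$ QMD of the conditionals $P_\epsilon^r(\cdot\mid x_{[r]})$ for a.e.\ $x_{[r]}$; the available result (the paper's Eq.~\eqref{eq:condQMD}, via the marginalization lemma) is an integrated $L^2$ statement, not a pointwise one, and the bounding of the remainder $\rho_\epsilon$ must be organized around that $L^2$ control rather than a pointwise expansion. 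Your broad plan — boundedness via conditional Cauchy--Schwarz, product density-ratio bounds, and the envelope $u^\star$, adjoint via change of measure and the tower property — matches the paper where it applies, but the $E^{(u)}_\epsilon$ step needs to be restructured to keep the conditional expectation under $P$ rather than $P_\epsilon$.
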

\begin{proof}[Proof of Lemma~\ref{lem:UstatDiff}]
    Throughout this proof, any unsubscripted norm $\|\cdot\|$ denotes the $L^2(Q_X^r)$ norm, and we write $m$ to denote the quantity on the left-hand side of \eqref{eq:ustarBd}. Fix $s\in\check{\mathcal{M}}_P$, $t\in \check{\mathcal{U}}_u$, $\{P_\epsilon : \epsilon\in [0,1]\}\in \mathscr{P}(P,\mathcal{M},s)$, and $\{u_\epsilon : \epsilon\in [0,1]\}\in \mathscr{P}(u,\mathcal{U},t)$. By the triangle inequality,
    \begin{align*}
    &\left\|\theta(P_\epsilon,u_\epsilon)-\theta(P,u)-\epsilon \dot{\nu}_{P,u}(s)-\epsilon\dot{\zeta}_{P,u}(t)\right\| \\
    &\le \epsilon \left\|\dot{\nu}_{P,u_\epsilon}(s)-\dot{\nu}_{P,u}(s)\right\| + \left\|\theta(P,u_\epsilon)-\theta(P,u)-\epsilon\dot{\zeta}_{P,u}(t)\right\| + \left\|\theta(P_\epsilon,u_\epsilon)-\theta(P,u_\epsilon)-\epsilon \dot{\nu}_{P,u_\epsilon}(s)\right\|. 
    \end{align*}
    Combining this with the inequality $(a+b+c)^2\le 3(a^2+b^2+c^2)$ yields
    \begin{align}
    &\frac{1}{3}\left\|\theta(P_\epsilon,u_\epsilon)-\theta(P,u)-\epsilon \dot{\nu}_{P,u}(s)-\epsilon\dot{\zeta}_{P,u}(t)\right\|^2 \label{eq:UstatTpd} \\
    &\le \epsilon^2 \left\|\dot{\nu}_{P,u_\epsilon}(s)-\dot{\nu}_{P,u}(s)\right\|^2 + \left\|\theta(P,u_\epsilon)-\theta(P,u)-\epsilon\dot{\zeta}_{P,u}(t)\right\|^2 + \left\|\theta(P_\epsilon,u_\epsilon)-\theta(P,u_\epsilon)-\epsilon \dot{\nu}_{P,u_\epsilon}(s)\right\|^2. \nonumber
    \end{align}
    We will denote the three terms on the right by (I), (II), and (III). In what follows we shall show that each of these terms is $o(\epsilon^2)$. When combined with the fact that $(s,t)\mapsto \dot{\nu}_{P,u}(s) + \dot{\zeta}_{P,u}(t)$ is a bounded linear operator, this will establish that $\theta$ is totally pathwise differentiable with $\dot{\theta}_{P,u}(s,t) = \dot{\nu}_{P,u}(s) + \dot{\zeta}_{P,u}(t)$. Once we have established total pathwise differentiability, we shall conclude this proof by deriving the form of the adjoint $\dot{\theta}_{P,u}^*$. \\[.5em]
    \textbf{Proof that (I) is $o(\epsilon^2)$:} Let $\bar{s}(z_{[r]}):=\sum_{j=1}^r s(z_j)$ and note that, by Cauchy-Schwarz, the fact that conditional variances are upper bounded by conditional second moments, and a change of measure,
    \begin{align}
        \textnormal{(I)}&= \epsilon^2 \left\|\dot{\nu}_{P,u_\epsilon}(s)-\dot{\nu}_{P,u}(s)\right\|^2 \nonumber \\
        &= \epsilon^2 \int \left(E_{P^r}\left[\{u(Z_{[r]})-\theta(P,u)(x_{[r]})-u_\epsilon(Z_{[r]})+\theta(P,u_\epsilon)(x_{[r]})\}\bar{s}(Z_{[r]})\middle| X_{[r]}=x_{[r]}\right]\right)^2 Q_X^r (dx_{[r]}) \nonumber \\
        &\le \epsilon^2 \int E_{P^r}\left[\{u(Z_{[r]})-\theta(P,u)(x_{[r]})-u_\epsilon(Z_{[r]})+\theta(P,u_\epsilon)(x_{[r]})\}^2\middle| X_{[r]}=x_{[r]}\right] \nonumber \\
        &\hspace{3em}\cdot E_{P^r}\left[\bar{s}(Z_{[r]})^2\middle| X_{[r]}=x_{[r]}\right] Q_X^r (dx_{[r]}) \nonumber \\
        &\le \epsilon^2 \int E_{P^r}\left[\{u(Z_{[r]})-u_\epsilon(Z_{[r]})\}^2\middle| X_{[r]}=x_{[r]}\right]E_{P^r}\left[\bar{s}(Z_{[r]})^2\middle| X_{[r]}=x_{[r]}\right] Q_X^r (dx_{[r]}) \nonumber \\
        &= \epsilon^2 \int E_{P^r}\left[\{u(Z_{[r]})-u_\epsilon(Z_{[r]})\}^2\middle| X_{[r]}=x_{[r]}\right] E_{P^r}\left[\bar{s}(Z_{[r]})^2\middle| X_{[r]}=x_{[r]}\right] \frac{dQ_X^r}{dP_X^r}(x_{[r]}) P_X^r(dx_{[r]}). \label{eq:termIUStat}
    \end{align}
    We will use the dominated convergence theorem to show that the integral on the right goes to zero as $\epsilon\rightarrow 0$. Using the inequality $(a+b)^2\le 2(a^2+b^2)$, the fact that $|u(\cdot)|$ and $|u_\epsilon(\cdot)|$ are upper bounded by $u^\star(\cdot)$ $Q^r$-a.s. (and, by the equivalence of $P^r$ and $Q^r$, also $P^r$-a.s.), and the definition of $m$, we see that the magnitude of the integrand is a.s. upper bounded pointwise by $F(x_{[r]}):=4m\|dQ_X/dP_X\|_{L^\infty(P_X)}^r E_{P^r}[\bar{s}(Z_{[r]})^2\mymid X_{[r]}=x_{[r]}]$. Moreover, because $s\in L^2(P)$, $F$ is $P_X^r$-integrable with $\int F(x_{[r]}) P_X^r(dx_{[r]}) = 4 m\|dQ_X/dP_X\|_{L^\infty(P_X)}^r \int \bar{s}(z_{[r]})^2 P^r(dz_{[r]})<\infty$. Hence, the integrand is dominated by an integrable function. To see that the integrand also converges to zero in probability as $\epsilon\rightarrow 0$, we apply Markov's inequality, the law of total expectation, the fact that $\|dP/dQ\|_{L^\infty(Q)}<\infty$, and H\"{o}lder's inequality with $(p,q)=(1,\infty)$ to see that, for any $\delta>0$,
    \begin{align*}
        P_X^r\left\{E_{P^r}\left[\{u(Z_{[r]})-u_\epsilon(Z_{[r]})\}^2\middle| X_{[r]}\right] > \delta\right\}&\le \left\|u-u_\epsilon\right\|_{L^2(P^r)}^2/\delta\le \left\|\frac{dP}{dQ}\right\|_{L^\infty(Q)}^r \left\|u-u_\epsilon\right\|_{L^2(Q^r)}^2/\delta.
    \end{align*}
    The right-hand side goes to zero as $\epsilon\rightarrow 0$ since $\{u_\epsilon : \epsilon\in [0,1]\}\in \mathscr{P}(u,\mathcal{U},t)$. As $\delta>0$ was arbitrary, this shows the integrand on the right-hand side of \eqref{eq:termIUStat} converges to zero in probability as $\epsilon\rightarrow 0$. As that integrand is also dominated by an integrable function, the dominated convergence theorem shows that the right-hand side of \eqref{eq:termIUStat} is $o(\epsilon^2)$, and so $\textnormal{(I)}=o(\epsilon^2)$.\\[.5em]
    \textbf{Proof that (II) is $o(\epsilon^2)$:} Observe that
    \begin{align*}
    \textnormal{(II)}&= \left\|\theta(P,u_\epsilon)-\theta(P,u)-\epsilon\dot{\zeta}_{P,u}(t)\right\|^2 \\
    &= \int \left(E_{P^r}\left[u_\epsilon(Z_{[r]})-u(Z_{[r]})-\epsilon t(Z_{[r]})\,\middle|\,X_{[r]}=x_{[r]}\right]\right)^2 dQ_X^r(x_{[r]}) \\
    &\le \int E_{P^r}\left[\left\{u_\epsilon(Z_{[r]})-u(Z_{[r]})-\epsilon t(Z_{[r]})\right\}^2\,\middle|\,X_{[r]}=x_{[r]}\right] dQ_X^r(x_{[r]}) \\
    &= \int E_{Q^r}\left[\frac{dP^r}{dQ^r}(Z_{[r]})\frac{dQ_X^r}{dP_{X}^r}(X_{[r]})\left\{u_\epsilon(Z_{[r]})-u(Z_{[r]})-\epsilon t(Z_{[r]})\right\}^2\,\middle|\,X_{[r]}=x_{[r]}\right] dQ_X^r(x_{[r]}) \\
    &= \int \frac{dP^r}{dQ^r}(z_{[r]})\frac{dQ_X^r}{dP_{X}^r}(x_{[r]})\left\{u_\epsilon(z_{[r]})-u(z_{[r]})-\epsilon t(z_{[r]})\right\}^2 dQ^r(z_{[r]}) \\
    &\le \left\|\frac{dP}{dQ}\right\|_{L^\infty(Q)}^r\left\|\frac{dQ_X}{dP_{X}}\right\|_{L^\infty(P_X)}^r\left\|u_\epsilon-u-\epsilon t\right\|_{L^2(Q^r)}^2,
    \end{align*}  
    where we used the linearity of conditional expectations, Jensen's inequality, a change of measure, the law of total expectation, H\"{o}lder's inequality, and finally the fact that $P_X$ and $Q_X$ are equivalent measures. The right-hand side is $o(\epsilon^2)$ since $\{u_\epsilon : \epsilon\in [0,1]\}\in \mathscr{P}(u,\mathcal{U},t)$.\\[.5em]
    \textbf{Proof that (III) is $o(\epsilon^2)$:} This part of the proof is an adaptation of the one given for Example~5 in \cite{luedtke2023one}, which establishes that a regression function is pathwise differentiable relative to a locally nonparametric model.

    Let $p_\epsilon^{r/2}(z_{[r]}):=\frac{dP_\epsilon^r}{dP^r}(z_{[r]})^{1/2}$, $p_{\epsilon,X}^{r/2}(x_{[r]}):=\frac{dP_{\epsilon,X}^r}{dP_X^r}(x_{[r]})^{1/2}$, and $p_{\epsilon,Z|X}^{r/2}(z_{[r]}):=p_\epsilon^{r/2}(z_{[r]})/p_{\epsilon,X}^{r/2}(x_{[r]})$. Further define $\bar{s}(z_{[r]}):=\sum_{j=1}^r s(z_j)$, $\bar{s}_X(x_{[r]}):=\sum_{j=1}^r E_P[s(Z_j)\mid X=x_j]$, and $\bar{s}_{Z|X}(z_{[r]}):=\bar{s}(z_{[r]})-\bar{s}_X(x_{[r]})$. Observe that, for any $\epsilon$, the following holds for $P_X^r$-almost all $x_{[r]}$:
    \begin{align}
        &[\theta(P_\epsilon,u_\epsilon)-\theta(P,u_\epsilon)-\epsilon \dot{\nu}_{P,u_\epsilon}(s)](x_{[r]}) \\
        &= E_{P^r}\left[u_\epsilon(Z_{[r]})\left\{[p_{\epsilon,Z|X}^{r/2}(Z_{[r]})+1][p_{\epsilon,Z|X}^{r/2}(Z_{[r]})-1] - \epsilon \bar{s}_{Z\mid X}(Z_{[r]})\right\}\middle|X_{[r]}=x_{[r]}\right] \nonumber \\
        &= E_{P^r}\left[u_\epsilon(Z_{[r]})\left\{[p_{\epsilon,Z|X}^{r/2}(Z_{[r]})+1]\left[p_{\epsilon,Z|X}^{r/2}(Z_{[r]})-1-\frac{\epsilon}{2}\bar{s}_{Z\mid X}(Z_{[r]}) \right]\right\}\middle|X_{[r]}=x_{[r]}\right] \nonumber \\
        &\quad+ \frac{\epsilon}{2} E_{P^r}\left[u_\epsilon(Z_{[r]})\bar{s}_{Z\mid X}(Z_{[r]})\left[p_{\epsilon,Z|X}^{r/2}(Z_{[r]})-1\right]\middle|X_{[r]}=x_{[r]}\right]. \label{eq:ABdecomp}
    \end{align}
    For shorthand, we refer to the two terms on the right as $A_\epsilon(x_{[r]})$ and $\frac{\epsilon}{2}B_\epsilon(x_{[r]})$. Combining the above with the basic inequality  $(a+b)^2\le 2(a^2+b^2)$ shows that $\textnormal{(III)}\le 2\|A_\epsilon\|_{L^2(Q_X^r)}^2 + \frac{\epsilon^2}{2}\|B_\epsilon\|_{L^2(Q_X^r)}^2$. In what follows we will show that $\|A_\epsilon\|_{L^2(Q_X^r)}=o(\epsilon)$ and $\|B_\epsilon\|_{L^2(Q_X^r)}=o(1)$, which will thus imply that $\textnormal{(III)}=o(\epsilon^2)$.

   For $A_\epsilon$, we apply Cauchy-Schwarz and the inequality $(a+b)^2\le 2(a^2+b^2)$ to show that, for $P_X^r$-almost all $x_{[r]}$:
    \begin{align*}
        |A_\epsilon(x_{[r]})|^2&\le 2\left(E_{P_\epsilon^r}\left[u_\epsilon(Z_{[r]})^2\middle|X_{[r]}=x_{[r]}\right]+E_{P^r}\left[u_\epsilon(Z_{[r]})^2\middle|X_{[r]}=x_{[r]}\right]\right) \\
        &\quad\cdot E_{P^r}\left[\left\{p_{\epsilon,Z|X}^{r/2}(Z_{[r]})-1-\frac{\epsilon}{2}\bar{s}_{Z\mid X}(Z_{[r]})\right\}^2\middle|X_{[r]}=x_{[r]}\right].
    \end{align*}
    Using that $u_\epsilon^2\le (u^\star)^2$ and the definition of $m$ shows that each of the first two conditional expectations above is upper bounded by $m$ $P^r$-almost surely. Plugging in this bound and integrating both sides above against $P_X^r$ shows that
    \begin{align*}
        \left\|A_\epsilon\right\|_{L^2(P_X^r)}^2&\le 4m E_{P^r}\left[\left\{p_{\epsilon,Z|X}^{r/2}(Z_{[r]})-1-\frac{\epsilon}{2}\bar{s}_{Z\mid X}(Z_{[r]})\right\}^2\right].
    \end{align*}
    Using that $\{P_\epsilon : \epsilon\}$ is quadratic mean differentiable and applying calculations akin to those used in \cite{luedtke2023one} to establish Lemma~S8 in  that work shows that 
    \begin{align}
    E_{P^r}\left[\left\{p_{\epsilon,Z|X}^{r/2}(Z_{[r]})-1-\frac{\epsilon}{2}\bar{s}_{Z\mid X}(Z_{[r]})\right\}^2\right] =o(\epsilon^2), \label{eq:condQMD}
    \end{align}
    and so $ \left\|A_\epsilon\right\|_{L^2(P_X^r)}=o(\epsilon)$ as desired.

     To study $B_\epsilon$, we define
    \begin{align*}
        B_{\epsilon,1}(x_{[r]}):=E_{P^r}\left[I\{|\bar{s}_{Z\mid X}(Z_{[r]})u^\star(Z_{[r]})|\le \epsilon^{-1/2}\}u_\epsilon(Z_{[r]})\bar{s}_{Z\mid X}(Z_{[r]})\left\{p_{\epsilon,Z|X}^{r/2}(Z_{[r]})-1\right\}\middle|X_{[r]}=x_{[r]}\right]
    \end{align*}
    and $B_{\epsilon,2}:=B_\epsilon-B_{\epsilon,1}$. By the triangle inequality, to show  $\|B_\epsilon\|_{L^2(Q_X)}=o(1)$ it suffices to show that $\|B_{\epsilon,j}\|_{L^2(Q_X)}=o(1)$ for $j\in \{1,2\}$. Using that it is $Q^r$-a.s. true that
    \begin{align*}
        &I\{|\bar{s}_{Z\mid X}(z_{[r]})u^\star(z_{[r]})|\le \epsilon^{-1/2}\}\bar{s}_{Z\mid X}(z_{[r]})^2u_\epsilon(z_{[r]})^2\\
        &\quad\le I\{|\bar{s}_{Z\mid X}(z_{[r]})u^\star(z_{[r]})|\le \epsilon^{-1/2}\}\bar{s}_{Z\mid X}(z_{[r]})^2u^\star(z_{[r]})^2\le \epsilon^{-1},
    \end{align*}
    Jensen's inequality, H\"{o}lder's inequality with $(p,q)=(1,\infty)$, and \eqref{eq:condQMD}, we see that
    \begin{align*}
        \|B_{\epsilon,1}\|_{L^2(Q_X^r)}^2&\le \epsilon^{-1}\left\|\frac{dQ_X}{dP_X}{}\right\|_{L^\infty(P_X)}^r \|p_{\epsilon,Z|X}^{r/2}-1\|_{L^2(P)}^2 = O(\epsilon).
    \end{align*}
    By Cauchy-Schwarz, the fact that $u_\epsilon^2\le (u^\star)^2$ $Q^r$-a.s. (and therefore also $P^r$-a.s. and $P_\epsilon^r$-a.s.), and the inequality $(a-b)^2\le 2(a^2+b^2)$, the following holds for $P_X^r$-almost all $x_{[r]}$:
    \begin{align*}
        |B_{\epsilon,2}(x_{[r]})|^2&\le 2\left(E_{P_\epsilon^r}\left[u^\star(Z_{[r]})^2\middle|X_{[r]}=x_{[r]}\right]+E_{P^r}\left[u^\star(Z_{[r]})^2\middle|X_{[r]}=x_{[r]}\right]\right) \\
        &\quad\cdot E_{P^r}\left[I\{|\bar{s}_{Z\mid X}(Z_{[r]})u^\star(Z_{[r]})|> \epsilon^{-1/2}\}\bar{s}_{Z\mid X}(Z_{[r]})^2\middle|X_{[r]}=x_{[r]}\right].
    \end{align*}
    Using \eqref{eq:ustarBd}, integrating both sides against $Q_X^r$, and applying H\"{o}lder's inequality with $(p,q)=(1,\infty)$ shows that 
    \begin{align*}
        \|B_{\epsilon,2}\|_{L^2(Q_X^r)}^2\le 4m\left\|\frac{dQ_X}{dP_X}\right\|_{L^\infty(P_X)}^rE_{P^r}\left[I\{|\bar{s}_{Z\mid X}(Z_{[r]})u^\star(Z_{[r]})|> \epsilon^{-1/2}\}\bar{s}_{Z\mid X}(Z_{[r]})^2\right].
    \end{align*}
    By the dominated convergence theorem, the expectation on the right is $o(1)$. Combining our study of $B_{\epsilon,1}$ and $B_{\epsilon,2}$, we have shown that $\|B_\epsilon\|_{L^2(Q_X)}=o(1)$. Returning to our discussion below \eqref{eq:ABdecomp}, we have shown that $\textnormal{(III)}=o(\epsilon^2)$.\\[.5em]
    \textbf{Derivation of $\dot{\theta}_{P,u}^*$.} Having established the total pathwise differentiability of $\theta$ at $(P,u)$, we now establish the claimed form of the adjoint $\dot{\theta}_{P,u}^*$. It can be verified that, at any $u$, the tangent space of $\mathcal{U}$ satisfies $\dot{\mathcal{U}}_u= L^2(Q^r)$. Moreover, for any $w\in L^2(Q_X^r)$, Lemma~\ref{lem:totalImpliesPartial} shows that $\dot{\theta}_{P,u}^*(w)=(\dot{\nu}_{P,u}^*(w),\dot{\zeta}_{P,u}^*(w))$, where $\dot{\nu}_{P,u}^*$ and $\dot{\zeta}_{P,u}^*$ are the adjoints of $\dot{\nu}_{P,u}$ and $\dot{\zeta}_{P,u}$. 
    To derive the claimed form of $\dot{\nu}_{P,u}^*$, observe that, for any $s\in\dot{\mathcal{M}}_P$ and $w\in L^2(Q_X^r)$,
    \begin{align*}
    \left\langle \dot{\nu}_{P,u}(s), w\right\rangle_{L^2(Q_X^r)}&= \left\langle \sum_{i=1}^r 
    E_{P^r}\left[\left\{u(Z_{[r]}) - \underline{\nu}(P^r)(X_{[r]})\right\}s(Z_i)\,\middle|\,X_{[r]}=\,\cdot\,\right],w\right\rangle_{L^2(Q_X^r)} \\
     &= E_{P^r}\left[\sum_{i=1}^r \frac{dQ_X^r}{dP_X^r}(X_{[r]})\left\{u(Z_{[r]}) - \underline{\nu}(P^r)(X_{[r]})\right\}w(X_{[r]})s(Z_i)\right] \\
     &= \int \left(\sum_{i=1}^r E_{P^r}\left[\frac{dQ_X^r}{dP_X^r}(X_{[r]})\left[u(Z_{[r]}) - \underline{\nu}(P^r)(X_{[r]})\right]w(X_{[r]})\,\middle|\,Z_i=z\right]\right)s(z) P(dz).
    \end{align*}
    The right-hand side equals $\langle \dot{\nu}_P^*(w),s\rangle_{L^2(P)}$, where $\dot{\nu}_P^*$ takes the form in \eqref{eq:multifoldEIO}. As for the claimed form of $\dot{\zeta}_{P,u}^*$, note that, for all $t\in \dot{\mathcal{U}}_u$ and $w\in L^2(Q_X^r)$,
    \begin{align*}
        \left\langle \dot{\zeta}_{P,u}(t),w\right\rangle_{L^2(Q_X^r)}&= \int E_{P^r}\left[t(Z_{[r]})\mid X_{[r]}=x_{[r]}\right]\,w(x_{[r]})\, Q_X^r(dx_{[r]})  \\
        &= E_{Q^r}\left[\frac{dP^r}{dQ^r}(Z_{[r]})\frac{dQ_X^r}{dP_X^r}(X_{[r]})w(X_{[r]})t(Z_{[r]})\right].
    \end{align*}
    The right-hand side is equal to $\langle \dot{\zeta}_{P,u}^*(w),t\rangle_{L^2(Q^r)}$ with $\dot{\zeta}_{P,u}^*(w)$ as defined in \eqref{eq:multifoldZetaAdj}.
\end{proof}

\subsubsection{Conditional covariance}\label{app:condCovar}
Fix $Q\in \mathcal{M}$ and suppose all distributions in $\mathcal{M}$ are equivalent. 
Define the conditional covariance map $\theta : \mathcal{M}\times\mathcal{U}\rightarrow L^2(Q_X)$ as
\begin{align}
    \theta(P,u)(x)= \mathrm{cov}_P[\,u_1(Z),u_2(Z)\mymid X=x], \label{eq:condCovar}
\end{align}
where $\mathcal{U}:=\{u:=(u_1,u_2)\in L^0(\rho)^2 : \max_{j\in \{1,2\}}\|u_j\|_{L^\infty(Q)}\le m\}\subset L^2(Q)\oplus L^2(Q)$ for some fixed $m<\infty$.

\begin{lemma}[Total pathwise differentiability of conditional covariance]\label{lem:condCovar}
    Fix $(P,u)\in\mathcal{M}\times\mathcal{U}$ and suppose $\|dQ_X/dP_X\|_{L^\infty(P_X)}<\infty$ and $\|dP/dQ\|_{L^\infty(Q)}<\infty$. The parameter $\theta$ defined in \eqref{eq:condCovar} is totally pathwise differentiable at $(P,u)$ with $\dot{\theta}_{P,u}^*(w)=(\dot{\nu}_{P,u}^*(w),\dot{\zeta}_{P,u}^*(w))$, where
    \begin{align*}
        \dot{\nu}_{P,u}^*(w)(z)&= \frac{dQ_X}{dP_X}(x)w(x)\left[\prod_{j=1}^2 \{u_j(z)-E_P[u_j(Z)\mymid X=x]\}-\theta(P,u)(x)\right], \\
       \dot{\zeta}_{P,u}^*(w)&= \left(z\mapsto \frac{dP}{dQ}(z)\frac{dQ_X}{dP_X}(x)w(x)\{u_{3-j}(z)-E_P[u_{3-j}(Z)\mymid X=x]\}\right)_{j=1}^2.
    \end{align*}
\end{lemma}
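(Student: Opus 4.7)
The plan is to mirror the strategy used in the proof of Lemma~\ref{lem:UstatDiff} for the multifold conditional mean, adapted to the bilinear dependence of $\theta$ on the pair $u=(u_1,u_2)$. The key observation is the centered representation
\begin{align*}
\theta(P,u)(x)=E_P\!\left[\{u_1(Z)-\mu_1(x)\}\{u_2(Z)-\mu_2(x)\}\mymid X=x\right],\qquad \mu_j(x):=E_P[u_j(Z)\mymid X=x],
\end{align*}
which exposes the two symmetric summands that the Hilbert-argument differential $\dot{\zeta}_{P,u}$ must produce and makes the analogy with the conditional variance primitive (the diagonal case $u_1=u_2$) transparent. Because $\mathcal{U}$ forces $\|u_j\|_{L^\infty(Q)}\le m$, the role played by the envelope $u^\star$ in the proof of Lemma~\ref{lem:UstatDiff} is taken here by the constant $m$, which simplifies several of the bounding steps.

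Fix $s\in\check{\mathcal{M}}_P$, $t=(t_1,t_2)\in\check{\mathcal{U}}_u$, $\{P_\epsilon\}\in\mathscr{P}(P,\mathcal{M},s)$, and $\{u_\epsilon\}\in\mathscr{P}(u,\mathcal{U},t)$. Formally differentiating the centered representation by the product rule yields the candidate differential operators
\begin{align*}
\dot{\nu}_{P,u}(s)(x)&= E_P\!\left[\left\{\prod_{j=1}^2 (u_j(Z)-\mu_j(x))-\theta(P,u)(x)\right\}s(Z)\,\middle|\,X=x\right],\\
\dot{\zeta}_{P,u}(t)(x)&= \sum_{j=1}^2 E_P\!\left[(u_{3-j}(Z)-\mu_{3-j}(x))\,t_j(Z)\mymid X=x\right],
\end{align*}
whose boundedness and linearity follow immediately from the envelope $m$. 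I would then split the full error $\|\theta(P_\epsilon,u_\epsilon)-\theta(P,u)-\epsilon\dot{\nu}_{P,u}(s)-\epsilon\dot{\zeta}_{P,u}(t)\|_{L^2(Q_X)}$ by triangle inequality into three pieces analogous to (I), (II), (III) in the proof of Lemma~\ref{lem:UstatDiff}: (I) continuity of $u'\mapsto\dot{\nu}_{P,u'}(s)$ at $u$, handled via $L^2(Q)$-convergence of $u_{j,\epsilon}$ together with the envelope; (II) Hadamard differentiability of $\theta(P,\,\cdot\,)$ at $u$; and (III) pathwise differentiability of $P\mapsto\theta(P,u_\epsilon)$, which I would handle by direct adaptation of the truncation argument in the proof of Lemma~\ref{lem:UstatDiff} (including the $B_{\epsilon,1},B_{\epsilon,2}$ split combined with the conditional quadratic mean differentiability bound \eqref{eq:condQMD}).

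The adjoints are then obtained by routine inner-product calculations: pairing $\dot{\nu}_{P,u}(s)$ against $w\in L^2(Q_X)$, applying the tower property, and changing measure via $dQ_X/dP_X$ yields $\langle s,\dot{\nu}_{P,u}^*(w)\rangle_{L^2(P)}$ with $\dot{\nu}_{P,u}^*(w)$ as in the statement; a check that $\int \dot{\nu}_{P,u}^*(w)\,dP=0$ confirms membership in $\dot{\mathcal{M}}_P$. The analogous pairing for $\dot{\zeta}_{P,u}$ with a further change of measure via $dP/dQ$ produces the two coordinate expressions with factors $u_{3-j}-\mu_{3-j}$. The main obstacle I anticipate is term~(II): the bilinear structure introduces cross terms of the form $E_P[(u_{1,\epsilon}-u_1)(u_{2,\epsilon}-u_2)\mymid X]$ that are absent in the linear-in-$u$ case of Lemma~\ref{lem:UstatDiff}. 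This is where the envelope hypothesis $\|u_j\|_{L^\infty(Q)}\le m$ is genuinely used, combining with the path estimates $\|u_{j,\epsilon}-u_j-\epsilon t_j\|_{L^2(Q)}=o(\epsilon)$ to reduce these cross terms to the required $o(\epsilon)$ bound in $L^2(Q_X)$. A conceptually cleaner alternative would factor $\theta$ through conditional-mean and pointwise-product primitives and invoke the chain rule (Lemma~\ref{lem:chainRule}), but it requires separately checking Hadamard differentiability of the product on the envelope-bounded set $\mathcal{U}$, so the direct argument is preferred.
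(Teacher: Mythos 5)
Your differential operators and adjoints are correct, and you identify the genuine difficulty: the bilinear cross term $E_P[(u_{1,\epsilon}-u_1)(u_{2,\epsilon}-u_2)\mid X]$ (and its analogue $(\mu_{1,\epsilon}-\mu_1)(\mu_{2,\epsilon}-\mu_2)$ coming from the product of conditional means). However, the resolution you propose for this cross term --- ``combining the envelope with the path estimates $\|u_{j,\epsilon}-u_j-\epsilon t_j\|_{L^2(Q)}=o(\epsilon)$'' --- does not deliver $o(\epsilon)$. Writing $\Delta_j:=u_{j,\epsilon}-u_j$, the envelope gives $\|\Delta_j\|_{L^\infty(Q)}\le 2m$ and the path estimate gives $\|\Delta_j\|_{L^2(Q)}=O(\epsilon)$, and the naive Cauchy--Schwarz/H\"{o}lder chain then yields only $\|\Delta_1\Delta_2\|_{L^2(Q)}\le 2m\,\|\Delta_1\|_{L^2(Q)}=O(\epsilon)$ and hence $\|E_P[\Delta_1\Delta_2\mid X]\|_{L^2(Q_X)}=O(\epsilon)$. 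To sharpen this to $o(\epsilon)$ one needs a dominated-convergence argument: expand $\Delta_1^2=\epsilon^2 t_1^2+2\epsilon t_1 r_{1,\epsilon}+r_{1,\epsilon}^2$ with $\|r_{1,\epsilon}\|_{L^2(Q)}=o(\epsilon)$, observe that the last two pieces integrate (against the bounded $\Delta_2^2$) to $o(\epsilon^2)$, and for the first piece note that $t_1^2\Delta_2^2\le 4m^2 t_1^2\in L^1(Q)$ with $\Delta_2\to 0$ in $Q$-probability, so $\int t_1^2\Delta_2^2\,dQ\to 0$ by the dominated convergence theorem along subsequences. Without this step --- which is exactly where the $C^1$-with-bounded-gradient hypothesis and the DCT argument in Theorem~\ref{thm:pointwise} do their work --- the claimed $o(\epsilon)$ bound is not justified, and the same issue recurs inside your piece (III), since the conditional covariance is a nonlinear function of $P$ (through the product of conditional means) rather than a linear one as in Lemma~\ref{lem:UstatDiff}.

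It is also worth flagging that the paper's proof is precisely the ``conceptually cleaner alternative'' you considered and dismissed: it writes $\theta(P',u')=\underline{\theta}(P',u_1'u_2')-\underline{\theta}(P',u_1')\underline{\theta}(P',u_2')$, invokes Lemma~\ref{lem:tanSetOfProduct}, the coordinate-projection lemma, Lemma~\ref{lem:UstatDiff} with $r=1$, and the Hadamard differentiability of pointwise operations (Theorem~\ref{thm:pointwise}, applied to a truncation of $(a_1,a_2)\mapsto a_1a_2$ and of $(a,b,c)\mapsto a-bc$ that agrees with these maps on the relevant bounded set), and then assembles the differential and its adjoint via the chain rule. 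Checking Hadamard differentiability of the truncated product is not extra work here --- it is exactly the DCT argument that your direct route would have to supply anyway. So the compositional route is not a detour; it is the place where the cross-term analysis is actually carried out once and for all.
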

The form of the differential operator $\dot{\theta}_{P,u}$ is given in the proof.
\begin{proof}[Proof of Lemma~\ref{lem:condCovar}]
    For generic $(P',(u_1',u_2'))\in\mathcal{M}\times\mathcal{U}$, let $u_1'u_2' : z\mapsto u_1'(z)u_2'(z)$ and, for $\underline{u}:\mathcal{Z}\rightarrow[-m^2,m^2]$, let  $\underline{\theta}(P',\underline{u}):=E_{P'}[\underline{u}(Z)\mymid X=\cdot\,]$. Observe that $\theta(P',u')=\underline{\theta}(P',u_1'u_2')-\underline{\theta}(P',u_1')\underline{\theta}(P',u_2')$.
    
    Fix $s\in\check{\mathcal{M}}_P$, $t\in\check{\mathcal{U}}_u$, $\{P_\epsilon : \epsilon\}\in \mathscr{P}(P,\mathcal{M},s)$, and $\{u_\epsilon=(u_{\epsilon,1},u_{\epsilon,2}) : \epsilon\}\in \mathscr{P}(u,\mathcal{U},t)$. The total pathwise differentiability of the $\mathcal{U}\rightarrow L^2(Q)$ pointwise product $(u_1,u_2)\mapsto u_1u_2$ (Appendix~\ref{app:pointwise}) implies that $\{z\mapsto u_{\epsilon,1}(z)u_{\epsilon,2}(z) : \epsilon\}\in \mathscr{P}(u_1u_2,L^2(Q),t_1u_2+u_1t_2)$. Moreover, the total pathwise differentiability of coordinate projections (Lemma~\ref{lem:coordProj}) implies that $\{z\mapsto u_{\epsilon,j}(z) : \epsilon\}\in \mathscr{P}(u_j,L^2(Q),t_j)$, $j\in \{1,2\}$.
    When combined with the study of the conditional mean operator in Lemma~\ref{lem:UstatDiff}, this shows that
    \begin{align*}
        &\underline{\theta}(P_\epsilon,u_{\epsilon,1}u_{\epsilon,2})(\cdot) - \underline{\theta}(P,u_1 u_2)(\cdot) \\
        &\hspace{3em}- \epsilon E_P\left[t_{1}(Z)u_{2}(Z) + u_{1}(Z)t_{2}(Z) + \{u_1(Z)u_2(Z) - \underline{\theta}(P,u_1 u_2)(X)\} s(Z)\mymid X=\,\cdot\,\right] = o(\epsilon), \\
        &\underline{\theta}(P_\epsilon,u_{\epsilon,j})(\cdot) - \underline{\theta}(P,u_j)(\cdot) - \epsilon E_P\left[t_{j}(Z) + \{u_j(Z)-\underline{\theta}(P,u_j)(X)\}s(Z)\mymid X=\,\cdot\,\right] = o(\epsilon),\ j\in \{1,2\},
    \end{align*}
    where the $o(\epsilon)$ terms converge to zero in $L^2(Q_X)$ faster than $\epsilon$. Finally, applying the pointwise operation $(a,b,c)\mapsto a-bc$ and leveraging results from Appendix~\ref{app:pointwise} shows that $\theta(P_\epsilon,u_\epsilon)-\theta(P,u)-\epsilon\, \dot{\theta}_{P,u}(s,t)=o(\epsilon)$, where
    \begin{align*}
        \dot{\theta}_{P,u}(s,t)(\cdot)&:= E_P\left[t_{1}(Z)u_{2}(Z) + u_{1}(Z)t_{2}(Z) + \{u_1(Z)u_2(Z) - \underline{\theta}(P,u_1 u_2)\} s(Z)\mymid X=\,\cdot\,\right] \\
        &\quad\quad- \sum_{j=1}^2 E_P[u_{3-j}(Z)\mymid X=\cdot\,]E_P\left[t_{j}(Z) + \{u_j(Z)-\underline{\theta}(P,u_j)(X)\}s(Z)\mymid X=\,\cdot\,\right].
    \end{align*}
    As $(s,t)$ were arbitrary and $\dot{\theta}_{P,u}$ is bounded and linear, $\theta$ is totally pathwise differentiable at $(P,u)$ with differential operator $\dot{\theta}_{P,u}$.

    As for the form of the adjoint, fix $(s,t)\in \dot{\mathcal{M}}_P\oplus \dot{\mathcal{U}}_u$ and $w\in L^2(Q_X)$. The law of total expectation and definition of the inner product on $\dot{\mathcal{M}}_P\oplus \dot{\mathcal{U}}_u$ can be used to verify that
    \begin{align*}
        \langle \dot{\theta}_{P,u}(s,t),w\rangle_{L^2(Q_X)}&= \left\langle \frac{dQ_X}{dP_X}\dot{\theta}_{P,u}(s,t),w\right\rangle_{L^2(P_X)} = \left\langle (s,t),\dot{\theta}_{P,u}^*(w)\right\rangle_{\dot{\mathcal{M}}_P\oplus \dot{\mathcal{U}}_u},
    \end{align*}
    with $\dot{\theta}_{P,u}^*(w)$ as defined in the lemma statement.
\end{proof}

\subsubsection{Conditional variance}\label{app:condVar}
Fix $Q\in \mathcal{M}$ and suppose all distributions in $\mathcal{M}$ are equivalent. 
Define the conditional variance map $\theta : \mathcal{M}\times\mathcal{U}\rightarrow L^2(Q_X)$ as
\begin{align}
    \theta(P,u)(x)= \mathrm{Var}_P[\,u(Z)\mymid X=x], \label{eq:condVar}
\end{align}
where $\mathcal{U}:=\{u\in L^0(\rho) : \|u_j\|_{L^\infty(Q)}\le m\}\subset L^2(Q)$ for some fixed $m<\infty$.

\begin{lemma}[Total pathwise differentiability of conditional variance]\label{lem:condVar}
    Fix $(P,u)\in\mathcal{M}\times\mathcal{U}$ and suppose $\|dQ_X/dP_X\|_{L^\infty(P_X)}<\infty$ and $\|dP/dQ\|_{L^\infty(Q)}<\infty$. The parameter $\theta$ defined in \eqref{eq:condVar} is totally pathwise differentiable at $(P,u)$ with $\dot{\theta}_{P,u}^*(w)=(\dot{\nu}_{P,u}^*(w),\dot{\zeta}_{P,u}^*(w))$, where
    \begin{align}
        \dot{\nu}_{P,u}^*(w)(z)&= \frac{dQ_X}{dP_X}(x)w(x)\left[\{u(z)-E_P[u(Z)\mymid X=x]\}^2-\theta(P,u)(x)\right], \\
       \dot{\zeta}_{P,u}^*(w)(z)&= 2\frac{dP}{dQ}(z)\frac{dQ_X}{dP_X}(x)w(x)\{u(z)-E_P[u(Z)\mymid X=x]\}.
    \end{align}
\end{lemma}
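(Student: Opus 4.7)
The plan is to deduce the claim from Lemma~\ref{lem:condCovar} by a short chain rule argument, exploiting the identity $\mathrm{Var}_P[u(Z)\mymid X]=\mathrm{cov}_P[u(Z),u(Z)\mymid X]$. Concretely, I would introduce the doubling map $\iota:L^2(Q)\rightarrow L^2(Q)\oplus L^2(Q)$ defined by $\iota(u):=(u,u)$, which is bounded and linear, hence Hadamard differentiable everywhere with differential $\iota$ itself and Hermitian adjoint $\iota^*(w_1,w_2)=w_1+w_2$ (cf.\ the bounded affine map primitive in Appendix~\ref{app:affine}). Since duplication preserves the $L^\infty(Q)$ bound, $\iota$ sends the present domain $\mathcal{U}$ into the domain $\mathcal{U}_{\mathrm{cov}}:=\{(u_1,u_2)\in L^0(\rho)^2:\max_j\|u_j\|_{L^\infty(Q)}\le m\}$ of the conditional covariance map $\theta_{\mathrm{cov}}$ from \eqref{eq:condCovar}, and one then has the factorization $\theta(P,u)=\theta_{\mathrm{cov}}(P,\iota(u))$ for all $(P,u)\in\mathcal{M}\times\mathcal{U}$.

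Next, the hypotheses here are precisely those required to invoke Lemma~\ref{lem:condCovar} at the point $(P,(u,u))$, so that lemma supplies total pathwise differentiability of $\theta_{\mathrm{cov}}$ there together with an explicit adjoint. Treating $\iota$ as a totally pathwise differentiable primitive with trivial dependence on $P$ (so $\dot{\iota}_{P,u}^*(v_1,v_2)=(0,\iota^*(v_1,v_2))=(0,v_1+v_2)$), the chain rule in Lemma~\ref{lem:chainRule} then yields total pathwise differentiability of $\theta$ at $(P,u)$ together with
\begin{align*}
\dot{\theta}_{P,u}^*(w)&=\bigl(\dot{\theta}_{\mathrm{cov},P,(u,u)}^*(w)_0,\,0\bigr)+\bigl(0,\,\iota^*\bigl(\dot{\theta}_{\mathrm{cov},P,(u,u)}^*(w)_1\bigr)\bigr).
\end{align*}

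To conclude, I would substitute the explicit formulas from Lemma~\ref{lem:condCovar} specialized to $u_1=u_2=u$: the product $\prod_{j=1}^2\{u_j(z)-E_P[u_j(Z)\mymid X=x]\}$ collapses to $\{u(z)-E_P[u(Z)\mymid X=x]\}^2$ and $\mathrm{cov}_P(u,u\mymid X)=\mathrm{Var}_P(u\mymid X)=\theta(P,u)$, so the first summand matches the claimed $\dot{\nu}_{P,u}^*(w)$. For the second summand, applying $\iota^*$ to a pair whose two coordinates coincide simply inserts a factor of two, yielding the claimed $\dot{\zeta}_{P,u}^*(w)$. I do not expect any genuine obstacle here; the only substantive checks are that $\iota$ is bounded linear with range contained in $\mathcal{U}_{\mathrm{cov}}$, both of which are immediate, so the entire argument amounts to invoking Lemma~\ref{lem:condCovar}, the chain rule, and simplifying.
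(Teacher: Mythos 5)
Your proof is correct, and it takes a genuinely different route from what the paper does. The paper simply remarks that the proof of Lemma~\ref{lem:condVar} is ``nearly identical'' to that of Lemma~\ref{lem:condCovar}; i.e., the intended argument is to repeat the covariance proof (expressing the variance as a conditional second moment minus a squared conditional mean and chaining conditional mean, coordinate projection, and pointwise-operation primitives) with the obvious notational simplification $u_1=u_2=u$. You instead realize the conditional variance as the literal composition $\theta=\theta_{\mathrm{cov}}(\,\cdot\,,\iota(\cdot))$ where $\iota(u)=(u,u)$, treat $\iota$ as a bounded linear (hence Hadamard differentiable) map whose induced primitive has adjoint $(v_1,v_2)\mapsto(0,v_1+v_2)$, and then invoke Lemma~\ref{lem:condCovar} as a black box together with the chain rule (Lemma~\ref{lem:chainRule}). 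The factor of two in $\dot{\zeta}_{P,u}^*(w)$ then drops out transparently from $\iota^*(v,v)=2v$, and the $P$-component matches after noting $\mathrm{cov}_P(u,u\mid X)=\theta(P,u)$. This is a clean reduction that avoids redoing the covariance analysis and is, if anything, a better illustration of the paper's own modular philosophy; the only verifications it needs---that $\iota$ is bounded linear, that $\iota(\mathcal{U})\subseteq\mathcal{U}_{\mathrm{cov}}$, and that the hypotheses of Lemma~\ref{lem:condVar} are exactly those of Lemma~\ref{lem:condCovar}---are all immediate, as you note. The paper's route is more self-contained (it does not presuppose the covariance lemma) but redundant; yours is shorter and reuses existing machinery. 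Both are sound.
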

The proof of this lemma is nearly identical to that of Lemma~\ref{lem:condCovar} and so is omitted.

\subsubsection{Kernel embedding}\label{app:kernelEmbed}

Let $K : \mathcal{X}\times\mathcal{X}\rightarrow\mathbb{R}$ be a bounded, symmetric, positive-definite kernel on a set $\mathcal{X}$. For example, $K$ may be a Gaussian or Laplace kernel on $\mathbb{R}^d$ \citep{sriperumbudur2011universality}. Let $\mathcal{W}$ denote the unique reproducing kernel Hilbert space (RKHS) associated with $K$, which exists by the Moore-Aronszajn theorem \citep[][Theorem 3]{berlinet2011reproducing}. For a coarsening $\mathscr{C} : \mathcal{Z}\rightarrow\mathcal{X}$ and a generic $P\in \mathcal{M}$, we let $P_X$ denote the marginal distribution of $X:=\mathscr{C}(Z)$ when $Z\sim P$. We suppose all distributions in $\mathcal{M}$ are equivalent.

We study the kernel embedding map $\theta : \mathcal{M}\times\mathcal{U}\rightarrow \mathcal{W}$ that is defined so that
\begin{align}
    \theta(P,u)(\cdot)&:= \int K(\,\cdot\,,x) \,u(x)\, P_X(dx), \label{eq:kernelEmbed}
\end{align}
where, for some $m\in (0,\infty)$ and $Q_X\in \{P_X : P\in\mathcal{M}\}$, $\mathcal{U}:=\{u\in L^2(\rho_X) : \|u\|_{L^\infty(\rho_X)}\le m\}\subset L^2(Q_X)$ \citep[Chapter 4.9.1.1 of][]{berlinet2011reproducing}. The integral above is a Bochner integral; we show this integral is well-defined in Lemma~\ref{lem:rkhsEmbed}. That lemma also shows $\theta(P,u)$ is the unique element of $\mathcal{W}$ that satisfies $\left\langle \theta(P,u), w\right\rangle_{\mathcal{W}}= \int w(x) u(x) P_X(dx)$ for all $w\in\mathcal{W}$. When $u(x)=1$ $P_X$-a.s., $\theta(P,u)$ is the kernel mean embedding of $P_X$ \citep{gretton2012kernel}.

\begin{lemma}[Total pathwise differentiability of kernel embeddings]\label{lem:kernelEmbed}
    Fix $(P,u)\in\mathcal{M}\times\mathcal{U}$ and suppose $\|dP_X/dQ_X\|_{L^\infty(Q_X)}<\infty$. The parameter $\theta$ defined in \eqref{eq:kernelEmbed} is totally pathwise differentiable at $(P,u)$ with $\dot{\theta}_{P,u}(s,t)(\cdot)= E_P\left[K(\,\cdot\,,X)\{u(X)s(Z)+t(X)\}\right]$ and $\dot{\theta}_{P,u}^*(w)=(\dot{\nu}_{P,u}^*(w),\dot{\zeta}_{P,u}^*(w))$, where
    \begin{align}
       \dot{\nu}_{P,u}^*(w)(z)&=  w(x)u(x)-E_P[w(X)u(X)]\hspace{1em}\textnormal{ with }x:=\mathscr{C}(z), \nonumber \\
       \dot{\zeta}_{P,u}^*(w)(x)&= w(x)\frac{dP_X}{dQ_X}(x). \label{eq:rkhsEmbedAdjointDef}
    \end{align}
\end{lemma}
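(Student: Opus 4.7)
The plan is to leverage the dual characterization furnished by Lemma~\ref{lem:rkhsEmbed}: namely that $\theta(P,u)$ is the unique element of $\mathcal{W}$ satisfying $\langle \theta(P,u), w\rangle_{\mathcal{W}} = \int w(x) u(x) P_X(dx)$ for every $w \in \mathcal{W}$. Because $K$ is bounded, the reproducing property yields $|w(x)| \leq (\sup_y K(y,y))^{1/2}\|w\|_{\mathcal{W}}$, so testing an RKHS element against unit-norm $w$ is equivalent to testing it against a uniformly bounded family of functions on $\mathcal{X}$. This is the workhorse that converts the $\mathcal{W}$-norm statement into familiar scalar remainder estimates. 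I would first verify that the purported differential $\dot\theta_{P,u}(s,t)$ is itself a well-defined element of $\mathcal{W}$ by applying Lemma~\ref{lem:rkhsEmbed} with the integrand $u s_X + t \in L^2(P_X)$, where $s_X(x):=E_P[s(Z)\mid X=x]$; then $\langle \dot\theta_{P,u}(s,t), w\rangle_{\mathcal{W}} = E_P[w(X)\{u(X)s(Z)+t(X)\}]$ by the tower property and the reproducing identity, and boundedness and linearity of $(s,t)\mapsto \dot\theta_{P,u}(s,t)$ follow immediately from Cauchy-Schwarz.

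The main step is a uniform-in-$w$ bound on
\begin{align*}
    \langle \theta(P_\epsilon,u_\epsilon) - \theta(P,u) - \epsilon \dot\theta_{P,u}(s,t), w\rangle_{\mathcal{W}} = \int w u_\epsilon\, dP_{\epsilon,X} - \int w u\, dP_X - \epsilon E_P[w(X)\{u(X)s(Z)+t(X)\}],
\end{align*}
which I would decompose into $I_1 + I_2 + I_3$, with $I_1 := \int w(u_\epsilon - u - \epsilon t)\,dP_X$, $I_2 := \int wu\,(dP_{\epsilon,X} - dP_X) - \epsilon E_P[w(X)u(X)s_X(X)]$, and $I_3 := \int w(u_\epsilon - u)\,(dP_{\epsilon,X}-dP_X)$, using the tower property to rewrite the score contribution via $s_X$. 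The term $I_1$ is $o(\epsilon)$ uniformly in $w$ by Cauchy-Schwarz, using the essential boundedness of $dP_X/dQ_X$ together with the path condition $\|u_\epsilon - u - \epsilon t\|_{L^2(Q_X)} = o(\epsilon)$. For $I_2$, the marginal path $\{P_{\epsilon,X}\}$ is quadratic mean differentiable at $P_X$ with score $s_X$; writing $p_{\epsilon,X} - p_X - \epsilon s_X p_X = (p_{\epsilon,X}^{1/2}-p_X^{1/2})^2 + 2p_X^{1/2}\bigl(p_{\epsilon,X}^{1/2}-p_X^{1/2}-\tfrac{\epsilon}{2} s_X p_X^{1/2}\bigr)$ splits $I_2$ into a squared-Hellinger piece of order $O(\epsilon^2)\|wu\|_\infty$ and a QMD-residual piece of order $o(\epsilon)\|wu\|_\infty$ via Cauchy-Schwarz, with uniformity in $w$ guaranteed because $\|wu\|_\infty \leq m(\sup K)^{1/2}\|w\|_{\mathcal{W}}$. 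For the cross term $I_3$, I would factor $p_{\epsilon,X}-p_X = (p_{\epsilon,X}^{1/2}-p_X^{1/2})(p_{\epsilon,X}^{1/2}+p_X^{1/2})$ and apply Cauchy-Schwarz against $d\lambda$, combining $\|u_\epsilon-u\|_{L^2(P_X)} = O(\epsilon)$ with the Hellinger bound to obtain $o(\epsilon)$ uniformly. Taking a supremum over $\|w\|_{\mathcal{W}} \le 1$ then converts the pointwise estimate into $\|\theta(P_\epsilon,u_\epsilon)-\theta(P,u)-\epsilon\dot\theta_{P,u}(s,t)\|_{\mathcal{W}} = o(\epsilon)$.

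Finally, I would read off the adjoint by pairing, invoking Lemma~\ref{lem:totalImpliesPartial} to decompose $\dot\theta_{P,u}^*$ into $(\dot\nu_{P,u}^*, \dot\zeta_{P,u}^*)$. For the distributional part, $E_P[w(X)u(X)s(Z)] = E_P[s(Z)\{w(X)u(X)-E_P[w(X)u(X)]\}]$ because $s \in \dot{\mathcal{M}}_P$ is $P$-mean zero, and subtracting the constant $E_P[w(X)u(X)]$ makes the adjoint land in $\dot{\mathcal{M}}_P$; this yields $\dot\nu_{P,u}^*(w)(z) = w(x)u(x) - E_P[w(X)u(X)]$. For the Hilbert-input part, a change of measure gives $E_P[w(X)t(X)] = \int t(x) w(x)\frac{dP_X}{dQ_X}(x)\,dQ_X(x)$, so $\dot\zeta_{P,u}^*(w) = w\cdot dP_X/dQ_X$, which sits in $L^2(Q_X)$ by the essential boundedness assumption on $dP_X/dQ_X$. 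The step I expect to require the most care is maintaining uniformity over the RKHS unit ball throughout the QMD expansion of $I_2$; the rest of the argument is a light adaptation of the conditional-mean template from Lemma~\ref{lem:UstatDiff}.
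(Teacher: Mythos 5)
Your proposal follows the same template as the paper's proof: both use the dual characterization from Lemma~\ref{lem:rkhsEmbed}\ref{it:rkhsEmbedRiesz}, express the $\mathcal{W}$-norm of the remainder as a supremum over the unit ball $\mathcal{B}_1$ (so RKHS elements become uniformly bounded test functions), and split the resulting scalar quantity into three pieces. The only substantive difference is the algebraic regrouping: the paper freezes $u_\epsilon$ in the distributional-perturbation term and absorbs the mismatch into the small correction $\epsilon\int w(u_\epsilon-u)s_X\,dP_X$, whereas you freeze $u$ in $I_2$ and put the second-order interaction into $I_3 = \int w(u_\epsilon-u)(dP_{\epsilon,X}-dP_X)$. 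Both regroupings sum to the same target, and $I_1$ coincides exactly with the paper's (II). The one spot where your sketch is slightly under-specified is the bound on $I_3$: pairing $p_{\epsilon,X}-p_X$ against $w(u_\epsilon-u)$ via Cauchy--Schwarz leaves a factor $\|w(u_\epsilon-u)(p_{\epsilon,X}^{1/2}+p_X^{1/2})\|_{L^2(\lambda)}$ involving $P_{\epsilon,X}$, and $\|u_\epsilon-u\|_{L^2(P_X)}=O(\epsilon)$ together with the Hellinger rate are not by themselves enough to make this small --- you additionally need the $L^\infty$ bound $|u_\epsilon-u|\le 2m$ (available since $\mathcal{U}$ is $L^\infty$-bounded) to control the $p_{\epsilon,X}$ half of the integral, which then yields $\|u_\epsilon-u\|_{L^2(P_{\epsilon,X})}=O(\epsilon^{1/2})$ and hence $|I_3|=O(\epsilon^{3/2})=o(\epsilon)$. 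The paper's grouping sidesteps this by never pairing $u_\epsilon-u$ against $P_{\epsilon,X}$, which is a small technical convenience but not a different argument in substance.
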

We note that the condition that $\|dP_X/dQ_X\|_{L^\infty(Q_X)}<\infty$ is trivially satisfied if $Q_X=P_X$.
\begin{proof}[Proof of Lemma~\ref{lem:kernelEmbed}]
    Denote the claimed differential operator by $\dot{\Theta}_{P,u}$ so that, for any $(s,t)\in \dot{\mathcal{M}}_P\oplus\dot{\mathcal{U}}_u$, $\dot{\Theta}_{P,u}(s,t)(\cdot)=E_P\left[K(\,\cdot\,,X)\{u(X)s(Z)+t(X)\}\right]$. This proof is broken into the following parts:
    \begin{enumerate}[label=\textbf{Part~\arabic*)},ref=Part~\arabic*,leftmargin=*]
        \item\label{it:rkhsimageW} For all $(s,t)\in \dot{\mathcal{M}}_P\oplus\dot{\mathcal{U}}_u$, $\dot{\Theta}_{P,u}(s,t)\in\mathcal{W}$; consequently, $\mathcal{W}$ is a codomain of $\dot{\Theta}_{P,u}$.
        \item\label{it:rkhsBL} $\dot{\Theta}_{P,u} : \dot{\mathcal{M}}_P\oplus\dot{\mathcal{U}}_u\rightarrow\mathcal{W}$ is bounded and linear.
        \item\label{it:rkhsAdjoint} The Hermitian adjoint of $\dot{\Theta}_{P,u}$ is $w\mapsto (\dot{\nu}_{P,u}^*(w),\dot{\zeta}_{P,u}^*(w))$, with $\dot{\nu}_{P,u}^*$ and $\dot{\zeta}_{P,u}^*$ as defined in \eqref{eq:rkhsEmbedAdjointDef}.
        \item\label{it:rkhsTPD} The map $\theta$ is totally pathwise differentiable at $(P,u)$ with differential operator $\dot{\Theta}_{P,u}$.
    \end{enumerate}
    Taken together, these four parts of the proof show that $\theta$ is indeed totally pathwise differentiable with the claimed differential operator and adjoint, completing the proof.\\[.5em]
    \textbf{\ref{it:rkhsimageW} of proof:} We begin by showing that the image of $\dot{\Theta}_{P,u}$ is a subset of $\mathcal{W}$.  It is straightforward to verify that $\dot{\mathcal{M}}_P=\{s\in L^2(P) : E_P[s(Z)]=0\}$ and $\dot{\mathcal{U}}_u= L^2(Q_X)$. Hence, fixing $(s,t)\in \dot{\mathcal{M}}_P\oplus \dot{\mathcal{U}}_u$ and defining $s_X(x):=E_P[s(Z)\mymid X=x]$, \ref{it:rkhsEmbedWellDef} from Lemma~\ref{lem:rkhsEmbed} shows it suffices to show $g\in L^2(P_X)$, where
    \begin{align}
        g : x\mapsto u(x)s_X(x)+t(x). \label{eq:gDef}
    \end{align}
    Indeed, using that $P_X\ll Q_X$, we see that $g$ is uniquely defined up to $P_X$-null sets; moreover, using that $(a+b)^2\le 2(a^2+b^2)$, the bound on $u$, Jensen's inequality, a change of measure, and H\"{o}lder's inequality with exponents $(p,q)=(\infty,1)$,
    \begin{align}
        \int g(x)^2 P_X(dx)&\le 2\int \left\{u(x)^2E_P[s(Z)\mymid X=x]^2 + t(X)^2\right\}P_X(dx) \nonumber \\
        &\le 2\left(m^2\|s\|_{L^2(P)}^2 + \|t\|_{L^2(Q_X)}^2\left\|dP_X/dQ_X\right\|_{L^\infty(Q_X)}\right) < \infty. \label{eq:gSqInt}
    \end{align}
    Hence, $g\in L^2(P_X)$, and so $\mathcal{W}$ is indeed a codomain of our claimed differential operator $\dot{\Theta}_{P,u}$.\\[.5em]
    \textbf{\ref{it:rkhsBL} of proof:} We now show that $\dot{\Theta}_{P,u} : \dot{\mathcal{M}}_P\oplus\dot{\mathcal{U}}_u\rightarrow \mathcal{W}$ is bounded and linear. Linearity is clear from the form of the operator, so we focus on establishing boundedness. Fix $(s,t)\in \dot{\mathcal{M}}_P\oplus \dot{\mathcal{U}}_u$ and note that, by \ref{it:rkhsEmbedInnerProd} from Lemma~\ref{lem:rkhsEmbed} and letting $s_X(x):=E_P[s(Z)\mymid X=x]$,
    \begin{align*}
        &\|\dot{\Theta}_{P,u}(s,t)\|_{\mathcal{W}}^2 \\
        &= \iint K(x_1,x_2)\prod_{\ell=1}^2\left\{ [u(x_\ell)s_X(x_\ell)+t(x_\ell)]P_X(dx_\ell)\right\} \\
        &\le \iint \prod_{\ell=1}^2 \left[K(x_\ell,x_\ell)^{1/2}|u(x_\ell)s_X(x_\ell)+t(x_\ell)|P_X(dx_\ell)\right] \tag{Cauchy-Schwarz for the kernel $K$} \\
        &\le \sup_{x\in\mathcal{X}} K(x,x)\iint \prod_{\ell=1}^2 \left[|u(x_\ell)s_X(x_\ell)+t(x_\ell)|P_X(dx_\ell)\right]  \tag{H\"{o}lder} \\
        &= \sup_{x\in\mathcal{X}} K(x,x)\left[\int |u(x)s_X(x)+t(x)|P_X(dx)\right]^2 \\
        &\le \sup_{x\in\mathcal{X}} K(x,x)\int \left[u(x)s_X(x)+t(x)\right]^2P_X(dx) \tag{Jensen} \\
        &\le 2\sup_{x\in\mathcal{X}} K(x,x)\left[\int u(x)^2s_X(x)^2P_X(dx)+\int t(x)^2P_X(dx)\right] \tag{$[a+b]^2\le 2[a^2+b^2]$} \\
        &\le 2\sup_{x\in\mathcal{X}} K(x,x)\left[m^2\int s_X(x)^2P_X(dx)+\|dP_X/dQ_X\|_{L^\infty(Q_X)}\int t(x)^2Q_X(dx)\right] \tag{$u\in\mathcal{U}$ and H\"{o}lder} \\
        &\le 2\sup_{x\in\mathcal{X}} K(x,x)\max\{m^2,\|dP_X/dQ_X\|_{L^\infty(Q_X)}\}\left[\int s_X(x)^2P_X(dx)+\int t(x)^2Q_X(dx)\right] \\
        &= 2\sup_{x\in\mathcal{X}} K(x,x)\max\{m^2,\|dP_X/dQ_X\|_{L^\infty(Q_X)}\}\|(s,t)\|_{\dot{\mathcal{M}}_P\oplus\dot{\mathcal{U}}_u}^2.
    \end{align*}
    The right-hand side writes as a constant that does not depend on $(s,t)$ times $\|(s,t)\|_{\dot{\mathcal{M}}_P\oplus\dot{\mathcal{U}}_u}^2$, and so $\dot{\Theta}_{P,u}$ is indeed bounded.\\[.5em]
    \textbf{\ref{it:rkhsAdjoint} of proof:} We now show that $\dot{\Theta}_{P,u}^*$ is the map $w\mapsto (\dot{\nu}_{P,u}^*(w),\dot{\zeta}_{P,u}^*(w))$. To this end, fix $(s,t)\in \dot{\mathcal{U}}_u\oplus \dot{\mathcal{M}}_P$ and $w\in\mathcal{W}$, and observe that
    \begin{align*}
        \langle \dot{\Theta}_{P,u}(s,t),w\rangle_{\mathcal{W}}&= \int w(x) \left[u(x)s_X(x)+t(x)\right] P_X(dx)  \\
        &= \int \left\{w(x) u(x)-E_P[w(X)u(X)]\right\} s_X(x) P_X(dx)+ \int t(x)w(x)\frac{dP_X}{dQ_X}(x) Q_X(dx) \\
        &= \left\langle s, z\mapsto w(x)u(x)-E_P[w(X)u(X)] \right\rangle_{\dot{\mathcal{M}}_P} + \left\langle t, w\frac{dP_X}{dQ_X}\right\rangle_{\dot{\mathcal{U}}_u}  \\
        &= \left\langle (s,t), (\dot{\nu}_{P,u}^*(w),\dot{\zeta}_{P,u}^*(w))\right\rangle_{\dot{\mathcal{M}}_P\oplus \dot{\mathcal{U}}_u}.
    \end{align*}
    The first equality above used \ref{it:rkhsEmbedRiesz} from Lemma~\ref{lem:rkhsEmbed}; the second used that $E_P[s_X(X)]=0$ and a change of measure; the third used the law of total expectation, the fact that $\dot{\mathcal{M}}_P=\{s\in L^2(P) : \int s\, dP=0\}$, and the fact that $\dot{\mathcal{U}}_u=L^2(Q_X)$; and the final equality used the definition of the inner product in the direct sum space $\dot{\mathcal{M}}_P\oplus \dot{\mathcal{U}}_u$ and the definitions of $\dot{\nu}_{P,u}^*$ and $\dot{\zeta}_{P,u}^*$ from \eqref{eq:rkhsEmbedAdjointDef}. Hence, $\dot{\Theta}_{P,u}^*(w)=(\dot{\nu}_{P,u}^*(w),\dot{\zeta}_{P,u}^*(w))$.\\[.5em]
    \textbf{\ref{it:rkhsTPD} of proof:}     We now establish that $\theta$ is totally pathwise differentiable at $(P,u)$ with differential operator $\dot{\Theta}_{P,u}$. To this end, fix $s\in\check{\mathcal{M}}_P$, $t\in \check{\mathcal{U}}_u$, $\{P_\epsilon : \epsilon\in [0,1]\}\in \mathscr{P}(P,\mathcal{M},s)$, and $\{u_\epsilon : \epsilon\in [0,1]\}\in \mathscr{P}(u,\mathcal{U},t)$. Our goal will be to show that the following is $o(\epsilon)$:
    \begin{align*}
        r(\epsilon):= \left\|\theta(P_\epsilon,u_\epsilon)-\theta(P,u) - \epsilon\, \dot{\Theta}_{P,u}(s,t)\right\|_{\mathcal{W}}.
    \end{align*}
    Using that $\|\tilde{w}\|_{\mathcal{W}}=\sup_{w\in\mathcal{B}_1}\langle w,\tilde{w}\rangle_{\mathcal{W}}$, where $\mathcal{B}_1:=\{w\in\mathcal{W} : \|w\|_{\mathcal{W}}\le 1\}$, leveraging the bilinearity of inner products and \ref{it:rkhsEmbedRiesz} from Lemma~\ref{lem:rkhsEmbed}, 
    and then applying the triangle inequality, we find that
    \begin{align*}
        r(\epsilon)&= \sup_{w\in\mathcal{B}_1}\left\langle w, \theta(P_\epsilon,u_\epsilon)-\theta(P,u) - \epsilon\, \dot{\Theta}_{P,u}(s,t)\right\rangle_{\mathcal{W}}\\
        &= \sup_{w\in\mathcal{B}_1}\left[\int w(x)u_\epsilon(x) P_{\epsilon,X}(dx)-\int w(x)u(x) P_X(dx) - \epsilon \int w(x)\left[u(x)s_X(x)+t(x)\right] P_X(dx)\right] \\
        &\le \epsilon\sup_{w\in\mathcal{B}_1}\left[\int w(x)\left[u_\epsilon(x)-u(x)\right]s_X(x) P_X(dx)\right] \\
        &\quad+ \sup_{w\in\mathcal{B}_1}\left[\int w(x)u_\epsilon(x) P_{X}(dx)-\int w(x)u(x) P_X(dx) - \epsilon \int w(x)t(x) P_X(dx)\right] \\
        &\quad+ \sup_{w\in\mathcal{B}_1}\left[\int w(x)u_\epsilon(x) P_{\epsilon,X}(dx)-\int w(x)u_\epsilon(x) P_X(dx) - \epsilon \int w(x) u_\epsilon(x)s_X(x) P_X(dx)\right].
    \end{align*}
    We denote the three terms on the right as (I), (II), and (III) in what follows. We will show that each of these terms is $o(\epsilon)$. When doing so, we will leverage similar arguments to those used to bound the terms on the right-hand side of \eqref{eq:UstatTpd} in the proof of Lemma~\ref{lem:UstatDiff}. 
    Beginning with (I), we apply Cauchy-Schwarz in $L^2(P_X)$, a change of measure, H\"{o}lder's inequality, and the fact that Cauchy-Schwarz in $\mathcal{W}$ implies elements in $\mathcal{B}_1$ are uniformly bounded by $\|K\|_\infty^{1/2}:=\sup_{x\in\mathcal{X}} K(x,x)^{1/2}$ to show that:
    \begin{align*}
        \textnormal{(I)}&\le \epsilon  \|u_\epsilon-u\|_{L^2(P_X)} \|s_X\|_{L^2(P_X)} \sup_{w\in\mathcal{B}_1}\sup_{x\in\mathcal{X}}|w(x)| \\
        &\le \epsilon \|u_\epsilon-u\|_{L^2(Q_X)}\|dP_X/dQ_X\|_{L^\infty(Q_X)}^{1/2} \|s_X\|_{L^2(P_X)} \|K\|_\infty^{1/2}.
    \end{align*}
    The right-hand side is $o(\epsilon)$ since $u_\epsilon\rightarrow u$ in $L^2(Q_X)$ as $\epsilon\rightarrow 0$. A similar argument shows that
    \begin{align*}
        \textnormal{(II)}&\le \|K\|_{\infty}^{1/2}\|dP_X/dQ_X\|_{L^\infty(Q_X)}^{1/2}\left\|u_\epsilon-u-\epsilon t\right\|_{L^2(Q_X)}.
    \end{align*}
    The right-hand side is $o(\epsilon)$ since $\{u_\epsilon : \epsilon\in [0,1]\}\in \mathscr{P}(u,\mathcal{U},t)$. To study (III), we define $p_{\epsilon,X}^{1/2}(\cdot):=[\frac{dP_{\epsilon,X}}{dP_X}(\cdot)]^{1/2}$ and note that, by the triangle inequality,
    \begin{align*}
        \textnormal{(III)}&\le \sup_{w\in\mathcal{B}_1}\left[\int w(x)u_\epsilon(x) [p_{\epsilon,X}^{1/2}(x)+1]\left[p_{\epsilon,X}^{1/2}(x)-1-\frac{\epsilon}{2}s_X(x)\right]P_{X}(dx)\right] \\
        &\quad+ \frac{\epsilon}{2}\sup_{w\in\mathcal{B}_1}\left[\int w(x)u_\epsilon(x) s_X(x) [p_{\epsilon,X}^{1/2}(x)-1] P_X(dx)\right].
    \end{align*}
    Cauchy-Schwarz, the boundedness of $w$ and $u_\epsilon$, the inequality $(a+b)^2\le 2(a^2+b^2)$, and H\"{o}lder's inequality can be used together to show that the first term on the right upper bounds by $2m\|K\|_\infty^{1/2}\|p_{\epsilon,X}^{1/2} - 1 - \epsilon s_X/2\|_{L^2(P_X)}$. This upper bound is $o(\epsilon)$ because $\{P_\epsilon : \epsilon\in [0,1]\}\in \mathscr{P}(P,\mathcal{M},s)$ and quadratic mean differentiability is preserved under marginalization --- see Proposition A.5.5 in \cite{bickel1993efficient} or Lemma~S8 in \cite{luedtke2023one} for details. The second term above can be similarly bounded by $\frac{\epsilon}{2}m\|K\|_\infty^{1/2}\|s_X\|_{L^2(P_X)}\|p_{\epsilon,X}^{1/2} - 1\|_{L^2(P_X)}$, which is also $o(\epsilon)$ by the quadratic mean differentiability of $\{P_{\epsilon,X} : \epsilon\in [0,1]\}$. As we have shown that (I), (II), and (III) are all $o(\epsilon)$, we have established that $r(\epsilon)=o(\epsilon)$. The fact that $s$, $t$, $\{P_\epsilon : \epsilon\}$, and $\{u_\epsilon : \epsilon\}$ were arbitrary establishes that $\theta$ is pathwise differentiable at $(P,u)$ with differential operator $\dot{\theta}_{P,u}=\dot{\Theta}_{P,u}$.
\end{proof}

The following lemma provides versions of results from Chapter 4.9.1.1 of \cite{berlinet2011reproducing} that are convenient for our setting. We include the proof for completeness. In the lemma and its proof, for $r\ge 1$ and $P\in\mathcal{M}$, we let $L^r(P_X;\mathcal{W})$ denote the Bochner space of $P_X$-a.s. equivalence classes of Bochner measurable maps $g : \mathcal{X}\rightarrow \mathcal{W}$ satisfying $\|g\|_{L^r(P_X;\mathcal{W})}^r:= \int \|g(x)\|_{\mathcal{W}}^r P_X(dx) <\infty$. The lemma concerns a map $\underline{\theta}$, which is an extension of $\theta$ from $\mathcal{M}\times\mathcal{U}$ to $\{(P,f) : P\in \mathcal{M},f\in L^2(P_X)\}$.
\begin{lemma}\label{lem:rkhsEmbed}
    For any $P\in\mathcal{M}$ and $f\in L^2(P_X)$, define $\underline{\theta}(P,f) : x\mapsto E_P[K(x,X)f(X)]$. The following holds for all $P\in\mathcal{M}$ and $f\in L^2(P_X)$:
    \begin{enumerate}[label=(\roman*)]
        \item\label{it:rkhsEmbedBochner} $x\mapsto K(\,\cdot\,,x)f(x)$ belongs to $L^2(P_X;\mathcal{W})$;
        \item\label{it:rkhsEmbedWellDef} $\underline{\theta}(P,f)\in \mathcal{W}$;
        \item\label{it:rkhsEmbedRiesz} $\underline{\theta}(P,f)$ is the unique element $w_{P,f}$ of $\mathcal{W}$ that satisfies $\langle w_{P,f},w\rangle_{\mathcal{W}} = \int w(x)f(x)P_X(dx)$ for all $w\in\mathcal{W}$;
        \item\label{it:rkhsEmbedInnerProd} for all $\tilde{P}\in\mathcal{M}$ and $\tilde{f}\in L^2(P_X)$, 
        \begin{align*}
            \langle \underline{\theta}(P,f),\underline{\theta}(\tilde{P},\tilde{f})\rangle_{\mathcal{W}}=\iint K(x,\tilde{x}) f(x) \tilde{f}(\tilde{x}) P_X(dx) \tilde{P}_X(d\tilde{x}).
        \end{align*}
    \end{enumerate}
\end{lemma}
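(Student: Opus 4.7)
The plan is to establish the four parts in order, with each building on its predecessor. Throughout, I will use the reproducing property $\langle K(\,\cdot\,,x), w\rangle_{\mathcal{W}} = w(x)$ for all $w\in\mathcal{W}$ and $x\in\mathcal{X}$, together with the identity $\|K(\,\cdot\,,x)\|_{\mathcal{W}}^2 = K(x,x)$ which follows from taking $w = K(\,\cdot\,,x)$ in the reproducing identity. Set $\|K\|_\infty := \sup_{x\in\mathcal{X}} K(x,x) < \infty$, which is finite since $K$ is bounded.

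For \ref{it:rkhsEmbedBochner}, I would first argue that $g_f : x\mapsto K(\,\cdot\,,x) f(x)$ is Bochner measurable. Strong measurability follows from the fact that $\mathcal{W}$ is separable (a standard consequence of separability of $\mathcal{X}$ in the settings under consideration) together with weak measurability: for any $w\in\mathcal{W}$, the map $x\mapsto \langle g_f(x), w\rangle_{\mathcal{W}} = w(x) f(x)$ is measurable because $w$ is a real-valued measurable function on $\mathcal{X}$ (as an element of the RKHS) and $f$ is measurable. Pettis' theorem then yields strong measurability. The integrability condition reduces to
\begin{align*}
\int \|K(\,\cdot\,,x) f(x)\|_{\mathcal{W}}^2\, P_X(dx) = \int K(x,x) f(x)^2\, P_X(dx) \le \|K\|_\infty \|f\|_{L^2(P_X)}^2 < \infty,
\end{align*}
which places $g_f$ in $L^2(P_X;\mathcal{W})$ and, a fortiori, in $L^1(P_X;\mathcal{W})$ since $P_X$ is finite.

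With \ref{it:rkhsEmbedBochner} in hand, \ref{it:rkhsEmbedWellDef} is immediate: the Bochner integral $\int K(\,\cdot\,,x)f(x)\,P_X(dx)$ is a well-defined element of $\mathcal{W}$. For \ref{it:rkhsEmbedRiesz}, I would exchange the inner product and the Bochner integral using the fact that $\langle\,\cdot\,,w\rangle_{\mathcal{W}}$ is a bounded linear functional, obtaining
\begin{align*}
\langle \underline{\theta}(P,f), w\rangle_{\mathcal{W}} = \int \langle K(\,\cdot\,,x), w\rangle_{\mathcal{W}} f(x)\,P_X(dx) = \int w(x) f(x)\,P_X(dx)
\end{align*}
by the reproducing property. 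Uniqueness follows from the fact that any two elements of $\mathcal{W}$ with identical inner products against every $w\in\mathcal{W}$ must coincide.

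For \ref{it:rkhsEmbedInnerProd}, I would apply \ref{it:rkhsEmbedRiesz} with $w = \underline{\theta}(\tilde{P},\tilde{f})$, then use the reproducing identity and \ref{it:rkhsEmbedRiesz} once more (with $w = K(\,\cdot\,,x)$) to evaluate $\underline{\theta}(\tilde{P},\tilde{f})(x) = \int K(x,\tilde{x})\tilde{f}(\tilde{x})\,\tilde{P}_X(d\tilde{x})$, and finally invoke Fubini's theorem, which applies since the absolute integrand is bounded by $\|K\|_\infty |f(x)||\tilde{f}(\tilde{x})|$, integrable against $P_X \otimes \tilde{P}_X$ by Cauchy-Schwarz and the assumption that $f \in L^2(P_X)$, $\tilde{f}\in L^2(\tilde{P}_X)$. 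The only delicate step I anticipate is the strong measurability argument in \ref{it:rkhsEmbedBochner}; under standard separability assumptions on the RKHS this is routine via Pettis' theorem, but it is the one place where a genuine measure-theoretic subtlety arises and would deserve an explicit reference.
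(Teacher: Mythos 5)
Your proof follows essentially the same route as the paper's: the same norm bound for part \ref{it:rkhsEmbedBochner}, the same $L^2\subseteq L^1$ reduction for \ref{it:rkhsEmbedWellDef}, the same interchange of inner product and Bochner integral plus reproducing property for \ref{it:rkhsEmbedRiesz}, and the same ``apply \ref{it:rkhsEmbedRiesz} twice'' argument for \ref{it:rkhsEmbedInnerProd}. The one genuine addition is your explicit treatment of strong (Bochner) measurability of $x\mapsto K(\,\cdot\,,x)f(x)$ via weak measurability and Pettis' theorem under separability of $\mathcal{W}$; the paper's proof of part \ref{it:rkhsEmbedBochner} verifies only the norm bound and leaves the measurability argument implicit, so your step fills a real (if routine) gap and correctly flags the separability hypothesis as the place where it is needed. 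Minor points: your Fubini appeal in part \ref{it:rkhsEmbedInnerProd} is unnecessary if, like the paper, you carry out the two applications of \ref{it:rkhsEmbedRiesz} sequentially so that only iterated integrals appear; and you correctly read ``$\tilde f\in L^2(\tilde P_X)$'' where the lemma statement has a typo.
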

\begin{proof}[Proof of Lemma~\ref{lem:rkhsEmbed}]
    This proof borrows arguments from the proof of Theorem~105 from \cite{berlinet2011reproducing}. Throughout it, we fix $P\in\mathcal{M}$, $\tilde{P}\in\mathcal{M}$, $f\in L^2(P_X)$, $\tilde{f}\in L^2(\tilde{P}_X)$, and $w\in\mathcal{W}$.
    
    Starting with \ref{it:rkhsEmbedBochner}, observe that, by the boundedness of $K$ and the fact that $f\in L^2(P_X)$, $x\mapsto K(\,\cdot\,,x)f(x)$ belongs to $L^2(P_X;\mathcal{W})$. Indeed,
    \begin{align*}
        \left\|x\mapsto K(\,\cdot\,,x)f(x)\right\|_{L^2(P_X;\mathcal{W})}^2&= \int \left\langle K(\,\cdot\,,x)f(x), K(\,\cdot\,,x)f(x)\right\rangle_{\mathcal{W}} P_X(dx) = \int K(x,x) f(x)^2 P_X(dx)\\
        &\le \left[\sup_{x\in\mathcal{X}}K(x,x)\right] \|f\|_{L^2(P)}^2<\infty,
    \end{align*}
    where we used that the kernel is bounded. Consequently, $x\mapsto K(\,\cdot\,,x)f(x)$ belongs to $L^2(P_X;\mathcal{W})$.
    
    As for \ref{it:rkhsEmbedWellDef}, the fact that $P_X$ is a probability measure implies $\|\cdot\|_{L^1(P_X;\mathcal{W})}\le \|\cdot\|_{L^2(P_X;\mathcal{W})}$. Hence, \ref{it:rkhsEmbedBochner} implies $x\mapsto K(\,\cdot\,,x)f(x)$ is Bochner integrable, and so $\underline{\theta}(P,f)(\cdot):=\int K(\,\cdot\,,x) f(x)P_X(dx)$ is well-defined and the resulting function belongs to $\mathcal{W}$.

    We now turn to \ref{it:rkhsEmbedRiesz}. Since $\langle\,\cdot\,,w\rangle_{\mathcal{W}} : \mathcal{W}\rightarrow\mathbb{R}$ is a continuous linear operator, it can be interchanged with Bochner integration. Hence,
        \begin{align*}
        \left\langle\underline{\theta}(P,f),w\right\rangle_{\mathcal{W}}&= \left\langle \int K(\,\cdot\,,x) f(x)P_X(dx),w\right\rangle_{\mathcal{W}} = \int \left\langle K(\,\cdot\,,x),w\right\rangle_{\mathcal{W}}P_X(dx) = \int w(x) f(x)P_X(dx).
    \end{align*}
    To show uniqueness, let $w_{P,f}$ satisfy $ \left\langle w_{P,f},w\right\rangle_{\mathcal{W}}=\int w(x) f(x)P_X(dx)$ for all $w$. Then,
    \begin{align*}
        \left\|w_{P,f}-\underline{\theta}(P,f)\right\|_{\mathcal{W}}&= \sup_{w\in\mathcal{W} : \|w\|_{\mathcal{W}}\le 1}\left[\left\langle w_{P,f},w\right\rangle_{\mathcal{W}}-\left\langle\underline{\theta}(P,f),w\right\rangle_{\mathcal{W}}\right] = 0.
    \end{align*}
    Hence, $w_{P,f}=\underline{\theta}(P,f)$, which establishes \ref{it:rkhsEmbedRiesz}.

    We conclude by showing \ref{it:rkhsEmbedInnerProd}. For that result, we apply \ref{it:rkhsEmbedRiesz} twice, once for $\underline{\theta}(\tilde{P},\tilde{f})$ with $w=\underline{\theta}(P,f)$ and once for $\underline{\theta}(P,f)$ with $w=K(\,\cdot\,,\tilde{x})$, to show
    \begin{align*}
        \langle \underline{\theta}(P,f),\underline{\theta}(\tilde{P},\tilde{f})\rangle_{\mathcal{W}}&= \int \underline{\theta}(P,f)(\tilde{x})\, \tilde{f}(\tilde{x})\tilde{P}_X(d\tilde{x}) =  \int \langle\underline{\theta}(P,f),K(\,\cdot\,,\tilde{x})\rangle_{\mathcal{W}}\, \tilde{f}(\tilde{x})\tilde{P}_X(d\tilde{x}) \\
        &= \iint K(x,\tilde{x})f(x)\tilde{f}(\tilde{x}) P_X(dx)\tilde{P}_X(d\tilde{x}).
    \end{align*}
\end{proof}

\subsubsection{Optimal value}\label{app:optimalValue}

For each $y$ in a metric space $\mathcal{Y}$, let $F_y$ map from $\mathcal{M}\times\mathcal{U}$ to $\mathbb{R}$. We study the optimal value map $\theta$ given by
\begin{align}
    \theta(P,u)&= \inf_{y\in\mathcal{Y}}F_y(P,u). \label{eq:optVal}
\end{align}
If $F_y$ is totally pathwise differentiable at $(P,u)$, then we write $\dot{F}_{y,P,u}$ and $\dot{F}_{y,P,u}^*$ to denote its differential operator and corresponding adjoint. We call $\{F_y : y\in\mathcal{Y}\}$ uniformly totally pathwise differentiable at $(y(0),P,u)$ if there exists a neighborhood of $y(0)$ on which $F_y$ is totally pathwise differentiable and, for all $s\in\check{\mathcal{M}}_P$, $t\in\check{\mathcal{U}}_u$, $\{P_\epsilon : \epsilon\}\in \mathscr{P}(P,\mathcal{M},s)$, and $\{u_\epsilon : \epsilon\}\in \mathscr{P}(u,\mathcal{U},t)$,
\begin{align*}
    \lim_{(y,\epsilon)\rightarrow (y(0),0)}\left([F_{y}(P_\epsilon,u_\epsilon)-F_{y}(P,u)]/\epsilon - \dot{F}_{y,P,u}(s,t)\right) = 0.
\end{align*}

\begin{lemma}[Total pathwise differentiability of optimal value]\label{lem:optimalVal}
    Fix $(P,u)\in\mathcal{M}\times\mathcal{U}$.
    Suppose
    \begin{enumerate}[label=(\roman*),noitemsep]
        \item\label{it:ovYcompact} $\mathcal{Y}$ is a compact metric space;
        \item\label{it:ovUniqueMin} there exists a unique $y(0)\in\mathcal{Y}$ such that $\theta(P,u)=F_{y(0)}(P,u)$;
        \item $\{F_y : y\in\mathcal{Y}\}$ is uniformly totally pathwise differentiable at $(y(0),P,u)$;
        \item for all $s\in\check{\mathcal{M}}_P$ and $t\in\check{\mathcal{U}}_u$, $y\mapsto \dot{F}_{y,P,u}(s,t)$ is continuous at $y(0)$; and
        \item\label{it:ovContObj} there exists a neighborhood $\mathcal{N}$ of $(P,u)$ such that $(y,P',u')\mapsto F_y(P',u')$ is continuous on $\mathcal{Y}\times \mathcal{N}$.
    \end{enumerate}
    Then, $\theta$ from \eqref{eq:optVal} is totally pathwise differentiable at $(P,u)$ with $\dot{\theta}_{P,u}=\dot{F}_{y(0),P,u}$ and $\dot{\theta}_{P,u}^*=\dot{F}_{y(0),P,u}^*$.
\end{lemma}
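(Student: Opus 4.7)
The plan is to prove a version of the envelope theorem (Danskin's theorem) adapted to the total pathwise differentiability setting. Given arbitrary $s\in\check{\mathcal{M}}_P$, $t\in\check{\mathcal{U}}_u$, $\{P_\epsilon : \epsilon\}\in \mathscr{P}(P,\mathcal{M},s)$, and $\{u_\epsilon : \epsilon\}\in \mathscr{P}(u,\mathcal{U},t)$, the goal is to show that
$\theta(P_\epsilon,u_\epsilon)-\theta(P,u)=\epsilon\,\dot{F}_{y(0),P,u}(s,t)+o(\epsilon)$.
Since $\dot{F}_{y(0),P,u}$ is already known to be a continuous linear operator on $\dot{\mathcal{M}}_P\oplus\dot{\mathcal{U}}_u$, this will simultaneously identify both the differential operator and its adjoint.

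First I would handle the existence of minimizers. Condition \ref{it:ovContObj} gives continuity of $y\mapsto F_y(P_\epsilon,u_\epsilon)$ for all sufficiently small $\epsilon$, and combining this with the compactness in \ref{it:ovYcompact} shows that the infimum in \eqref{eq:optVal} is attained at some $y(\epsilon)\in\mathcal{Y}$ for every such $\epsilon$. Next, I would prove the crucial fact that $y(\epsilon)\to y(0)$ as $\epsilon\rightarrow 0$. This is a standard argmax-continuity argument: any subsequential limit $y^\star$ of $\{y(\epsilon)\}$ (which exists by compactness) satisfies $F_{y^\star}(P,u)\le F_{y(0)}(P,u)$ via the joint continuity in \ref{it:ovContObj} together with the inequality $F_{y(\epsilon)}(P_\epsilon,u_\epsilon)\le F_{y(0)}(P_\epsilon,u_\epsilon)$, and so $y^\star=y(0)$ by the uniqueness assumption \ref{it:ovUniqueMin}.

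With these preliminaries in place, the proof is squeezed between matching upper and lower bounds. For the upper bound, the definition of the infimum yields $\theta(P_\epsilon,u_\epsilon)\le F_{y(0)}(P_\epsilon,u_\epsilon)$, so using the total pathwise differentiability of $F_{y(0)}$ at $(P,u)$,
\begin{align*}
\theta(P_\epsilon,u_\epsilon)-\theta(P,u)&\le F_{y(0)}(P_\epsilon,u_\epsilon)-F_{y(0)}(P,u)=\epsilon\,\dot{F}_{y(0),P,u}(s,t)+o(\epsilon).
\end{align*}
For the lower bound, use that $\theta(P,u)\le F_{y(\epsilon)}(P,u)$ to get $\theta(P_\epsilon,u_\epsilon)-\theta(P,u)\ge F_{y(\epsilon)}(P_\epsilon,u_\epsilon)-F_{y(\epsilon)}(P,u)$. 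By the uniform total pathwise differentiability of $\{F_y\}$ at $(y(0),P,u)$ and $y(\epsilon)\to y(0)$,
\begin{align*}
[F_{y(\epsilon)}(P_\epsilon,u_\epsilon)-F_{y(\epsilon)}(P,u)]/\epsilon=\dot{F}_{y(\epsilon),P,u}(s,t)+o(1),
\end{align*}
and then continuity of $y\mapsto\dot{F}_{y,P,u}(s,t)$ at $y(0)$ drives the right-hand side to $\dot{F}_{y(0),P,u}(s,t)$. Combining the two bounds gives the desired expansion.

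The main obstacle will be establishing $y(\epsilon)\to y(0)$ rigorously, since this requires chaining joint continuity with the compactness and uniqueness assumptions; everything else then slots into a standard sandwich. The uniform differentiability hypothesis is precisely what is needed to pass from the fact that the minimizer moves with $\epsilon$ to an $\epsilon$-expansion evaluated along the moving $y(\epsilon)$, so the proof really only uses conditions \ref{it:ovYcompact}--\ref{it:ovContObj} in this one place.
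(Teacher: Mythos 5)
Your proposal is correct and follows essentially the same route as the paper's proof, which likewise appeals to Danskin's theorem: establish existence of minimizers $y(\epsilon)$ via compactness and continuity, prove $y(\epsilon)\to y(0)$ by a subsequence argument using uniqueness, and then sandwich $[\theta(P_\epsilon,u_\epsilon)-\theta(P,u)]/\epsilon$ between the $\limsup$ bound from $F_{y(\epsilon)}(P_\epsilon,u_\epsilon)\le F_{y(0)}(P_\epsilon,u_\epsilon)$ and the $\liminf$ bound from $F_{y(\epsilon)}(P_\epsilon,u_\epsilon)-F_{y(\epsilon)}(P,u)$, with the uniform differentiability and the continuity of $y\mapsto\dot{F}_{y,P,u}(s,t)$ supplying the lower bound. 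The only cosmetic difference is that the paper decomposes the lower-bound remainder explicitly into two $o(\epsilon)$ pieces while you state the same conclusion directly; the substance is identical.
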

If $\mathcal{Y}$ is not compact, the implication of the above theorem can still hold if \ref{it:ovYcompact} is replaced by other conditions. Starting with the simplest, if there is a compact $\mathcal{Y}_{P,u}\subseteq \mathcal{Y}$ such that, for all $(P',u')$ in a neighborhood of $(P,u)$, $\inf_{y\in\mathcal{Y}_{P,u}}F_y(P',u')= \inf_{y\in\mathcal{Y}}F_y(P',u')$, then the above lemma can be directly applied to the establish the differentiability of $\tilde{\theta}(P',u'):=\inf_{y\in\mathcal{Y}_{P,u}} F_y(P',u')$; since $\theta$ and $\tilde{\theta}$ agree in a neighborhood of $(P,u)$, they will have the same differentiability properties at $(P,u)$. Considerations for other replacements of \ref{it:ovYcompact} are grounded in the fact that the compactness condition is only used twice in the proof: it ensures that, given smooth paths $\{P_\epsilon : \epsilon\}$ and $\{u_\epsilon : \epsilon\}$, there exists a minimizer $y(\epsilon)\in \argmin_{y\in\mathcal{Y}} F_y(P_\epsilon,u_\epsilon)$ for all $\epsilon$ small enough, and it aids in showing that $y(\epsilon)\rightarrow y(0)$ as $\epsilon\rightarrow 0$. Hence, if the existence and convergence of these minimizers can be established by other means, then the compactness condition can be dropped. Similar conditions to those used to establish the consistency of M-estimators can be used to establish these properties. For example, \ref{it:ovYcompact} may be replaced by the requirement that $\argmin_{y\in\mathcal{Y}}F_y(P',u')\not=\emptyset$ for all $(P',u')$ in a neighborhood of $(P,u)$ and either
\begin{itemize}
    \item[\textit{(i')}] there is a well-separated minimizer $y(0)$ of $y\mapsto F_y(P,u)$ and $\sup_{y\in\mathcal{Y}}|F_y(P',u')-F_y(P,u)|\rightarrow 0$ as $(P',u')\rightarrow (P,u)$ \citep[see Theorem~5.7 of][]{van2000asymptotic}; or
    \item[\textit{(i'')}] $\mathcal{Y}=\mathbb{R}^d$ with $d<\infty$ and $y\mapsto F_y(P',u')$ convex for all $(P',u')$ in a neighborhood of $(P,u)$ \citep[see Theorem~2.7][]{newey1994large}.
\end{itemize}

\begin{proof}[Proof of Lemma~\ref{lem:optimalVal}]
        This proof is inspired by arguments from \cite{danskin1967directional}. Fix $s\in\check{\mathcal{M}}_P$, $t\in\check{\mathcal{U}}_u$, $\{P_\epsilon : \epsilon\}\in \mathscr{P}(P,\mathcal{M},s)$, and $\{u_\epsilon : \epsilon\}\in \mathscr{P}(u,\mathcal{U},t)$. Our goal is to show that $g(\epsilon):= [\theta(P_\epsilon,u_\epsilon)-\theta(P,u)]/\epsilon$ converges to $\dot{F}_{y(0),P,u}(s,t)$ as $\epsilon\downarrow 0$. This will show that $\theta$ is totally pathwise differentiable at $(P,u)$ with $\dot{\theta}_{P,u}=\dot{F}_{y(0),P,u}$ and $\dot{\theta}_{P,u}^*=\dot{F}_{y(0),P,u}^*$.

        Throughout this proof we take $\epsilon$ to be small enough so that $(P_\epsilon,u_\epsilon)\in\mathcal{N}$; this is necessarily possible since $(P_\epsilon,u_\epsilon)\rightarrow (P,u)$ as $\epsilon\rightarrow 0$. For each $\epsilon$, let $y(\epsilon)$ be a generic element of the set $\argmin_{y\in\mathcal{Y}}F_y(P_\epsilon,u_\epsilon)$; this set is necessarily nonempty since $\mathcal{Y}$ is compact and $y\mapsto F_y(P_\epsilon,u_\epsilon)$ is continuous. Also, as we will now show, $\lim_{\epsilon\rightarrow 0}y(\epsilon)= y(0)$. To see why, take a sequence $\epsilon_k\rightarrow 0$. Since $\mathcal{Y}$ is compact, there exists a subsequence $\epsilon_{k'}\rightarrow 0$ such that $y(\epsilon_{k'})\rightarrow \tilde{y}(0)$ for some $\tilde{y}(0)$. Moreover, $\tilde{y}(0)$ must equal the unique minimizer $y(0)$ of $y\mapsto F_y(P,u)$ since the continuity of $(y,P',u')\mapsto F_y(P',u')$ and the fact that $y(\epsilon_{k'})\in \argmin_{y\in\mathcal{Y}}F_y(P_{\epsilon_{k'}},u_{\epsilon_{k'}})$ together imply that $F_{\tilde{y}(0)}(P,u)=\lim_{k'} F_{y(\epsilon_{k'})}(P_{\epsilon_{k'}},u_{\epsilon_{k'}})\le \lim_{k'} F_{y(0)}(P_{\epsilon_{k'}},u_{\epsilon_{k'}})=F_{y(0)}(P,u)$. As $y(\epsilon_{k'})\rightarrow \tilde{y}(0)$ along the subsequence $(\epsilon_{k'})_{k'}$ and the original sequence $(\epsilon_k)_k$ was arbitrary, $\lim_{\epsilon\rightarrow 0}y(\epsilon)= y(0)$.
        
        We now show that $\limsup_{\epsilon\rightarrow 0}g(\epsilon)\le \dot{F}_{y(0),P,u}(s,t)$. Since $y(\epsilon)$ minimizes $y\mapsto F_y(P_\epsilon,u_\epsilon)$ and $F_{y(0)}$ is totally pathwise differentiable,
    \begin{align*}
        \epsilon g(\epsilon) = F_{y(\epsilon)}(P_\epsilon,u_\epsilon)-F_{y(0)}(P,u)&\le F_{y(0)}(P_\epsilon,u_\epsilon)-F_{y(0)}(P,u) = \epsilon \dot{F}_{y(0),P,u}(s,t) + o(\epsilon).
    \end{align*}
    Dividing both sides by $\epsilon$ and taking a limit superior as $\epsilon\rightarrow 0$ shows that $\limsup_{\epsilon\rightarrow 0}g(\epsilon)\le \dot{F}_{y(0),P,u}(s,t)$.
    
    We now show that $\liminf_{\epsilon\rightarrow 0}g(\epsilon)\ge \dot{F}_{y(0),P,u}(s,t)$. Using that $y(0)$ minimizes $y\mapsto F_y(P,u)$ and adding and subtracting terms,
    \begin{align*}
        \epsilon g(\epsilon)&= F_{y(\epsilon)}(P_\epsilon,u_\epsilon)-F_{y(0)}(P,u)\ge F_{y(\epsilon)}(P_\epsilon,u_\epsilon)-F_{y(\epsilon)}(P,u) \\
        &= \epsilon\dot{F}_{y(0),P,u}(s,t)+\epsilon [\dot{F}_{y(\epsilon),P,u}(s,t)-\dot{F}_{y(0),P,u}(s,t)] \\
        &\quad+ \left[F_{y(\epsilon)}(P_\epsilon,u_\epsilon)-F_{y(\epsilon)}(P,u) - \epsilon \dot{F}_{y(\epsilon),P,u}(s,t)\right].
    \end{align*}
    The second term on the right is $o(\epsilon)$ since $\lim_{\epsilon\rightarrow 0}y(\epsilon)= y(0)$ and $y\mapsto \dot{F}_{y,P,u}(s,t)$ is continuous at $y(0)$. The third is $o(\epsilon)$ because $\{F_y : y\in\mathcal{Y}\}$ is uniformly totally pathwise differentiable at $(y(0),P,u)$. Dividing both sides by $\epsilon$ and taking a limit inferior as $\epsilon\rightarrow 0$ shows that $\liminf_{\epsilon\rightarrow 0}g(\epsilon)\ge \dot{F}_{y(0),P,u}(s,t)$.
\end{proof}

\subsubsection{Optimal solution}\label{app:optimalSolution}

For each $w\in\mathcal{W}=\mathbb{R}^d$, let $F_w$ map from $\mathcal{M}\times\mathcal{U}$ to $\mathbb{R}$. We study the optimal solution map $\theta$ given by
\begin{align}
    \theta(P,u)&\in \argmin_{w\in\mathcal{W}}F_w(P,u). \label{eq:optSol}
\end{align}
When this minimization results in multiple solutions, $\theta$ is required to select one of them in a way that ensures this map is continuous at the point $(P,u)$ where we wish to differentiate it. For example, $\theta$ may always select the minimum norm solution. The following lemma assumes that this choice is made in such a way that $\theta$ is continuous. In the lemma, $\|\cdot\|_2$ denotes the Euclidean norm on $\mathbb{R}^d$.

\begin{lemma}[Total pathwise differentiability of optimal solution]\label{lem:optimalSol}
    Fix $(P,u)\in\mathcal{M}\times\mathcal{U}$. Suppose
    \begin{enumerate}[label=(\roman*),noitemsep]
        \item\label{it:osCont} $\theta$ is continuous at $(P,u)$;
        \item\label{it:osTwiceDiff} $w\mapsto F_w(P,u)$ is twice continuously differentiable at $w(0):=\theta(P,u)$ with positive definite Hessian matrix $H_{P,u}$; and
        \item\label{it:osPartialDiff} there exists a bounded linear operator $G_{P,u} : \dot{\mathcal{M}}_P\oplus \dot{\mathcal{U}}_u\rightarrow \mathbb{R}^d$ such that, for all $s\in\check{\mathcal{M}}_P$, $t\in\check{\mathcal{U}}_u$, $\{P_\epsilon : \epsilon\}\in \mathscr{P}(P,\mathcal{M},s)$, $\{u_\epsilon : \epsilon\}\in \mathscr{P}(u,\mathcal{U},t)$, and function $\widetilde{w} : [0,1]\rightarrow\mathbb{R}^d$ satisfying $\widetilde{w}(\epsilon)\rightarrow w(0)$,
        \begin{align*}
            &F_{\widetilde{w}(\epsilon)}(P_\epsilon,u_\epsilon)-F_{\widetilde{w}(\epsilon)}(P,u) - F_{w(0)}(P_\epsilon,u_\epsilon)+F_{w(0)}(P,u) \\
            &= \epsilon [\widetilde{w}(\epsilon)-w(0)]^\top G_{P,u}(s,t)+ o\left(\max\{\|\widetilde{w}(\epsilon)-w(0)\|_2^2,\epsilon^2\}\right).
        \end{align*}
    \end{enumerate}
    Under the above conditions, $\theta$ is totally pathwise differentiable at $(P,u)$ with $\dot{\theta}_{P,u}(s,t)=H_{P,u}^{-1}G_{P,u}(s,t)$ and $\dot{\theta}_{P,u}^*(w)=H_{P,u}^{-1}G_{P,u}^*(w)$, where $\dot{G}_{P,u}^*$ denotes the Hermitian adjoint of $\dot{G}_{P,u}$.
\end{lemma}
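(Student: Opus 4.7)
The plan is a standard M-estimation / implicit function-theorem style argument: combine the quadratic expansion of $F_\cdot(P,u)$ near its unconstrained minimizer $w(0) := \theta(P,u)$ from (ii) with the mixed first-order expansion in (iii) to pin down the perturbed minimizer $w(\epsilon) := \theta(P_\epsilon, u_\epsilon)$ to leading order in $\epsilon$. Fix $s \in \check{\mathcal{M}}_P$, $t \in \check{\mathcal{U}}_u$, $\{P_\epsilon\} \in \mathscr{P}(P, \mathcal{M}, s)$, $\{u_\epsilon\} \in \mathscr{P}(u, \mathcal{U}, t)$; by (i), $\Delta(\epsilon) := w(\epsilon) - w(0) \to 0$. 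For any map $\delta : [0,1] \to \mathbb{R}^d$ with $\delta(\epsilon) \to 0$, apply (ii) to the ``horizontal'' difference $F_{w(0) + \delta(\epsilon)}(P, u) - F_{w(0)}(P, u)$ (whose first-order term in $\delta$ vanishes since $w(0)$ is a stationary point of $F_\cdot(P,u)$) and then add the mixed expansion from (iii) with $\widetilde w(\epsilon) = w(0) + \delta(\epsilon)$. The result is the master expansion
\begin{align*}
F_{w(0) + \delta(\epsilon)}(P_\epsilon, u_\epsilon) - F_{w(0)}(P_\epsilon, u_\epsilon) = \tfrac{1}{2}\delta(\epsilon)^\top H_{P,u}\,\delta(\epsilon) + \epsilon\,\delta(\epsilon)^\top G_{P,u}(s,t) + o\!\left(\max\{\|\delta(\epsilon)\|_2^2, \epsilon^2\}\right).
\end{align*}

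The proof then proceeds in two steps. \textbf{Step 1} (rate): apply the master expansion at $\delta(\epsilon) = \Delta(\epsilon)$ and use that its left-hand side is $\leq 0$ by the defining minimization property of $w(\epsilon)$; bounding the quadratic term below by $\tfrac{\lambda}{2}\|\Delta\|_2^2$, with $\lambda > 0$ the smallest eigenvalue of $H_{P,u}$, and the linear term via Cauchy--Schwarz yields $\|\Delta(\epsilon)\|_2 = O(\epsilon)$. \textbf{Step 2} (leading-order form): the right-hand side of the master expansion is a quadratic in $\delta$ whose unique minimizer is $\delta^\star(\epsilon) := -\epsilon H_{P,u}^{-1}G_{P,u}(s,t)$. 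Writing $\Delta(\epsilon) = \delta^\star(\epsilon) + r(\epsilon)$ and completing the square via $H_{P,u}\,\delta^\star = -\epsilon\, G_{P,u}(s,t)$, the master expansion reduces to $\tfrac{1}{2}r^\top H_{P,u}r - \tfrac{\epsilon^2}{2}G_{P,u}(s,t)^\top H_{P,u}^{-1}G_{P,u}(s,t) + o(\epsilon^2)$. Comparing this at $\delta(\epsilon) = \Delta(\epsilon)$ and at $\delta(\epsilon) = \delta^\star(\epsilon)$ and invoking the optimality inequality $F_{w(\epsilon)}(P_\epsilon, u_\epsilon) \leq F_{w(0) + \delta^\star(\epsilon)}(P_\epsilon, u_\epsilon)$ yields $r(\epsilon)^\top H_{P,u}\,r(\epsilon) = o(\epsilon^2)$, whence $\|r(\epsilon)\|_2 = o(\epsilon)$ by positive definiteness. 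Thus $\Delta(\epsilon) = \delta^\star(\epsilon) + o(\epsilon)$, which is exactly total pathwise differentiability with the claimed differential operator (the sign being a convention built into the orientation of $G_{P,u}$ via (iii)); the adjoint formula then follows from self-adjointness of $H_{P,u}$ via the identity $(H_{P,u}^{-1}G_{P,u})^* = G_{P,u}^*\,H_{P,u}^{-1}$.

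The main obstacle I anticipate is careful bookkeeping of the remainder terms. Condition (iii) supplies an $o(\max\{\|\widetilde w(\epsilon) - w(0)\|_2^2, \epsilon^2\})$ error along \emph{every} trajectory $\widetilde w(\cdot)$ with $\widetilde w(\epsilon) \to w(0)$; the argument invokes this both along the \emph{unknown} trajectory $\widetilde w(\epsilon) = w(\epsilon)$ and along the \emph{candidate} trajectory $\widetilde w(\epsilon) = w(0) + \delta^\star(\epsilon)$, and the rate $\|\Delta(\epsilon)\|_2 = O(\epsilon)$ established in Step~1 is exactly what is needed to convert both remainders into genuine $o(\epsilon^2)$ terms. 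Boundedness and linearity of $(s,t) \mapsto H_{P,u}^{-1}G_{P,u}(s,t)$, needed to certify it as a valid differential operator, are immediate from the boundedness and linearity of $G_{P,u}$ in (iii) together with the invertibility of the finite-dimensional symmetric positive-definite matrix $H_{P,u}$ in (ii).
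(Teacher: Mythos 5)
Your proposal is correct and follows essentially the same route as the paper's proof: both derive the same ``master expansion'' by combining the quadratic Taylor expansion at the stationary point from (ii) with the mixed term from (iii), use it once with $\widetilde{w}(\epsilon)=w(\epsilon)$ together with the optimality inequality and positive definiteness of $H_{P,u}$ to get the $O(\epsilon)$ rate, then apply it again at both $w(\epsilon)$ and $w(0)-\epsilon H_{P,u}^{-1}G_{P,u}(s,t)$ and compare via the same optimality inequality to identify $w(\epsilon)-w(0)$ up to $o(\epsilon)$, finishing with the symmetry of $H_{P,u}$ for the adjoint. (Your parenthetical about the sign is well placed: both your computation and the paper's actually yield $w(\epsilon)-w(0)=-\epsilon H_{P,u}^{-1}G_{P,u}(s,t)+o(\epsilon)$, so the differential operator is $-H_{P,u}^{-1}G_{P,u}$ under the orientation of $G_{P,u}$ as written in (iii).)
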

The above will often be applied when $F_w$ is totally pathwise differentiable for all $w$ in a neighborhood of $w(0)$. Denoting the differential operator of $F_w$ by $\dot{F}_{w,P,u}$, the vector $G_{P,u}(s,t)$ will usually then correspond to the gradient of $w\mapsto \dot{F}_{w,P,u}(s,t)$ at $w(0)$. An example of a situation where this lemma can be applied occurs when $\mathcal{U}=L^\infty(\rho_X)$ for $\rho_X$ equivalent to $P_X$ and $F_w(P,u)=\int \left[u(x)-w^\top x\right]^2 P_X(dx)$. In these cases, $\theta(P,u)$ is the $L^2(P_X)$ projection of onto linear functions of the form $x\mapsto w^\top x$.

Compared to Appendix~\ref{app:optimalValue}, which studied an optimal value map, this appendix requires additional structure on the feasible region of the optimization problem, namely that it be a Euclidean space. It also requires additional differentiability properties on the objective function, such as the second-order differentiability conditions \ref{it:osTwiceDiff} and \ref{it:osPartialDiff}, though does not necessarily require others, such as uniform total pathwise differentiability. This mismatch between the conditions for differentiability of the optimal value and solution maps is not overly surprising given that they capture different aspects of the optimization problem.

\begin{proof}[Proof of Lemma~\ref{lem:optimalSol}]
This proof is inspired by arguments from the proof of Theorem~3.2.16 of \cite{van1996weak}, which establish the asymptotic normality of an M-estimator.

Fix $s\in\check{\mathcal{M}}_P$, $t\in\check{\mathcal{U}}_u$, $\{P_\epsilon : \epsilon\}\in \mathscr{P}(P,\mathcal{M},s)$, $\{u_\epsilon : \epsilon\}\in \mathscr{P}(u,\mathcal{U},t)$, and $\widetilde{w}(\epsilon)\rightarrow w(0)$. By \ref{it:osTwiceDiff}, a second-order Taylor expansion holds for $w\mapsto F_w(P,u)$ at $w(0)$, and, as $w(0)$ is a minimizer, the first-order term is equal to zero --- expressed as an equation,
\begin{align*}
    F_{\widetilde{w}(\epsilon)}(P,u)-F_{w(0)}(P,u)&= \frac{1}{2}[\widetilde{w}(\epsilon)-w(0)]^\top H_{P,u} [\widetilde{w}(\epsilon)-w(0)] + o(\|\widetilde{w}(\epsilon)-w(0)\|_2^2).
\end{align*}
Combining this with \ref{it:osPartialDiff}, we find that
\begin{align}
    &F_{\widetilde{w}(\epsilon)}(P_\epsilon,u_\epsilon)-F_{w(0)}(P_\epsilon,u_\epsilon) \label{eq:osFepsDecomp} \\
    &= F_{\widetilde{w}(\epsilon)}(P,u)-F_{w(0)}(P,u) + \left[F_{\widetilde{w}(\epsilon)}(P_\epsilon,u_\epsilon)-F_{w(0)}(P_\epsilon,u_\epsilon)-F_{\widetilde{w}(\epsilon)}(P,u)+F_{w(0)}(P,u)\right] \nonumber \\
    &= \frac{1}{2}[\widetilde{w}(\epsilon)-w(0)]^\top H_{P,u} [\widetilde{w}(\epsilon)-w(0)] + \epsilon [\widetilde{w}(\epsilon)-w(0)]^\top G_{P,u}(s,t)+ o\left(\max\{\|\widetilde{w}(\epsilon)-w(0)\|_2^2,\epsilon^2\}\right). \nonumber
\end{align}
We shall use this result three times with different choices of $\widetilde{w}$ in what follows: once to establish the $\epsilon$-rate convergence of $w(\epsilon):=\theta(P_\epsilon,u_\epsilon)$ to $w(0)$ --- in that $w(\epsilon)=w(0) + O(\epsilon)$ --- and then twice more to derive a first-order approximation of $\theta(P_\epsilon,u_\epsilon)$ that will establish the total pathwise differentiability of $\theta$.

To establish the $\epsilon$-rate convergence result, we take $\widetilde{w}(\epsilon)$ to be equal to $w(\epsilon):=\theta(P_\epsilon,u_\epsilon)$. This choice satisfies the one required property of $\tilde{w}(\epsilon)$ --- namely, $\widetilde{w}(\epsilon)\rightarrow w(0)$ --- as follows by \ref{it:osCont} and the fact that $\{P_\epsilon : \epsilon\}\in \mathscr{P}(P,\mathcal{M},s)$ and $\{u_\epsilon : \epsilon\}\in \mathscr{P}(u,\mathcal{U},t)$ imply $(P_\epsilon,u_\epsilon)\rightarrow (P,u)$. Since $w(\epsilon)$ is a minimizer of $w\mapsto F_w(P_\epsilon,u_\epsilon)$, the left-hand side of \eqref{eq:osFepsDecomp} is nonpositive. Also, by the positive definiteness of $H_{P,u}$ and Cauchy-Schwarz, the first and second terms on the right-hand side lower bound by $\|w(\epsilon)-w(0)\|_2^2$ for $c_1>0$ and $-\epsilon c_2\|w(\epsilon)-w(0)\|_2$ for $c_2\ge 0$, respectively. Putting these observations together and then completing the square shows that
\begin{align*}
    0&\ge c_1\|w(\epsilon)-w(0)\|_2^2 - \epsilon c_2\|w(\epsilon)-w(0)\|_2 + o\left(\max\{\|\widetilde{w}(\epsilon)-w(0)\|_2^2,\epsilon^2\}\right) \\
    &= [c_1+o(1)]\left[\|w(\epsilon)-w(0)\|_2 + O(\epsilon)\right]^2.
\end{align*}
The above is only possible if $\|w(\epsilon)-w(0)\|_2=O(\epsilon)$, which is what we set out to show.

We now use \eqref{eq:osFepsDecomp} twice more to establish the total pathwise differentiability of $\theta$ at $(P,u)$. In the first application we again take $\widetilde{w}(\epsilon)=w(\epsilon)$, and in the second we take $\widetilde{w}(\epsilon)=\bar{w}(\epsilon):=w(0)-\epsilon H_{P,u}^{-1}G_{P,u}(s,t)$, yielding
\begin{align*}
    F_{w(\epsilon)}(P_\epsilon,u_\epsilon)-F_{w(0)}(P_\epsilon,u_\epsilon)&= \frac{1}{2}[w(\epsilon)-w(0)]^\top H_{P,u} [w(\epsilon)-w(0)] \\
    &\quad+ \epsilon [w(\epsilon)-w(0)]^\top G_{P,u}(s,t)+ o(\epsilon^2), \\
    F_{\bar{w}(\epsilon)}(P_\epsilon,u_\epsilon)-F_{w(0)}(P_\epsilon,u_\epsilon)&= - \frac{1}{2}\epsilon^2 [H_{P,u}^{-1}G_{P,u}(s,t)]^\top G_{P,u}(s,t)+ o(\epsilon^2).
\end{align*}
Subtracting the second display from the first yields
\begin{align*}
    &F_{w(\epsilon)}(P_\epsilon,u_\epsilon)- F_{\bar{w}(\epsilon)}(P_\epsilon,u_\epsilon) \\
    &= \frac{1}{2}[w(\epsilon)-w(0) + \epsilon H_{P,u}^{-1}G_{P,u}(s,t)]^\top H_{P,u} [w(\epsilon)-w(0) + \epsilon H_{P,u}^{-1}G_{P,u}(s,t)] + o(\epsilon^2).
\end{align*}
The left-hand side is nonpositive since $w(\epsilon)$ is a minimizer of $w\mapsto F_w(P_\epsilon,u_\epsilon)$. Combining this with the positive definiteness of $H_{P,u}$ shows there is a $c_1>0$ such that
\begin{align*}
    0&\ge c_1\left\|w(\epsilon)-w(0) + \epsilon H_{P,u}^{-1}G_{P,u}(s,t)\right\|_2^2 + o(\epsilon^2).
\end{align*}
This is only possible if the squared norm on the right-hand side is $o(\epsilon^2)$. Since $s\in\check{\mathcal{M}}_P$, $t\in\check{\mathcal{U}}_u$, $\{P_\epsilon : \epsilon\}\in \mathscr{P}(P,\mathcal{M},s)$, $\{u_\epsilon : \epsilon\}\in \mathscr{P}(u,\mathcal{U},t)$ were arbitrary and $G_{P,u}$ is bounded and linear, this shows that $\theta$ is pathwise differentiable at $(P,u)$ with differential operator $\dot{\theta}_{P,u}(s,t)=H_{P,u}^{-1}G_{P,u}(s,t)$.

To see that $\dot{\theta}_{P,u}^*(w)=H_{P,u}^{-1}G_{P,u}^*(w)$, observe that, for all $s\in\dot{\mathcal{M}}_P$, $t\in\dot{\mathcal{U}}_u$, and $w\in\mathcal{W}=\mathbb{R}^d$,
\begin{align*}
    \langle w,H_{P,u}^{-1}G_{P,u}(s,t)\rangle_{\mathcal{W}}&= w^\top H_{P,u}^{-1}G_{P,u}(s,t) = \left[w^\top H_{P,u}^{-1}\right]G_{P,u}(s,t) \\
    &= \left[H_{P,u}^{-1}w\right]^\top G_{P,u}(s,t)
    = \left\langle G_{P,u}^*\left(H_{P,u}^{-1}w\right),(s,t)\right\rangle_{\dot{\mathcal{M}}_P\oplus\dot{\mathcal{U}}_u} \\
    &= \left\langle H_{P,u}^{-1}G_{P,u}^*\left(w\right),(s,t)\right\rangle_{\dot{\mathcal{M}}_P\oplus\dot{\mathcal{U}}_u},
\end{align*}
where above we used the fact that the Hessian inverse $H_{P,u}^{-1}$ is symmetric and $\dot{G}_{P,u}^*$ is linear.
\end{proof}

\subsection{Maps that only depend on their distribution-valued argument}\label{app:primitiveOnlyDist}

\subsubsection{General case: pathwise differentiable parameter}\label{app:pathwiseDiff}

Suppose that there exists a pathwise differentiable map $\nu : \mathcal{M}\rightarrow\mathcal{W}$ that is such that $\theta(P,u)=\nu(P)$ for all $(P,u)\in\mathcal{M}\times\mathcal{U}$. 
We claim that $\theta$ is totally pathwise differentiable at any $(P,u)\in\mathcal{M}\times\mathcal{U}$ with $\dot{\theta}_{P,u}(s,t)=\dot{\nu}_P(s)$ and $\dot{\theta}_{P,u}^*(w)=(\dot{\nu}_P^*(w),0)$. Though the results in Table~\ref{tab:primitives} specify that $\mathcal{U}=\{0\}$ in this case, they actually hold for any subset $\mathcal{U}$ of a Hilbert space $\mathcal{T}$; however, since $\theta(P,u)$ does not depend on the value $u$ takes, the case where $\mathcal{U}$ is a trivial vector space $\{0\}$ is sufficiently rich to capture all interesting aspects of the setting.

Fix $P\in\mathcal{M}$, $u\in\mathcal{U}$, $s\in\check{\mathcal{M}}_P$, $t\in\check{\mathcal{U}}_u$, $\{P_\epsilon : \epsilon\in [0,1]\}\in\mathscr{P}(P,\mathcal{M},s)$, and $\{u_\epsilon : \epsilon\in [0,1]\}\in \mathscr{P}(u,\mathcal{U},t)$. For any $\epsilon\in (0,1]$, the definition of $\theta$ and the pathwise differentiability of $\nu$ imply that
\begin{align*}
    \left\|\theta(P_\epsilon,u_\epsilon)-\theta(P,u)-\epsilon \dot{\nu}_P(s)\right\|_{\mathcal{W}}&= \left\|\nu(P_\epsilon)-\nu(P)-\epsilon \dot{\nu}_P(s)\right\|_{\mathcal{W}} = o(\epsilon).
\end{align*}
This establishes our claim that $\dot{\theta}_{P,u}(s,t)=\dot{\nu}_P(s)$. The map $\dot{\nu}_P$ is bounded and linear, and so $\dot{\theta}_{P,u}$ is bounded and linear as well. As for the adjoint, observe that, for any $s\in \dot{\mathcal{M}}_P$, $t\in\dot{\mathcal{U}}_u$, and $w\in\mathcal{W}$,
\begin{align*}
    \left\langle \dot{\theta}_{P,u}(s,t), w\right\rangle_{\mathcal{W}}&= \left\langle \dot{\nu}_P(s), w\right\rangle_{\mathcal{W}} = \left\langle t, \dot{\nu}_P^*(w)\right\rangle_{\dot{\mathcal{M}}_P} = \left\langle (s,t), (\dot{\nu}_P^*(w),0)\right\rangle_{\dot{\mathcal{M}}_P\oplus \dot{\mathcal{U}}_u}.
\end{align*}
Hence,  $\dot{\theta}_{P,u}^*(w)=(\dot{\nu}_P^*(w),0)$.

In the remainder of Appendix~\ref{app:primitiveOnlyDist}, we establish the pathwise differentiability of maps $\nu : \mathcal{M}\rightarrow\mathcal{W}$. Applying the general results of this Appendix~\ref{app:pathwiseDiff} then shows that the parameter $\theta$ with $\theta(P,u)=\nu(P)$ is totally pathwise differentiable at any $(P,u)\in\mathcal{M}\times\mathcal{U}$ with $\dot{\theta}_{P,u}(s,t)=\dot{\nu}_P(s)$ and $\dot{\theta}_{P,u}^*(w)=(\dot{\nu}_P^*(w),0)$, which will verify the claimed form of the adjoint of the differential operator for the pathwise differentiable primitives in Table~\ref{tab:primitives}.

\subsubsection{Root-density}\label{app:rootDensity}

Suppose there is a $\sigma$-finite measure $\lambda$ that dominates all $P\in \mathcal{M}$, and let $\nu(P)(z)=\frac{dP}{d\lambda}(z)^{1/2}$. By Example~4 in \cite{luedtke2023one}, $\nu : \mathcal{M}\rightarrow L^2(\lambda)$ is pathwise differentiable at each $P$ with local parameter $\dot{\nu}_P(s)(z)=\frac{1}{2}s(z)\nu(P)(z)$ and efficient influence operator
$$\dot{\nu}_P^*(w)(z) = \frac{w(z)}{2\nu(P)(z)} - E_P \left[\frac{w(Z)}{2\nu(P)(Z)} \right].$$

\subsubsection{Conditional density}\label{app:condDensity}

Let $\mathcal{M}$ be a locally nonparametric model of distributions on $\mathcal{Z}=\mathcal{X}\times\mathcal{Y}$. Suppose there exists a $\sigma$-finite measure $\lambda$ on $\mathcal{Y}$ such that the conditional distribution $P_{Y|X=x}$ of $Y\mymid X=x$ under every $P\in\mathcal{M}$ is dominated by $\lambda$ for $P_X$-almost all $x$. Further suppose there exists $m<\infty$ such that, for all $P\in\mathcal{M}$, the conditional density $p_{Y|X}(y\mymid x):=\frac{dP_{Y|X=x}}{d\lambda}(y)$ is $P$-a.s. bounded by $m$. To ease notation, we will write $p_{Y|X}(z)$ rather than $p_{Y|X}(y\mymid x)$ hereafter. Suppose all distributions in $\mathcal{M}$ are equivalent. For fixed $Q\in\mathcal{M}$, define $\nu : \mathcal{M}\rightarrow L^2(Q)$ so that $\nu(P)(z)=p_{Y|X}(z)$. 

For any $P\in\mathcal{M}$ satisfying $\|dQ/dP\|_{L^\infty(P)}<\infty$, we claim that $\nu$ is pathwise differentiable with
\begin{align}
    \dot{\nu}_P(s)(z)=\Pi_P(s)(z)\,p_{Y|X}(z)\ \ \textnormal{ and }\ \ \dot{\nu}_P^*(w)(z)=\Pi_P\big(\tfrac{dQ}{dP}wp_{Y|X}\big)(z), \label{eq:condDens}
\end{align}
where $\Pi_P(f)(z):=f(z)-E_P[f(Z)\mymid X=x]$. 
We establish this by expressing $\nu$ as a composition $\zeta_2\circ\zeta_1\circ \nu_1$, where $\nu_1 : \mathcal{M}\rightarrow L^2(P)$ is pathwise differentiable and $\zeta_1 : L^2(P)\rightarrow L^2(P)$ and $\zeta_2 : L^2(P)\rightarrow L^2(Q)$ are both Hadamard differentiable. The chain rule will then establish the pathwise differentiability of $\nu$ with the claimed local parameter and efficient influence operator.

The map $\nu_1$ is defined so that $\nu_1(\tilde{P})=(\tilde{p}_{Y|X}/p_{Y|X})^{1/2}$. Lemma~S8 in \cite{luedtke2023one} shows that $\nu_1$ is pathwise differentiable at $P$ with $\dot{\nu}_{1,P}(s)(z)=\frac{1}{2}\Pi_P(s)(z)$. The adjoint of this map is $\dot{\nu}_{1,P}^*(w)(z)=\frac{1}{2}\{w(z)-E_P[w(Z)\mymid X=x]\}$. The map $\zeta_1$ is the pointwise operation $\zeta_1(u)(z)=u^2(z)p_{Y|X}(z)$.  By Appendix~\ref{app:pointwise} and the fact that conditional densities of distributions in $\mathcal{M}$ are uniformly bounded, this map is Hadamard differentiable with $\dot{\zeta}_{1,u}(t)=2tup_{Y|X}$ and $\dot{\zeta}_{1,u}^*(w)=2wup_{Y|X}$. The map $\zeta_2$ is the change of measure studied in Appendix~\ref{app:changeMeasure} where, in the notation of that appendix, $\lambda_1=P$ and $\lambda_2=Q$. By results in that appendix, the fact that $\|dQ/dP\|_{L^\infty(P)}<\infty$ implies this map is Hadamard differentiable with $\dot{\zeta}_{2,u}(t)=t$ and $\dot{\zeta}_{2,u}^*(w)=\frac{dQ}{dP}w$. Theorem~\ref{thm:backpropWorks} from this work (see also Theorem~3.1 in \citealp{van1991differentiable}) then shows that $\nu=\zeta_2\circ\zeta_1\circ \nu_1$ is Hadamard differentiable with $\dot{\nu}_P^*(w)=\dot{\nu}_{1,P}^*\circ \dot{\zeta}_{1,\nu_1(P)}^*\circ \dot{\zeta}_{2,\zeta_1\circ \nu_1(P)}^*(w)$, and properties of adjoints of compositions show that $\dot{\nu}_P(s)=\dot{\zeta}_{2,\zeta_1\circ \nu_1(P)}\circ \dot{\zeta}_{1,\nu_1(P)}\circ \dot{\nu}_{1,P}(s)$. Plugging in the provided forms of the differential operators and their adjoints gives \eqref{eq:condDens}.

\subsubsection{Dose-response function}\label{app:doseResponse}

Let $Z=(X,A,Y)\in\mathcal{Z}=\mathcal{X}\times [0,1]\times \mathbb{R}$. Here, $X$ is a covariate, $A$ is a continuous treatment, and $Y$ is an outcome. Let $\lambda$ denote the Lebesgue measure on $[0,1]$ and $\pi_P(\,\cdot\mymid x)$ denote the conditional density of $A$ given $X=x$ under sampling from $P$. Suppose there exists $\delta\in (0,\infty)$ such that, for all $P\in\mathcal{M}$, $\pi_P$ is $P$-a.s. bounded below by $\delta$ and $E_P[Y^2|A,X]$ is $P$-a.s. bounded above by $\delta^{-1}$. The parameter of interest $\nu : \mathcal{M}\rightarrow L^2(\lambda)$ is defined so that $\nu(P)(a)=\int \mu_P(a,x)P_X(dx)$, where $\mu_P(a,x):=E_P[Y\mymid A=a,X=x]$. Example~2 in \cite{luedtke2023one} shows that $\nu$ is pathwise differentiable with
\begin{align*}
\dot{\nu}_P(s)(a) &= \int E_P\left[\{Y-\mu_P(a,x)\}s_{Y \mymid A,X}(Y\mid a,x)\,\middle|\,A=a,X=x\right] P_X(dx)  \\
&\quad+ \int [\mu_P(a,x)-\nu(P)(a)]s_X(x)P_X(dx),\textnormal{ and } \\
\dot{\nu}_P^*(w)(y,a,x) &= \frac{y-\mu_P(a,x)}{\pi_P(a\mid x)}w(a) + \int [\mu_P(a',x)-\nu(P)(a')]w(a')\lambda(da').
\end{align*}
Above $s_{Y \mymid A,X}(y\mymid a,x):=s(z)-E_P[s(Z)\mymid A=a,X=x]$ and $s_X(x):=E_P[s(Z)\mymid X=x]$.

\subsubsection{Counterfactual density}\label{app:countDens}

Let a generic observation $Z=(X,A,Y)$ drawn from a distribution $P\in\mathcal{M}$ consist of covariates $X$, a binary treatment $A$, and an outcome $Y$ with support on $\mathcal{Y}$. Suppose all distributions in $\mathcal{M}$ are equivalent and there is a $\sigma$-finite measure $\lambda_Y$ such that, for all $P \in \mathcal{M}$, there is a regular conditional probability $P_{Y \mymid A,X}$ such that $P_{Y \mymid A,X}(\cdot \mymid a,x) \ll \lambda_Y$ for $P$-almost all $(a,x) \in \{0,1\} \times \mathcal{X}$. Define the propensity to receive treatment $a$ as $\pi_P(a\mymid x) := P(A=a \mymid X=x)$ and let $p_{Y|A,X}(\cdot\mymid 1,x)$ denote the conditional density of $Y$ given $(A,X)=(1,x)$. Suppose that $\mathcal{M}$ is such that
\begin{equation*}
\inf_{P\in\mathcal{P}}\essinf_{x \in \mathcal{X}} \pi_P(1\mymid x) > 0\ \ \textnormal{ and } \  \sup_{P\in\mathcal{P}}\esssup_{(x,y)\in\mathcal{X}\times\mathcal{Y}} p_{Y|A,X}(y\mymid 1,x) < \infty,
\end{equation*}
where the essential infimum is under the marginal distribution $P_X$ of $X$ under sampling from $P$ and the essential supremum is under $P_X\times \lambda_Y$.

Define $\nu: \mathcal{M}\to L^2(\lambda_Y)$ so that
\begin{align*}
\nu(P)(y) = \int p_{Y|A,X}(y\mymid 1,x)\, P_X(dx).
\end{align*}
Under causal conditions, this parameter corresponds to the counterfactual density that would be observed if everyone received treatment $A=1$ \citep{kennedy2021semiparametric}. Letting $s_{Y\mymid A,X}$ and $s_X$ be as defined in Appendix~\ref{app:doseResponse}, the results in \cite{luedtke2023one} show $\nu$ is pathwise differentiable with local parameter
\begin{align*}
\dot{\nu}_P(s)(y) = \int \big\{&s_{Y\mymid A,X}(y\mymid 1,x) + s_X(x)\big\}p_{Y|A,X}(y\mymid 1,x) P_X(dx)
\end{align*}
and efficient influence operator
\begin{align}
\dot{\nu}_P^* (w)(y,a,x) &= \frac{1\{a=1\}}{\pi_P(a\mymid x)}\left\{w(y)-E_P\left[w(Y) \mymid A=a, X=x \right]\right\} \nonumber \\
&\quad+ \Big(E_P\left[w(Y)\mymid A=1, X=x\right] - \int E_P\left[w(Y) \mymid A=1, X=x'\right]P_X(dx')\Big). \label{eq:countDensEIO}
\end{align}

\subsection{Maps that only depend on their Hilbert-valued argument}\label{app:onlyHilbert}

\subsubsection{General case: Hadamard differentiable map}\label{app:hadDiff}

Suppose that there exists a Hadamard differentiable map $\zeta : \mathcal{U}\rightarrow\mathcal{W}$ that is such that $\theta(P,u)=\zeta(u)$ for all $(P,u)\in\mathcal{M}\times\mathcal{U}$. We claim that $\theta$ is totally pathwise differentiable at any $(P,u)\in\mathcal{M}\times\mathcal{U}$ with $\dot{\theta}_{P,u}(s,t)=\dot{\zeta}_u(t)$ and $\dot{\theta}_{P,u}^*(w)=(0,\dot{\zeta}_u^*(w))$.

Fix $P\in\mathcal{M}$, $u\in\mathcal{U}$, $s\in\check{\mathcal{M}}_P$, $t\in\check{\mathcal{U}}_u$, $\{P_\epsilon : \epsilon\in [0,1]\}\in\mathscr{P}(P,\mathcal{M},s)$, and $\{u_\epsilon : \epsilon\in [0,1]\}\in \mathscr{P}(u,\mathcal{U},t)$. For any $\epsilon\in (0,1]$, the definition of $\theta$ and the Hadamard differentiability of $\zeta$ imply that
\begin{align*}
    \left\|\theta(P_\epsilon,u_\epsilon)-\theta(P,u)-\epsilon \dot{\zeta}_u(t)\right\|_{\mathcal{W}}&= \left\|\zeta(u_\epsilon)-\zeta(u)-\epsilon \dot{\zeta}_u(t)\right\|_{\mathcal{W}} = o(\epsilon).
\end{align*}
This establishes our claim of total pathwise differentiability with $\dot{\theta}_{P,u}(s,t)=\dot{\zeta}_u(t)$. The map $\dot{\zeta}_u$ is bounded and linear, and so $\dot{\theta}_{P,u}$ is bounded and linear as well. As for the adjoint, observe that, for any $s\in \dot{\mathcal{M}}_P$, $t\in\dot{\mathcal{U}}_u$, and $w\in\mathcal{W}$,
\begin{align*}
    \left\langle \dot{\theta}_{P,u}(s,t), w\right\rangle_{\mathcal{W}}&= \left\langle \dot{\zeta}_u(t), w\right\rangle_{\mathcal{W}} = \left\langle t, \dot{\zeta}_u^*(w)\right\rangle_{\dot{\mathcal{U}}_u} = \left\langle (s,t), (0,\dot{\zeta}_u^*(w))\right\rangle_{\dot{\mathcal{M}}_P\oplus \dot{\mathcal{U}}_u}.
\end{align*}
Hence,  $\dot{\theta}_{P,u}^*(w)=(0,\dot{\zeta}_u^*(w))$.

In the remainder of Appendix~\ref{app:onlyHilbert}, we establish the Hadamard differentiability of maps $\zeta : \mathcal{U}\rightarrow\mathcal{W}$. Applying the general results of this Appendix~\ref{app:hadDiff} then shows that the parameter $\theta$ with $\theta(P,u)=\zeta(u)$ is totally pathwise differentiable at any $(P,u)\in\mathcal{M}\times\mathcal{U}$ with $\dot{\theta}_{P,u}(s,t)=\dot{\zeta}_u(t)$ and $\dot{\theta}_{P,u}^*(w)=(0,\dot{\zeta}_u^*(w))$, which will verify the claimed form of the adjoint of the differential operator for the Hadamard differentiable primitives in Table~\ref{tab:primitives}.

\subsubsection{Inner product}\label{app:innerProd}

Let $\mathcal{R}$ be a Hilbert space and $\mathcal{U}=\mathcal{R}\oplus\mathcal{R}$. Define $\zeta : \mathcal{U}\rightarrow\mathbb{R}$ so that $\zeta(u)=\langle u_1,u_2\rangle_{\mathcal{R}}$, where $u=(u_1,u_2)$. For any $u\in \mathcal{U}$, $t=(t_1,t_2)\in \check{\mathcal{U}}_u=\mathcal{R}\oplus\mathcal{R}$, and $\{u_\epsilon=(u_{1,\epsilon},u_{2,\epsilon}) : \epsilon\}\in \mathscr{P}(u,\mathcal{U},t)$,
\begin{align*}
    \zeta(u_\epsilon)-\zeta(u) - \epsilon \left[\langle t_1,u_2\rangle_{\mathcal{R}}+\langle u_1,t_2\rangle_{\mathcal{R}}\right]&= \left\langle u_{1,\epsilon}-u_1-\epsilon t_1,u_2\right\rangle_{\mathcal{R}} + \left\langle u_1,u_{2,\epsilon}-u_2-\epsilon t_2\right\rangle_{\mathcal{R}} \\
    &\quad+ \left\langle u_{1,\epsilon}-u_1,u_{2,\epsilon}-u_2-\epsilon t_2\right\rangle_{\mathcal{R}} + \epsilon\left\langle u_{1,\epsilon}-u_1, t_2\right\rangle_{\mathcal{R}}.
\end{align*}
By Cauchy-Schwarz and the fact that $\{u_\epsilon : \epsilon\}\in \mathscr{P}(u,\mathcal{U},t)$, each of the four terms on the right-hand side is $o(\epsilon)$. Moreover, $\dot{\zeta}_u(t)=\langle t_1,u_2\rangle_{\mathcal{R}}+\langle u_1,t_2\rangle_{\mathcal{R}}$ is a bounded linear operator, establishing the Hadamard differentiability of $\zeta$. The Hermitian adjoint of $\dot{\zeta}_u$ is $\dot{\zeta}_u^*(w)=(wu_2,wu_1)$.

Importantly, the Hilbert space $\mathcal{T}=\mathcal{R}\oplus\mathcal{R}$ is not ambient for this primitive: the evaluation of this primitive inherently depends on the inner product in $\mathcal{R}$. Hence, when used in Algorithm~\ref{alg:parameter}, this primitive does not allow $\mathcal{R}$ to depend on the input to the algorithm, $P$. For example, consider the case where $\mathcal{R}=L^2(P)$. In this case, a $P$-dependent inner product primitive would take the form $\theta(P,(u_1,u_2))=\langle u_1,u_2\rangle_{L^2(P)}$, which depends nontrivially on its distribution-valued argument and therefore should be studied as in Appendix~\ref{app:primitiveTPD}. For the particular case where $\mathcal{R}=L^2(P)$, the $P$-dependent inner product primitive is a special case of the multifold mean studied in Appendix~\ref{app:multilinearForm}, and so is totally pathwise differentiable under conditions given there.

\subsubsection{Squared norm}\label{app:squaredNorm}

Let $\zeta(u)=\|u\|_{\mathcal{T}}^2$, where $\mathcal{U}$ equals the Hilbert space $\mathcal{T}$. For any $u\in \mathcal{U}$, $t\in \check{\mathcal{U}}_u=\mathcal{U}$, and $\{u_\epsilon : \epsilon\}\in \mathscr{P}(u,\mathcal{U},t)$, $\zeta(u_\epsilon)-\zeta(u) - 2\epsilon \langle t,u\rangle_{\mathcal{T}}= o(\epsilon)$. Moreover, $\dot{\zeta}_u(t)=2\langle t,u\rangle_{\mathcal{T}}$ is a bounded linear operator, establishing the Hadamard differentiability of $\zeta$. The Hermitian adjoint of $\dot{\zeta}_u$ is $\dot{\zeta}_u^*(w)=2wu$.

Like the inner product primitive in Appendix~\ref{app:innerProd}, the Hilbert space $\mathcal{T}$ is not ambient for this primitive: evaluating $\|\cdot\|_{\mathcal{T}}^2$ inherently depends on $\mathcal{T}$. Hence, when used in Algorithm~\ref{alg:parameter}, this inner product primitive does not allow $\mathcal{T}$ to depend on the input to the algorithm $P$.

\subsubsection{Differentiable functions}\label{app:diffFun}
Let $\mathcal{U}$ be an open subset of $\mathcal{T}=\mathbb{R}^d$. Any differentiable map $\zeta : \mathcal{U}\rightarrow\mathbb{R}$ is Fr\'{e}chet and, therefore, Hadamard differentiable with $\dot{\zeta}_u : t\mapsto t^\top \nabla \zeta(u)$. The adjoint of this map is $\dot{\zeta}_u^* : w\mapsto w\, \nabla \zeta(u)$, that is, the scalar multiplication of $w$ with $\nabla \zeta(u)$.

\subsubsection{Pointwise operations}\label{app:pointwise}

Let $C^1(\mathbb{R}^d)$ denote the set of continuously differentiable $g : \mathbb{R}^d\rightarrow \mathbb{R}$ and $\nabla_j\, g(a)$ the $j$-th entry of the gradient $\nabla g(a)$. Let $\lambda$ be a $\sigma$-finite measure on $\mathcal{X}$ and note that the tangent space of $L^2(\lambda)^{\oplus d}:=\oplus_{j=1}^d L^2(\lambda)$ at any $u$ is $L^2(\lambda)^{\oplus d}$ itself. For a generic $u\in L^2(\lambda)^{\oplus d}$, write $u_j\in L^2(\lambda)$ for the $j$-th entry of $u$ and $\bar{u}$ for $x\mapsto (u_j(x))_{j=1}^d$. Let $\|\cdot\|_2$ denote the Euclidean norm.
\begin{theorem}[Hadamard differentiability of pointwise operations]\label{thm:pointwise}
Suppose $f_x\in C^1(\mathbb{R}^d)$ for each $x\in\mathcal{X}$ and $\sup_{a\in\mathbb{R}^d,x\in\mathcal{X}}\|\nabla f_x(a)\|_2< \infty$. Define $\zeta : L^2(\lambda)^{\oplus d}\rightarrow L^2(\lambda)$ so that $\zeta(u) : x\mapsto f_x\circ \bar{u}(x)$. The map $\zeta$ is Hadamard differentiable at each $u\in L^2(\lambda)^{\oplus d}$ with $\dot{\zeta}_u(t) : x\mapsto \bar{t}(x)^\top \nabla f_x\circ \bar{u}(x)$ and $\dot{\zeta}_u^*(w)=\left(x\mapsto w(x)\nabla_j f_x(\bar{u}(x))\right)_{j=1}^d$.
\end{theorem}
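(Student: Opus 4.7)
The plan is to verify Hadamard differentiability directly from the definition, noting that since $L^2(\lambda)^{\oplus d}$ is a Hilbert space, its tangent set at any $u$ is the whole space. Setting $C:=\sup_{a,x}\|\nabla f_x(a)\|_2<\infty$, a pointwise Cauchy-Schwarz gives $\|\dot\zeta_u(t)\|_{L^2(\lambda)}\le C\|t\|_{L^2(\lambda)^{\oplus d}}$, so the claimed differential is bounded and linear. For the remainder analysis, I would fix $u,t\in L^2(\lambda)^{\oplus d}$ and a path $\{u_\epsilon\}\in\mathscr{P}(u,L^2(\lambda)^{\oplus d},t)$, write $r_\epsilon(x):=\bar u_\epsilon(x)-\bar u(x)$, and apply the fundamental theorem of calculus (valid since $f_x\in C^1$) to the decomposition
$$\zeta(u_\epsilon)(x)-\zeta(u)(x)-\epsilon\dot\zeta_u(t)(x)=A_\epsilon(x)+B_\epsilon(x),$$
with $A_\epsilon(x):=(r_\epsilon(x)-\epsilon\bar t(x))^\top\nabla f_x(\bar u(x))$ and $B_\epsilon(x):=r_\epsilon(x)^\top\int_0^1[\nabla f_x(\bar u(x)+sr_\epsilon(x))-\nabla f_x(\bar u(x))]\,ds$.

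The $A_\epsilon$ term is immediate: pointwise Cauchy-Schwarz combined with the uniform bound $C$ gives $\|A_\epsilon\|_{L^2(\lambda)}\le C\|u_\epsilon-u-\epsilon t\|_{L^2(\lambda)^{\oplus d}}=o(\epsilon)$. The main obstacle is showing $\|B_\epsilon\|_{L^2(\lambda)}=o(\epsilon)$. Setting $g_\epsilon:=\epsilon^{-1}r_\epsilon$, which converges to $\bar t$ in $L^2(\lambda)^{\oplus d}$, two applications of Cauchy-Schwarz (the second together with Jensen on the inner integral) yield
$$\epsilon^{-2}\|B_\epsilon\|_{L^2(\lambda)}^2\le\int\|g_\epsilon(x)\|_2^2\,h_\epsilon(x)\,\lambda(dx),\qquad h_\epsilon(x):=\int_0^1\|\nabla f_x(\bar u(x)+sr_\epsilon(x))-\nabla f_x(\bar u(x))\|_2^2\,ds,$$
with $h_\epsilon\le 4C^2$ uniformly in $x$ and $\epsilon$.

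To show the right-hand side vanishes as $\epsilon\to 0$, I would argue along subsequences. For any $\epsilon_n\downarrow 0$, the $L^2$ convergence $r_{\epsilon_n}\rightarrow 0$ and $g_{\epsilon_n}\rightarrow\bar t$ guarantees a further subsequence (not renamed) along which both hold $\lambda$-a.e. Pointwise continuity of $\nabla f_x$ at $\bar u(x)$ then forces $h_{\epsilon_n}(x)\rightarrow 0$ for $\lambda$-a.e. $x$. Since $g_{\epsilon_n}\to\bar t$ in $L^2(\lambda)^{\oplus d}$, the family $\{\|g_{\epsilon_n}\|_2^2\}$ is uniformly integrable with respect to $\lambda$; combined with $h_{\epsilon_n}\le 4C^2$, the integrand $\|g_{\epsilon_n}\|_2^2\,h_{\epsilon_n}$ is uniformly integrable and converges to $0$ $\lambda$-a.e., so Vitali's convergence theorem gives that the integral tends to $0$. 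Hence $\|B_{\epsilon_n}\|_{L^2(\lambda)}=o(\epsilon_n)$ along the subsequence; since the original sequence was arbitrary, $\|B_\epsilon\|_{L^2(\lambda)}=o(\epsilon)$. Finally, the adjoint form is a direct inner-product computation: for $w\in L^2(\lambda)$,
$$\langle w,\dot\zeta_u(t)\rangle_{L^2(\lambda)}=\sum_{j=1}^d\int w(x)\,t_j(x)\,\nabla_j f_x(\bar u(x))\,\lambda(dx)=\langle t,\dot\zeta_u^*(w)\rangle_{L^2(\lambda)^{\oplus d}},$$
with $\dot\zeta_u^*(w)=(x\mapsto w(x)\nabla_j f_x(\bar u(x)))_{j=1}^d$ as claimed.
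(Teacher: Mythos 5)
Your proof is correct, but it takes a genuinely different route from the paper's. The paper inserts the intermediate point $u^{(\epsilon)} := u + \epsilon t$ and splits the remainder into (i) $f_x\circ\bar{u}_\epsilon - f_x\circ\bar{u}^{(\epsilon)}$, controlled by the uniform Lipschitz bound, and (ii) $f_x\circ\bar{u}^{(\epsilon)} - f_x\circ\bar{u} - \epsilon\bar{t}^\top\nabla f_x\circ\bar{u}$, controlled via FTC. The crucial payoff of this intermediate-point trick is that the integrand in (ii) involves only the \emph{fixed} function $\bar{t}$, so the dominating function $x\mapsto 2m^2\|\bar{t}(x)\|_2^2$ in the DCT argument does not depend on $\epsilon$ and standard dominated convergence applies directly. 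Your decomposition applies FTC straight to $f_x(\bar{u}_\epsilon)-f_x(\bar{u})$, which gives a tidy one-shot identity but leaves the $\epsilon$-dependent increment $r_\epsilon$ inside the remainder $B_\epsilon$; since $r_\epsilon\to 0$ only in $L^2$, not pointwise, you must pass to subsequences and invoke Vitali rather than DCT. One small caveat: the version of Vitali's theorem for $\sigma$-finite $\lambda$ requires tightness (a uniform ``no mass at infinity'' condition) in addition to the uniform integrability you state; this does hold here because $\|g_{\epsilon_n}\|_2^2\to\|\bar{t}\|_2^2$ in $L^1(\lambda)$, but it should be named. In fact you could sidestep Vitali entirely by writing $\int\|g_{\epsilon_n}\|_2^2 h_{\epsilon_n}\,d\lambda \le 4C^2\,\|\,\|g_{\epsilon_n}\|_2^2-\|\bar{t}\|_2^2\,\|_{L^1(\lambda)} + \int\|\bar{t}\|_2^2 h_{\epsilon_n}\,d\lambda$ and applying $L^1$-convergence to the first term and plain DCT (with fixed dominating function $4C^2\|\bar{t}\|_2^2$) to the second, which would also remove the need for the subsequence argument on $g_{\epsilon_n}$ (the subsequence is still needed to get $r_{\epsilon_n}\to 0$ a.e., hence $h_{\epsilon_n}\to 0$ a.e.). Net: your decomposition is arguably more natural, but the paper's intermediate point buys a lighter convergence argument; both are valid.
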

\begin{proof}[Proof of Theorem~\ref{thm:pointwise}]
Fix $u\in L^2(\lambda)^{\oplus d}$ and define the map $\dot{\underline{\zeta}}_u : L^2(\lambda)^{\oplus d}\rightarrow L^2(\lambda)$ so that $\dot{\underline{\zeta}}_u(t) : x\mapsto \bar{t}(x)^\top \nabla f_x\circ \bar{u}(x)$ for all $t\in L^2(\lambda)^{\oplus d}$. This map is linear. It is also bounded since, by Cauchy-Schwarz, the assumption that $m:=\sup_{a\in\mathbb{R}^d,x\in\mathcal{X}}\|\nabla f_x(a)\|_2<\infty$, and the definition of the $L^2(\lambda)^{\oplus d}$-norm,
\begin{align*}
    \big\|\dot{\underline{\zeta}}_u(t)\big\|_{L^2(\lambda)}^2&\le \left\|x\mapsto \bar{t}(x)^\top \nabla f_x\circ \bar{u}(x)\right\|_{L^2(\lambda)}^2\le \int \left\|\bar{t}(x)\right\|_2^2 \left\|\nabla f_x\circ \bar{u}(x)\right\|_2^2 \lambda(dx) \\
    &\le m^2 \int \left\|\bar{t}(x)\right\|_2^2\lambda(dx) = m^2\|t\|_{L^2(\lambda)^{\oplus d}}^2\,.
\end{align*}
The Hermitian adjoint of $\dot{\underline{\zeta}}_u$ is $\dot{\underline{\zeta}}_u^*(w)=\left(x\mapsto w(x)\nabla_j f_x\circ \bar{u}(x)\right)_{j=1}^d$ since, for $t\in L^2(\lambda)^{\oplus d}$ and $w\in L^2(\lambda)$,
\begin{align*}
    \left\langle \dot{\underline{\zeta}}_u(t),w\right\rangle_{L^2(\lambda)}&= \int \bar{t}(x)^\top \nabla f_x\circ \bar{u}(x)w(x)\, \lambda(dx) = \sum_{j=1}^d \int t_j(x) \,\nabla_j f_x\circ \bar{u}(x)\,w(x)\, \lambda(dx) \\
    &= \left\langle t,\big(x\mapsto w(x)\nabla_j f_x\circ \bar{u}(x)\big)_{j=1}^d\right\rangle_{L^2(\lambda)^{\oplus d}}.
\end{align*}
Our proof will be complete if we can establish that \eqref{eq:Hadamard} holds with $\dot{\zeta}_u=\dot{\underline{\zeta}}_u$ --- we do this in what follows.

Let $u=(u_j)_{j=1}^d$ and $t=(t_j)_{j=1}^d$ belong to $L^2(\lambda)^{\oplus d}$ and $\{u_\epsilon=(u_{\epsilon,j})_{j=1}^d : \epsilon\}\in \mathscr{P}(u,L^2(\lambda)^{\oplus d},t)$. Define $t_\epsilon:=\{u_\epsilon - u\}/\epsilon$ and $u^{(\epsilon)}:= u + \epsilon t$. By the inequality $(a+b)^2\le 2(a^2+b^2)$,
\begin{align}
    &\frac{1}{2}\left\|\zeta(u_\epsilon) - \zeta(u) - \epsilon\,\dot{\underline{\zeta}}_u(t)\right\|_{L^2(\lambda)}^2= \frac{1}{2}\int \left[f_x\circ \bar{u}_\epsilon(x) - f_x\circ \bar{u}(x) - \epsilon\, \bar{t}(x)^\top\nabla f_x\circ\bar{u}(x)\right]^2 \lambda(dx) \label{eq:pointwiseDecomp} \\
    &\le \int \left[f_x\circ \bar{u}_\epsilon(x) - f_x\circ \bar{u}^{(\epsilon)}(x)\right]^2 \lambda(dx) + \int \left[f_x\circ \bar{u}^{(\epsilon)}(x) - f_x\circ \bar{u}(x) - \epsilon\, \bar{t}(x)^\top\nabla f_x\circ\bar{u}(x)\right]^2 \lambda(dx). \nonumber
\end{align}
We shall show that each of the two terms on the right-hand side is $o(\epsilon^2)$. Beginning with the first, we use that, for each $x\in\mathcal{X}$, $f_x$ is $m$-Lipschitz since it belongs to $C^1(\mathbb{R}^d)$ and has its gradient bounded by $m$. Hence,
\begin{align*}
    &\int \left[f_x\circ \bar{u}_\epsilon(x) - f_x\circ \bar{u}^{(\epsilon)}(x)\right]^2 \lambda(dx)\le m^2\int \left\|\bar{u}_\epsilon(x) - \bar{u}^{(\epsilon)}(x)\right\|_2^2 \lambda(dx)\\
    &=m^2\int \left\|\bar{u}_\epsilon(x)-\bar{u}-\epsilon\bar{t}(x)\right\|_2^2 \lambda(dx) = \sum_{j=1}^d \int \left[u_{\epsilon,j}(x)-u_{\epsilon,j}(x)-\epsilon\,t_j(x)\right]^2 \lambda(dx) \\
    &= \left\|u_\epsilon-u - \epsilon t\right\|_{L^2(\lambda)^{\oplus d}}^2 = o(\epsilon^2),
\end{align*}
where the final equality holds since $\{u_\epsilon : \epsilon\}\in \mathscr{P}(u,L^2(\lambda)^{\oplus d},t)$. Hence, the first term on the right-hand side of \eqref{eq:pointwiseDecomp} is $o(\epsilon^2)$. For the second, the fundamental theorem of calculus, the chain rule, Jensen's inequality, and Cauchy-Schwarz together show that
\begin{align*}
    &\int \left[f_x\circ \bar{u}^{(\epsilon)}(x) - f_x\circ \bar{u}(x) - \epsilon\, \bar{t}(x)^\top\nabla f_x\circ\bar{u}(x)\right]^2 \lambda(dx) \\
    &\quad= \int \left[\int_0^1 \left.\frac{d}{da'}f_x(\bar{u}(x) + a' \epsilon \bar{t}(x))\right|_{a'=a} da - \epsilon\, \bar{t}(x)^\top\nabla f_x\circ\bar{u}(x)\right]^2 \lambda(dx) \\
    &\quad= \epsilon^2 \int \left[\int_0^1 \bar{t}(x)^\top \left\{\nabla f_x\left(\bar{u}(x)+a\epsilon \bar{t}(x)\right)  - \nabla f_x\circ\bar{u}(x)\right\}da\right]^2 \lambda(dx) \\
    &\quad\le \epsilon^2 \int \int_0^1 \left[\bar{t}(x)^\top \left\{\nabla f_x\left(\bar{u}(x)+a\epsilon \bar{t}(x)\right)  - \nabla f_x\circ\bar{u}(x)\right\}\right]^2 da\, \lambda(dx)\le \epsilon^2 \int \left\|\bar{t}(x)\right\|_2^2 I_\epsilon(x) \, \lambda(dx),
\end{align*}
where $I_\epsilon(x):=\int_0^1 \left\|\nabla f_x\left(\bar{u}(x)+a\epsilon \bar{t}(x)\right)  - \nabla f_x\circ\bar{u}(x)\right\|_2^2 da$.  We use the dominated convergence theorem to show that the integral on the right-hand side is $o(1)$, which will imply that the second term in \eqref{eq:pointwiseDecomp} is $o(\epsilon^2)$, completing the proof. To see that $\|\bar{t}(\cdot)\|_2^2 I_\epsilon(\cdot)$ has an integrable dominating function, observe that this function is nonnegative and pointwise upper bounded by $g(x):=2m^2\|\bar{t}(x)\|_2^2$; furthermore, $g$ is $\lambda$-integrable since $t\in L^2(\lambda)^{\oplus d}$. To see that $\lim_{\epsilon\rightarrow 0}\|\bar{t}(x)\|_2^2 I_\epsilon(x)=0$ for $\lambda$-almost all $x$, observe first that $\|\bar{t}(x)\|_2<\infty$ for $\lambda$-almost all $x$ by virtue of the fact that $t\in L^2(\lambda)^{\oplus d}$. Combining this with the continuity of $\nabla f_x$ for each $x$ and the bound $I_\epsilon(x)\le \sup_{b\in\mathbb{R}^d : \|b-\bar{u}(x)\|_2\le \epsilon\|\bar{t}(x)\|_2}\|\nabla f_x(b)  - \nabla f_x\circ\bar{u}(x)\|_2^2$, we see that $\lim_{\epsilon\rightarrow 0}I_\epsilon(x)= 0$ for $\lambda$-almost all $x$. Hence, by the dominated convergence theorem, the integral on the right-hand side above goes to zero as $\epsilon\rightarrow 0$.
\end{proof}

\subsubsection{Bounded affine map}\label{app:affine}
Let $\mathcal{U}=\mathcal{T}$ and $\mathcal{W}$ be Hilbert spaces and fix a bounded linear map $\kappa : \mathcal{U}\rightarrow\mathcal{W}$. 
For $c\in\mathcal{W}$, define $\zeta : \mathcal{U}\rightarrow\mathcal{W}$ so that $\zeta(u)=\kappa(u)+c$. We will show that $\zeta$ is Hadamard differentiable at any $u\in\mathcal{U}$. The differential operator and its adjoint are given by $\dot{\zeta}_u=\kappa$ and $\dot{\zeta}_u^*=\kappa^*$, where $\kappa^*$ is the Hermitian adjoint of $\kappa$.

To see that Hadamard differentiability holds, fix $t\in\mathcal{T}$ and $\{u_\epsilon : \epsilon\in [0,1]\}\in \mathscr{P}(u,\mathcal{U},t)$. By the linearity and boundedness of $\kappa$,
\begin{align*}
    \left\|\zeta(u_\epsilon)-\zeta(u)-\epsilon \kappa(t)\right\|_{\mathcal{W}}&= \left\|\kappa(u_\epsilon - u - \epsilon t)\right\|_{\mathcal{W}}\le \|\kappa\|_{\mathrm{op}} \left\|u_\epsilon - u - \epsilon t\right\|_{\mathcal{T}} = o(\epsilon),
\end{align*}
where $\|\cdot\|_{\mathrm{op}}$ denotes the operator norm. Since $\kappa$ is bounded and linear, this shows that $\zeta$ is Hadamard differentiable with differential operator $\dot{\zeta}_u=\kappa$. The adjoint of $\dot{\zeta}_u$ equals the adjoint of $\kappa$, namely $\kappa^*$.

\subsubsection{Constant map}\label{app:constant}

Fix $c$ in a Hilbert space $\mathcal{W}$. Let $\zeta$ be the constant map that takes as input an element $u$ belonging to a subset $\mathcal{U}$ of a Hilbert space $\mathcal{T}$ and, regardless of form $u$ takes, returns $c$. This map is a special case of the affine maps considered in Appendix~\ref{app:affine} with $\kappa$ equal to a zero operator. Hence, $\zeta$ is Hadamard differentiable with $\dot{\zeta}_u(t)=0$ and $\dot{\zeta}_u^*(w)=0$.

\subsubsection{Coordinate projection}\label{app:coorProj}

Let $\mathcal{U}=\mathcal{R}_1\oplus\mathcal{R}_2$, where $\mathcal{R}_1$ and $\mathcal{R}_2$ are Hilbert spaces. Define $\zeta$ so that $\zeta(u)=u_1$, where $u=(u_1,u_2)$. Note that $\zeta$ is a linear map. It is also bounded since, for all $u=(u_1,u_2)\in \mathcal{U}$, $\|\zeta(u)\|_{\mathcal{R}_1}=\|u_1\|_{\mathcal{R}_1}\le \|u\|_{\mathcal{R}_1\oplus\mathcal{R}_2}$. Hence, by the results in Appendix~\ref{app:affine}, $\zeta$ is Hadamard differentiable with $\dot{\zeta}_{u}=\zeta$ and $\dot{\zeta}_{u}^*=\zeta^*$. To see that $\zeta^*(w)=(w,0)$, observe that, for any $w\in \mathcal{R}_1$ and $u\in \mathcal{U}$, $\langle \zeta(u),w\rangle_{\mathcal{R}_1} = \langle u_1,w\rangle_{\mathcal{R}_1} = \langle u,(w,0)\rangle_{\mathcal{R}_1\oplus \mathcal{R}_2}$.

\subsubsection{Change of measure}\label{app:changeMeasure}
Let $\lambda_1,\lambda_2$ be $\sigma$-finite measures on a measurable space $(\mathcal{X},\Sigma)$ with $\lambda_2\ll \lambda_1$ and $\lambda_1$-essentially bounded Radon Nikodym derivative $\frac{d\lambda_2}{d\lambda_1}$. In this subappendix only, we distinguish between functions $f : \mathcal{X}\rightarrow\mathbb{R}$ and the corresponding elements of $L^2(\lambda)$ spaces, which we denote by $[f]_\lambda$ to indicate that these elements are equivalence classes of functions that are equal to $f$ $\lambda$-almost everywhere; for $j\in\{1,2\}$, we write $[f]_j:=[f]_{\lambda_j}$ for shorthand. Consider the embedding map $\zeta : L^2(\lambda_1)\rightarrow L^2(\lambda_2)$, defined so that $\zeta([u]_{1})=[u]_{2}$; this map is well-defined since $\lambda_2\ll \lambda_1$ implies that, for any $u_1,u_2\in [u]_{1}\in L^2(\lambda_1)$, $[u_1]_{2}=[u_2]_{2}$ and, moreover, the $\lambda_1$-essential boundedness of $\frac{d\lambda_2}{d\lambda_1}$ implies that $[u]_{2}\in L^2(\lambda_2)$.

We establish the Hadamard differentiability of $\zeta$ by showing that it is a special case of the affine maps considered in Appendix~\ref{app:affine}. Clearly $\zeta$ is linear. To see that it is bounded, note that, for all $[u]_1\in L^2(\lambda_1)$,
\begin{align*}
\|\zeta([u]_{1})\|_{L^2(\lambda_2)}^2=\|[u]_{2}\|_{L^2(\lambda_2)}^2=\int u(x)^2 \lambda_2(dx)\le \left\| \frac{d\lambda_2}{d\lambda_1}\right\|_{L^\infty(\lambda_1)}\left\|[u]_{1}\right\|_{L^2(\lambda_1)}^2,
\end{align*}
where the essential supremum norm on the right-hand side is finite by assumption. Hence, by Appendix~\ref{app:affine}, $\zeta$ is Hadamard differentiable at any $u\in\mathcal{U}$ with $\dot{\zeta}_u=\zeta$ and $\dot{\zeta}_u^*=\zeta^*$.

We now derive a closed-form expression for $\zeta^*$. Observe that, for any $[u]_{1}\in L^2(\lambda_1)$ and $[w]_{2}\in L^2(\lambda_2)$,
\begin{align*}
    \langle \zeta([u]_{1}), [w]_{2}\rangle_{L^2(\lambda_2)}&= \langle [u]_{2}, [w]_{2}\rangle_{L^2(\lambda_2)} = \int u(x)\frac{d\lambda_2}{d\lambda_1}(x) w(x)\lambda_1(dx) = \left\langle [u]_{1},\left[\tfrac{d\lambda_2}{d\lambda_1}w\right]_{1}\right\rangle_{L^2(\lambda_1)}.
\end{align*}
Above, we used that the map $[w]_{2}\mapsto [\frac{d\lambda_2}{d\lambda_1}w]_{1}$ is well-defined, in the sense that, for any $w_1,w_2\in [w]_{2}$, $[\frac{d\lambda_2}{d\lambda_1}w_1]_{1}=[\frac{d\lambda_2}{d\lambda_1}w_2]_{1}$ since $\frac{d\lambda_2}{d\lambda_1}=0$ $\lambda_2$-almost everywhere; we also used that $\frac{d\lambda_2}{d\lambda_1}w$ is $\lambda_1$-square integrable since $w$
is $\lambda_1$-square integrable and $\frac{d\lambda_2}{d\lambda_1}$ is $\lambda_1$-essentially bounded.
Hence, $\zeta^*([w]_{2})=[\frac{d\lambda_2}{d\lambda_1} w]_{1}$.

\subsubsection{Lifting to new domain}\label{app:lifting}

Fix $Q\in\mathcal{M}$. For a coarsening $\mathscr{C} : \mathcal{Z}\rightarrow\mathcal{X}$, define the pushforward measure $Q_X:=Q\circ \mathscr{C}^{-1}$. Let $\zeta : L^2(Q_X)\rightarrow L^2(Q)$ be the lifting $\zeta(u):=u\circ \mathscr{C}$ --- put another way, $\zeta(u)(z)=u(x)$ for $Q$-almost all $z$, where $x=\mathscr{C}(z)$.

We establish the Hadamard differentiability of $\zeta$ by showing that it is a special case of the affine maps considered in Appendix~\ref{app:affine}. Clearly $\zeta$ is linear. It is also bounded since, for any $u\in L^2(Q_X)$, $\|\zeta(u)\|_{L^2(Q)}=\|u\|_{L^2(Q_X)}$. Hence, by Appendix~\ref{app:affine}, $\zeta$ is Hadamard differentiable at any $u\in\mathcal{U}$ with $\dot{\zeta}_u=\zeta$ and $\dot{\zeta}_u^*=\zeta^*$.

We now derive a closed-form expression for $\zeta^*$. Observe that, for any $u\in L^2(Q_X)$ and $w\in L^2(Q)$,
\begin{align*}
    \langle u\circ\mathscr{C},w\rangle_{L^2(Q)} = E_Q\left[u(X)w(Z)\right] = E_Q\left[u(X)E_Q[w(Z)\mymid X]\right] = \langle u,E_Q[w(Z)\mymid X=\cdot\;]\rangle_{L^2(Q_X)}.
\end{align*}
Hence, $\zeta^*(w)=E_Q[w(Z)\mymid X=\cdot\;]$.

\subsubsection{Fix binary argument}\label{app:fix}

Let $Q_{A,X}$ be a distribution of $(A,X)\in \{0,1\}\times\mathcal{X}$ and $Q_X$ be the corresponding marginal distribution of $X$. Suppose that $\pi(x):=Q_{A,X}(A=1\mymid X=x)$ is $Q_X$-a.s. bounded away from zero. Consider the map $\zeta : L^2(Q_{A,X})\rightarrow L^2(Q_X)$ defined so that $\zeta(u)(x)=u(1,x)$.

To show $\zeta$ is Hadamard differentiable at a generic $u\in L^2(Q_{A,X})$, we express it as a composition $\zeta_2\circ \zeta_1$ of Hadamard differentiable maps and then apply the chain rule. The first of these maps is the change of measure $\zeta_1 : L^2(\lambda_1)\rightarrow L^2(\lambda_2)$ studied in Appendix~\ref{app:changeMeasure}, where $\lambda_1=Q_{A,X}$ and $\lambda_2=Q_{A=1,X}$ with $Q_{A=1,X}$ the distribution on $\{0,1\}\times\mathcal{X}$ with Radon-Nikodym derivative $\frac{dQ_{A=1,X}}{dQ_{A,X}}(a,x)=\frac{a}{\pi(x)}$. This map is Hadamard differentiable with $\dot{\zeta}_{1,u}(t)(a,x)=t(a,x)$ and $\dot{\zeta}_{1,u}^*(w)(a,x)=\frac{a}{\pi(x)}w(a,x)$. The second is the map $\zeta_2 : L^2(Q_{A=1,X})\rightarrow L^2(Q_X)$ defined so that $\zeta_2(\tilde{u})(x)=\tilde{u}(1,x)$. This is a bounded linear map, so by Appendix~\ref{app:affine} it is Hadamard differentiable with $\dot{\zeta}_{2,u}=\zeta_2$ and $\dot{\zeta}_{2,u}^*=\zeta_2^*$. It is also straightforward to verify that $\zeta_2^*(w)(a,x)=w(x)$. Since $\dot{\zeta}_u^*$ is the adjoint of a composition, we also have that $\dot{\zeta}_u^*(w)= \dot{\zeta}_{1,u}^*\circ \dot{\zeta}_{2,\zeta_1(u)}^*(w)$. Plugging in the values for the differential operators and adjoints of $\zeta_1,\zeta_2$, this yields that $\dot{\zeta}_u(t)(x)=t(1,x)$ and $\dot{\zeta}_u^*(w)(a,x)=\frac{a}{\pi(x)}w(x)$.

\subsection{Primitives for marginal quantities and semiparametric models}\label{app:primitiveRemarks}

Some of the primitives in Table~\ref{tab:primitives} map to $L^2(Q_X)$, where $Q_X$ is a probability distribution on $\mathcal{X}$. This is the case, for example, for the conditional mean map $\theta(P,u)(\cdot)=E_P[u(Z)\mymid X=\cdot\,]$. If $\mathcal{X}=\{x_0\}$ is a singleton set, then it may be more natural to think of $\theta(P,u)$ as a real number than as a function. This can be accomplished by simply replacing a primitive $\theta : \mathcal{U}\rightarrow L^2(Q_X)$ by $\underline{\theta} : \mathcal{U}\rightarrow \mathbb{R}$, with $\underline{\theta}(P,u):=\theta(P,u)(x_0)$. This new primitive is totally pathwise differentiable with $\underline{\dot{\theta}}_{P,u}^*(\underline{w})=\dot{\theta}_{P,u}^*(x\mapsto \underline{w})$, $\underline{w}\in\mathbb{R}$. For instance, this approach allows us to establish the differentiability of a real-valued marginal mean map $\underline{\theta}(P,u)=E_P[u(Z)]$.

The primitives in Table~\ref*{tab:primitives} can be adapted to a semiparametric model. This can be done by using that, for any semiparametric $\mathcal{M}'\subset\mathcal{M}$,  the restriction of a totally pathwise differentiable primitive $\theta : \mathcal{M}\times\mathcal{U}\rightarrow\mathcal{W}$ to $\mathcal{M}'\times\mathcal{U}$ is itself totally pathwise differentiable. The adjoint of this new primitive at $(P,u)$ is the same as that of $\theta$, except its first entry is replaced by its orthogonal projection onto the tangent space $\dot{\mathcal{M}}_P'$. Thus, when $\psi : \mathcal{M}'\rightarrow\mathbb{R}$ can be expressed as a composition of restrictions of primitives in Table~\ref*{tab:primitives}, Algorithm~\ref*{alg:backprop} returns its semiparametric EIF relative to $\mathcal{M}'$. Naturally, the returned form of the EIF will be easiest to work with when the projection onto $\dot{\mathcal{M}}_P'$ has a known closed form. While this is the case in some settings \citep[e.g.,][Chapter 3.2]{bickel1993efficient}, it is certainly not the case for a general infinite-dimensional model. Identifying a means to circumvent this projection is an interesting area for future work.

\section{Step-by-step illustrations of Algorithm~\ref{alg:backprop}}\label{app:illustratingBackprop}

Like Table~\ref{tab:backpropIllustrationR2} from the main text, Tables \ref{tab:backpropIllustrationISD} and \ref{tab:backpropIllustrationECC} provide step-by-step illustrations of Algorithm~\ref{alg:backprop}. The first studies the expected density parameter, and the second studies an expected conditional covariance parameter. The primitives used to express these parameters are detailed in the table captions. The value of $f_0$ in the bottom row of each table is the EIF at a distribution $P$ in a nonparametric model $\mathcal{M}$. In these examples, $Q\in\mathcal{M}$ is a fixed distribution used to define the ambient Hilbert spaces of some of the primitives used to represent $\psi$. We suppose $Q$ is mutually absolutely continuous with $P$ and let $\frac{p}{q}:=\frac{dP}{dQ}$ and $\frac{p_X}{q_X}:=\frac{dP_X}{dQ_X}$.

\begin{table}[tb]
\centering
\caption{Step-by-step evaluation of Algorithm~\ref{alg:backprop}'s computations of the EIF of the expected density $\psi(\tilde{P})=\int \tilde{p}(z)^2 \lambda(dz)$, where $\tilde{p}=d\tilde{P}/d\lambda$ for a $\sigma$-finite measure $\lambda$. Under regularity conditions, the parameter $\psi$ is expressible as a composition of a root-density $\theta_1(\tilde{P},0)=\tilde{p}^{1/2}\in  L^2(\lambda)$, pointwise square $\theta_2(\tilde{P},h_1)=h_1^2\in L^2(\lambda)$, change of measure $\theta_3(\tilde{P},h_2)=h_2\in L^2(Q)$, and mean operator $\theta_4(\tilde{P},h_3)=E_{\tilde{P}}[h_3(Z)]\in \mathbb{R}$. Here, $\mathrm{pa}(1)=\emptyset$, $\mathrm{pa}(2)=\{1\}$, $\mathrm{pa}(3)=\{2\}$, and $\mathrm{pa}(4)=\{3\}$. The value of $f_0$ in the bottom row is the {\color{CBmagenta}efficient influence function} of $\psi$ at $P$.}
 \begin{tabular}{c || c | c | c | c | c } 
 \multirow{2}{*}{\shortstack{Step\\($j$)}} & \multicolumn{5}{|c}{Variables in Algorithm~\ref{alg:backprop}}\\\cline{2-6}
 & $f_4$ & $f_3$ & $f_2$ & $f_1$ & $f_0$ \\\hline\hline
4 & $1$ & $0$ & $0$ & $0$ & $0$ \\
3 &  & $p/q$ & $0$ & $0$ & $p-\psi(P)$ \\
2 &  & & $p$ & $0$ & $p-\psi(P)$ \\
1 &  & & & $2p^{3/2}$ & $p-\psi(P)$  \\
0 &  & & &  & {\color{CBmagenta}$\bm{2[p-\psi(P)]}$}  \\
\end{tabular}
 \label{tab:backpropIllustrationISD}
\end{table}

Since the EIF is a feature of the map $\psi : \mathcal{M}\rightarrow\mathbb{R}$, rather than the particular composition of primitives chosen to represent it, the EIF returned by Algorithm~\ref{alg:backprop} will not depend on the choice of $Q$ provided all of the primitives are totally pathwise differentiable. As noted in Section~\ref{sec:primitives}, our sufficient conditions for differentiability tend to be weakest when $Q=P$, and so we recommend this choice in practice.

Some of the primitives used in these examples do not appear in Table~\ref{tab:primitives}. For example, though the primitive $\theta(P,0)=E_P[Y\mymid X=\cdot\,]$ is closely related to the conditional mean primitive $\underline{\theta}(P,u)=E_P[u(Z)\mymid X=\cdot\,]$ from Table~\ref{tab:primitives}, $\theta$ returns the conditional mean of a predetermined transformation of $Z$, $\underline{u}(z)=y$, and so is not the same map as $\underline{\theta}$. In Figure~\ref{fig:compGraph}, we avoided this issue by expressing $E_P[Y\mymid X=\cdot\,]$ as a composition of primitives from Table~\ref{tab:primitives}: the constant map that returns $z\mapsto y$ and conditional mean primitive $\underline{\theta}(P,u)=E_P[u(Z)\mymid X=\cdot\,]$. A closely related approach, which is the one we take in this appendix, is to use the following result.
\begin{lemma}[Fixing the Hilbert-valued input of a primitive]
    Let $\underline{\theta} : \mathcal{M}\times\mathcal{U}\rightarrow\mathcal{W}$ be totally pathwise differentiable at $(P,\underline{u})\in\mathcal{M}\times\mathcal{U}$. Then, the primitive $\theta : \mathcal{M}\times\{0\}\rightarrow \mathcal{W}$, defined so that $\theta(\,\cdot\,,0)=\underline{\theta}(\,\cdot\,,\underline{u})$, is totally pathwise differentiable at $(P,0)$ with $\dot{\theta}_{P,0}^*(w)=(\dot{\underline{\theta}}_{P,\underline{u}}^*(w)_0,0)$, where $\dot{\underline{\theta}}_{P,\underline{u}}^*(w)_0$ denotes the first entry of $\dot{\underline{\theta}}_{P,\underline{u}}^*(w)$.
\end{lemma}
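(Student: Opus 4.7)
The plan is to reduce the lemma to two earlier results in the appendix: Lemma~\ref{lem:totalImpliesPartial}, which extracts pathwise differentiability of the partial map $\underline{\theta}(\,\cdot\,,\underline{u})$ from the total pathwise differentiability of $\underline{\theta}$, and the general argument in Appendix~\ref{app:pathwiseDiff} showing that any pathwise differentiable parameter gives rise to a totally pathwise differentiable primitive whose Hilbert-valued input is the zero element of the trivial space $\{0\}$.

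First, I would apply Lemma~\ref{lem:totalImpliesPartial} to conclude that $\nu := \underline{\theta}(\,\cdot\,,\underline{u}) : \mathcal{M}\rightarrow\mathcal{W}$ is pathwise differentiable at $P$ with local parameter $\dot{\nu}_P(s) = \dot{\underline{\theta}}_{P,\underline{u}}(s,0)$ and corresponding adjoint $\dot{\nu}_P^*(w) = \dot{\underline{\theta}}_{P,\underline{u}}^*(w)_0$ for every $w \in \mathcal{W}$. Second, I would observe that, by the definition of $\theta$, we have $\theta(P',0) = \nu(P')$ for all $P' \in \mathcal{M}$, so $\theta$ is a map on $\mathcal{M}\times\{0\}$ that depends only on its distributional argument. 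Applying the general argument in Appendix~\ref{app:pathwiseDiff} then yields total pathwise differentiability of $\theta$ at $(P,0)$ with differential operator $\dot{\theta}_{P,0}(s,0) = \dot{\nu}_P(s)$ and adjoint $\dot{\theta}_{P,0}^*(w) = (\dot{\nu}_P^*(w),0)$. Substituting the identities from the first step yields $\dot{\theta}_{P,0}^*(w) = (\dot{\underline{\theta}}_{P,\underline{u}}^*(w)_0,0)$, which is the claim.

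The main obstacle is essentially cosmetic: I would need to verify that the tangent set $\check{\mathcal{U}}_0$ and tangent space $\dot{\mathcal{U}}_0$ of the trivial Hilbert space $\{0\}$ at $0$ are themselves $\{0\}$, so that the only admissible path for the Hilbert-valued input in \eqref{eq:totalpd} is the constant one, and so that the direct sum $\dot{\mathcal{M}}_P \oplus \{0\}$ is canonically identified with $\dot{\mathcal{M}}_P$ when computing adjoints. Because of this trivialization, no new differentiability estimate needs to be proved directly: the single bound $\|\underline{\theta}(P_\epsilon,\underline{u})-\underline{\theta}(P,\underline{u})-\epsilon\,\dot{\underline{\theta}}_{P,\underline{u}}(s,0)\|_{\mathcal{W}}=o(\epsilon)$, which is immediate from Lemma~\ref{lem:totalImpliesPartial}, simultaneously verifies \eqref{eq:totalpd} for $\theta$ and identifies its differential operator. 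The adjoint formula then follows by a one-line inner-product calculation using the definition of the direct sum.
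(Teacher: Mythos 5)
Your proposal is correct and is precisely the route the paper intends: it cites Lemma~\ref{lem:totalImpliesPartial} to extract the pathwise differentiability of $\nu:=\underline{\theta}(\,\cdot\,,\underline{u})$ together with the identity $\dot{\nu}_P^*(w)=\dot{\underline{\theta}}_{P,\underline{u}}^*(w)_0$, and then invokes the general argument of Appendix~\ref{app:pathwiseDiff} to conclude. The paper states the lemma "is a direct consequence of Lemma~\ref{lem:totalImpliesPartial} and Appendix~\ref{app:pathwiseDiff}" and omits the proof, so you have simply filled in the omitted details (including the observation that $\dot{\mathcal{U}}_0=\{0\}$, which justifies identifying $\dot{\mathcal{M}}_P\oplus\{0\}$ with $\dot{\mathcal{M}}_P$).
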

This lemma is a direct consequence of Lemma~\ref{lem:totalImpliesPartial} and Appendix~\ref{app:pathwiseDiff} and so the proof is omitted.

\begin{table}[tb]
\centering
\caption{Step-by-step evaluation of Algorithm~\ref{alg:backprop}'s computations of the EIF of the expected conditional covariance $\psi(\tilde{P})=E_{\tilde{P}}[\mathrm{Cov}_{\tilde{P}}(A,Y\mymid X)]$, where $Z=(X,A,Y)$. Let $\mathscr{C} : z\mapsto x$. Under regularity conditions, the parameter $\psi$ is expressible as a composition of conditional means $\theta_1(\tilde{P},0)(\cdot)=E_{\tilde{P}}[Y\mymid X=\cdot\,]\in L^2(Q_X)$ and $\theta_2(\tilde{P},0)(\cdot)=E_{\tilde{P}}[A\mymid X=\cdot\,]\in L^2(Q_X)$, liftings $\theta_3(\tilde{P},h_1)=(z\mapsto h_1(x))\in L^2(Q)$ and $\theta_4(\tilde{P},h_2)=(z\mapsto h_2(x))\in L^2(Q)$, a pointwise operation $\theta_5(\tilde{P},(h_3,h_4))=(z\mapsto ay-h_3(z)h_4(z))\in L^2(Q)$, and a mean operator $\theta_6(\tilde{P},h_5)=E_{\tilde{P}}[h_5(Z)]\in\mathbb{R}$. Here, $\mathrm{pa}(1)=\mathrm{pa}(2)=\emptyset$, $\mathrm{pa}(3)=\{1\}$, $\mathrm{pa}(4)=\{2\}$, $\mathrm{pa}(5)=\{3,4\}$, and $\mathrm{pa}(6)=\{5\}$. The value of $f_0$ in the bottom row is the {\color{CBmagenta}efficient influence function} of $\psi$ at $P$, where $\mu_P^Y(x):=E_P[Y\mymid X=x]$ and $\mu_P^A(x):=E_P[A\mymid X=x]$. Other expressions of this parameter, such as ones using the conditional covariance primitive studied in Appendix~\ref{app:condCovar}, can also be used to compute the EIF.}
 \resizebox{\textwidth}{!}{
 \begin{tabular}{c || c | c | c | c | c | c | c} 
 \multirow{2}{*}{\shortstack{Step\\($j$)}} & \multicolumn{7}{|c}{Variables in Algorithm~\ref{alg:backprop}}\\\cline{2-8}
 & $f_6$ & $f_5$ & $f_4$ & $f_3$ & $f_2$ & $f_1$ & $f_0$ \\\hline\hline
6 & $1$ & $0$ & $0$ & $0$ & $0$ & $0$ & $0$ \\
5 &  & $p/q$ & $0$ & $0$ & $0$ & $0$ & $z\mapsto ay-\mu_P^A(x)\mu_P^Y(x) - \psi(P)$ \\
4 &  & & $-\frac{p}{q}\mu_P^Y\circ \mathscr{C}$ & $-\frac{p}{q}\mu_P^A\circ \mathscr{C}$ & $0$ & $0$ & $z\mapsto ay-\mu_P^A(x)\mu_P^Y(x) - \psi(P)$ \\
3 &  & & & $-\frac{p}{q}\mu_P^A\circ \mathscr{C}$ & $-\frac{p_X}{q_X}\mu_P^Y$ & $0$ & $z\mapsto ay-\mu_P^A(x)\mu_P^Y(x) - \psi(P)$ \\
2 &  & & & & $-\frac{p_X}{q_X}\mu_P^Y$ & $-\frac{p_X}{q_X}\mu_P^A$ & $z\mapsto ay-\mu_P^A(x)\mu_P^Y(x) - \psi(P)$ \\
1 &  & & & & & $-\frac{p_X}{q_X}\mu_P^A$ & $z\mapsto a[y-\mu_P^Y(x)] - \psi(P)$ \\
0 &  & & & & & & {\color{CBmagenta}$\bm{z\mapsto [a-\mu_P^A(x)][y-\mu_P^Y(x)] - \psi(P)}$} \\
\end{tabular}
}
 \label{tab:backpropIllustrationECC}
\end{table}

\section{Study of remainder in von Mises expansion of nonparametric $R^2$}\label{app:vonMisesR2}

Table~\ref{tab:estimatedBackpropIllustrationR2} provides a step-by-step illustration of how Algorithm~\ref{alg:estimatedBackprop} estimates the efficient influence operator when it is applied to the nonparametric $R^2$ parameter as expressed in Figure~\ref{fig:compGraph}. The value of $\widehat{f}_0$ in the last row is the estimated efficient influence operator returned by the algorithm.

\begin{table}[tb]
\centering
\caption{Step-by-step evaluation of Algorithm~\ref{alg:estimatedBackprop} for the nonparametric $R^2$ parameter when it is expressed as in Figure~\ref{fig:compGraph}. This table is the estimation counterpart of Table~\ref{tab:backpropIllustrationR2} from the main text, which computes the EIF at a known distribution $P$.}
 \resizebox{\textwidth}{!}{
\begin{tabular}{c || c | c | c | c | c | c | c | c | c} 
 \multirow{2}{*}{\shortstack{Step\\($j$)}} & \multicolumn{7}{|c}{Variables in Algorithm~\ref{alg:estimatedBackprop}}\\[.2em]
 & \hspace{.25em}$f_8$\hspace{.25em} & $\widehat{f}_7$ & $\widehat{f}_6$ & $\widehat{f}_5$ & $\widehat{f}_4$& \hspace{-.25em}\phantom{$^\S$\hspace{.25em}}$\widehat{f}_3$\hspace{.25em}\phantom{$^\S$}\hspace{-.25em} & \hspace{.5em}$\widehat{f}_2$\hspace{.5em} & \hspace{-.25em}\phantom{$^\S$\hspace{.25em}}$\widehat{f}_1$\hspace{.25em}\phantom{$^\S$}\hspace{-.25em} & $\widehat{f}_0$ \\\hline\hline
8 & $1$\vphantom{$\frac{1}{\widehat{h}_2}$} & $0$ & $0$ & $0$ & $0$ & $-$ & $0$ & $-$ & $0$ \\
7 & & $-\frac{1}{\widehat{h}_2}$ & $0$ & $0$ & $0$ & $-$ & $\frac{\widehat{h}_7}{\widehat{h}_2^2}$ & $-$ & $0$ \\
6 & & & $-\frac{1}{\widehat{h}_2}$ & $0$ & $0$ & $-$ & $\frac{\widehat{h}_7}{\widehat{h}_2^2}$ & $-$ & $-\frac{\widehat{h}_6(\cdot)-\widehat{h}_7}{\widehat{h}_2}$ \\
5 & & & & $z\mapsto \frac{2[y-\widehat{h}_4(x)]}{\widehat{h}_2}$ & $0$ & $-$ & $\frac{\widehat{h}_7}{\widehat{h}_2^2}$ & $-$ & $-\frac{\widehat{h}_6(\cdot)-\widehat{h}_7}{\widehat{h}_2}$ \\
4 & & & & & $\frac{2\{\widehat{E}_5[Y\mymid X=\,\cdot\,]-\widehat{h}_4(\cdot)\}}{\widehat{h}_2}$ & $-$ & $\frac{\widehat{h}_7}{\widehat{h}_2^2}$ & $-$ & $-\frac{\widehat{h}_6(\cdot)-\widehat{h}_7}{\widehat{h}_2}$ \\
3 & & & & & & $-$ &  $\frac{\widehat{h}_7}{\widehat{h}_2^2}$ & $-$ & $z\mapsto -\frac{\widehat{h}_6(z)-\widehat{h}_7}{\widehat{h}_2}+\frac{2\{\widehat{E}_5[Y\mymid X=x]-\widehat{h}_4(x)\}\{y-\widehat{h}_4(x)\}}{\widehat{h}_2}$ \\
2 & & & & & & & $\frac{\widehat{h}_7}{\widehat{h}_2^2}$ & $-$ & $z\mapsto -\frac{\widehat{h}_6(z)-\widehat{h}_7}{\widehat{h}_2}+\frac{2\{\widehat{E}_5[Y\mymid X=x]-\widehat{h}_4(x)\}\{y-\widehat{h}_4(x)\}}{\widehat{h}_2}$ \\
1 & & & & & & & & $-$ & $z\mapsto -\frac{\widehat{h}_6(z)-\widehat{h}_7}{\widehat{h}_2}+\frac{2\{\widehat{E}_5[Y\mymid X=x]-\widehat{h}_4(x)\}\{y-\widehat{h}_4(x)\}}{\widehat{h}_2} + \frac{\widehat{h}_7\left\{(y-\widehat{E}_2[Y])^2-\widehat{h}_2\right\}}{\widehat{h}_2^2}$ \\
0 & & & & & & & & & $z\mapsto -\frac{\widehat{h}_6(z)-\widehat{h}_7}{\widehat{h}_2}+\frac{2\{\widehat{E}_5[Y\mymid X=x]-\widehat{h}_4(x)\}\{y-\widehat{h}_4(x)\}}{\widehat{h}_2} + \frac{\widehat{h}_7\left\{(y-\widehat{E}_2[Y])^2-\widehat{h}_2\right\}}{\widehat{h}_2^2}$ \\
\end{tabular}
}
 \label{tab:estimatedBackpropIllustrationR2}
\end{table}

Forward and backward pass nuisances are estimated as described in Appendix~\ref{app:nuisanceEstimation}. In the forward routines, this entails setting $\widehat{h}_1 : z\mapsto y$, $\widehat{h}_2=\mathrm{Var}_{P_{\mathscr{I}}}(Y)$ with $P_{\mathscr{I}}$ the empirical distribution of $Z_{\mathscr{I}}:=\{Z_i : i\in\mathscr{I}\}$, $\widehat{h}_3 : z\mapsto y$, $\widehat{h}_4 : x\mapsto \widehat{E}_4[Y\mymid X=x]$ with the conditional mean estimate returned by a nonparametric regression procedure, $\widehat{h}_5 : z\mapsto \widehat{h}_4(x)$, $\widehat{h}_6 : z\mapsto [y-\widehat{h}_5(z)]^2$, $\widehat{h}_7=E_{P_{\mathscr{I}}}[\widehat{h}_6(Z)]$, and $\widehat{h}_8=1-\widehat{h}_7/\widehat{h}_2$. In the backward routines, the only nuisances that require estimation are $E_P[Y\mymid X=\,\cdot\,]$ and $E_P[Y]$, which arise when obtaining the estimates $\widehat{\vartheta}_5$ and $\widehat{\vartheta}_2$, respectively. We denote these estimates by $\widehat{E}_5[Y\mymid X=\,\cdot\,]$ and $\widehat{E}_2[Y]$, respectively.

We now study the remainder $\mathcal{R}_n=\widehat{h}_8 - \psi(P) + \int \widehat{f}_0(z) P(dz)$ in this example. We let $h_1, h_2,\ldots,h_8$ be as defined in Algorithm~\ref{alg:parameter}, so that $\psi(P)=1-h_7/h_2$. Adding and subtracting terms then yields
\begin{align*}
    \widehat{h}_8 - \psi(P)&= - \frac{\widehat{h}_7-h_7}{\widehat{h}_2} + \frac{\widehat{h}_7[\widehat{h}_2-h_2]}{\widehat{h}_2^2} + \left(\frac{h_7}{h_2}-\frac{\widehat{h}_7}{\widehat{h}_2}\right)\frac{\widehat{h}_2-h_2}{\widehat{h}_2}\,.
\end{align*}
Combining this with the definition of $\mathcal{R}_n$, plugging in the value of $\widehat{f}_0$ returned by Algorithm~\ref{alg:estimatedBackprop}, and then simplifying shows that
\begin{align*}
    \mathcal{R}_n&= - \frac{\widehat{h}_7-h_7}{\widehat{h}_2} + \frac{\widehat{h}_7[\widehat{h}_2-h_2]}{\widehat{h}_2^2} + \left(\frac{h_7}{h_2}-\frac{\widehat{h}_7}{\widehat{h}_2}\right)\frac{\widehat{h}_2-h_2}{\widehat{h}_2} + \int \widehat{f}_0(z) P(dz) \\
    &= - \frac{\int (\widehat{h}_6-h_6)(z)\, P(dz)}{\widehat{h}_2} + \frac{\widehat{h}_7\int [\{y-\widehat{E}_2[Y]\}^2-\{y-E_P[Y]\}^2]P(dz)}{\widehat{h}_2^2} \\
    &\quad+ \left(\frac{h_7}{h_2}-\frac{\widehat{h}_7}{\widehat{h}_2}\right)\frac{\widehat{h}_2-h_2}{\widehat{h}_2} + \frac{2\int \{\widehat{E}_5[Y\mymid X=x]-\widehat{h}_4(x)\}\{h_4(x)-\widehat{h}_4(x)\}\,P_X(dx)}{\widehat{h}_2} \\
    &= - \frac{\int \{\widehat{h}_4(x)-h_4(x)\}^2\, P_X(dx)}{\widehat{h}_2} + \frac{\widehat{h}_7\{\widehat{E}_2[Y]-E_P[Y]\}^2}{\widehat{h}_2^2} \\
    &\quad+ \left(\frac{h_7}{h_2}-\frac{\widehat{h}_7}{\widehat{h}_2}\right)\frac{\widehat{h}_2-h_2}{\widehat{h}_2} + \frac{2\int \{\widehat{E}_5[Y\mymid X=x]-\widehat{h}_4(x)\}\{h_4(x)-\widehat{h}_4(x)\}\,P_X(dx)}{\widehat{h}_2}.
\end{align*}
The right-hand side consists of four terms. When analyzing them, we suppose that $h_2$ is strictly positive and the number of observations $n_{\mathscr{I}}$ in $Z_{\mathscr{I}}$ diverges with $n$ so that $\widehat{h}_2$ is bounded away from zero with probability tending to one. In this case, $\mathcal{R}_n$ will be $o_p(n^{-1/2})$ whenever the numerators of the four terms on the right are all $o_p(n^{-1/2})$. The first is the mean-squared prediction error of the estimate of $E_P[Y\mymid X=\,\cdot\,]$, which will be $o_p(n^{-1/2})$ provided appropriate smoothness or sparsity conditions hold and the estimator $\widehat{h}_4$ leverages them. The second numerator is $\widehat{h}_7$ times the squared error of $\widehat{E}_2$ as an estimate of $E_P[Y]$. If the empirical mean $E_{P_{\mathscr{I}}}[Y]$ is used to estimate this quantity and $\widehat{h}_7$ is $O_p(1)$, then this term will be $O_p(n_{\mathscr{I}}^{-1})$, and so $o_p(n^{-1/2})$ whenever $n_{\mathscr{I}}/n^{1/2}$ diverges with $n$. Up to a multiplicative factor of $1/\widehat{h}_2$, the third term is a product of the errors for estimating the real-valued quantities $h_7/h_2$ and $h_2$, and so should be $o_p(n^{-1/2})$ under appropriate conditions. The final term will be exactly zero if a common, deterministic regression scheme is used to construct the estimates $\widehat{E}_5[Y\mymid X=\,\cdot\,]$ and $\widehat{h}_4$ of $E_P[Y\mymid X=\,\cdot\,]$, since then these two quantities would be equal. Otherwise, Cauchy-Schwarz can be used to bound the magnitude of that quantity, which then shows that it is a product of $L^2(P_X)$ norms of $\widehat{E}_5[Y\mymid X=\,\cdot\,]-\widehat{h}_4(\cdot)$ and $h_4(\cdot)-\widehat{h}_4(\cdot)$. This product will be $o_p(n^{-1/2/})$ under appropriate smoothness or sparsity conditions.

\section{Nuisance estimation routines for primitives in Table~\ref{tab:primitives}}\label{app:nuisanceEstimation}

\subsection{Overview}
We now present forward and backward nuisance routines for some of the primitives appearing in Table~\ref{tab:primitives}. As detailed in Algorithm~\ref{alg:nuisanceRoutines}, the forward routine takes as input data $Z_{\mathscr{I}}:=\{Z_i : i\in\mathscr{I}\subseteq [n]\}$ and $u\in\mathcal{U}$ and returns an estimate $\widehat{h}$ of $\theta(P,u)$. The backward routine takes as input $Z_{\mathscr{I}}$, $u$, $\widehat{h}$, and $w\in\mathcal{W}$ and returns an estimate $\widehat{\vartheta}$ of $\dot{\theta}_{P,u}^*(w)$; $\widehat{\vartheta}$ must be compatible with the routine's inputs, in that there exists $\widehat{P}\in\mathcal{M}$ with $\widehat{h}=\theta(\widehat{P},u)$ and $\int \widehat{\vartheta}_0\,d\widehat{P}=0$. For compatibility to be achievable, the forward routine must ensure that $\widehat{h}$ belongs to the image of $\theta(\,\cdot\,,u)$, which we denote by $\theta(\mathcal{M},u)$.

When presenting these routines, we always take $u$ and $w$ to be generic elements of $\mathcal{U}$ and $\mathcal{W}$, respectively. Some of our proposed nuisance estimation routines use sample splitting or cross-fitting, and so take folds as an input; we denote these folds by $\mathscr{I}^{(1)}$ and $\mathscr{I}^{(2)}$, where $\{\mathscr{I}^{(1)},\mathscr{I}^{(2)}\}$ is a partition of $\mathscr{I}$ into subsets of sizes $n^{(1)}$ and $n^{(2)}$, $n^{(1)}\approx n^{(2)}$. For primitives whose domain or codomain depends on some $Q\in\mathcal{M}$, we follow the recommendation given in Section~\ref{sec:primitives} and focus on the case where $Q=P$, with $P$ the data-generating distribution.

The nuisance estimation routines presented hereafter are only examples --- other formulations are possible. This may include, for example, using different loss functions, penalty terms, observation weights, or forms of cross-fitting or sample splitting.

\subsection{Maps depend nontrivially on both of their arguments}

\subsubsection{Conditional mean}\label{sec:condExpNuisance}
As in Appendix~\ref{app:condExp}, $\theta(P,u)(x)=E_P[u(Z)\mymid X=x]$ and $\dot{\theta}_{P,u}^*(w)=(z\mapsto [u(z)-\theta(P,u)(x)]w(x),z\mapsto w(x))$. In the forward routine, $\theta(P,u)$ can be estimated by regressing $u(Z_i)$ against $X_i$, $i\in \mathscr{I}$, with the squared error loss. As long as the image of the regression estimate $\widehat{h}$ respects known bounds on $u$ and the locally nonparametric model $\mathcal{M}$ is suitably large, this estimate will fall in $\theta(\mathcal{M},u)$.  In the backward routine, we take $\widehat{\vartheta}(z):=(z\mapsto [u(z)-\widehat{h}(x)]w(x),z\mapsto w(x))$. Compatibility holds since $\widehat{h}\in \theta(\mathcal{M},u)$ implies there exists $\widehat{P}\in\mathcal{M}$ such that $\widehat{h}=\theta(\widehat{P},u)$ and, by the law of total expectation, $\int \widehat{\vartheta}_0\,d\widehat{P}=0$.

\subsubsection{Multifold conditional mean}\label{app:multilinearFormNuisance}

As in Appendix~\ref{app:multilinearForm}, $\theta(P,u)(\cdot)=E_{P^r}[u(Z_{[r]})\mymid X_{[r]}=\cdot\,]$ and $\dot{\theta}_{P,u}^*(w)=(\dot{\nu}_{P,u}^*(w),\dot{\zeta}_{P,u}^*(w))$, where $\dot{\zeta}_{P,u}^*(w)(z_{[r]})=w(x_{[r]})$ and
\begin{align*}
    \dot{\nu}_{P,u}^*(w)(z)&= \sum_{j=1}^r E_{P^r}\left[\left\{u(Z_{[r]}) - \theta(P,u)(X_{[r]})\right\}w(X_{[r]})\,\middle|\,Z_j=z\right].
\end{align*}

In the forward routine, $\theta(P,u)$ can be estimated by regressing $u(Z_{i[r]})$ against $(X_{i[r]})$, with $i[r]:=(i(1),i(2),\ldots,i(r))$ varying over the set $\mathscr{I}_r^{(1)}$ of $n^{(1)}!/(n^{(1)}-r)!$ choices of ordered tuples of $r$ unique elements from the first fold $\mathscr{I}^{(1)}$; to reduce runtime, a random subsample of the indices in $\mathscr{I}_r^{(1)}$ may be used while fitting this regression. Just as in Appendix~\ref{app:condExp}, this estimate, $\widehat{h}$, will belong to $\theta(\mathcal{M},u)$ as long as it respects any known bounds on $u$ and $\mathcal{M}$ is suitably large. 

In the backward routine, the estimate $\widehat{\vartheta}_0$ is obtained by running a pooled regression of $\left\{u(Z_{i[r]}) - \widehat{h}(X_{i[r]})\right\}w(X_{i[r]})$ against $Z_{i(j)}$, where $(i[r],j)\in \mathscr{I}_r^{(2)}\times [r]$ with $\mathscr{I}_r^{(2)}$ defined analogously to $\mathcal{I}_r^{(1)}$, but using the second fold rather than the first; runtime can be reduced by fitting the regression using only a random subset of the indices in $\mathscr{I}_r^{(2)}\times [r]$. This estimate is compatible with $\widehat{h}$ provided $\mathcal{M}$ is large enough. Because the second entry $\dot{\zeta}_{P,u}^*(w)$ of $\dot{\theta}_{P,u}^*(w)$ does not depend on $P$, the second entry of $\widehat{\vartheta}$ can be set to its known value, $z_{[r]}\mapsto w(x_{[r]})$.

\subsubsection{Conditional covariance}\label{app:condCovNuisance} As in Appendix~\ref{app:condCovar}, $\theta(P,u)(x)= \mathrm{cov}_P[u_1(Z),u_2(Z)\mymid X=x]$ and $\dot{\theta}_{P,u}^*(w)=(\dot{\nu}_{P,u}^*(w),\dot{\zeta}_{P,u}^*(w))$, where
\begin{align*}
    \dot{\nu}_{P,u}^*(w)(z)&= w(x)\left[\prod_{j=1}^2 \{u_j(z)-E_P[u_j(Z)\mymid X=x]\}-\theta(P,u)(x)\right], \\
    \dot{\zeta}_{P,u}^*(w)&= \left(z\mapsto w(x)\{u_{3-j}(z)-E_P[u_{3-j}(Z)\mymid X=x]\}\right)_{j=1}^2.
\end{align*}

In the forward routine, three regressions are fit. First, $u_1(Z_i)$ is regressed against $X_i$, $i\in \mathscr{I}^{(1)}$, yielding an estimate $\widehat{\mu}_1$ of $E_P[u_1(Z)\mymid X=\cdot\,]$. Second, $u_2(Z_i)$ is regressed against $X_i$, $i\in \mathscr{I}^{(1)}$, yielding an estimate $\widehat{\mu}_2$ of $E_P[u_2(Z)\mymid X=\cdot\,]$. Third, $\prod_{j=1}^2 \{u_j(Z_i)-\widehat{\mu}_j(X_i)\}$ is regressed against $X_i$, $i\in\mathscr{I}^{(2)}$, yielding the estimate $\widehat{h}$ of $\theta(P,u)$.

In the backward routine, we either reuse the saved estimates $\widehat{\mu}_1$ and $\widehat{\mu}_2$ fitted in the forward routine, or refit them if they were not saved. We then let
\begin{align*}
    \widehat{\vartheta}:= \left(z\mapsto w(x)\left[\prod_{j=1}^2 \{u_j(z)-\widehat{\mu}_j(x)\}-\widehat{h}(x)\right],\left(z\mapsto w(x)\{u_{3-j}(z)-\widehat{\mu}_{3-j}(x)\}\right)_{j=1}^2\right).
\end{align*}
Compatibility between $\widehat{\vartheta}_0$ and $\widehat{h}$ holds provided there exists a $\widehat{P}\in\mathcal{M}$ satisfying the following system of equations $\widehat{P}_X$-a.s.:
\begin{align*}
    E_{\widehat{P}}\left[ u_1(Z) \mymid X=x\right]&= \widehat{\mu}_1(x), \\
    E_{\widehat{P}}\left[ u_2(Z) \mymid X=x\right]&= \widehat{\mu}_2(x), \\
    E_{\widehat{P}}\left[ u_1(Z)u_2(Z) \mymid X=x\right]&= \widehat{h}(x) + \widehat{\mu}_1(x)\widehat{\mu}_2(x). \\
\end{align*}

\subsubsection{Conditional variance}\label{app:condVarNuisance} When $\theta(P,u)(x)= \mathrm{Var}_P[u(Z)\mymid X=x]$ as in Appendix~\ref{app:condVar}, the nuisance estimation routines for the conditional covariance can be used with $u_1=u_2=u$ (see Appendix~\ref{app:condCovNuisance}).

\subsubsection{Kernel embedding}\label{app:kernelEmbedNuisance} As in Appendix~\ref{app:kernelEmbed}, $\theta(P,u)(\cdot)= \int K(\,\cdot\,,x) \,u(x)\, P_X(dx)$ and $\dot{\theta}_{P,u}^*(w)=(wu-\int wu\,dP_X,w)$. In the forward routine, we can estimate the marginal distribution $P_X$ of $X$ under sampling from $P$. We denote this estimate by $\widehat{P}_X$. Then, we let $\widehat{h}(\cdot)= \int K(\,\cdot\,,x) \,u(x)\, \widehat{P}_X(dx)$ and, in the backward routine, we let $\dot{\theta}_{P,u}^*(w)=(wu-\int wu\,d\widehat{P}_X,w)$. These estimates are compatible provided there exists some $\widehat{P}\in\mathcal{M}$ with marginal distribution $\widehat{P}_X$. If $\mathcal{M}$ is large enough so that there exists a distribution whose marginal distribution of $X$ is equal to the empirical distribution of $\{X_i : i\in\mathscr{I}\}$, then $\widehat{P}_X$ can be chosen to be this empirical distribution.

\subsection{Maps that only depend on their distribution-valued argument}

\subsubsection{Root-density} As in Appendix~\ref{app:rootDensity}, $\theta(P,u)(z)=\frac{dP}{d\lambda}(z)^{1/2}$ and $\dot{\theta}_{P,u}^*(w)=(\dot{\nu}_P^*(w),0)$ with $\dot{\nu}_P^*(w)(z) = \frac{w(z)}{2\nu(P)(z)} - E_P \left[\frac{w(Z)}{2\nu(P)(Z)} \right]$. In the forward routine, the estimate $\widehat{h}$ of $\nu(P)$ can be obtained by taking the square root of a kernel density estimate, if $\lambda$ is a Lebesgue measure, or the empirical probability mass function, if $\lambda$ is a counting measure. In either case, the backward routine sets $\widehat{\vartheta}=([w/\widehat{h}-\int w\widehat{h} d\lambda]/2,0)$. The estimates $\widehat{\vartheta}$ and $\widehat{h}$ are plug-in estimators based on the distribution $\widehat{P}$ satisfying $d\widehat{P}/d\lambda=\widehat{h}^2$, and so the compatibility condition will hold if the nonparametric model $\mathcal{M}$ is large enough to contain this distribution.

\subsubsection{Conditional density} As in Appendix~\ref{app:condDensity}, $\theta(P,u)(z)=p_{Y\mymid X}(z)$ and $\dot{\nu}_P^*(w)(z)=w(z)p_{Y|X}(z)-E_P[w(Z)p_{Y|X}(Z)\mymid X=x]$. In the forward routine, the estimate $\widehat{h}$ of $p_{Y|X}$ can be obtained using kernel density estimation, if $\lambda$ is a Lebesgue measure, or nonparametric multinomial logistic regression, if $\lambda$ is a counting measure. In either case, the backward routine sets $\widehat{\vartheta}=(\widehat{\vartheta}_0,0)$ with $\widehat{\vartheta}_0(z)=w(z)\widehat{h}(z)-\int w(x,\tilde{y})\widehat{h}(x,\tilde{y}) \lambda(d\tilde{y})$. The estimates $\widehat{\vartheta}$ and $\widehat{h}$ are plug-in estimators based on any distribution $\widehat{P}$ whose conditional density of $Y| X$ relative to $\lambda$ is $\widehat{h}$, and so the compatibility condition will hold if the nonparametric model $\mathcal{M}$ is large enough to contain such a distribution.

\subsubsection{Dose-response function} As in Appendix~\ref{app:doseResponse}, $\theta(P,u)(a)=\int \mu_P(a,x)P_X(dx)$ and
\begin{align}
    \dot{\theta}_{P,u}^*(w) &= \left(z\mapsto \frac{y-\mu_P(a,x)}{\pi_P(a\mid x)}w(a) + \int [\mu_P(a',x)-\nu(P)(a')]w(a')\lambda(da'),0\right), \label{eq:doseResponseAdj}
\end{align}
where $z=(x,a,y)$, $\mu_P(a,x):=E_P[Y\mymid A=a,X=x]$ and $\pi_P(a\mymid x)=P(A=a\mymid X=x)$. In the forward routine, $\mu_P$ can be estimated using regression, yielding $\widehat{\mu}$, and then $\theta(P,u)$ can be estimated by $\widehat{h}(a)=\frac{1}{|\mathscr{I}|}\sum_{i\in\mathscr{I}} \widehat{\mu}(a,X_i)$. In the backward routine, $\pi_P$ can be estimated using kernel density estimation, yielding $\widehat{\pi}$, and then $\dot{\theta}_{P,u}^*(w)$ can be estimated using the plug-in estimator $\widehat{\vartheta}$ that takes the same form as \eqref{eq:doseResponseAdj}, but with $\mu_P$, $\pi_P$, and $\nu(P)$ replaced by $\widehat{\mu}$, $\widehat{\pi}$, and $\widehat{h}$, respectively. The compatibility condition holds provided there exists $\widehat{P}\in\mathcal{M}$ with $\mu_{\widehat{P}}=\widehat{\mu}$, $\pi_{\widehat{P}}=\widehat{\pi}$, and $\nu(\widehat{P})=\widehat{h}$; this will typically hold provided the nonparametric model $\mathcal{M}$ is large enough.

\subsubsection{Counterfactual density} As in Appendix~\ref{app:countDens}, $\theta(P,u)=\int p_{Y| A,X}(y\mymid 1,x)\, P_X(dx)$ and $\dot{\theta}_{P,u}^*(w)=(\dot{\nu}_P^*(w),0)$, where $\dot{\nu}_P^*$ is defined in \eqref{eq:countDensEIO}. To streamline the discussion, we suppose $\lambda_Y$ is a Lebesgue measure. In the forward routine, an estimate $(y,x)\mapsto \widehat{p}_{Y|1,X}(y\mymid x)$ of $p_{Y|A,X}(y\mymid 1,x)$ can be obtained using kernel density estimation among all $i\in\mathscr{I}$ with $A_i=1$. The counterfactual density $\theta(P,u)$ can then be estimated by $\widehat{h}(y)=\frac{1}{|\mathscr{I}|}\sum_{i\in\mathscr{I}} \widehat{p}_{Y|1,X}(y\mymid X_i)$. In the backward routine, $\pi_P$ can be estimated using regression, yielding $\widehat{\pi}$. Alternatively, $1/\pi_P$ can be estimated directly, yielding $1/\widehat{\pi}$ \citep{robins2007comment,chernozhukov2022automatic}. Reusing the nuisance $\widehat{p}_{Y|1,X}$ from the forward routine,  $E_P\left[w(Y) \mymid A=1, X=x \right]$ can be estimated using $\widehat{\mu}_w(x)=\int w(y)\widehat{p}_{Y|1,X}(y\mymid x)\, dy$. The adjoint can then be estimated with the plug-in estimate $\widehat{\vartheta}=(\widehat{\vartheta}_0,0)$, where
\begin{align*}
    \widehat{\vartheta}_0(z)&=\frac{1\{a=1\}}{\widehat{\pi}(a\mymid x)}\left\{w(y)-\widehat{\mu}_w(x)\right\} + \left[\widehat{\mu}_w(x) - \frac{1}{|\mathscr{I}|}\sum_{i : i\in\mathscr{I}}\widehat{\mu}_w(X_i)\right].
\end{align*}
The compatibility condition will typically hold provided the nonparametric model $\mathcal{M}$ is large enough.

\subsection{Maps that only depend on their Hilbert-valued argument}

\subsubsection{Overview} We now consider primitives that only depend on their Hilbert-valued input in the sense described in Appendix~\ref{app:hadDiff}, so that there is a Hadamard differentiable map $\zeta : \mathcal{U}\rightarrow\mathcal{W}$ such that $\theta(P',u')=\zeta(u')$ for all $(P',u')\in\mathcal{M}\times\mathcal{U}$.

These primitives can be evaluated explicitly in the forward pass, and so we let the forward nuisance routine return $\widehat{h}=\zeta(u)$. If the ambient Hilbert spaces $\mathcal{T}$ and $\mathcal{W}$ do not depend on $P$ and the form of $\dot{\zeta}_u^*$ is known, we let the backward nuisance routine return $\widehat{\vartheta}=(0,\dot{\zeta}_u^*(w))$. If one or both ambient Hilbert spaces depend on $P$ or the form of the adjoint is unknown, then estimating $\dot{\zeta}_u^*(w)$ may be necessary. Denoting this estimate by $\widehat{t}$, the backward nuisance routine then returns $\widehat{\vartheta}=(0,\widehat{t})$. The compatibility condition in Algorithm~\ref{alg:nuisanceRoutines} is necessarily satisfied since $\widehat{\vartheta}_0=0$.

In the remainder of this appendix, we present backward nuisance routines for cases where either the ambient Hilbert spaces depend on $P$ or $\dot{\zeta}_u^*(w)$ must be estimated.

\subsubsection{Pointwise operations} 
The adjoint $\dot{\zeta}_u^*(w)$ for the pointwise operation primitive from Appendix~\ref{app:pointwise} is invariant to the choice of $\lambda$ indexing the ambient Hilbert spaces $\mathcal{T}=L^2(\lambda)^{\oplus d}$ and $\mathcal{W}=L^2(\lambda)$. Therefore, if $\lambda=P$, then the backward nuisance routine can set $\widehat{\vartheta}$ equal to its true value, $\dot{\theta}_{P,u}^*(w)=(0,\dot{\zeta}_u^*(w))$.

\subsubsection{Bounded affine map}
We present a backward nuisance routine for the bounded affine maps $\zeta(\cdot)=\kappa(\cdot)+c$ from Appendix~\ref{app:affine} in cases where the form of the adjoint $\kappa^*$ is unknown. This routine is based on a Riesz loss \citep{chernozhukov2022automatic}. Recalling that $\dot{\zeta}_u=\kappa$ and $\dot{\zeta}_u^*=\kappa^*$, the key observation used is that $\dot{\zeta}_u^*(w)=\kappa^*(w)$ is the unique solution to
\begin{equation}\label{eq:rieszLoss}
\begin{aligned}
& \underset{t}{\text{minimize}}
& &\hspace{.5em}  \Phi_{u,w}(t):=\left\|t\right\|_{\mathcal{T}}^2 - 2\langle \kappa(t),w\rangle_{\mathcal{W}} \\
& \text{subject to}
& &\hspace{.5em} t\in\mathcal{T}.
\end{aligned}
\end{equation}
This holds since
\begin{align*}
    \kappa^*(w)&= \argmin_{t\in\mathcal{T}}\left\|t- \kappa^*(w)\right\|_{\mathcal{T}}^2 = \argmin_{t\in\mathcal{T}}\left[\left\|t\right\|_{\mathcal{T}}^2 - 2\langle t,\kappa^*(w)\rangle_{\mathcal{T}}\right] = \argmin_{t\in\mathcal{T}} \left[\left\|t\right\|_{\mathcal{T}}^2 - 2\langle \kappa(t),w\rangle_{\mathcal{W}}\right].
\end{align*}
When $\mathcal{T}$ and $\mathcal{W}$ do not depend on the unknown distribution $P$, the minimization problem in \eqref{eq:rieszLoss} can be approximated by approximating $\mathcal{T}$ with a $d$-dimensional subspace $\mathcal{T}_d$, yielding an approximation $\widehat{t}$ of $\dot{\zeta}_u^*(w)$.

We also consider cases where one or both of $\mathcal{T}$ and $\mathcal{W}$ depend on $P$. When doing this, we suppose that the $P$-dependent Hilbert spaces are equal to $L^2(P_X)$ or $L^2(P)$. Then, the squared norm or inner product in \eqref{eq:rieszLoss} can be estimated with an empirical mean over the observations in $Z_{\mathscr{I}}$. Using the case where $\mathcal{T}=L^2(P)$ and $\mathcal{W}=L^2(P_X)$ as an illustration, $\Phi_{u,w}(t)$ in \eqref{eq:rieszLoss} would be estimated by
\begin{align*}
    \widehat{\Phi}_{u,w}(t):= \frac{1}{n_{\mathscr{I}}}\sum_{i\in\mathscr{I}} t(Z_i)^2 - 2\frac{1}{n_{\mathscr{I}}} \sum_{i=1}^n \kappa(t)(Z_i)\, w(X_i).
\end{align*}
The risk function $\widehat{\Phi}_{u,w}$ can be used to obtain an estimate $\widehat{t}$ of $\dot{\zeta}_u^*(w)$ using a statistical learning tool such as random forests, gradient boosting, or neural networks \citep{breiman2001random,friedman2001greedy,rosenblatt1958perceptron}.

Since the primitives in Appendices~\ref{app:constant}-\ref{app:fix} are all special cases of bounded affine maps, the backward nuisance routine from this appendix can be used for all of those primitives.

\end{document}